\newcommand{\lyxaddress}[1]{
\par {\raggedright #1
\vspace{1.4em}
\noindent\par}
}
 \theoremstyle{definition}
  \newtheorem{example}{\protect\examplename}
  \theoremstyle{remark}
  \newtheorem{rem}{\protect\remarkname}
\theoremstyle{plain}
\newtheorem{thm}{\protect\theoremname}
  \theoremstyle{plain}
  \newtheorem{prop}{\protect\propositionname}
  \theoremstyle{plain}
  \newtheorem{cor}{\protect\corollaryname}
  \theoremstyle{plain}
  \newtheorem{lem}{\protect\lemmaname}
\providecommand{\examplename}{Example}
\providecommand{\remarkname}{Remark}
\providecommand{\theoremname}{Theorem}
  \providecommand{\examplename}{Example}
  \providecommand{\propositionname}{Proposition}
  \providecommand{\remarkname}{Remark}
\providecommand{\theoremname}{Theorem}
  \providecommand{\examplename}{Example}
  \providecommand{\lemmaname}{Lemma}
  \providecommand{\propositionname}{Proposition}
  \providecommand{\remarkname}{Remark}
\providecommand{\corollaryname}{Corollary}
\providecommand{\theoremname}{Theorem}
\begin{document}

\title{On the utility of Metropolis-Hastings with asymmetric acceptance
ratio}

\author{Christophe Andrieu$^{*}$, Arnaud Doucet$^{\dagger}$, Sinan Y\i ld\i r\i m$^{+}$
and Nicolas Chopin$^{\blacklozenge}$}
\maketitle

\lyxaddress{$^{*}$School of Mathematics, University of Bristol, U.K. \\
$^{\dagger}$Department of Statistics, University of Oxford, U.K.\\
$^{+}$Faculty of Engineering and Natural Sciences, Sabanc\i{} University,
Turkey.\\
$^{\blacklozenge}$ENSAE, France.}
\begin{abstract}
{\normalsize{}The Metropolis\textendash Hastings algorithm allows
one to sample asymptotically from any probability distribution $\pi$
admitting a density with respect to a reference measure, also denoted
$\pi$ here, which can be evaluated pointwise up to a normalising
constant. There has been recently much work devoted to the development
of variants of the Metropolis\textendash Hastings update which can
handle scenarios where such an evaluation is impossible, and yet are
guaranteed to sample from $\pi$ asymptotically. The most popular
approach to have emerged is arguably the pseudo-marginal Metropolis\textendash Hastings
algorithm which substitutes an unbiased estimate of an unnormalised
version of $\pi$ for $\pi$ }\citep{Lin_2000,Beaumont_2003,Andrieu_and_Roberts_2009}\textbf{.}{\normalsize{}
Alternative pseudo-marginal algorithms relying instead on unbiased
estimates of the Metropolis\textendash Hastings acceptance ratio have
also been proposed \citep{Neal_2004,Murray_et_al_2006,Nicholls_et_al_2012}.
These algorithms can have better properties than standard pseudo-marginal
algorithms. Convergence properties of both classes of algorithms are
known to depend on the variability (in the sense of the convex order)
of the estimators involved }\citep{Andrieu_and_Vihola_2014}{\normalsize{},
and reduced variability is guaranteed to decrease the asymptotic variance
of ergodic averages and will shorten the ``burn-in'' period, or
convergence to equilibrium, in most scenarios of interest. A simple
approach to reduce variability, amenable to parallel computations,
consists of averaging independent estimators. However, while averaging
estimators of $\pi$ in a pseudo-marginal algorithm retains the guarantee
of sampling from $\pi$ asymptotically, naive averaging of acceptance
ratio estimates breaks detailed balance, leading to incorrect results.
We propose an original methodology which allows for a correct implementation
of this idea. We establish theoretical properties which parallel those
available for standard pseudo-marginal algorithms and discussed above.
We demonstrate the interest of the approach on various inference problems
involving doubly intractable distributions, latent variable models,
model selection, and state-space models. In particular we show that
convergence to equilibrium can be significantly shortened, therefore
offering the possibility to reduce a user's waiting time in a generic
fashion when a parallel computing architecture is available.}{\normalsize \par}
\end{abstract}
{\small{}Keywords: Annealed Importance Sampling; Doubly intractable
distributions; Intractable likelihood; Markov chain Monte Carlo; Reversible
jump Monte Carlo; Sequential Monte Carlo; State-space models.}{\small \par}

\newpage{}

\tableofcontents{}

\newpage{}

\section{Introduction\label{sec: Introduction} }

Suppose we are interested in sampling from a given probability distribution
$\pi$ on some measurable space $(\mathsf{X},\mathcal{X})$. When
it is impossible or too difficult to generate perfect samples from
$\pi$, one practical resource is to use a Markov chain Monte Carlo
(MCMC) algorithm which generates an ergodic Markov chain $\{X_{n},n\geq0\}$
whose invariant distribution is $\pi$. Among MCMC methods, the Metropolis\textendash Hastings
(MH) algorithm plays a central rôle. The MH update proceeds as follows:
given $X_{n}=x$ and a Markov transition kernel $q\big(x,\cdot\big)$
on $(\mathsf{X},\mathcal{X})$, we propose $y\sim q(x,\cdot)$ and
set $X_{n+1}=y$ with probability $\alpha(x,y):=\min\left\{ 1,r(x,y)\right\} $,
where 
\begin{equation}
r(x,y):=\frac{\pi({\rm d}y)q(y,{\rm d}x)}{\pi({\rm d}x)q(x,{\rm d}y)}\label{eq:genericMHacceptratio}
\end{equation}
for $(x,y)\in\mathsf{S}\subset\mathsf{X}^{2}$ (see Appendix \ref{sec: A general framework for MPR and MHAAR algorithms}
for a definition of $\mathsf{S}$) is a well defined Radon\textendash Nikodym
derivative, and $r(x,y)=0$ otherwise. When the proposed value $y$
is rejected, we set $X_{n+1}=x$. We will refer to $r(x,y)$ as the
acceptance ratio. The transition kernel of the Markov chain $\{X_{n},n\geq0\}$
generated with the MH algorithm with proposal kernel $q(\cdot,\cdot)$
is 
\begin{equation}
P(x,{\rm d}y)=q(x,{\rm d}y)\alpha(x,y)+\rho(x)\delta_{x}({\rm d}y),\quad x\in\mathsf{X},\label{eq: MH transition kernel}
\end{equation}
where $\rho(x)$ is the probability of rejecting a proposed sample
when $X_{n}=x$, 
\[
\rho(x):=1-\int_{\mathsf{X}}\alpha(x,y)Q(x,{\rm d}y)
\]
and $\delta_{x}(\cdot)$ is the Dirac measure centred at $x$. Expectations
of functions, say $f$, with respect to $\pi$ can be estimated with
$S_{M}:=M^{-1}\sum_{n=1}^{M}f(X_{n})$ for $M\in\mathbb{N}$, which
is consistent under mild assumptions.

Being able to evaluate the acceptance ratio $r(x,y)$ is obviously
central to implementing the MH algorithm in practice. Recently, there
has been much interest in expanding the scope of the MH algorithm
to situations where this acceptance ratio is intractable, that is,
impossible or very expensive to compute. A canonical example of intractability
is when $\pi$ can be written as the marginal of a given joint probability
distribution for $x$ and some latent variable $z$. A classical way
of addressing this problem consists of running an MCMC targeting the
joint distribution, which may however become very inefficient in situations
where the size of the latent variable is high\textendash this is for
example the case for general state-space models. In what follows,
we will briefly review some more effective ways of tackling this problem.
To that purpose we will use the following simple running example to
illustrate various methods. This example has the advantage that its
setup is relatively simple and of clear practical relevance. We postpone
developments for much more complicated setups to Sections \ref{sec: Pseudo-marginal ratio algorithms for latent variable models}
and \ref{sec: State-space models: SMC and conditional SMC within MHAAR}. 
\begin{example}[\textbf{Inference with doubly intractable models}]
\label{ex:doublyintractable} In this scenario the likelihood function
of the unknown parameter $\theta\in\Theta$ for the dataset $\mathfrak{y}\in\mathsf{Y}$,
$\ell_{\theta}(\mathfrak{y})$, is only known up to a normalising
constant, that is 
\[
\ell_{\theta}(\mathfrak{y})=\frac{g_{\theta}(\mathfrak{y})}{C_{\theta}},
\]
where $C_{\theta}$ is unknown, while $g_{\theta}(\mathfrak{y})$
can be evaluated pointwise for any value of $\theta\in\Theta$. In
a Bayesian framework, for a prior density $\eta(\theta)$, we are
interested in the posterior density $\pi(\theta)$, with respect to
some measure, given by 
\[
\pi(\theta)\propto\eta(\theta)\ell_{\theta}(\mathfrak{y}).
\]
With $x=\theta,y=\theta'$ in \eqref{eq:genericMHacceptratio}, the
resulting acceptance ratio of the MH algorithm associated to a proposal
density $q(\theta,\theta')$ is 
\begin{align}
r(\theta,\theta')=\frac{q(\theta',\theta)}{q(\theta,\theta')}\frac{\eta(\theta')}{\eta(\theta)}\frac{g_{\theta'}(\mathfrak{y})}{g_{\theta}(\mathfrak{y})}\frac{C_{\theta}}{C_{\theta'}},\label{eq: MCMC acceptance probability with intractable likelihood-1}
\end{align}
which cannot be calculated because of the unknown ratio $C_{\theta}/C_{\theta'}$.
While the likelihood function may be intractable, sampling artificial
datasets $\mathfrak{z}\sim\ell_{\theta}(\cdot)$ may be possible for
any $\theta\in\Theta$, and sometimes computationally cheap. We will
describe two known approaches which exploit and expand this property
in order to design Markov kernels preserving $\pi(\theta)$ as invariant
density. 
\end{example}

\subsection{Estimating the target density\label{subsec: Estimating the target density}}

Assume for simplicity that $\pi$ has a probability density with respect
to some $\sigma$-finite measure. We will abuse notation slightly
by using $\pi$ for both the probability distribution and its density.
A powerful, yet simple, method to tackle intractability which has
recently attracted substantial interest consists of replacing the
value of $\pi(x)$ with a non-negative random estimator $\hat{\pi}(x)$
whenever it is required in the implementation of the MH algorithm
above. If $\mathbb{E}[\hat{\pi}(x)]=C\pi(x)$ for all $x\in\mathsf{X}$
and a constant $C>0$, a property we refer somewhat abusively as unbiasedness,
this strategy turns out to lead to exact algorithms, that is sampling
from $\pi$ is guaranteed at equilibrium under very mild assumptions
on $\hat{\pi}(x)$. This approach leads to so called pseudo-marginal
algorithms \citep{Andrieu_and_Roberts_2009}. However, for reasons
which will become clearer later, we refer from now on to these techniques
as Pseudo-Marginal Target (PMT) algorithms.
\begin{example}[\textbf{Example 1, ctd}]
\label{ex: pseudo-marginal for doubly intractable models} Let $h_{\mathfrak{y}}:\mathsf{Y}\rightarrow[0,\infty)$
be an integrable non-negative function of integral equal to $1$.
For a given $\theta$, an unbiased estimate of $\pi(\theta)$ can
be obtained via importance sampling whenever the support of $g_{\theta}$
includes that of $h_{\mathfrak{y}}$: 
\begin{align}
\hat{\pi}^{N}(\theta)\propto\eta(\theta)g_{\theta}(\mathfrak{y})\left\{ \frac{1}{N}\sum_{i=1}^{N}\frac{h_{\mathfrak{y}}(\mathfrak{z}^{(i)})}{g_{\theta}(\mathfrak{z}^{(i)})}\right\} ,\quad\mathfrak{z}^{(i)}\overset{{\rm iid}}{\sim}\ell_{\theta}(\cdot),\quad i=1,\ldots,N,
\end{align}
since the normalised sum is an unbiased estimator of $1/C_{\theta}$.
The auxiliary variable method of \citet{Muller_et_al_2006} corresponds
to $N=1$. An interesting feature of this approach is that $N$ is
a free parameter of the algorithm which reduces the variability of
this estimator. It is shown in \citet{Andrieu_and_Vihola_2014} that
increasing $N$ in a PMT algorithm always reduces the asymptotic variance
of averages using this chain. This is particularly interesting in
a parallel computing environment, but also serial for some models.
We illustrate this numerically on a simple Ising model (see details
in Section \ref{subsec: Numerical example: the Ising model}) for
a $20\times20$ lattice and $h_{\mathfrak{y}}(\mathfrak{z})=g_{\hat{\theta}}\big(\mathfrak{z}\big)$,
where $\hat{\theta}$ is an approximation of the maximum likelihood
estimator of $\theta$ for the data $\mathfrak{y}$. In Figure \ref{fig:IACIsingPseudoMarginal}
we report the estimated integrated auto-covariance (IAC) for the identity,
that is $\lim_{M\rightarrow\infty}M\mathsf{\mathbb{\mathsf{var}}}\big(S_{M}\big)/\mathbb{\mathsf{var}}_{\pi}(f)$
for the function $f(\theta)=\theta$, as a function of $N$ and values
of $\hat{\theta}$. The results are highly dependent on the value
of $\hat{\theta}$, but adjusting $N$ allows one to compensate for
a wrong choice of this parameter. This is important in practice since
for more complicated scenarios obtaining a good approximation of the
maximum likelihood estimator of $\theta$ may be difficult.

\begin{figure}
\includegraphics{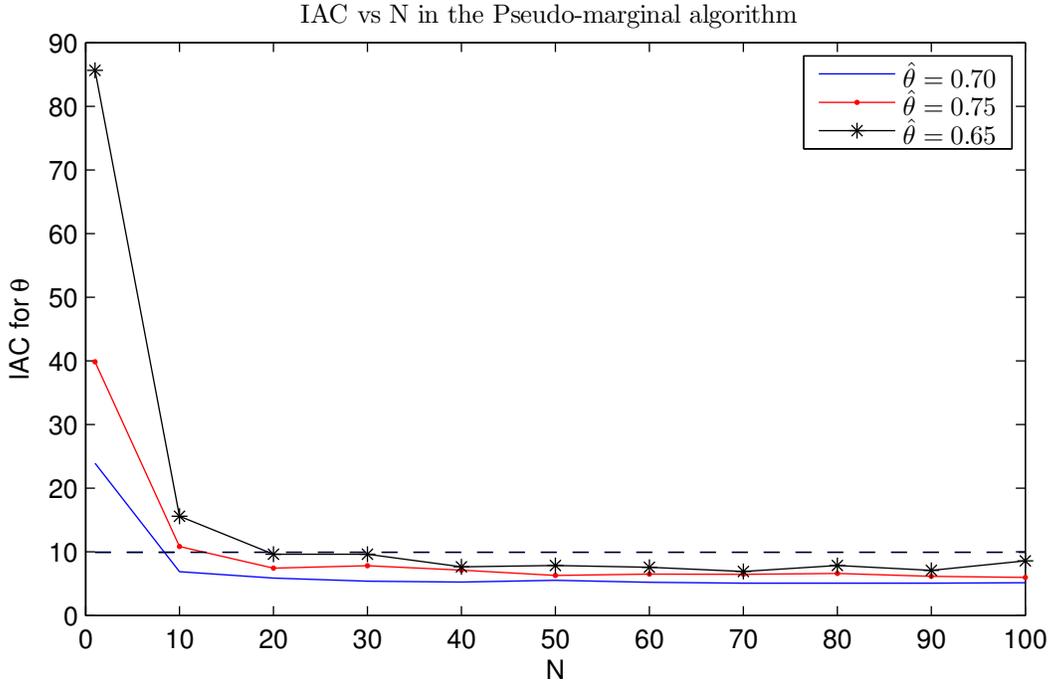}\protect\caption{IAC of the algorithm as a function of $N$.\label{fig:IACIsingPseudoMarginal}}
\end{figure}
\end{example}

\subsection{Estimating the acceptance ratio \label{subsec: Estimating the acceptance ratio}}

One can in fact push the idea of replacing algebraic expressions with
estimators further. Instead of approximating the numerator and denominator
of the acceptance ratio $r(x,y)$ independently, it is indeed possible
to use directly estimators of the acceptance ratio $r(x,y)$ and still
obtain algorithms guaranteed to sample from $\pi$ at equilibrium.
We will refer to these algorithms as Pseudo-Marginal Ratio (PMR) algorithms.
A general framework is described in \citet{Andrieu_and_Vihola_2014}
as well as in Section \ref{subsec: Pseudo-marginal ratio algorithms},
but this idea has appeared earlier in various forms in the literature,
see e.g.\ \citet{Nicholls_et_al_2012} and the references therein.
An interesting feature of PMR algorithms is that we estimate the ratio
$r(x,y)$ afresh whenever it is required. On the contrary, in a PMT
framework, if the estimate $\hat{\pi}(x)/C$ of the current state
significantly overestimates $\pi(x)$, this results in poor performance
as the algorithm will typically reject many transitions away from
$x$ as the same estimate of $\pi(x)$ is used until a proposal is
accepted. In the following continuation of Example \ref{ex:doublyintractable},
we present a particular case of this PMR idea proposed by \citet{Murray_et_al_2006}. 
\begin{example}[\textbf{Example 1, ctd}]
\label{ex: exchange algorithm} The exchange algorithm of \citet{Murray_et_al_2006}
is motivated by the realisation that while for $\mathfrak{z}\sim\ell_{\theta'}(\cdot)$
and any $\mathfrak{y}\in\mathsf{Y}$, $h_{\mathfrak{y}}(\mathfrak{z})/g_{\theta'}(\mathfrak{z})$
is an unbiased estimator of $1/C_{\theta'}$, the particular choice
$h_{\mathfrak{y}}(\mathfrak{z})=g_{\theta}(\mathfrak{z})$ leads to
an unbiased estimator $g_{\theta}(\mathfrak{z})/g_{\theta'}(\mathfrak{z})$
of $C_{\theta}/C_{\theta'}$. We can expect this estimator to have
a reasonable variance when $\theta$ and $\theta'$ are close if $\theta\mapsto g_{\theta}(\mathfrak{z})$
satisfies some form of continuity. This suggests the following algorithm.
Given $\theta\in\Theta$, sample $\theta'\sim q(\theta,\cdot)$, then
$\mathfrak{z}\sim\ell_{\theta'}(\cdot)$ and use the acceptance ratio
\begin{equation}
\frac{q(\theta',\theta)}{q(\theta,\theta')}\frac{\eta(\theta')}{\eta(\theta)}\frac{g_{\theta'}(\mathfrak{y})}{g_{\theta}(\mathfrak{y})}\frac{g_{\theta}(\mathfrak{z})}{g_{\theta'}(\mathfrak{z})},\label{eq: Murray's acceptance ratio-1}
\end{equation}
which is an unbiased estimator of the acceptance ratio in \eqref{eq: MCMC acceptance probability with intractable likelihood-1}.
The remarkable property of this algorithm is that it admits $\pi(\theta)$
as an invariant distribution and hence, under mild exploration related
assumptions, it is guaranteed to produce samples asymptotically distributed
according to $\pi$.
\end{example}

\subsection{Contribution\label{subsec: Contribution}}

As for PMT algorithms, it is natural to ask whether it is possible
to further improve the performance of PMR algorithms by reducing the
variability of the acceptance ratio estimator by averaging a number
of such estimators, while preserving the target distribution $\pi$
invariant. We shall see that, unfortunately, such naïve averaging
approach does not work for the PMR methods currently available as
it breaks the reversibility of the kernels with respect to $\pi$. 

A contribution of the present paper is the introduction of a novel
class of PMR algorithms which can exploit acceptance ratio estimators
obtained through averaging and sample from $\pi$ at equilibrium.
These algorithms described in Section \ref{sec: Pseudo-marginal ratio algorithms using averaged acceptance ratio estimators}
naturally lend themselves to parallel computations as independent
ratio estimators can be computed in parallel at each iteration. In
this respect, our methodology contributes to the emerging literature
on the use of parallel processing units, such as Graphics Processing
Units (GPUs) or multicore chips for scientific computation \citet{lee2010utility,suchard2010understanding}.
We show that this generic procedure is guaranteed to decrease the
asymptotic variance of ergodic averages as the number of independent
ratios $N$ one averages increases and that the burn-in period will
be reduced in most scenarios. The latter is particularly relevant
since exact and generic methods to achieve this are scarce \citep{sohn1995parallel},
in contrast with variance reduction techniques for which better embarrassingly
parallel solutions \citep{doi:10.1093/biomet/asx031,bornn2017use}
and/or post-processing methods are available \citep{delmas2009does,dellaportas2012control}.
We demonstrate experimentally its performance gain for the exchange
algorithm.

This new class of PMR algorithms can be understood as being a particular
instance of a more general principle which we exploit further in this
paper, beyond the above example. Let $Q_{1},Q_{2}\colon\mathsf{X}\times\mathcal{X}\rightarrow[0,1]$
be a pair of kernels such that the following Radon-Nikodym derivative
\[
r_{1}(x,y):=\frac{\pi({\rm d}y)Q_{2}(y,{\rm d}x)}{\pi({\rm d}x)Q_{1}(x,{\rm d}y)}
\]
is well defined for $(x,y)$ on some symmetric set $\mathsf{S}$ and
set $r_{1}(x,y)=0$ otherwise. This can be thought of as an asymmetric
version of the standard MH acceptance ratio \eqref{eq:genericMHacceptratio}
and naturally leads to two questions. 
\begin{enumerate}
\item Assuming that sampling from $Q_{1}(x,\cdot)$ and $Q_{2}(x,\cdot)$
for any $x\in\mathsf{X}$ is feasible and that $r_{1}(x,y)$ is tractable,
can one design a correct MCMC algorithm for $\pi$ that involves simulating
from $Q_{1}(x,\cdot)$ and $Q_{2}(x,\cdot)$ and evaluating $r_{1}(x,y)$
?
\item Assuming the answer to the above is positive, can this additional
degree of freedom be beneficial in order to design correct MCMC algorithms
with practically appealing features e.g.\ accelerated convergence?
\end{enumerate}
The answer to the first question is unsurprisingly yes, and we will
refer to the corresponding class of algorithms as MH with Asymmetric
Acceptance Ratio (MHAAR). MHAAR has already been exploited in some
specific contexts \citep{tjelmeland-eidsvik-2004,andrieu2008tutorial},
but its best known application certainly remains the reversible jump
MCMC methodology of \citet{Green_1995}. However the way we take advantage
of this additional flexibility seems completely novel. We also note,
as detailed in our discussion in Section \ref{sec: Discussion}, that
such asymmetric acceptance ratios are also at the heart of non-reversible
MCMC algorithms which have recently attracted renewed interest in
the Physics and Statistical communities \citep{gustafson1998guided,turitsyn2011irreversible}.
In Appendix \ref{sec: A general framework for MPR and MHAAR algorithms},
we describe and justify a slightly more general framework to the above
which ensures reversibility with respect to $\pi$. The answer to
the second question is the object of this paper, and averaging acceptance
ratios as suggested earlier is one such application.

In Section \ref{sec: Improving pseudo-marginal ratio algorithms for doubly intractable models}
we further investigate the doubly intractable scenario by incorporating
the Annealed Importance Sampling (AIS) mechanism \citep{Neal_2001,Murray_et_al_2006}
in MHAAR, and explore numerically the performance of MHAAR with AIS
on an Ising model.

In Section \ref{sec: Pseudo-marginal ratio algorithms for latent variable models}
we expand the class of problems our methodology can address by considering
latent variable models. This leads to important extensions of the
original AIS within MH algorithm proposed in \citet{Neal_2004}. We
demonstrate the efficiency of our MHAAR-based approach by recasting
the popular reversible jump MCMC (RJ-MCMC) methodology as a particular
case of our framework and illustrate the computational benefits of
our novel algorithm in this context on the Poisson change-point model
in \citet{Green_1995}.

In Section \ref{sec: State-space models: SMC and conditional SMC within MHAAR},
we show how MHAAR can be advantageous in the context of inference
in state-space models when it is utilised with sequential Monte Carlo
(SMC) algorithms. In particular, we expand the scope of particle MCMC
algorithms \citep{Andrieu_et_al_2010} and show novel ways of using
multiple or all possible paths from backward sampling of conditional
SMC (cSMC) to estimate the marginal acceptance ratio.

In Section \ref{sec: Discussion}, we provide some discussion and
two interesting extensions of MHAAR. Specifically, in Section \ref{sec: Using SMC based estimators for the acceptance ratio}
we discuss an SMC-based generalisation of our algorithms involving
AIS. Furthermore, in Section \ref{sec: Links to non-reversible algorithms}
we provide a new insight to non-reversible versions of MH algorithms
that is relevant to our setting. We briefly demonstrate how non-reversible
versions of our algorithms can be obtained with a small modification
so that one can benefit both from non-reversibility and the ability
to average acceptance ratio estimators.

Some of the proofs of the validity of our algorithms as well as additional
discussion on the generalisation of the methods can found in the Appendices. 

\section{PMR algorithms using averaged acceptance ratio estimators\label{sec: Pseudo-marginal ratio algorithms using averaged acceptance ratio estimators}}

\subsection{PMR algorithms\label{subsec: Pseudo-marginal ratio algorithms}}

We introduce here generic PMR algorithms, that is MH algorithms relying
on an estimator of the acceptance ratio. We then show that in their
standard form these algorithms cannot use an estimator of this ratio
obtained through averaging independent estimators. A slightly more
general framework is provided in \citet{Nicholls_et_al_2012}, while
a more abstract description is provided in \citet{Andrieu_and_Vihola_2014}.
To that purpose we introduce a $(\mathsf{U},\mathcal{U})$-valued
auxiliary variable $u$ (we use small letters for random variables
and realisations throughout) and let $\varphi:\mathsf{U}\rightarrow\mathsf{U}$
be a measurable involution, that is $\varphi=\varphi^{-1}$. Then
we introduce a pair of families of proposal distributions $\{Q_{1}(x,\cdot),x\in\mathsf{X}\}$
, $\{Q_{2}(x,\cdot),x\in\mathsf{X}\}$ on $(\mathsf{X}\times\mathsf{U},\mathcal{X}\times\mathcal{U})$,
where 
\begin{equation}
Q_{1}\big(x,{\rm d}(y,u)\big):=q(x,{\rm d}y)Q_{x,y}({\rm d}u)\label{eq: PMR Q1}
\end{equation}
with $Q_{x,y}(\cdot)$ denoting the conditional distribution of $u$
given $x,y\in\mathsf{X}$, and 
\begin{equation}
Q_{2}\big(x,{\rm d}(y,u)\big):=q(x,{\rm d}y)\bar{Q}_{x,y}({\rm d}u),\label{eq: PMR Q2}
\end{equation}
where, for any $A\in\mathcal{U}$ we have 
\begin{equation}
\bar{Q}_{x,y}\big(A\big):=Q_{x,y}\big(\varphi(A)\big).\label{eq: Q_bar}
\end{equation}
which means that in order to sample $u\sim\bar{Q}_{x.y}(\cdot)$,
one can sample $\bar{u}\sim Q_{x,y}(\cdot)$ and set $u=\varphi(\bar{u})$.
PMR algorithms are defined by the following transition kernel 
\begin{equation}
\mathring{P}(x,{\rm d}y)=\int_{\mathsf{U}}Q_{1}\big(x,{\rm d}(y,u)\big)\min\{1,\mathring{r}_{u}(x,y)\}+\mathring{\rho}(x)\delta_{x}({\rm d}y),\label{eq:defPring}
\end{equation}
where the acceptance ratio is equal, for $(x,y,u)\in\mathring{\mathsf{S}}$
and $\mathring{\mathsf{S}}$ defined similarly to \eqref{eq:def-ring-S},
to 

\begin{align}
\mathring{r}_{u}(x,y) & :=\frac{\pi({\rm d}y)Q_{2}\big(y,{\rm d}(x,u)\big)}{\pi(dx)Q_{1}\big(x,{\rm d}(y,u)\big)}\label{eq:accept-ratio-circle}\\
 & =r(x,y)\frac{\bar{Q}_{y,x}({\rm d}u)}{Q_{x,y}({\rm d}u)},\nonumber 
\end{align}
and to $1$ otherwise. It is clear from \eqref{eq:accept-ratio-circle}
that the acceptance ratio $\mathring{r}_{u}(x,y)$ is an unbiased
estimator of the standard MH acceptance $r(x,y)$, i.e. 
\begin{equation}
\int_{\mathsf{U}}\mathring{r}_{u}(x,y)Q_{x,y}({\rm d}u)=r(x,y).\label{eq:unbiasednessratioMH}
\end{equation}
Due to the particular form of symmetry between $Q_{1}$ and $Q_{2}$
imposed by \eqref{eq: Q_bar}, $\mathring{P}$ is reversible with
respect to $\pi$ by considering detailed balance for fixed $u\in\mathsf{U}$;
see Theorem \ref{thm: pseudo-marginal ratio algorithms} in Appendix
\ref{sec: A general framework for MPR and MHAAR algorithms}. 

As long as PMR algorithms are concerned, we call $Q_{1}$ the proposal
kernel of PMR and $Q_{2}$ its complementary kernel, owing to the
way \eqref{eq:defPring} is constructed. Motivation for this enumeration
will be clear in Section \ref{subsec: Pseudo-marginal algorithm with averaged acceptance ratio estimator},
in particular by Remark \ref{rem: MHAAR with N =00003D 1}.
\begin{rem}
\label{remark:ratiounbiasednotsufficient}A cautionary remark is in
order. When we substitute a non-negative unbiased estimator of $\pi$
for $\pi$ in the MH algorithm, the resulting PMT algorithm is $\pi-$invariant.
However, if we substitute a positive unbiased estimator of $r(x,y)$
for $r(x,y)$ in the MH algorithm then the resulting transition kernel
is not necessarily $\pi-$invariant. To establish that $\mathring{P}$
is $\pi-$invariant, we require our estimator to have the specific
structure given in \eqref{eq:accept-ratio-circle}. 
\end{rem}
A particular instance of this algorithm was given earlier in the set-up
of Example \ref{ex:doublyintractable}, where $x=\theta$, the random
variable $u=\mathfrak{z}$ corresponds to a fictitious dataset used
to estimate the ratio of normalising constants, and $\varphi(u)=u$.
The need to consider more general transformations $\varphi$ will
become apparent in Section \ref{sec: Improving pseudo-marginal ratio algorithms for doubly intractable models}. 

This type of algorithms is motivated by the fact that while in some
situations $r(x,y)$ cannot be computed, the introduction of the auxiliary
variable $u$ makes the computation of $\mathring{r}_{u}(x,y)$ possible.
However, this computational tractability comes at a price. Applying
Jensen's inequality to \eqref{eq:unbiasednessratioMH} shows that
\[
\int_{\mathsf{U}}Q_{x,y}({\rm d}u)\min\{1,\mathring{r}_{u}(x,y)\}\leq\min\{1,r(x,y)\}.
\]
Peskun's result \citep{Tierney_1998} thus implies that the MCMC algorithm
relying on $\mathring{P}$ is always inferior to that using $P$ for
various performance measures (see Theorem \ref{thm:theoreticaljustification}
for details). As pointed out in \citet{Andrieu_and_Vihola_2014} reducing
the variability of $\mathring{r}_{u}(x,y)$, for example in the sense
of the convex order, for all $x,y\in\mathsf{X}^{2}$ will reduce the
gap in the inequality above, resulting in improved performance. From
the rightmost expression in \eqref{eq:accept-ratio-circle} a possibility
to reduce variability might be to change $Q_{x,y}\big(\cdot\big)$
(and possibly $u$) in such a way that $Q_{x,y}\simeq\bar{Q}_{y,x}$
for all $x,y\in\mathsf{X}$, but this is impossible in most practical
scenarios. In contrast a natural idea consists of averaging ratios
$\mathring{r}_{u^{(i)}}(x,y)$'s for, say, realisations $u^{(1)},\ldots,u^{(N)}\overset{{\rm iid}}{\sim}Q_{x,y}(\cdot)$
and use the acceptance ratio 
\begin{equation}
\mathring{r}_{\mathfrak{u}}^{N}(x,y):=\frac{1}{N}\sum_{i=1}^{N}\mathring{r}_{u^{(i)}}(x,y),\label{eq: average acceptance ratio}
\end{equation}
where $\mathfrak{u}:=u^{(1:N)}=\big(u^{(1)},\ldots,u^{(N)}\big)\in\mathfrak{U}:=\mathsf{U}^{N}$\textendash we
drop the dependence on $N$ in order to alleviate notation whenever
no ambiguity is possible. While this reduces the variance of the estimator
of $r\big(x,y\big)$, this naïve modification of the acceptance rule
of $\mathring{P}$ breaks detailed balance with respect to $\pi$.
Indeed one can check that with $Q_{1}^{N}(x,{\rm d}(y,\mathfrak{u})):=q(x,{\rm d}y)\prod_{i=1}^{N}Q_{x,y}({\rm d}u^{(i)})$,
$h:\mathsf{X}^{2}\rightarrow\mathbb{R}$ a bounded measurable function
and using Fubini's result, 
\begin{align*}
\int_{\mathsf{X}\times\mathfrak{U}\times\mathsf{X}}\pi\big({\rm d}x\big)Q_{1}^{N}\big(x,{\rm d}(y,{\rm \mathfrak{u})}\big)\min\{1,\mathring{r}_{\mathfrak{u}}^{N}(x,y)\}h\big(x,y\big)\\
\neq\int_{\mathsf{X}\times\mathfrak{U}\times\mathsf{X}}\pi\big({\rm d}y\big)Q_{1}^{N}\big(y,{\rm d}(x,{\rm \mathfrak{u}})\big) & \min\{1,\mathring{r}_{\mathfrak{u}}^{N}(y,x)\}h\big(x,y\big)
\end{align*}
in general. This is best seen in a scenario where $\mathsf{X}$ and
$\mathsf{U}$ are finite and $h(x,y)=\mathbb{I}\{x=a\}\mathbb{I}\{y=b\}$
for some $a,b\in\mathsf{X}$, and it can be shown that $\pi$ is not
left invariant by the corresponding Markov transition probability.

\subsection{MHAAR for averaging PMR estimators \label{subsec: Pseudo-marginal algorithm with averaged acceptance ratio estimator}}

We show here how MHAAR updates can be used in order to use the acceptance
ratio in \eqref{eq: average acceptance ratio}, while preserving $\pi-$reversibility.
Our novel scheme is described in Algorithm \ref{alg: MHAAR for Pseudo-marginal ratio}.
For $m\in\mathbb{N}$ and $w_{1},\ldots,w_{m}\in\mathbb{R}_{+}$ we
let $\mathcal{P}\big(w_{1},\ldots,w_{m}\big)$ denote the probability
distribution of the random variable $\omega$ on $[m]:=\{1,\ldots,m\}$
such that $\mathbb{P}(\omega=k)\propto w_{k}$.

\begin{algorithm}
\caption{MHAAR for averaging PMR estimators}
\label{alg: MHAAR for Pseudo-marginal ratio}

\KwIn{Current sample $X_{n}=x$}

\KwOut{New sample $X_{n+1}$}

Sample $y\sim q(x,\cdot)$ and $v\sim\mathcal{U}(0,1)$. \\
\If{$v\leq1/2$}{

Sample $u^{(1)},\ldots,u^{(N)}\overset{{\rm iid}}{\sim}Q_{x,y}(\cdot)$
and $k\sim\mathcal{P}\big(\mathring{r}_{u^{(1)}}\big(x,y\big),\ldots,\mathring{r}_{u^{(N)}}\big(x,y\big)\big)$.
\\
Set $X_{n+1}=y$ with probability $\min\{1,\mathring{r}_{\mathfrak{u}}^{N}(x,y)\}$,
otherwise set $X_{n+1}=x$.

}\Else{

Sample $k\sim\mathcal{U}\big\{1,\ldots,N\big\}$, $u^{(k)}\sim\bar{Q}_{x,y}(\cdot)$
and $u^{(1)},\ldots,u^{(k-1)},u^{(k+1)},\ldots,u^{(N)}\overset{{\rm iid}}{\sim}Q_{y,x}(\cdot)$\\
Set $X_{n+1}=y$ with probability $\min\{1,1/\mathring{r}_{\mathfrak{u}}^{N}(y,x)\}$,
otherwise set $X_{n+1}=x$.

}
\end{algorithm}

The unusual step in this update is the random choice between two sampling
mechanisms for the auxiliary variables $u^{(1)},\ldots,u^{(N)}$ and
the fact that depending on this choice either $\mathring{r}_{\mathfrak{u}}^{N}(x,y)$
or $1/\mathring{r}_{\mathfrak{u}}^{N}(y,x)$ is used. Apart from the
reversible jump MCMC context \citep{Green_1995} and specific uses
\citet{tjelmeland-eidsvik-2004,andrieu2008tutorial}, this type of
asymmetric updates has rarely been used \textendash see Appendix \ref{subsec: Generalisation and suboptimality}
for an extensive discussion and from Section \ref{sec: Pseudo-marginal ratio algorithms for latent variable models}
on for other applications. The probability distributions corresponding
to the two proposal mechanisms in Algorithm \ref{alg: MHAAR for Pseudo-marginal ratio}
are given by 
\begin{align*}
Q_{1}^{N}\big(x,{\rm d}(y,\mathfrak{u},k)\big) & :=q(x,{\rm d}y)\prod_{i=1}^{N}Q_{x,y}({\rm d}u^{(i)})\frac{\mathring{r}_{u^{(k)}}(x,y)}{\sum_{i=1}^{N}\mathring{r}_{u^{(i)}}(x,y)},\\
Q_{2}^{N}\big(x,{\rm d}(y,\mathfrak{u},k)\big) & :=q(x,{\rm d}y)\frac{1}{N}\bar{Q}_{x,y}({\rm d}u^{(k)})\prod_{i=1,i\neq k}^{N}Q_{y,x}(\mathrm{d}u^{(i)}),
\end{align*}
and the corresponding Markov transition kernel by 
\begin{align}
 & \mathring{P}^{N}(x,{\rm d}y):=\frac{1}{2}\left[\int_{\mathfrak{U}\times[N]}Q_{1}^{N}\big(x,{\rm d}(y,\mathfrak{u},k)\big)\min\left\{ 1,\mathring{r}_{\mathfrak{u}}^{N}(x,y)\right\} +\mathring{\rho}_{1}(x)\delta_{x}({\rm d}y)\right]\nonumber \\
 & \quad\quad\quad\quad\quad\quad\quad+\frac{1}{2}\left[\int_{\mathfrak{U}\times[N]}Q_{2}^{N}\big(x,{\rm d}(y,\mathfrak{u},k)\big)\min\left\{ 1,1/\mathring{r}_{\mathfrak{u}}^{N}(y,x)\right\} +\mathring{\rho}_{2}(x)\delta_{x}({\rm d}y)\right],\label{eq:PcircleN}
\end{align}
where $\mathring{\rho}_{1}(x)$ and $\mathring{\rho}_{2}(x)$ are
the rejection probabilities for each sampling mechanism. We establish
the $\pi-$reversibility of $\mathring{P}^{N}$ in Theorem \ref{thm:theoreticaljustification}. 
\begin{rem}
\label{rem:samplingkcanbeomitted}It is necessary to include the variable
$k$ in $Q_{1}^{N}$ and $Q_{2}^{N}$ to obtain tractable acceptance
ratios validating the algorithm but, practically, its value is clearly
redundant in Algorithm \ref{alg: MHAAR for Pseudo-marginal ratio}
and sampling $k$ is therefore not required. 
\end{rem}
\begin{rem}
\label{rem: MHAAR with N =00003D 1}$Q_{1}^{N}$ and $Q_{2}^{N}$
reduce to $Q_{1}$ and $Q_{2}$ in \eqref{eq: PMR Q1} and \eqref{eq: PMR Q2}
when $N=1$ in which case $k$ becomes redundant. This implies generality
over PMR algorithms even for $N=1$ (although probably not a useful
one), in the sense that in MHAAR one can also propose from $Q_{2}$. 
\end{rem}
\begin{example}[\textbf{Example \ref{ex: pseudo-marginal for doubly intractable models},
ctd}]
\label{ex:doublyintractaveraging} As noticed in \citet{Nicholls_et_al_2012},
the exchange algorithm \citep{Murray_et_al_2006} can be recast as
a PMR algorithm of the form $\mathring{P}$ given in \eqref{eq:defPring}
where $x=\theta,y=\theta',u=\mathfrak{z}$, $\varphi(u)=u$ and $Q_{x,y}$
corresponds to $\ell_{\theta'}$. Hence an extension of this algorithm
using an averaged acceptance ratio estimator is given by Algorithm
\ref{alg: MHAAR for Pseudo-marginal ratio}. Taking into account Remark
\ref{rem:samplingkcanbeomitted}, this takes the following form. Sample
$\theta'\sim q(\theta,\cdot)$, then with probability $1/2$ sample
$u^{(1)},\ldots,u^{(N)}\overset{{\rm iid}}{\sim}\ell_{\theta'}(\cdot)$
and compute 
\[
\mathring{r}_{\mathfrak{u}}^{N}(\theta,\theta')=\frac{q(\theta',\theta)}{q(\theta,\theta')}\frac{\eta(\theta')}{\eta(\theta)}\frac{g_{\theta'}(\mathfrak{y})}{g_{\theta}(\mathfrak{y})}\frac{1}{N}\sum_{i=1}^{N}\frac{g_{\theta}(u^{(i)})}{g_{\theta'}(u^{(i)})},
\]
or (i.e.\ with probability 1/2) sample $u^{(1)}\sim\ell_{\theta'}(\cdot)$
and $u^{(2)},\ldots,u^{(N)}\overset{{\rm iid}}{\sim}\ell_{\theta}(\cdot)$,
and compute $\mathring{r}_{\mathfrak{u}}^{N}(\theta',\theta)$. This
algorithm was implemented for an Ising model (see details in Section
\ref{subsec: Numerical example: the Ising model}) and numerical simulations
are presented in Figure \ref{fig: Potts exchange with bridging vs asymmetric MCMC}
where the IAC of $f(\theta)=\theta$ is reported as a function of
$N$ (red/grey colour). As anticipated, increasing $N$ improves performance. 
\end{example}

\subsubsection{Theoretical results on validity and performance of MHAAR}

The following theorem justifies the theoretical usefulness of Algorithm
\ref{alg: MHAAR for Pseudo-marginal ratio}. The result follows from
Hilbert space techniques and the recent realisation that the convex
order plays an important rôle in the characterisation of MH updates
based on estimated acceptance ratios \citep{Andrieu_and_Vihola_2014}.
We consider standard performance measures associated to a Markov transition
probability $\Pi$ of invariant distribution $\mu$ defined on some
measurable space $\big(\mathsf{E},\mathcal{E}\big)$. Let $L^{2}(\mathsf{E},\mu):=\big\{ f\colon\mathsf{E}\rightarrow\mathbb{R},\mathsf{var}_{\mu}(f)<\infty\big\}$
and $L_{0}^{2}(\mathsf{E},\mu):=L^{2}(\mathsf{E},\mu)\cap\{f\colon\mathsf{E}\rightarrow\mathbb{R},\mathbb{E}_{\mu}(f)=0\}$.
For any $f\in L^{2}(\mathsf{E},\mu)$ the asymptotic variance is defined
as 
\[
\mathsf{var}(f,\Pi):=\lim_{M\rightarrow\infty}\mathsf{var}_{\mu}\left(M^{-1/2}{\textstyle \sum}_{i=1}^{M}f(X_{i})\right),
\]
which is guaranteed to exist for reversible Markov chains (although
it may be infinite) and for a $\mu-$reversible kernel $\Pi$ its
right spectral gap 
\[
{\rm Gap}_{R}\left(\Pi\right):=\inf\{\mathcal{E}_{\Pi}(f)\,:\,f\in L_{0}^{2}(\mathsf{E},\mu),\,{\rm var}_{\mu}(f)=1\},
\]
where for any $f\in L^{2}\big(\mathsf{E},\mu\big)$ $\mathcal{E}_{\Pi}(f):=\frac{1}{2}\int_{\mathsf{E}}\mu\big({\rm d}x\big)\Pi\big(x,{\rm d}y\big)\big[f(x)-f(y)\big]^{2}$
is the so-called Dirichlet form. The right spectral gap is particularly
useful in the situation where $\Pi$ is a positive operator, in which
case ${\rm Gap}_{R}\left(\Pi\right)$ is related to the geometric
rate of convergence of the Markov chain. 
\begin{thm}
\label{thm:theoreticaljustification}With $P$ and $\mathring{P}^{N}$
as defined in \eqref{eq: MH transition kernel} and\eqref{eq:PcircleN},
respectively, 

\begin{enumerate}
\item For any $N\geq1$ $\mathring{P}^{N}$ is $\pi-$reversible, 
\item For all $N$, ${\rm Gap}_{R}(\mathring{P}^{N})\leq{\rm Gap}_{R}(P)$
and $N\mapsto{\rm Gap}_{R}(\mathring{P}^{N})$ is non decreasing, 
\item For any $f\in L^{2}(\mathsf{X},\pi)$,

\begin{enumerate}
\item $N\mapsto\mathcal{E}_{\mathring{P}^{N}}(f)$ (or equivalently first
order auto-covariance coefficient) is non decreasing (non increasing), 
\item $N\mapsto\mathsf{var}(f,\mathring{P}^{N})$ is non increasing, 
\item for all $N$, $\mathsf{var}(f,\mathring{P}^{N})\geq\mathsf{var}(f,P)$. 
\end{enumerate}
\end{enumerate}
\end{thm}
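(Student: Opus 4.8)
The plan is to treat the four claims in sequence, deriving the last three from a single Dirichlet-form computation together with the convex-order comparison machinery of \citet{Andrieu_and_Vihola_2014}.

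First I would establish (1) by recognising $\mathring P^N$ as an instance of the general MHAAR construction on the enlarged auxiliary space $(\mathfrak u,k)\in\mathfrak U\times[N]$. Concretely, I would verify the Radon--Nikodym identity
\[
\frac{\pi({\rm d}y)\,Q_2^N\big(y,{\rm d}(x,\mathfrak u,k)\big)}{\pi({\rm d}x)\,Q_1^N\big(x,{\rm d}(y,\mathfrak u,k)\big)}=\mathring r_{\mathfrak u}^N(x,y),
\]
which is a direct cancellation: the factors $\prod_{i\neq k}Q_{x,y}({\rm d}u^{(i)})$ cancel, the term $r(x,y)\,\bar Q_{y,x}({\rm d}u^{(k)})/Q_{x,y}({\rm d}u^{(k)})$ reconstitutes $\mathring r_{u^{(k)}}(x,y)$, and the selection weight $\mathring r_{u^{(k)}}/\sum_j\mathring r_{u^{(j)}}$ together with the factor $1/N$ turns this into the average $N^{-1}\sum_j\mathring r_{u^{(j)}}(x,y)$. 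Once this identity holds, $\mathring P^N$ is exactly the MHAAR kernel with asymmetric acceptance ratio $\mathring r_{\mathfrak u}^N$, and $\pi$-reversibility follows from the general reversibility statement for such kernels in Appendix \ref{sec: A general framework for MPR and MHAAR algorithms}.

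Next I would obtain a clean Dirichlet-form representation, which drives (3a) and (2). Summing $Q_1^N$ over $k$ removes the selection weight, and the detailed-balance identity above shows that the two mechanisms contribute equally to $\mathcal E_{\mathring P^N}(f)$; hence
\[
\mathcal E_{\mathring P^N}(f)=\tfrac12\int\pi({\rm d}x)\,q(x,{\rm d}y)\,\alpha_N(x,y)\,[f(x)-f(y)]^2,\qquad \alpha_N(x,y):=\mathbb E\big[\min\{1,\mathring r_{\mathfrak u}^N(x,y)\}\big],
\]
the expectation being over $u^{(1)},\dots,u^{(N)}\overset{{\rm iid}}{\sim} Q_{x,y}$. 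The key probabilistic fact is that the average of $N+1$ i.i.d.\ copies is dominated in the convex order by the average of $N$ copies; I would prove this by writing $\mathring r_{\mathfrak u}^{N+1}$ as the conditional expectation, given all $N+1$ variables, of a leave-one-out average that is equal in law to $\mathring r_{\mathfrak u}^{N}$. Concavity of $t\mapsto\min\{1,t\}$ then makes $\alpha_N(x,y)$ non-decreasing in $N$, giving (3a) (and the equivalent statement for the lag-one autocovariance via $\gamma_1=\mathsf{var}_\pi(f)-\mathcal E_{\mathring P^N}(f)$), while Jensen's inequality and unbiasedness give $\alpha_N(x,y)\leq\min\{1,r(x,y)\}$, the acceptance probability of $P$. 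Since the Dirichlet forms are thereby ordered pointwise in $f$, taking the infimum over $f\in L_0^2(\mathsf X,\pi)$ with $\mathsf{var}_\pi(f)=1$ yields both the monotonicity of $N\mapsto\mathrm{Gap}_R(\mathring P^N)$ and the bound $\mathrm{Gap}_R(\mathring P^N)\leq\mathrm{Gap}_R(P)$, which is (2).

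Finally, the asymptotic-variance claims (3b) and (3c) are where the real work lies, since ordering Dirichlet forms (equivalently lag-one autocovariances) does not by itself order asymptotic variances, which depend on the whole spectrum. Here I would invoke the comparison theorem of \citet{Andrieu_and_Vihola_2014}: for two $\pi$-reversible kernels built from unbiased acceptance-ratio estimators, convex-order domination of the estimators (conditional on $(x,y)$) implies the reverse ordering of asymptotic variances. Having cast $\mathring P^N$ in the required form in step (1), I would apply this with the convex-order relation $\mathring r_{\mathfrak u}^{N+1}\preceq_{\mathrm{cx}}\mathring r_{\mathfrak u}^{N}$ established above to get (3b), and with the degenerate estimator equal to $r(x,y)$ (the convex-order minimum among unbiased estimators, corresponding to $P$) to get (3c). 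The main obstacle is precisely the justification that this convex-order comparison transfers to the asymptotic variance: this requires the Hilbert-space lifting argument of \citet{Andrieu_and_Vihola_2014}, in which the more variable chain is realised as a further randomisation of the less variable one on a common extended space, and one must check that the two-mechanism MHAAR kernels for successive $N$ indeed fit the hypotheses of that abstract result.
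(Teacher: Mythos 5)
Your treatment of parts (1), (2) and (3a) coincides with the paper's own proof: the same Radon--Nikodym cancellation (the paper's \eqref{eq:eq:simplifiedacceptPcircN}) establishes reversibility via the general asymmetric-MH result of Theorem \ref{thm:asymmetricMH-1}; the same single-mechanism Dirichlet-form representation follows by summing out $k$ and using the detailed-balance identity; and the same convex-order fact for averages of exchangeable variables (the paper cites \citet[Corollary 1.5.24]{mullercomparison}, while your conditional-expectation/leave-one-out argument is the standard proof of that fact), combined with Jensen's inequality, yields the pointwise ordering \eqref{eq:convexorderingratio}, the comparison with $P$, and hence the spectral-gap claims by taking infima over $f$.

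The gap is in your final paragraph. The assertion that ordering of Dirichlet forms ``does not by itself order asymptotic variances'' is false in the situation at hand: you have established $\mathcal{E}_{\mathring{P}^{N+1}}(f)\geq\mathcal{E}_{\mathring{P}^{N}}(f)$ for \emph{every} $f\in L^{2}(\mathsf{X},\pi)$, which is equivalent to the operator statement that $\mathring{P}^{N}-\mathring{P}^{N+1}$ is a positive operator on $L^{2}(\mathsf{X},\pi)$, i.e.\ $\langle f,\mathring{P}^{N}f\rangle_{\pi}\geq\langle f,\mathring{P}^{N+1}f\rangle_{\pi}$ for all $f$. For $\pi$-reversible kernels this covariance ordering \emph{does} imply the ordering of asymptotic variances for every $f$: this is precisely the covariance-ordering theorem of \citet{Tierney_1998} (whose proof uses the spectral theorem and operator monotonicity of $t\mapsto(1-\beta t)^{-1}$, not merely the lag-one autocovariance), and it is exactly what the paper invokes --- ``all the monotonicity properties follow from \citet{Tierney_1998} since $\mathring{P}^{N}$ and $\mathring{P}^{N+1}$ are $\pi$-reversible'' --- to obtain (3b), (3c) and the comparison with $P$ in one stroke. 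Your proposed substitute, namely realising the MHAAR kernels for successive $N$ inside the lifting framework of \citet{Andrieu_and_Vihola_2014}, is left as an unverified hypothesis check, and that check is not automatic: their comparison results are formulated for pseudo-marginal-type kernels, not for mixtures of two complementary proposal mechanisms with ratio-dependent selection weights. As written, therefore, (3b) and (3c) are not proved; the repair is simply to replace your final step with an appeal to Tierney's covariance-ordering result, as the paper does.
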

\begin{proof}
The reversibility follows from the fact that this Markov transition
kernel fits in the framework of asymmetric MH updates described in
Theorem \ref{thm:asymmetricMH-1} in Appendix \ref{sec: A general framework for MPR and MHAAR algorithms}
after checking that for any $x,y,\mathfrak{u}\in\mathring{\mathsf{S}}^{N}$,

\begin{align}
\frac{\pi({\rm d}y)Q_{2}^{N}\big(y;{\rm d}(x,\mathfrak{u},k)\big)}{\pi({\rm d}x)Q_{1}^{N}\big(x,{\rm d}(y,\mathfrak{u},k)\big)} & =\frac{\pi({\rm d}y)q(y,{\rm d}x)\frac{1}{N}\bar{Q}_{y,x}({\rm d}u^{(k)})\prod_{i\neq k}Q_{x,y}({\rm d}u^{(i)})}{\pi({\rm d}x)q(x,{\rm d}y)Q_{x,y}({\rm d}u^{(k)})\prod_{i\neq k}Q_{x,y}({\rm d}u^{(i)})\times\frac{\mathring{r}_{u^{(k)}}(x,y)}{\sum_{i=1}^{N}\mathring{r}_{u^{(i)}}(x,y)}}\nonumber \\
 & =\mathring{r}_{u^{(k)}}(x,y)\frac{1/N}{\frac{\mathring{r}_{u^{(k)}}(x,y)}{N\times\mathring{r}_{\mathfrak{u}}^{N}(x,y)}}=\mathring{r}_{\mathfrak{u}}^{N}(x,y).\label{eq:eq:simplifiedacceptPcircN}
\end{align}
For the other statements we first start by noticing that the expression
for the Dirichlet form associated with $\mathring{P}^{N}$ can be
rewritten in either of the following simplified forms 
\begin{align*}
\mathcal{E}_{\mathring{P}^{N}} & =\frac{1}{2}\int\pi({\rm d}x)\int_{\mathsf{\mathfrak{U}\times[N]}}Q_{1}^{N}\big(x,{\rm d}(y,\mathfrak{u},k)\big)\min\{1,\mathring{r}_{\mathfrak{u}}^{N}(x,y)\}\left(f(x)-f(y)\right)^{2}\\
 & =\frac{1}{2}\int\pi({\rm d}x)\int_{\mathfrak{U}\times[N]}Q_{2}^{N}\big(x,{\rm d}(y,\mathfrak{u},k)\big)\min\{1,1/\mathring{r}_{\mathfrak{u}}^{N}(y,x)\}\left(f(x)-f(y)\right)^{2}.
\end{align*}
This follows from the identities established in \eqref{eq:PcircleN}
and \eqref{eq:eq:simplifiedacceptPcircN}. The expression on the first
line turns out to be particularly convenient. A well known result
from the convex order literature states that for any $n\geq2$ exchangeable
random variables $Z_{1},\ldots,Z_{n}$ and any convex function $\phi$
we have $\mathbb{E}\left[\phi\left(n^{-1}\sum_{i=1}^{n}Z_{i}\right)\right]\leq\mathbb{E}\left[\phi\left((n-1)^{-1}\sum_{i=1}^{n-1}Z_{i}\right)\right]$
whenever the expectations exist \citep[Corollary 1.5.24]{mullercomparison}.
The two sums are said to be convex ordered. Now since $a\mapsto-\min\{1,a\}$
is convex we deduce that for any $N\geq1$, $x,y\in\mathsf{X}$, 
\begin{equation}
\int_{\mathsf{U}^{N}}Q_{x,y}^{N}({\rm d}\mathfrak{u})\min\{1,\mathring{r}_{\mathfrak{u}}^{N}(x,y)\}\leq\int_{\mathsf{U}^{N+1}}Q_{x,y}^{N+1}({\rm d}\mathfrak{u})\min\{1,\mathring{r}_{\mathfrak{\mathfrak{u}}}^{N+1}(x,y)\}\label{eq:convexorderingratio}
\end{equation}
where $Q_{x,y}^{N}(\mathrm{d}\mathfrak{u}):=\prod_{i=1}^{N}Q_{x,y}(\mathrm{d}u^{(i)})$,
and consequently for any $f\in L^{2}(\mathsf{X},\pi)$ and $N\geq1$
\[
\mathcal{E}_{\mathring{P}^{N+1}}(f)\leq\mathcal{E}_{\mathring{P}^{N}}(f).
\]
All the monotonicity properties follow from \citet{Tierney_1998}
since $\mathring{P}^{N}$ and $\mathring{P}^{N+1}$ are $\pi-$reversible.
The comparisons to $P$ follow from the application of Jensen's inequality
to $a\mapsto\min\{1,a\}$, which leads for any $x,y\in\mathsf{X}$
to 
\[
\int_{\mathfrak{U}}Q_{x,y}^{N}({\rm d}\mathfrak{u})\min\{1,\mathring{r}_{\mathfrak{u}}^{N}(x,y)\}\leq\min\{1,r\big(x,y\big)\},
\]
and again using the results of \citet{Tierney_1998}. 
\end{proof}
This result motivates the practical usefulness of the algorithm, in
particular in a parallel computing environment. Indeed, one crucial
property of Algorithm \ref{alg: MHAAR for Pseudo-marginal ratio}
is that in both moves $Q_{1}^{N}(\cdot)$ and $Q_{2}^{N}(\cdot)$,
sampling of $u^{(1)},\ldots,u^{(N)}$ and computation of $\mathring{r}_{u^{(1)}}(x,y),\ldots,\mathring{r}_{u^{(N)}}(x,y)$
can be performed in a parallel fashion and offers the possibility
to reduce the variance $\mathsf{var}(f,\mathring{P})$ of estimators,
but more importantly the burn-in period of algorithms. Indeed one
could object that running $M\in\mathbb{N}^{+}$ independent chains
in parallel with $N=1$ and combining their averages, instead of using
the output from a single chain with $N=M$ would achieve variance
reduction. However our point is that the former does not speed up
convergence to equilibrium, while the latter will, in general. Unfortunately,
while estimating the asymptotic variance $\mathsf{var}(f,\mathring{P}^{N})$
from simulations is achievable, estimating time to convergence to
equilibrium is far from standard in general. The following toy example
is an exception and illustrates our point.
\begin{example}
Here we let $\pi$ be the uniform distribution on $\mathsf{X}=\{-1,1\}$,
$\mathsf{U}=\{a,a^{-1}\}$ for $a>0$, $Q_{x,-x}(u=a)=1/(1+a)$, $Q_{x,-x}(u=1/a)=a/(1+a)$
and $\varphi(u)=1/u$. In other words $\mathring{P}$ can be reparametrized
in terms of $a$ and with the choice $q(x,-x)=1-\theta$ for $(\theta,x)\in[0,1)\times\mathsf{X}$
we obtain
\begin{align*}
\mathring{P}(x,-x) & =(1-\theta)\left[\frac{1}{1+a}\min\big\{1,a\big\}+\frac{a}{1+a}\min\big\{1,a^{-1}\big\}\right].
\end{align*}
Note that there is no need to be more specific than say $Q_{x,x}(u)>0$
for $x,u\in\mathsf{X}\times\mathsf{U}$ as then a proposed ``stay''
is always accepted. This suggests that we are in fact drawing the
acceptance ratio, and corresponds to \citep[Example 8 in][]{Andrieu_and_Vihola_2014}
of their abstract parametrisation of PMR algorithms. Now for $N\geq2$
and $x\in\mathsf{X}$ we have
\begin{align*}
\mathring{P}^{N}(x,-x) & =\frac{1-\theta}{2}\left[\sum_{k=0}^{N}\beta^{N}(k)\min\big\{1,w_{k}(N)\big\}\right.\\
 & \hspace{1.5cm}+\left.\sum_{k=0}^{N}\left(\frac{a}{1+a}\beta^{N-1}(k-1)+\frac{1}{1+a}\beta^{N-1}(k)\right)\min\big\{1,w_{k}^{-1}(N)\big\}\right],
\end{align*}
where $\beta^{N}(k)$ is the probability mass function of the binomial
distribution of parameters $N$ and $1/(1+a)$ and $w_{k}(N):=ka/N+\big(1-k/N\big)a^{-1}.$The
second largest eigenvalue of the corresponding Markov transition matrix
is $\lambda_{2}(N)=1-2\mathring{P}^{N}(x,-x)$ from which we find
the relaxation time $T_{{\rm relax}}(N):=1/\big(2\mathring{P}^{N}(x,-x)\big)$,
and bounds on the mixing time $T_{{\rm mix}}(\epsilon,N)$, that is
the number of iterations required for the Markov chain to have marginal
distribution within $\epsilon$ of $\pi$, in the total variation
distance, \citet[Theorem 12.3 and Theorem 12.4]{levin2017markov}
\[
-(T_{{\rm relax}}(N)-1)\log(2\epsilon)\leq T_{{\rm mix}}(\epsilon,N)\leq-T_{{\rm relax}}(N)\log(\epsilon/2).
\]
We define the time reduction, $\gamma(N):=T_{{\rm relax}}(N)/T_{{\rm relax}}(1)$,
which is independent of $\theta$ and captures the benefit of MHAAR
in terms of convergence to equilibrium. In Fig. \ref{fig:toy-example-relaxation}
we present the evolution of $N\mapsto\gamma(N)$ for $a=2,5,10$ and
$\gamma(1000)$ as a function of $a$. As expected the worse the algorithm
corresponding to $\mathring{P}$ is, the more beneficial averaging
is: for $a=2,5,10$ we observe running time reductions of approximately
$35\%$, $65\%$ and $80\%$ respectively. This suggests that computationally
cheap, but possibly highly variable, estimators of the acceptance
ratio may be preferable to reduce burn-in when a parallel machine
is available and communication costs are negli-geable.

\begin{figure}
\includegraphics[width=0.3\textheight]{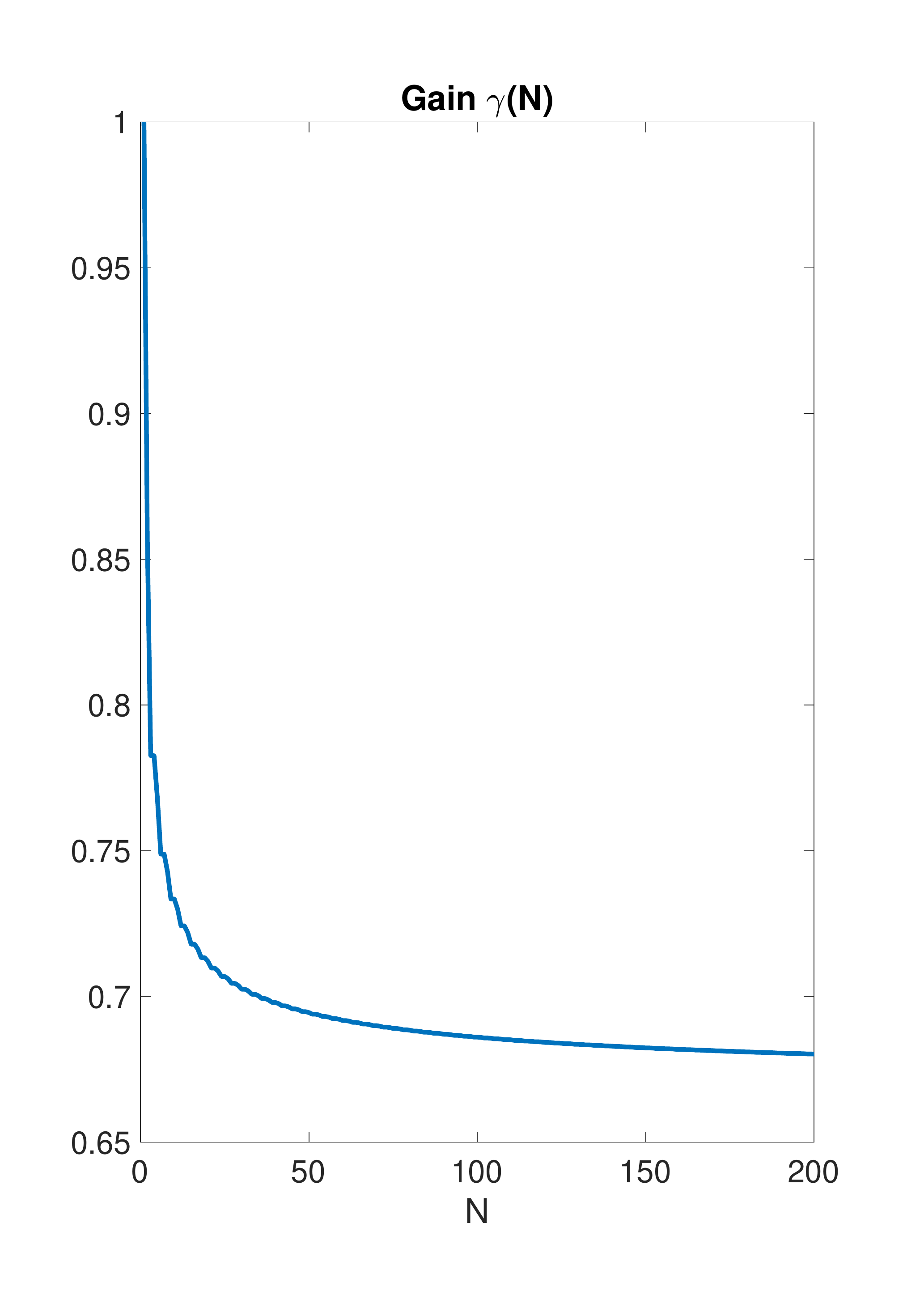}\hfill{}\includegraphics[width=0.3\textheight]{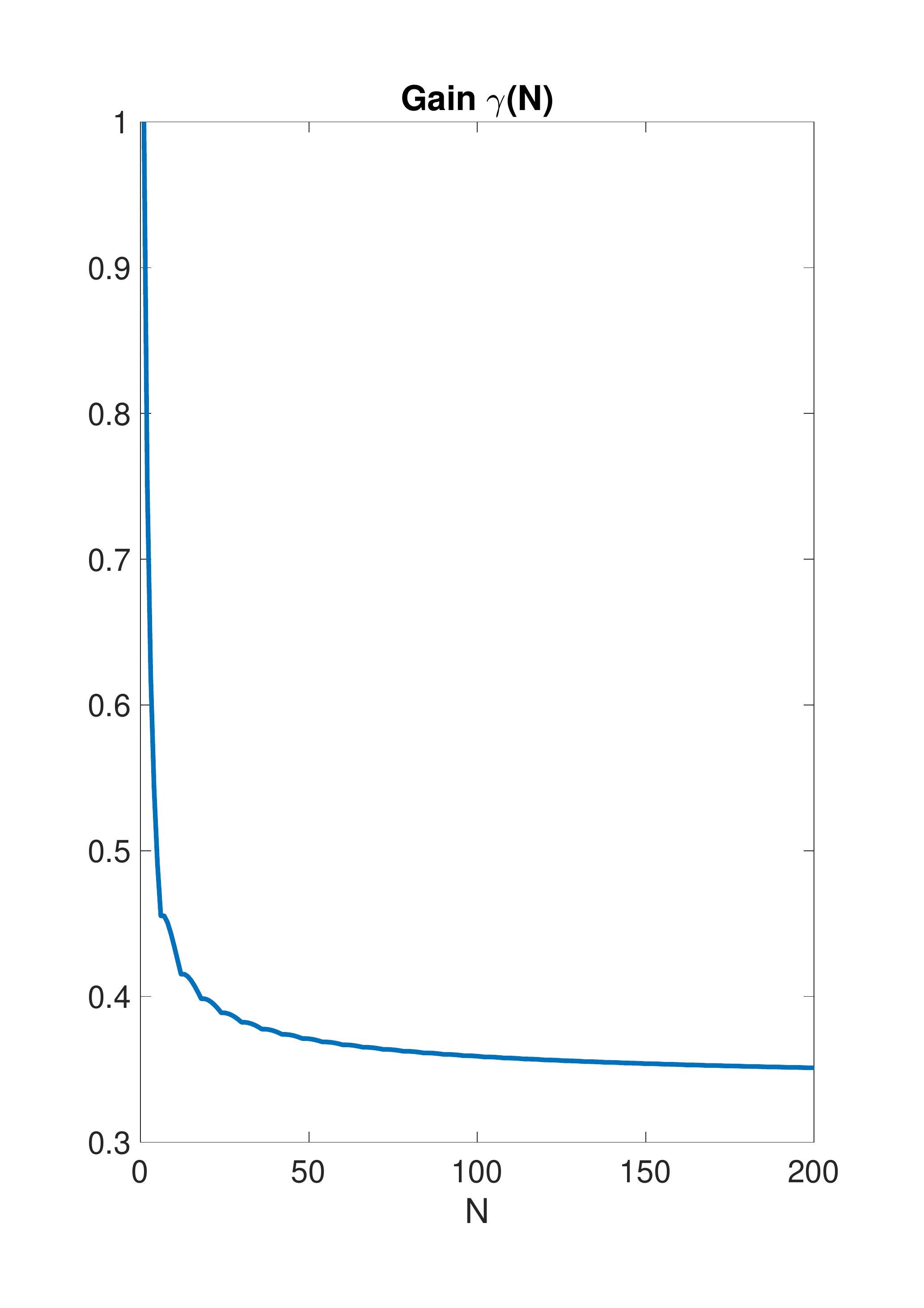}

\vspace{0cm}

\includegraphics[width=0.3\textheight]{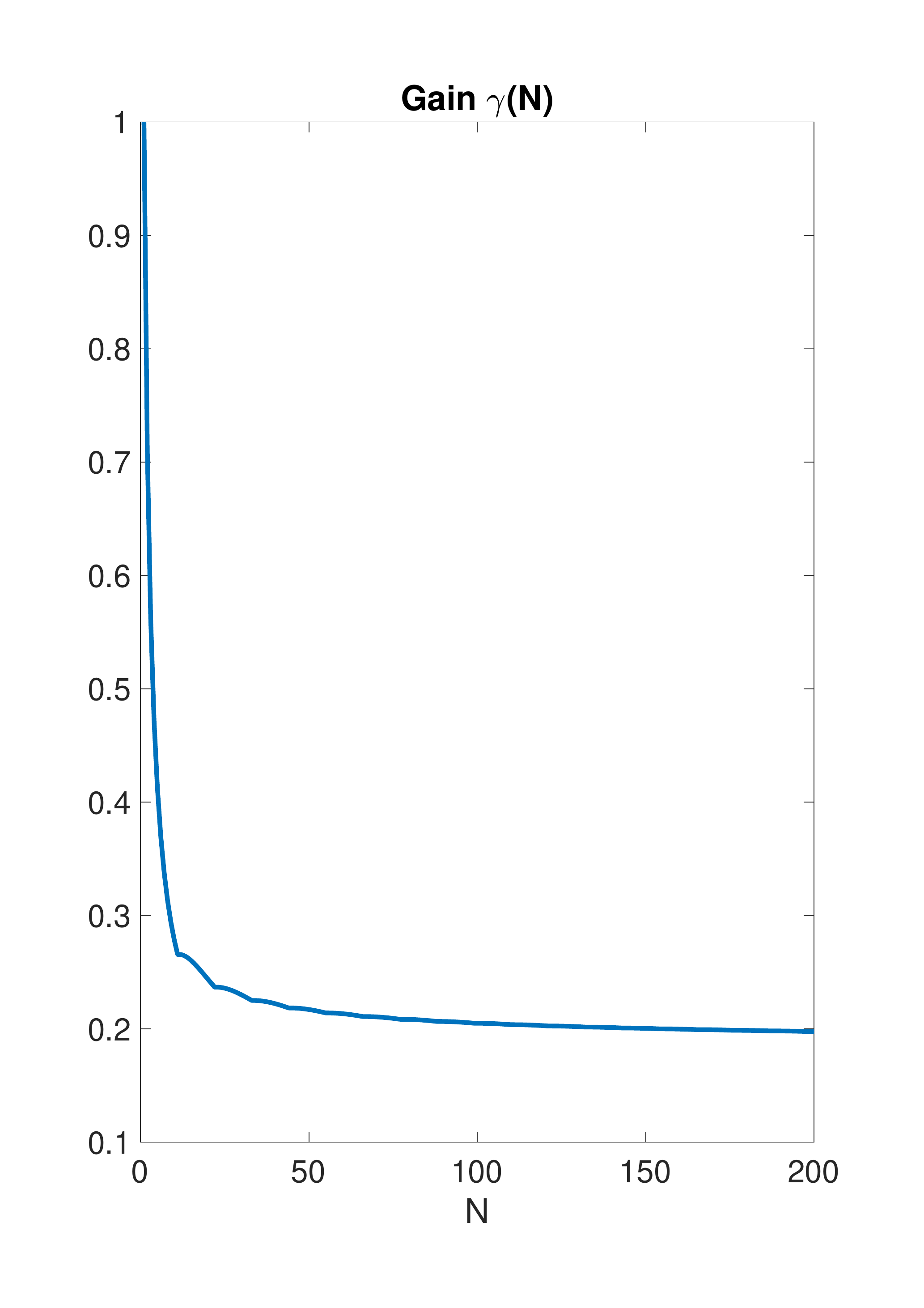}\hfill{}\includegraphics[width=0.3\textheight]{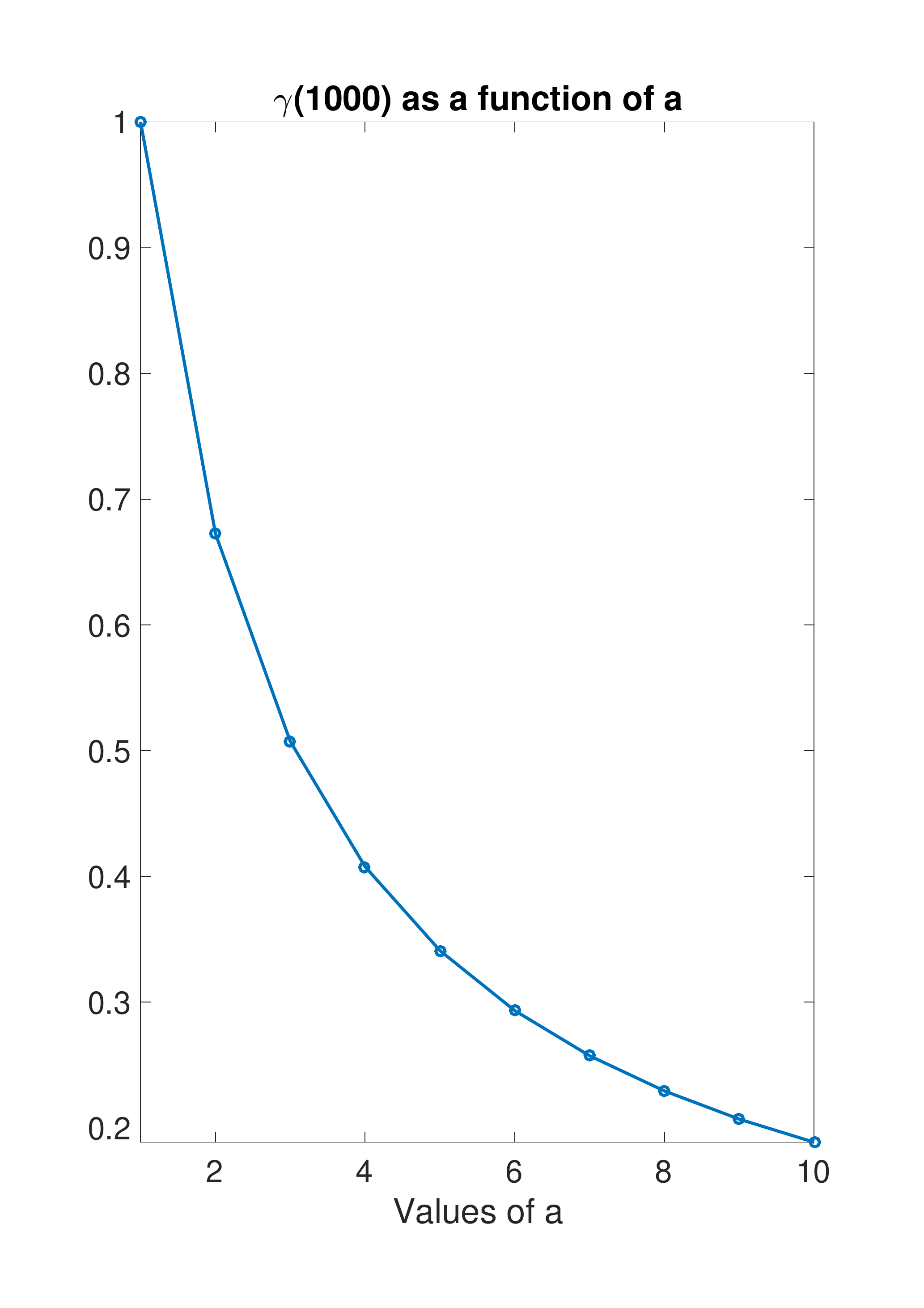}

\caption{\label{fig:toy-example-relaxation}Top left: $a=2$, Top right: $a=5$,
Bottom left: $a=10$, Bottom right: evolution of $\gamma(1000)$ as
a function of $a$}

\end{figure}
\end{example}

\subsubsection{Introducing dependence \label{subsec: Introducing dependence}}

The following discussion on possible extensions can be omitted on
a first reading. There are numerous possible variations around the
basic algorithm presented above. A practically important extension
related to the order in which variables are drawn is discussed in
Section \ref{sec: Pseudo-marginal ratio algorithms for latent variable models}
in the general context of latent variable models. There is another
possible extension worth mentioning here. Close inspection of the
proof of $\pi-$reversibility of $\mathring{P}^{N}$ in Theorem \ref{thm:theoreticaljustification}
suggests that conditional independence of $u^{(1)},\ldots,u^{(N)}$
is not a requirement. Define $\mathfrak{u}^{(-k)}:=\left(u^{(1)},\ldots,u^{(k-1)},u^{(k+1)},\ldots,u^{(N)}\right)$.
\begin{thm}
\label{thm:generalisationexchangeable}Let $N\geq1$ and for any $x,y\in\mathsf{X}$
let $Q_{x,y}^{N}({\rm d}\mathfrak{u})$ be a probability distribution
on $\big(\mathfrak{U},\mathcal{U}^{\otimes N}\big)$ such that all
its marginals are identical and equal to \textup{$Q_{x,y}(\cdot)$.
Assume that} 
\begin{align*}
Q_{1}^{N}\big(x,{\rm d}(y,\mathfrak{u},k)\big) & :=q(x,{\rm d}y)Q_{x,y}^{N}({\rm d}\mathfrak{u})\frac{\mathring{r}_{u^{(k)}}(x,y)}{\sum_{i=1}^{N}\mathring{r}_{u^{(i)}}(x,y)},\\
Q_{2}^{N}\big(x,{\rm d}(y,\mathfrak{u},k)\big) & :=q(x,{\rm d}y)\frac{1}{N}\bar{Q}_{x,y}({\rm d}u^{(k)})Q_{y,x}^{N}({\rm d}\mathfrak{u}^{(-k)}\mid u^{(k)}).
\end{align*}
Then $\mathring{P}^{N}$ with acceptance ratio $\mathring{r}_{\mathfrak{u}}^{N}(x,y)$
as in \eqref{eq: average acceptance ratio} is $\pi-$reversible.
Further, if $u^{(1)},\ldots,u^{(N)}$ are exchangeable with respect
to $Q_{x,y}^{N}({\rm d}\mathfrak{u})$ then all the comparison results
in Theorem \ref{thm:theoreticaljustification} still hold.
\end{thm}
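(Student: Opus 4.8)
The plan is to mirror the proof of Theorem \ref{thm:theoreticaljustification} and to isolate the two places where the independence of $u^{(1)},\ldots,u^{(N)}$ was actually used: once in the reversibility computation, and once in the convex-order inequality underlying the monotonicity statements. For reversibility I would again appeal to the asymmetric MH framework of Theorem \ref{thm:asymmetricMH-1} and simply re-evaluate the Radon--Nikodym derivative $\pi({\rm d}y)Q_2^N\big(y;{\rm d}(x,\mathfrak{u},k)\big)\big/\pi({\rm d}x)Q_1^N\big(x,{\rm d}(y,\mathfrak{u},k)\big)$. The one structural input I need is the factorisation $Q_{x,y}^N({\rm d}\mathfrak{u})=Q_{x,y}({\rm d}u^{(k)})\,Q_{x,y}^N({\rm d}\mathfrak{u}^{(-k)}\mid u^{(k)})$, whose $u^{(k)}$-marginal is $Q_{x,y}$ precisely because all marginals of $Q_{x,y}^N$ are assumed identical and equal to $Q_{x,y}$. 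Substituting this into $Q_1^N$, and observing that $Q_2^N(y;\cdot)$ carries the matching conditional $Q_{x,y}^N({\rm d}\mathfrak{u}^{(-k)}\mid u^{(k)})$, the two conditional factors cancel, and what survives is exactly the single-variable ratio $\mathring{r}_{u^{(k)}}(x,y)$ multiplied by $(1/N)\sum_i\mathring{r}_{u^{(i)}}(x,y)/\mathring{r}_{u^{(k)}}(x,y)$, i.e.\ $\mathring{r}_{\mathfrak{u}}^N(x,y)$, reproducing \eqref{eq:eq:simplifiedacceptPcircN} verbatim. Since this uses only the identical-marginals hypothesis and not exchangeability, the first conclusion holds at that greater level of generality.

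For the comparison results I would carry the earlier proof over essentially line by line, checking at each step that exchangeability suffices wherever independence was invoked. The comparison to $P$ rests on Jensen's inequality applied to $a\mapsto\min\{1,a\}$ together with the unbiasedness $\int Q_{x,y}^N({\rm d}\mathfrak{u})\,\mathring{r}_{\mathfrak{u}}^N(x,y)=r(x,y)$; the latter follows from linearity of the integral and the identical-marginals assumption alone, so it is unaffected. The monotonicity in $N$ reduces, exactly as before, to the convex-order inequality \eqref{eq:convexorderingratio}, and the cited result \citep[Corollary 1.5.24]{mullercomparison} is stated for \emph{exchangeable} rather than independent summands; hence applying it to the exchangeable variables $Z_i:=\mathring{r}_{u^{(i)}}(x,y)$ delivers the inequality unchanged, after which Tierney's comparison theorems \citep{Tierney_1998} transfer the Dirichlet-form ordering to the right spectral gaps and to the asymptotic variances.

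The one point requiring care---and the step I would single out as the main obstacle---is the across-$N$ nature of the convex-order comparison. Corollary 1.5.24 compares the average of $N+1$ exchangeable variables with the average of $N$ of them drawn from the \emph{same} law, whereas $\mathcal{E}_{\mathring{P}^{N+1}}$ and $\mathcal{E}_{\mathring{P}^{N}}$ are built from $Q_{x,y}^{N+1}$ and $Q_{x,y}^{N}$ respectively. To close the argument I would require (and make explicit) that the family $\{Q_{x,y}^N\}_N$ be projective, i.e.\ that the law of any $N$ coordinates under $Q_{x,y}^{N+1}$ coincide with $Q_{x,y}^N$; this is automatic in the i.i.d.\ case and is the natural consistency condition making the monotone embedding legitimate. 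Granting it, the right-hand side of \eqref{eq:convexorderingratio} is genuinely the $N$-variable quantity entering $\mathcal{E}_{\mathring{P}^N}$, and every monotonicity conclusion of Theorem \ref{thm:theoreticaljustification} follows as before.
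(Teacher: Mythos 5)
Your argument is correct and, in both of its parts, coincides with the paper's own proof: reversibility is obtained by exactly the same Radon--Nikodym computation, using the factorisation $Q_{x,y}^{N}({\rm d}\mathfrak{u})=Q_{x,y}({\rm d}u^{(k)})\,Q_{x,y}^{N}({\rm d}\mathfrak{u}^{(-k)}\mid u^{(k)})$ guaranteed by the identical-marginals hypothesis, after which the conditional factors cancel and \eqref{eq:eq:simplifiedacceptPcircN} is reproduced verbatim; and the comparison statements are handled, as in the paper, by noting that \citep[Corollary 1.5.24]{mullercomparison} only requires exchangeability of the summands, followed by the Jensen/Tierney arguments of Theorem \ref{thm:theoreticaljustification}.

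The one place where you go beyond the paper is the consistency issue you flag, and you are right to flag it: the paper's proof disposes of the monotonicity claims with the single sentence that ``the exchangeability assumption ensures that \eqref{eq:convexorderingratio} holds,'' but this is only literally true if the family $\{Q_{x,y}^{N}\}_{N}$ is projective, i.e.\ the law of any $N$ coordinates under $Q_{x,y}^{N+1}$ is $Q_{x,y}^{N}$. The cited corollary compares partial averages \emph{within a single} exchangeable family, so it bounds $\int Q_{x,y}^{N+1}({\rm d}\mathfrak{u})\min\{1,\mathring{r}_{\mathfrak{u}}^{N+1}(x,y)\}$ from below by the corresponding $N$-term quantity computed under the $N$-dimensional marginal of $Q_{x,y}^{N+1}$; identifying that quantity with $\int Q_{x,y}^{N}({\rm d}\mathfrak{u})\min\{1,\mathring{r}_{\mathfrak{u}}^{N}(x,y)\}$ is precisely your projectivity condition. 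Without it \eqref{eq:convexorderingratio} can fail: take $Q_{x,y}^{N}$ to be the i.i.d.\ product but $Q_{x,y}^{N+1}$ the comonotone law putting all coordinates equal; then the right-hand side of \eqref{eq:convexorderingratio} equals $\int Q_{x,y}({\rm d}u)\min\{1,\mathring{r}_{u}(x,y)\}$, which is in general strictly smaller than the left-hand side. Note that the fixed-$N$ comparisons with $P$ (items 2, first part, and 3(c) of Theorem \ref{thm:theoreticaljustification}) need only Jensen's inequality and the identical-marginals assumption, so they survive without projectivity; it is only the monotonicity in $N$ that requires it. Your version of the statement, with the consistency hypothesis made explicit, is therefore the correct reading of the theorem, and your proof is sound.
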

\begin{proof}
One can check that 
\begin{align*}
\frac{\pi({\rm d}y)Q_{2}^{N}\big(y,{\rm d}(x,\mathfrak{u},k)\big)}{\pi({\rm d}x)Q_{1}^{N}\big(x,{\rm d}(y,\mathfrak{u},k)\big)} & =\frac{\pi({\rm d}y)q(y,{\rm d}x)\frac{1}{N}\bar{Q}_{y,x}({\rm d}u^{(k)})Q_{x,y}^{N}({\rm d}\mathfrak{u}^{(-k)}\mid u^{(k)})}{\pi({\rm d}x)q(x,{\rm d}y)Q_{x,y}({\rm d}u^{(k)})Q_{x,y}^{N}({\rm d}\mathfrak{u}^{(-k)}\mid u^{(k)})\frac{\mathring{r}_{u^{(k)}}(x,y)}{\sum_{i=1}^{N}\mathring{r}_{u^{(i)}}(x,y)}}\\
 & =\mathring{r}_{u^{(k)}}(x,y)\frac{1/N}{\frac{\mathring{r}_{u^{(k)}}(x,y)}{N\times\mathring{r}_{\mathfrak{u}}^{N}(x,y)}}=\mathring{r}_{\mathfrak{u}}^{N}(x,y),
\end{align*}
which remains the same as in \eqref{eq: average acceptance ratio}.
The exchangeability assumption ensures that \eqref{eq:convexorderingratio}
holds. 
\end{proof}
\begin{example}
The following is a short discussion of a scenario which may be relevant
in practice. Assume that it is possible to sample $u^{\left(1\right)}$
from $Q_{x,y}(\cdot)$ but that this is computationally expensive,
as is the case for sampling exactly from Markov random fields such
as the Ising model. One could suggest sampling the remaining samples
$\mathfrak{u}^{(-1)}$ as defined in $Q_{1}^{N}(\cdot,\cdot)$ using
a $Q_{x,y}-$reversible Markov transition probability $K_{x,y}$ (and
similarly for $Q_{y,x}(\cdot)$ in $Q_{2}^{N}(\cdot,\cdot)$ using
$K_{y,x}$), which will in general be far less expensive. Here $Q_{1}^{N}(\cdot,\cdot)$
corresponds to sampling 
\[
u^{(1:N)}\sim Q_{x,y}({\rm d}u^{(1)})K_{x,y}(u^{(1)},{\rm d}u^{(2)})\ldots K_{x,y}(u^{(N-1)},{\rm d}u^{(N)}).
\]
In order to describe sampling in $Q_{2}^{N}(\cdot,\cdot)$, we first
establish a convenient expression for $Q_{x,y}^{N}({\rm d}\mathfrak{u}^{(-k)}\mid u^{(k)})$
for $x,y\in\mathsf{X}^{2}$ and $k=1,\ldots,N$. By reversibility
of $K_{x,y}$, we have for $k=1,\ldots,N$ (with straightforward conventions
for $k\in[N]$)
\begin{align*}
Q_{x,y}({\rm d}u^{(1)})\prod_{i=2}^{N}K_{x,y}(u^{(i-1)},{\rm d}u^{(i)})=Q_{x,y}({\rm d}u^{(k)})\prod_{i=2}^{k}K_{x,y}(u^{(i)},{\rm d}u^{(i-1)})\prod_{i=k+1}^{N}K_{x,y}(u^{(i-1)},{\rm d}u^{(i)})
\end{align*}
from which one obtains the desired conditional, and deduces that sampling
the auxiliary variables in $Q_{2}^{N}(\cdot,\cdot)$ consists of sampling
$k\sim\mathcal{U}\{1,2,\ldots,N\}$, $u^{(k)}\sim\bar{Q}_{x,y}(\cdot)$,
and then simulate the rest of the chain ``forward'' and ``backward''
as follows
\[
(u^{(k-1)},\ldots,u^{(1)})\sim\prod_{i=2}^{k}K_{y,x}(u^{(i)},{\rm d}u^{(i-1)}),\quad(u^{(k+1)},\ldots,u^{(N)})\sim\prod_{i=k+1}^{N}K_{y,x}(u^{(i-1)},{\rm d}u^{(i)}).
\]
Note that in this case, Remark \ref{rem:samplingkcanbeomitted} does
not hold. While sampling $k$ is still not necessary in $Q_{1}^{N}(\cdot,\cdot)$,
sampling $k$ in $Q_{2}^{N}(\cdot,\cdot)$ is required. The last part
of the theorem is applicable by averaging over the set of permutations
of $[N]$ 
\[
Q_{x,y}^{N}\big({\rm d}\mathfrak{u}\big)=\frac{1}{N!}\sum_{\sigma\in\mathfrak{S}}Q_{x,y}({\rm d}u^{(\sigma(1))})K_{x,y}(u^{(\sigma(1))},{\rm d}u^{(\sigma(2))})\ldots K_{x,y}(u^{(\sigma(N-1))},{\rm d}u^{(\sigma(N))}),
\]
and noting that for $k\in[N]$ and by using the reversibility as above
for each $\sigma\in\mathfrak{S}$ leads to
\[
Q_{x,y}^{N}\big({\rm d}\mathfrak{u}\big)=Q_{x,y}({\rm d}u^{(k)})\frac{1}{N!}\sum_{\sigma\in\mathfrak{S}}\prod_{i=2}^{\sigma^{-1}(k)}K_{x,y}(u^{(\sigma(i))},{\rm d}u^{(\sigma(i-1))})\prod_{i=\sigma^{-1}(k)+1}^{N}K_{x,y}(u^{(\sigma(i-1))},{\rm d}u^{(\sigma(i))}).
\]
We do not investigate this algorithm further here.
\end{example}

\section{Improving PMR algorithms with AIS \label{sec: Improving pseudo-marginal ratio algorithms for doubly intractable models}}

Before moving on to more complex scenarios in Section \ref{sec: Pseudo-marginal ratio algorithms for latent variable models},
we focus in this section on the averaging of acceptance ratios in
the specific context of our running Example \ref{ex:doublyintractaveraging}.
The exchange algorithm \citep{Murray_et_al_2006}, described in Example
\ref{ex: exchange algorithm}, exploits the fact that for $\theta,\theta'\in\Theta$
and $u\sim\ell_{\theta'}(\cdot)$, the ratio $g_{\theta}\big(u\big)/g_{\theta'}\big(u\big)$
is an estimator of $C_{\theta}/C_{\theta'}$. Another possible estimator
of $C_{\theta}/C_{\theta'}$, based on AIS \citep{Crooks1998,Neal_2001},
was also used in \citet{Murray_et_al_2006}. It has the advantage
that it involves a tuning parameter which can be used to reduce the
variability of the estimator, and hence improve the theoretical performance
of exchange type algorithms. It has recently been established theoretically
that this approach can beat the curse of dimensionality by reducing
complexity from exponential to polynomial in the problem dimension
\citet{2016arXiv161207583A,beskos2014stability}. This is however
at the expense of an additional computational cost. In this section,
we show that the AIS based exchange algorithm can be reinterpreted
as a PMR algorithm of the form \eqref{eq:defPring}. It is thus straightforward
to extend this methodology through Algorithm \ref{alg: MHAAR for Pseudo-marginal ratio}
so as to use acceptance ratio estimators obtained through averaging. 

\subsection{AIS based exchange algorithm and its average acceptance ratio form
\label{subsec: AIS based exchange algorithm and its average acceptance ratio form}}

The estimator $g_{\theta}\big(u\big)/g_{\theta'}\big(u\big)$ for
$u\sim\ell_{\theta'}(\cdot)$ of the ratio of $C_{\theta}/C_{\theta'}$
may be very variable when the functions $g_{\theta}(\cdot)$ and $g_{\theta'}(\cdot)$
differ too much. The basic idea behind AIS consists of rewriting the
ratio of interest as a telescoping product of ratios of normalising
constants corresponding to a sequence of artificial probability densities
\[
\mathscr{P}_{\theta,\theta',T}:=\big\{\pi_{\theta,\theta',t}(\cdot),t=0,\ldots,T+1\big\}
\]
 for some $T\geq1$ evolving from $\pi_{\theta,\theta',0}(u)=\ell_{\theta'}(u)$
to $\pi_{\theta,\theta',T+1}(u)=\ell_{\theta}(u)$; i.e.\ $\pi_{\theta,\theta',t}(u)=f_{\theta,\theta',t}(u)/C_{\theta,\theta',t}$
where $f_{\theta,\theta',t}(u)$ can be computed pointwise but $C_{\theta,\theta',t}$
is intractable. More precisely one rewrites $C_{\theta}/C_{\theta'}=\prod_{t=0}^{T}C_{\theta,\theta',t+1}/C_{\theta,\theta',t}$
(with $C_{\theta,\theta',0}=C_{\theta'}$ and $C_{\theta,\theta',T+1}=C_{\theta}$)
where the densities $\big\{ f_{\theta,\theta',t}(\cdot),t=1,\ldots,T\big\}$
are such that estimating each term $C_{\theta,\theta',t+1}/C_{\theta,\theta',t}$
can be performed efficiently using the technique above for example.
Good performance therefore necessitates that successive unnormalised
densities are close (and become ever closer as $T$ increases). A
naive implementation would require exact sampling from each of the
intermediate probability distributions but the remarkable fact noticed
independently in \citet{Crooks1998} and \citet{Neal_2001} is that
the estimators involved in the product may arise from an inhomogeneous
Markov chain, therefore rendering the algorithm highly practical.
The following proposition establishes that this algorithm is of the
same form as $\mathring{P}$ given in \eqref{eq:defPring}.
\begin{prop}
\label{prop: AIS MCMC for doubly intractable models}Assume the set-up
of Example \ref{ex:doublyintractable} and for all $\theta,\theta'\in\Theta$,
let 

\begin{enumerate}
\item $\mathscr{F}_{\theta,\theta',T}=\big\{ f_{\theta,\theta',t}(\cdot),t=0,\ldots,T+1\big\}$
be a family of tractable unnormalised densities of $\mathscr{P}_{\theta,\theta',T}$
such that for $t=0,\ldots,T$

\begin{enumerate}
\item $f_{\theta,\theta',t}(\cdot)$ and $f_{\theta,\theta',t+1}(\cdot)$
have the same support, 
\item for any $u\in\mathsf{Y}$ 
\[
f_{\theta,\theta',0}(u)=g_{\theta'}(u),\quad f_{\theta,\theta',T+1}(u)=g_{\theta}(u),\quad f_{\theta,\theta',t}(u)=f_{\theta',\theta,T+1-t}(u),
\]
\end{enumerate}
\item $\mathscr{R}_{\theta,\theta',T}=\big\{ R_{\theta,\theta',t}(\cdot,\cdot)\colon\mathsf{Y}\times\mathcal{Y}\to[0,1],t=1,\ldots,T\big\}$
be a family of Markov transition kernels such that for any $t=1,\ldots,T$

\begin{enumerate}
\item $R_{\theta,\theta',t}(\cdot,\cdot)$ is $\pi_{\theta,\theta',t}-$reversible, 
\item $R_{\theta,\theta',t}(\cdot,\cdot)=R_{\theta',\theta,T+1-t}(\cdot,\cdot)$, 
\end{enumerate}
\item $Q_{\theta,\theta'}(\cdot)$ be the probability distributions $\big(\mathsf{U},\mathcal{U}\big)$,
where $\mathsf{U}:=\mathcal{\mathsf{Y}}^{T+1}$, defined for $u:=(u_{0},\ldots,u_{T})\in\mathsf{U}$
as 
\begin{align}
\;Q_{\theta,\theta'}({\rm d}u):= & \ell_{\theta'}({\rm d}u_{0})\prod_{t=1}^{T}R_{\theta,\theta',t}(u_{t-1},{\rm d}u_{t}),\label{eq: Q for doubly intractable with annealing}
\end{align}
and $\varphi$ the involution reversing the order of the components
of $u$; i.e.\ $\varphi(u_{0},u_{1},\ldots,u_{T}):=(u_{T},u_{T-1},\ldots,u_{0})$
for all $u\in\mathsf{U}$.
\end{enumerate}
\noindent Then for any $\theta,\theta'\in\Theta$ and any $u\in\mathsf{U}$
\[
\bar{Q}_{\theta,\theta'}({\rm d}u)=\ell_{\theta'}({\rm d}u_{T})\prod_{t=1}^{T}R_{\theta',\theta,T-t+1}(u_{T-t+1},{\rm d}u_{T-t}),
\]

\noindent and
\[
\frac{\bar{Q}_{\theta',\theta}({\rm d}u)}{Q_{\theta,\theta'}({\rm d}u)}=\frac{C_{\theta'}}{C_{\theta}}\prod_{t=0}^{T}\frac{f_{\theta,\theta',t+1}(u_{t})}{f_{\theta,\theta',t}(u_{t})}.
\]
The AIS based exchange algorithm of \citet{Murray_et_al_2006} corresponds
to $\mathring{P}$ in \eqref{eq:defPring} with proposal distribution
\[
Q_{1}(\theta,{\rm d}(\theta',u))=q(\theta,{\rm d}\theta')Q_{\theta,\theta'}({\rm d}u)
\]
 and its complementary kernel 
\[
Q_{2}(\theta,{\rm d}(\theta',u))=q(\theta,{\rm d}\theta')\bar{Q}_{\theta,\theta'}({\rm d}u).
\]
Its acceptance ratio on $\mathring{\mathsf{S}}$ is 
\begin{equation}
\mathring{r}_{u}(\theta,\theta')=\frac{q(\theta',\theta)}{q(\theta,\theta')}\frac{\eta(\theta')}{\eta(\theta)}\frac{g_{\theta'}(\mathfrak{y})}{g_{\theta}(\mathfrak{y})}\prod_{t=0}^{T}\frac{f_{\theta,\theta',t+1}(u_{t})}{f_{\theta,\theta',t}(u_{t})}.\label{eq: ratio for doubly intractable with annealing}
\end{equation}

\end{prop}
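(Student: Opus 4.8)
**The plan is to verify the three displayed identities by direct computation, treating the reversal involution $\varphi$ as the engine that converts a forward annealing chain into a backward one.**

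First I would compute $\bar{Q}_{\theta,\theta'}$ from its definition \eqref{eq: Q_bar}. Since $\varphi$ reverses the order of components, sampling $u\sim\bar{Q}_{\theta,\theta'}$ means sampling $\bar{u}\sim Q_{\theta,\theta'}$ and setting $u=\varphi(\bar{u})$, i.e.\ $u_t=\bar{u}_{T-t}$. Substituting into \eqref{eq: Q for doubly intractable with annealing} and re-indexing the product gives a chain that starts at $u_T\sim\ell_{\theta'}$ and runs the kernels $R_{\theta,\theta',t}$ in reverse, which is exactly the claimed expression for $\bar{Q}_{\theta,\theta'}$. The only subtlety is a clean change of index in the product, so this step is routine bookkeeping.

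Next I would establish the Radon--Nikodym derivative. I would write out $\bar{Q}_{\theta',\theta}({\rm d}u)/Q_{\theta,\theta'}({\rm d}u)$ using the expression just derived for $\bar{Q}$ (now with the roles of $\theta,\theta'$ swapped, so that hypotheses 1(b) and 2(b)---the symmetry relations $f_{\theta,\theta',t}=f_{\theta',\theta,T+1-t}$ and $R_{\theta,\theta',t}=R_{\theta',\theta,T+1-t}$---let me rewrite every factor back in terms of the $\theta,\theta'$-indexed objects). The key algebraic step is that the reversibility of each $R_{\theta,\theta',t}$ with respect to $\pi_{\theta,\theta',t}$ (hypothesis 2(a)) lets me convert a backward transition $R(u_t,{\rm d}u_{t-1})$ into a forward one $R(u_{t-1},{\rm d}u_t)$ at the cost of the density ratio $\pi_{\theta,\theta',t}(u_{t-1})/\pi_{\theta,\theta',t}(u_t)$. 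When these ratios are collected across $t=1,\ldots,T$ and combined with the initial factors $\ell_{\theta'}({\rm d}u_T)$ versus $\ell_{\theta'}({\rm d}u_0)$, the forward transition kernels in numerator and denominator cancel completely, and after telescoping the normalising constants $C_{\theta,\theta',t}$ one is left precisely with $\tfrac{C_{\theta'}}{C_{\theta}}\prod_{t=0}^{T}f_{\theta,\theta',t+1}(u_t)/f_{\theta,\theta',t}(u_t)$.

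Finally, the statement about the acceptance ratio follows immediately: substituting this derivative into the general PMR formula \eqref{eq:accept-ratio-circle}, namely $\mathring{r}_u(\theta,\theta')=r(\theta,\theta')\,\bar{Q}_{\theta',\theta}({\rm d}u)/Q_{\theta,\theta'}({\rm d}u)$ with the standard MH ratio $r(\theta,\theta')$ from \eqref{eq: MCMC acceptance probability with intractable likelihood-1}, the intractable factor $C_{\theta}/C_{\theta'}$ in $r$ cancels against the $C_{\theta'}/C_{\theta}$ produced by the derivative, yielding \eqref{eq: ratio for doubly intractable with annealing}. The main obstacle is the careful telescoping in the second step: one must track the reversibility-induced density ratios and the boundary terms $f_{\theta,\theta',0}=g_{\theta'}$, $f_{\theta,\theta',T+1}=g_{\theta}$ simultaneously, making sure the intermediate Markov kernels cancel and only the ratio of unnormalised densities survives. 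Everything else is a matter of applying the symmetry hypotheses and re-indexing.
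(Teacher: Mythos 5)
Your proposal is correct and is essentially the paper's own argument: both derive $\bar{Q}_{\theta,\theta'}$ by pushing $Q_{\theta,\theta'}$ through the reversal involution and the kernel symmetry $R_{\theta,\theta',t}=R_{\theta',\theta,T+1-t}$, then obtain the Radon--Nikodym derivative by flipping each backward transition into a forward one via $\pi_{\theta,\theta',t}$-reversibility so that the kernels cancel and the unnormalised density ratios telescope, and finally substitute into \eqref{eq:accept-ratio-circle} so that the intractable factor $C_{\theta}/C_{\theta'}$ cancels. The only difference is organisational: the paper outsources the telescoping computation to Theorem \ref{thm:AIS} in Appendix \ref{sec: A short justification of AIS and an extension} (applied with $\mu_{0}=\ell_{\theta'}$, $\mu_{T+1}=\ell_{\theta}$, $\mu_{t}=\pi_{\theta,\theta',t}$, $\Pi_{t}=R_{\theta,\theta',t}$), whereas you re-derive that lemma's content inline.
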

\begin{proof}
Since $Q_{\theta,\theta'}(\varphi(A))=\bar{Q}_{\theta,\theta'}(A)$,
we can check that the pair $Q_{1}(x,\cdot)$, $Q_{2}(x,\cdot)$ satisfy
the assumption of Theorem \ref{thm: pseudo-marginal ratio algorithms}
in Appendix \ref{sec: A general framework for MPR and MHAAR algorithms}.
Moreover, by the symmetry assumption on $\mathscr{R}_{\theta,\theta',T}$,
we obtain
\begin{align*}
\bar{Q}_{\theta,\theta'}({\rm d}u) & =\ell_{\theta'}({\rm d}u_{T})\prod_{t=1}^{T}R_{\theta,\theta',t}(u_{T-t+1},{\rm d}u_{T-t})\\
 & =\ell_{\theta'}({\rm d}u_{T})\prod_{t=1}^{T}R_{\theta',\theta,T-t+1}(u_{T-t+1},{\rm d}u_{T-t}),
\end{align*}
so we can apply Theorem \ref{thm:AIS} in Appendix \ref{sec: A short justification of AIS and an extension}
with $\mu_{0}=\ell_{\theta'}$, $\mu_{\tau+1}=\ell_{\theta}$, $\tau=T$
and $\mu_{t}=\pi_{\theta,\theta',t}$ and $\Pi_{t}=R_{\theta,\theta',t}$
for $t=1,\ldots,T$ to show that $\bar{Q}_{\theta',\theta}(\cdot)$
is absolutely continuous with respect to $Q_{\theta,\theta'}(\cdot)$
and the expression for the corresponding Radon-Nikodym derivative
ensures that \eqref{eq:accept-ratio-circle} is indeed equal to \eqref{eq: ratio for doubly intractable with annealing}. 
\end{proof}
By selecting an appropriate sequence of intermediate distributions
$\mathscr{P}_{\theta,\theta',T}$ as detailed in Section \ref{subsec: Numerical example: the Ising model},
the variability of this noisy acceptance ratio can be reduced by increasing
$T$. Another approach to reduce variability is given in Algorithm
\ref{alg: MHAAR-AIS exchange algorithm} which consists of averaging
acceptance ratios as described in Algorithm \ref{alg: MHAAR for Pseudo-marginal ratio}.
For $T=0$ and $N>1$ Algorithm \ref{alg: MHAAR-AIS exchange algorithm}
reduces to that in Example \ref{ex:doublyintractaveraging}, for $N=1$
and $T>0$, we recover the exchange algorithm with bridging of \citet{Murray_et_al_2006}
and for $T=0$ and $N=1$, this reduces to the exchange algorithm.
Our generalisation presents a clear computational interest: while
sampling a realisation of the Markov chain defined by $Q_{\theta,\theta'}(\cdot)$
is fundamentally a serial operation, sampling $N$ independent such
realisations is trivially parallelisable. On an ideal parallel computer,
running the algorithm for any $N>1$ or $N=1$ would take the same
amount of the user's time. We explore numerically combinations of
the parameters $T$ and $N$ in Section \ref{subsec: Numerical example: the Ising model}.

\begin{algorithm}
\caption{MHAR for averaged AIS PMR estimators}

\label{alg: MHAAR-AIS exchange algorithm} 

\KwIn{Current state $\theta_{n}=\theta$.} 

\KwOut{Next sample $\theta_{n+1}$} 

Sample $\theta'\sim q(\theta,\cdot)$ and $v\sim\mathcal{U}(0,1)$.
\\
\If{$v\leq1/2$}{ 

\For{$i=1,\ldots,N$}{

Sample $u_{0}^{(i)}\sim\ell_{\theta'}(\cdot)$ and $u_{t}^{(i)}\sim R_{\theta,\theta',t}(u_{t-1}^{(i)},\cdot)$,
$t=1,\ldots,T$.

} 

Set $\theta_{n+1}=\theta'$ with probability $\min\{1,\mathring{r}_{\mathfrak{u}}^{N}(\theta,\theta')\}$
(see \eqref{eq: ratio for doubly intractable with annealing}), otherwise
set $\theta_{n+1}=\theta$.

}\Else{ 

Sample $u_{T}^{(1)}\sim\ell_{\theta'}(\cdot)$ and $u_{t-1}^{(1)}\sim R_{\theta,\theta',t}(u_{t}^{(1)},\cdot)$,
$t=T,\ldots,1$,\\
\For{$i=2,\ldots,N$,}{

Sample $u_{0}^{(i)}\sim\ell_{\theta}(\cdot)$ and $u_{t}^{(i)}\sim R_{\theta',\theta,t}(u_{t-1}^{(i)},\cdot)$,
$t=1,\ldots,T$.

} 

Set $\theta_{n+1}=\theta'$ with probability $\min\{1,1/\mathring{r}_{\mathfrak{u}}^{N}(\theta',\theta)\}$,
otherwise set $\theta_{n+1}=\theta$.

}
\end{algorithm}

\subsection{Using a single sample from $\ell_{\theta'}(\cdot)$ per iteration\label{subsec: Using a single sample from ell_theta_prime per iteration} }

This section can be omitted on a first reading. In Algorithm \ref{alg: MHAAR-AIS exchange algorithm},
each of the $N$ chains has a different initial point, which is a
sample from an intractable distribution. Obtaining such a sample can
be computationally expensive. Algorithm \ref{alg: MHAAR-AIS exchange algorithm - reduced computation}
is an alternative that only requires one such sample at each iteration.
The proof that the associated Markov kernel is $\pi$-reversible can
be derived from Theorem \ref{thm:generalisationexchangeable} in Section
\ref{subsec: Introducing dependence}, hence we omit it.

Although computationally more expensive on a serial machine, we expect
Algorithm \ref{alg: MHAAR-AIS exchange algorithm} to have better
statistical properties than Algorithm \ref{alg: MHAAR-AIS exchange algorithm - reduced computation}
as it uses independent chains to estimate the acceptance ratio. This
is demonstrated experimentally in Section \ref{subsec: Numerical example: the Ising model}.
Moreover, the computational advantage of Algorithm \ref{alg: MHAAR-AIS exchange algorithm - reduced computation}
is questionable on a parallel architecture, where one can in principle
run all the chains in $Q_{1}^{N}(\cdot,\cdot)$ and $Q_{2}^{N}(\cdot,\cdot)$
of Algorithm \ref{alg: MHAAR-AIS exchange algorithm} at the same
time. In fact, Algorithm \ref{alg: MHAAR-AIS exchange algorithm}
may be even faster since all the chains in the backward move can be
produced in parallel whereas this can not be done in Algorithm \ref{alg: MHAAR-AIS exchange algorithm - reduced computation}.

\begin{algorithm}
\caption{MHAR for averaged AIS PMR estimators reduced computation}

\label{alg: MHAAR-AIS exchange algorithm - reduced computation} 

\KwIn{Current sample $\theta_{n}=\theta$.}

\KwOut{New sample $\theta_{n+1}$} 

Sample $\theta'\sim q(\theta,\cdot)$ and $v\sim\mathcal{U}(0,1)$.
\\
\If{$v\leq1/2$}{

Sample $u_{0}\sim\ell_{\theta'}(\cdot)$.\\
\For{$i=1,\ldots,N$}{ 

Set $u_{0}^{(i)}=u_{0}$ and sample $u_{t}^{(i)}\sim R_{\theta,\theta',t}(u_{t-1}^{(i)},\cdot)$,
$t=1,\ldots,T$.

} 

Set $\theta_{n+1}=\theta'$ with probability $\min\{1,\mathring{r}_{\mathfrak{u}}^{N}(\theta,\theta')\}$,
otherwise set $\theta_{n+1}=\theta$.

}\Else{ 

Sample $u_{T}^{(1)}\sim\ell_{\theta'}(\cdot)$ and $u_{t-1}^{(1)}\sim R_{\theta',\theta,t}(u_{t}^{(1)},\cdot)$,
$t=T,\ldots,1$,\\
\For{$i=2,\ldots,N$,}{

Set $u_{0}^{(i)}=u_{0}^{(1)}$ and sample $u_{t}^{(i)}\sim R_{\theta',\theta,t}(u_{t-1}^{(i)},\cdot)$,
$t=1,\ldots,T$.

} 

Set $\theta_{n+1}=\theta'$ with probability $\min\{1,1/\mathring{r}_{\mathfrak{u}}^{N}(\theta',\theta)\}$,
otherwise set $\theta_{n+1}=\theta$.

} 
\end{algorithm}

\subsection{Numerical example: the Ising model\label{subsec: Numerical example: the Ising model}}

We illustrate the performance of Algorithms \ref{alg: MHAAR-AIS exchange algorithm}
and \ref{alg: MHAAR-AIS exchange algorithm - reduced computation}
on the Ising model used in statistical mechanics to model ferromagnetism.
For $m,n\in\mathbb{N}$ we consider an $m\times n$ lattice $\Lambda$.
Associated to each site $k\in\Lambda$ is a binary variable $\mathfrak{z}[k]\in\{-1,1\}$
representing the spin configuration of the site. The probability of
a given configuration $u=\{\mathfrak{z}[k],k\in\Lambda\}$ depends
on an energy function, or Hamiltonian, which may depend on some parameter
$\theta$. A standard choice used in the absence of an external magnetic
field is 
\[
H_{\theta}(\mathfrak{z})=-\theta\sum_{i\sim j}\mathfrak{z}[i]\mathfrak{z}[j],
\]
where $i\sim j$ denotes a pair or adjacent sites and $\theta\in\Theta=\mathbb{R}_{+}$
is referred to as the inverse temperature parameter. The probability
of configuration $\mathfrak{z}$ for temperature $\theta^{-1}$ is
given by $\ell_{\theta}(\mathfrak{z})=g_{\theta}(\mathfrak{z})/C_{\theta}$
where $g_{\theta}(\mathfrak{z})=\exp(-H_{\theta}(\mathfrak{z}))$
and $C_{\theta}=\sum_{\mathfrak{z}\in\{0,1\}^{|\Lambda|}}g_{\theta}(\mathfrak{z})$
is the intractable and $\theta$-dependent normalising constant. In
the following experiment, we perform Bayesian estimation of $\theta$
given a $20\times30$ configuration $\mathfrak{y}$ drawn from $\ell_{\theta^{*}}(\cdot)$
for $\theta^{\ast}=0.35$, which is slightly above the critical (inverse)
temperature $\log(1+\sqrt{2})/2$, resulting in strongly correlated
neighbouring sites. The prior distribution for $\theta$ is taken
to be the uniform distribution on $(0,10)$. The difficulty here is
that computing $C_{\theta}$ requires the summation of $2^{600}$
terms, which is computationally infeasible.

The sequence of intermediate distributions used within AIS relies
on a geometric annealing schedule for the unnormalised densities of
the annealing distributions that is 
\[
f_{\theta,\theta',t}(\mathfrak{z})=g_{\theta}(\mathfrak{z})^{1-\beta_{t}}g_{\theta'}(\mathfrak{z})^{\beta_{t}}=g_{\theta(1-\beta_{t})+\theta'\beta_{t}}(\mathfrak{z}),\quad\beta_{t}=1-\frac{t}{T+1},\quad t=0,1,\ldots,T+1.
\]
Sampling from the intractable distribution is performed approximately
by running Wolff's algorithm, essentially an MCMC kernel iterated
for $100$ iterations. For $\theta,\theta'\in\Theta$ and $t=1,\ldots,T$
we chose $R_{\theta,\theta',t}$ to be a single iteration of the MCMC
kernel of the Wolff's algorithm targeting $\ell_{\theta(1-\beta_{t})+\theta'\beta_{t}}(\cdot)$.
We ran both Algorithms \ref{alg: MHAAR-AIS exchange algorithm} and
\ref{alg: MHAAR-AIS exchange algorithm - reduced computation} for
all of the combinations of $N=1,10,20,\ldots,100$ and $T=1,2,\ldots,10,20,\ldots,100$.
For each run, $K=10^{6}$ samples were generated and the last $3K/4$
of them were used to compute the IAC of the sequence $\{\theta_{i},i=1\geq1\}$.
Figure \ref{fig: Potts exchange with bridging vs asymmetric MCMC}
concentrates on the two extreme scenarios where $N=1$ and when $T=0$,
that correspond to the exchange algorithm with bridging as in \citet{Murray_et_al_2006}
and our novel averaging algorithm applied to Example \ref{ex:doublyintractaveraging},
respectively. The figure suggests that our algorithm is computationally
superior on an ideal parallel machine, at least for the present example.

The rest of the results are shown in Figure \ref{fig: IAC vs T vs N for the Ising model}.
The results are organised in order to contrast Algorithms \ref{alg: MHAAR-AIS exchange algorithm}
and \ref{alg: MHAAR-AIS exchange algorithm - reduced computation}.
The figure suggests that Algorithm \ref{alg: MHAAR-AIS exchange algorithm},
which uses multiple samples from the intractable distribution per
iteration, is uniformly better, as expected. Finally, although for
large $T$ the performances of the two algorithms get closer, for
small $T$ the advantage of using more samples from the intractable
distribution, i.e.\ using Algorithm \ref{alg: MHAAR-AIS exchange algorithm}
is more significant.

\begin{figure}[H]
\centerline{\includegraphics[scale=0.5]{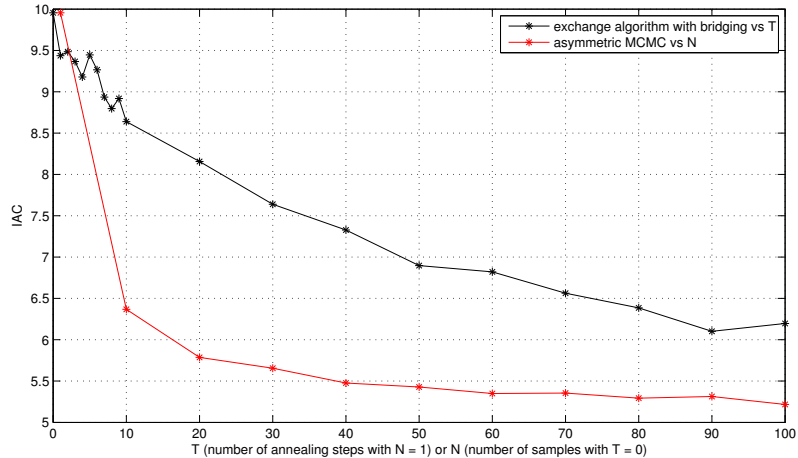}}
\protect\protect\caption{IAC for $\theta$ in the Ising model vs (a) the number of averaged
ratios $N=1,10,20,\ldots,100$ for $T=0$ (red/grey) and (b) the number
of annealing steps $T=0,1,2,\ldots,10,20,\ldots,100$ for $N=1$ (black).}

\label{fig: Potts exchange with bridging vs asymmetric MCMC} 
\end{figure}

\begin{figure}[!h]
\centerline{\includegraphics[scale=0.7]{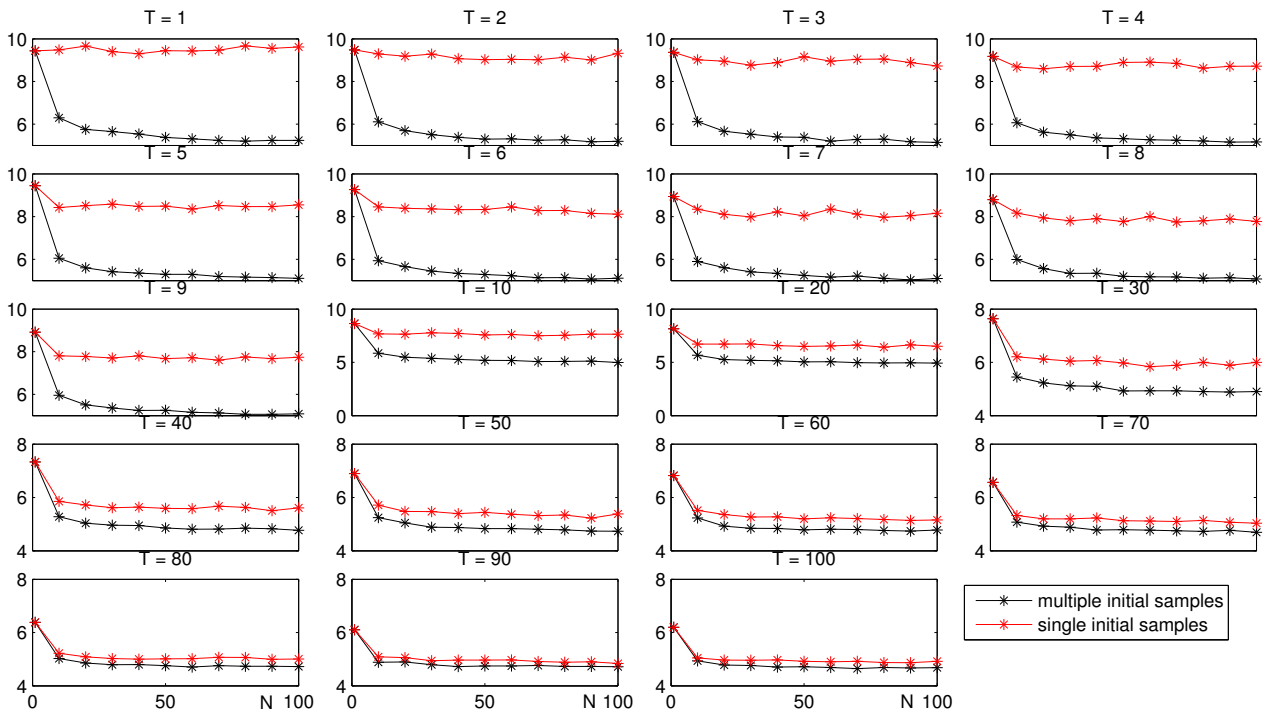}}
\protect\protect\caption{IAC for $\theta$ for the combinations of $N=1,10,20,\ldots,100$
and $T=1,2,\ldots,10,20,\ldots,100$. Each plot shows IAC vs $N$
for a fixed $T$.}

\label{fig: IAC vs T vs N for the Ising model} 
\end{figure}

\section{PMR algorithms for latent variable models \label{sec: Pseudo-marginal ratio algorithms for latent variable models}}

\subsection{Latent variable models \label{subsec: Latent variable models}}

We consider here sampling from a distribution that is the marginal
of a given joint distribution. More precisely, let $(\Theta,\mathcal{E})$
and $(\mathsf{Z},\mathcal{Z})$ be two measurable spaces, and define
the product spaces $\mathsf{X}=\Theta\times\mathsf{Z}$ and $\mathcal{X}=\mathcal{E}\otimes\mathcal{Z}$
the corresponding product $\sigma$-algebra. Let $\pi({\rm d}x):=\pi({\rm d}(\theta,z))$
be a probability distribution on $(\mathsf{X},\mathcal{X})$ which
is assumed known up to a normalising constant. Our primary interest
is to sample from the marginal distribution of $\theta$, 
\[
\pi({\rm d}\theta)=\int_{\mathsf{Z}}\pi\big({\rm d}(\theta,z)\big),
\]
assumed to be intractable, i.e.\ no useful density is available,
even up to a normalising constant. The doubly intractable scenario
covered so far falls into this category. It exploits the fact that
\[
\pi\big(\theta,z\big)\propto\eta(\theta)g_{\theta}(\mathfrak{y})\frac{h_{\mathfrak{y}}(z)}{g_{\theta}(z)}\ell_{\theta}\big(z\big),
\]
has $\pi\big(\theta\big)\propto\eta(\theta)\ell_{\theta}\big(\mathfrak{y}\big)$
as marginals, but also the additional property that sampling from
the intractable distribution $\ell_{\theta}\big(z\big)$ is possible.
This latter property is fundamental to by-pass the intractability
of the normalising constant, but also allows one to refresh $z$ at
each iteration of the MCMC algorithm, in contrast with the pseudo-marginal
approach. As a result the exchange algorithm defines an algorithm
which directly targets $\pi(\theta)$ with a Markov chain defined
on $(\Theta,\mathcal{E})$. This however turns out to be too specific
and restrictive for numerous applications, such as state-space models.
\begin{example}
\label{ex: state-space example for latent variable section}We consider
the well-known non-linear state-space, often used to assess the performance
of inference methods for non-linear state-space models, 
\begin{align*}
Z_{t} & =Z_{t-1}/2+25Z_{t-1}/(1+Z_{t-1}^{2})+8\cos(1.2t)+V_{t},\quad t\geq2\\
Y_{t} & =Z_{t}^{2}/20+W_{t},\quad t\geq1,
\end{align*}
where $Z_{1}\sim\mathcal{N}(0,10)$, $V_{t}\overset{\mathrm{iid}}{\sim}\mathcal{N}(0,\sigma_{v}^{2})$,
$W_{t}\overset{\mathrm{iid}}{\sim}\mathcal{N}(0,\sigma_{w}^{2})$.
The parameter of primary interest is $\theta=(\sigma_{v}^{2},\sigma_{w}^{2})$
and is ascribed the prior $(\sigma_{v}^{2},\sigma_{w}^{2})\overset{\mathrm{iid}}{\sim}\mathcal{IG}(0.01,0.01)$
where $\mathcal{IG}(a,b)$ is the inverse gamma distribution with
shape and scale parameters $a$ and $b$. The aim is to infer $x=(\theta,z)$,
where the latent variable is $z=z_{1:P}$ for some $P>1$, from a
particular data set $Y_{1:P}=y_{1:P}$. 
\end{example}
Ideally we would like to use the following ``marginal'' algorithm.
Let $q(\theta,\cdot)$ be a Markov kernel on $(\Theta,\mathcal{E})$
such that for each $\theta\in\Theta$, $q(\theta,\cdot)$ admits a
density $q(\theta,\cdot)$ with respect to ${\rm d}\theta'$. The
acceptance rate of the MH algorithm with proposal kernel $q(\cdot,\cdot)$
targeting $\pi(\theta)$ is 
\begin{equation}
r(\theta,\theta')=\frac{q(\theta',\theta)\pi(\theta')}{q(\theta,\theta')\pi(\theta)}.\label{eq: marginal MCMC acceptance probability}
\end{equation}

The latter cannot be evaluated in numerous scenarios of interest and
the aim of this section is to extend the framework developed for the
doubly intractable scenario to the more general situation where sampling
of the latent variable must be included in the MCMC scheme itself
and cannot be performed exactly. This results in an algorithm tar-getting
the distribution $\pi(\mathrm{d}(\theta,z))$. It turns out that the
framework developed in Section \ref{sec: Pseudo-marginal ratio algorithms using averaged acceptance ratio estimators}
can also be easily adapted to this scenario. More precisely, here
we have $x=(\theta,z)$ and $y=(\theta',z')$ and the only difference
with the developments of Algorithm \ref{alg: MHAAR for Pseudo-marginal ratio}
is concerned with the order in which the variables are sampled. In
Algorithm \ref{alg: MHAAR for Pseudo-marginal ratio} we have assumed
a specific sampling order for the variables involved, that is the
auxiliary variable copies are sampled after the proposed value $y$.
Here we are going to consider the scenario where $\theta'$ is sampled
first, then the auxiliary variables $u^{(1)},\ldots,u^{(N)}$ are
sampled from a kernel $Q_{\theta,\theta',z}({\rm d}u)$ and $z'$
is proposed last, conditional upon the auxiliary variables $\theta,\theta',z$
and $u^{(1)},\ldots,u^{(N)}$. The resulting expression for the acceptance
ratio remains the same as that used in Algorithm \ref{alg: MHAAR for Pseudo-marginal ratio}
since it is not affected by the order in which the variables are sampled.

\subsection{AIS within MH for latent variable models \label{subsec: AIS within MH algorithms}}

\citet{Neal_2004} suggested to use AIS, as described in Section \ref{sec: Pseudo-marginal ratio algorithms using averaged acceptance ratio estimators}
and Theorem \ref{thm:AIS} in Appendix \ref{sec: A short justification of AIS and an extension},
in order to achieve sampling from $\pi$. The idea should be clear
upon noticing that for $\theta\in\Theta$ fixed, $\pi(\theta)$ is
the normalising constant of the conditional distribution for $z$
that is proportional to $\pi(\theta,z)$, that is $\pi_{\theta}(z)\propto\pi(\theta,z)$.
To estimate the ratio $\pi(\theta')/\pi(\theta)$ one therefore defines
a sequence of artificial probability densities 
\[
\mathscr{P}_{\theta,\theta',T}:=\big\{\pi_{\theta,\theta',t},t=0,\ldots,T+1\big\}
\]
for some $T\geq1$ evolving from $\pi_{\theta,\theta',0}(z)=\pi_{\theta}(z)$
to $\pi_{\theta,\theta',T+1}(z)=\pi_{\theta'}(z)$, through a sequence
of unnormalised intermediate probability densities $\mathscr{F}_{\theta,\theta',T}=\{f_{\theta,\theta',t},t=0,\ldots,T+1\}$.
The following proposition establishes that this algorithm is conceptually
of the same form as $\mathring{P}$ given in \eqref{eq:defPring}
and this allows us to extend this methodology through Algorithm \ref{alg: MHAAR for Pseudo-marginal ratio}.
\begin{prop}
\label{prop:MHwithAISinside}Consider the latent variable model given
in the introduction of Section \ref{sec: Pseudo-marginal ratio algorithms for latent variable models}
and for any $\theta,\theta'\in\Theta$ let 

\begin{enumerate}
\item \label{enu:firstassumptionMHwithAISinside}$\mathscr{F}_{\theta,\theta',T}=\big\{ f_{\theta,\theta',t},t=0,\ldots,T+1\big\}$
be a family of tractable unnormalised densities of $\mathscr{P}_{\theta,\theta',T}$
defined on $\big(\mathsf{Z},\mathcal{Z}\big)$ such that

\begin{enumerate}
\item \label{enu:a}for $t=0,\ldots,T$, $f_{\theta,\theta',t}$ and $f_{\theta,\theta',t+1}$
have the same support, 
\item \label{enu:b}for any $z\in\mathsf{Z}$ and $t=1,\ldots,T$ $f_{\theta,\theta',t}(z)=f_{\theta',\theta,T+1-t}(z)$,
\item \label{enu:c}$f_{\theta,\theta',0}(z)=\pi(\theta,z)$ and $f_{\theta,\theta',T+1}(z)=\pi(\theta',z)$,
\end{enumerate}
\item \label{enu:secondassumptionMHwithAISinside}$\mathscr{R}_{\theta,\theta',T}=\big\{ R_{\theta,\theta',t}(\cdot,\cdot)\colon\mathsf{Z}\times\mathcal{Z}\to[0,1],t=1,\ldots,T\big\}$
be a family of Markov transition kernels such that for any $t=1,\ldots,T$

\begin{enumerate}
\item $R_{\theta,\theta',t}(\cdot,\cdot)$ is $\pi_{\theta,\theta',t}-$reversible, 
\item $R_{\theta,\theta',t}(\cdot,\cdot)=R_{\theta',\theta,T-t+1}(\cdot,\cdot)$, 
\end{enumerate}
\item \label{enu:Ruzispithetareversibleforlatents}$R_{\theta}(\cdot,\cdot)\colon\mathsf{Z}\times\mathcal{Z}\to[0,1]$
be a $\pi_{\theta}-$reversible Markov transition kernel,
\item \label{enu:fourhassumption MHwithAISinside}$Q_{\theta,\theta',z}(\cdot)$
be probability distributions on $\big(\mathsf{U},\mathcal{U}\big)$
where $\mathsf{U}:=\mathcal{\mathsf{Z}}^{T+1}$ defined for 
\begin{align}
Q_{\theta,\theta',z}({\rm d}u) & =R_{\theta}(z,{\rm d}u_{0})\prod_{t=1}^{T}R_{\theta,\theta',t}(u_{t-1},{\rm d}u_{t}),\label{eq: M}
\end{align}
and let $\varphi$ be the involution which reverses the order of the
components of $u$; i.e.\ $\varphi(u_{0},u_{1},\ldots,u_{T}):=(u_{T},u_{T-1},\ldots,u_{0})$
for all $u\in\mathsf{U}$. 
\end{enumerate}
Then for any $(\theta,z),(\theta',z'),u\in\big(\Theta\times\mathsf{\mathsf{Z}}\big)^{2}\times\mathsf{U}$
\begin{equation}
\bar{Q}_{\theta,\theta',z}({\rm d}u)=R_{\theta}(z,{\rm d}u_{T})\prod_{t=1}^{T}R_{\theta',\theta,t}(u_{t},{\rm d}u_{t-\text{1}}),\label{eq:L}
\end{equation}
and
\begin{equation}
\frac{\pi_{\theta'}({\rm d}z')\bar{Q}_{\theta',\theta,z'}({\rm d}u)R_{\theta}(u_{0},{\rm d}z)}{\pi_{\theta}({\rm d}z)Q_{\theta,\theta',z}({\rm d}u)R_{\theta'}(u_{T},{\rm d}z')}=\frac{\pi(\theta)}{\pi(\theta')}\prod_{t=0}^{T}\frac{f_{\theta,\theta',t+1}(u_{t})}{f_{\theta,\theta',t}(u_{t})}.\label{eq:AISbasedAcceptRatio}
\end{equation}
The AIS MCMC algorithm of \citet{Neal_2004} for latent variable models
corresponds to $\mathring{P}$ in Theorem \ref{thm: pseudo-marginal ratio algorithms}
with $x=(\theta,z)$ and $y=(\theta',z')$, the proposal kernel 
\[
Q_{1}(x,{\rm d}(y,u)):=q(\theta,{\rm d}\theta')Q_{\theta,\theta',z}({\rm d}u)R_{\theta'}(u_{T},{\rm d}z')
\]
and its complementary kernel 
\[
Q_{2}(x,{\rm d}(y,u)):=q(\theta,{\rm d}\theta')\bar{Q}_{\theta,\theta',z}({\rm d}u)R_{\theta'}(u_{0},{\rm d}z').
\]
Its acceptance ratio on $\mathring{\mathsf{S}}$ is 
\[
\mathring{r}_{u}(\theta,z;\theta',z')=\frac{\pi({\rm d}x')Q_{2}(y,{\rm d}(x,u))}{\pi({\rm d}x)Q_{1}(x,{\rm d}(y,u))}=\frac{q(\theta',\theta)}{q(\theta,\theta')}\prod_{t=0}^{T}\frac{f_{\theta,\theta',t+1}(u_{t})}{f_{\theta,\theta',t}(u_{t})}.
\]

\end{prop}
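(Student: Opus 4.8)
The plan is to treat the three displayed claims in turn, deriving \eqref{eq:L} first, then the annealing identity \eqref{eq:AISbasedAcceptRatio}, and finally matching the construction to the abstract PMR kernel so that reversibility is inherited from Theorem \ref{thm: pseudo-marginal ratio algorithms}. For \eqref{eq:L} I would start from the defining relation $\bar{Q}_{\theta,\theta',z}(A)=Q_{\theta,\theta',z}(\varphi(A))$ in \eqref{eq: Q_bar}: since $\varphi$ reverses the order of the components, a draw $u\sim\bar{Q}_{\theta,\theta',z}$ is obtained as $u_t=\bar{u}_{T-t}$ with $\bar{u}\sim Q_{\theta,\theta',z}$. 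Substituting $\bar{u}_t=u_{T-t}$ into \eqref{eq: M}, the initial factor becomes $R_\theta(z,{\rm d}u_T)$ and the product becomes $\prod_{t=1}^{T}R_{\theta,\theta',t}(u_{T-t+1},{\rm d}u_{T-t})$; re-indexing by $s=T-t+1$ and applying the kernel symmetry $R_{\theta,\theta',T-s+1}=R_{\theta',\theta,s}$ turns this into $\prod_{s=1}^{T}R_{\theta',\theta,s}(u_s,{\rm d}u_{s-1})$, which is exactly \eqref{eq:L}.

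For \eqref{eq:AISbasedAcceptRatio} I would write numerator and denominator explicitly as products of the reference densities, using \eqref{eq:L} with $\theta,\theta'$ and $z$ swapped for $\bar{Q}_{\theta',\theta,z'}$. The key mechanism is that each reversibility hypothesis converts a ratio of opposite-direction kernel evaluations into a ratio of stationary densities: $\pi_\theta$-reversibility of $R_\theta$ gives $R_\theta(u_0,{\rm d}z)/R_\theta(z,{\rm d}u_0)=\pi_\theta({\rm d}z)/\pi_\theta({\rm d}u_0)$, $\pi_{\theta'}$-reversibility of $R_{\theta'}$ handles the $(u_T,z')$ pair, and $\pi_{\theta,\theta',t}$-reversibility of $R_{\theta,\theta',t}$ gives $R_{\theta,\theta',t}(u_t,{\rm d}u_{t-1})/R_{\theta,\theta',t}(u_{t-1},{\rm d}u_t)=f_{\theta,\theta',t}(u_{t-1})/f_{\theta,\theta',t}(u_t)$. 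The $z$- and $z'$-dependent factors then cancel against the $\pi_\theta({\rm d}z)$ and $\pi_{\theta'}({\rm d}z')$ appearing in the ratio, leaving $\tfrac{\pi_{\theta'}(u_T)}{\pi_\theta(u_0)}\prod_{t=1}^{T}\tfrac{f_{\theta,\theta',t}(u_{t-1})}{f_{\theta,\theta',t}(u_t)}$. Identifying the endpoints $f_{\theta,\theta',0}=\pi(\theta,\cdot)$ and $f_{\theta,\theta',T+1}=\pi(\theta',\cdot)$ shows $\pi_\theta(u_0)=f_{\theta,\theta',0}(u_0)/\pi(\theta)$ and $\pi_{\theta'}(u_T)=f_{\theta,\theta',T+1}(u_T)/\pi(\theta')$, and a telescoping reorganisation of the product by argument $u_k$ collapses everything to $\tfrac{\pi(\theta)}{\pi(\theta')}\prod_{t=0}^{T}\tfrac{f_{\theta,\theta',t+1}(u_t)}{f_{\theta,\theta',t}(u_t)}$. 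Equivalently, this is precisely the Radon--Nikodym derivative supplied by Theorem \ref{thm:AIS} applied with $\mu_0=\pi_\theta$, $\mu_{T+1}=\pi_{\theta'}$, $\mu_t=\pi_{\theta,\theta',t}$ and $\Pi_t=R_{\theta,\theta',t}$, so I would invoke it to shortcut the bookkeeping.

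Finally I would verify that the pair $Q_1,Q_2$ fits the abstract framework of Theorem \ref{thm: pseudo-marginal ratio algorithms}, so that $\pi$-reversibility follows once the acceptance ratio is shown to equal the Radon--Nikodym derivative $\mathring{r}_u(x,y)=\pi({\rm d}y)Q_2(y,{\rm d}(x,u))/[\pi({\rm d}x)Q_1(x,{\rm d}(y,u))]$ as in \eqref{eq:accept-ratio-circle}. Disintegrating $\pi({\rm d}(\theta,z))=\pi({\rm d}\theta)\pi_\theta({\rm d}z)$, the target ratio contributes a factor $\pi(\theta')/\pi(\theta)$ and the proposal ratio contributes $q(\theta',\theta)/q(\theta,\theta')$ together with exactly the quantity \eqref{eq:AISbasedAcceptRatio}; the intractable marginals $\pi(\theta),\pi(\theta')$ cancel, leaving the stated $\mathring{r}_u(\theta,z;\theta',z')=\tfrac{q(\theta',\theta)}{q(\theta,\theta')}\prod_{t=0}^{T}\tfrac{f_{\theta,\theta',t+1}(u_t)}{f_{\theta,\theta',t}(u_t)}$. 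The main obstacle I anticipate is conceptual rather than computational: unlike the plain PMR construction \eqref{eq: PMR Q1}--\eqref{eq: PMR Q2}, here the proposal for $z'$ is not identical under $Q_1$ and $Q_2$ (it is drawn from $R_{\theta'}(u_T,\cdot)$ in one and from $R_{\theta'}(u_0,\cdot)$ in the other), so care is needed to see that this asymmetry is absorbed by the $\pi_{\theta'}$-reversibility of $R_{\theta'}$ — which is exactly what the cancellation in the second step certifies — before the abstract reversibility theorem can be applied.
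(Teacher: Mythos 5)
Your proof is correct, and its overall architecture (derive \eqref{eq:L} from the involution plus the kernel symmetry, compute the Radon--Nikodym derivative \eqref{eq:AISbasedAcceptRatio}, then feed the pair $Q_{1},Q_{2}$ into Theorem \ref{thm: pseudo-marginal ratio algorithms}) matches the paper's. The genuine difference is how \eqref{eq:AISbasedAcceptRatio} is obtained. You do it by hand: reversibility of $R_{\theta}$, $R_{\theta'}$ and of each $R_{\theta,\theta',t}$ converts every backward/forward kernel ratio into a ratio of invariant densities, the $z$- and $z'$-factors cancel, and a telescoping regrouping by argument finishes the job. The paper instead makes a single application of Theorem \ref{thm:AIS}, but with the non-obvious parameterisation $\tau=T+2$, repeated end distributions $\mu_{0}=\mu_{1}=\pi_{\theta}$, $\mu_{\tau}=\mu_{\tau+1}=\pi_{\theta'}$, and end kernels $\Pi_{1}=R_{\theta}$, $\Pi_{\tau}=R_{\theta'}$, so that the factors $R_{\theta}(z,\cdot)$, $R_{\theta}(u_{0},\cdot)$, $R_{\theta'}(u_{T},\cdot)$, $R_{\theta'}(z',\cdot)$ are absorbed into the AIS chain itself (the density ratios $\mu_{t+1}/\mu_{t}$ at the repeated ends equal one, so no spurious terms appear). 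Note that your parenthetical invocation of Theorem \ref{thm:AIS} with $\mu_{0}=\pi_{\theta}$, $\mu_{T+1}=\pi_{\theta'}$, $\Pi_{t}=R_{\theta,\theta',t}$ covers only the middle chain $(u_{0},\ldots,u_{T})$ and would not, on its own, produce the $R_{\theta}$/$R_{\theta'}$ factors appearing in \eqref{eq:AISbasedAcceptRatio}; it is legitimate in your write-up only because you have already cancelled those end factors by hand, whereas the paper's choice of repeats lets the theorem do all the work in one stroke. A final nuance: the asymmetry between $R_{\theta'}(u_{T},{\rm d}z')$ in $Q_{1}$ and $R_{\theta'}(u_{0},{\rm d}z')$ in $Q_{2}$ is absorbed by the involution $\varphi$ (which exchanges $u_{0}$ and $u_{T}$) when verifying the structural hypothesis of Theorem \ref{thm: pseudo-marginal ratio algorithms}; the $\pi_{\theta'}$-reversibility of $R_{\theta'}$ enters only afterwards, in evaluating the Radon--Nikodym derivative, not in legitimising the framework, so your phrasing slightly conflates two distinct roles even though both verifications are present in your argument.
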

\begin{proof}
Since $\bar{Q}_{\theta,\theta',z}(A)=Q_{\theta,\theta',z}(\varphi(A))$,
we can check that the pair $Q_{1}(x,\cdot)$, $Q_{2}(x,\cdot)$ satisfy
the assumption of Theorem \ref{thm: pseudo-marginal ratio algorithms}
in Appendix \ref{sec: A general framework for MPR and MHAAR algorithms}.
Next, using the symmetry assumption on $\mathscr{R}_{\theta,\theta',T}$,
we obtain
\begin{align*}
\bar{Q}_{\theta,\theta',z}({\rm d}u) & =R_{\theta}(z,{\rm d}u_{T})\prod_{t=1}^{T}R_{\theta',\theta,T-t+1}(u_{T-t+1},{\rm d}u_{T-t})
\end{align*}
and we can thus apply Theorem \ref{thm:AIS} in Appendix \ref{sec: A short justification of AIS and an extension}
(with $\tau=T+2$ intermediate distributions, two repeats $\mu_{0}=\mu_{1}$
and $\mu_{\tau}=$ $\mu_{\tau+1}$, $\mu_{t}=\pi_{\theta,\theta',t-1}$
for $t=2,\ldots,\tau-1$ and kernels $\Pi_{1}=R_{\theta}$, $\Pi_{\tau}=R_{\theta'}$
and $\Pi_{t}=R_{\theta,\theta',t-1}$ for $t=2,\ldots,\tau-1$) to
show that $\pi_{\theta'}\times\bar{Q}_{\theta',\theta,\cdot}\times R_{\theta}$
is absolutely continuous with respect to $\pi_{\theta}\times Q_{\theta,\theta',\cdot}\times R_{\theta'}$
and that the expression for the corresponding Radon-Nikodym derivative
ensures that the acceptance ratio defined in \eqref{eq:accept-ratio-circle}
is indeed equal to \eqref{eq: ratio for doubly intractable with annealing}. 
\end{proof}
The standard choice made in \citet{Neal_2004} corresponds to $R_{\theta}(z,{\rm d}u_{0})=\delta_{z}\big({\rm d}u_{0}\big)$,
but more general choices are possible. As we shall see in the next
section, a choice different from $\delta_{z}\big({\rm d}u_{0}\big)$
can improve performance significantly when averaging acceptance ratios. 

The variance of this unbiased estimator $\mathring{r}_{u}(\theta,z;\theta',z')$
of $r\big(\theta,\theta'\big)$ can usually be tuned by increasing
$T$, under natural smoothness conditions on the sequences $\mathscr{F}_{\theta,\theta',T}$
for $T\geq1$. An important point here is that although the approximated
acceptance ratio is reminiscent of that of a MH algorithm targeting
$\pi({\rm d}\theta)$, the present algorithm targets the joint distribution
$\pi\big({\rm d}(\theta,z)\big)$: the simplification occurs only
because the random variable corresponding to $u_{T}$ in \eqref{eq: M}
will be approximately distributed according to $\pi_{\theta'}(\cdot)$
when $T$ is large enough, under proper mixing conditions. 

We note that the expression for $\mathring{r}_{u}(\theta,z;\theta',z')$
does not depend on either $z$ or $z'$, and can in particular be
calculated before sampling $z'$. This is of importance in what follows
and justifies the use of the simplified piece of notation $\mathring{r}_{u}(\theta,\theta')$
below. 

\subsection{Averaging AIS based pseudo-marginal ratios \label{subsec: Averaging AIS based acceptance ratios}}

We show here how the algorithm of the previous section (Proposition
\ref{prop:MHwithAISinside}) can be modified in order to average multiple
($N>1$) estimators $\mathring{r}_{u}(\theta,\theta')$ of $r(\theta,\theta')$
while preserving reversibility of the algorithm of interest. Let $u=(u_{0},\ldots,u_{T})\in\mathsf{U}=\mathsf{Z}^{T+1}$
and $k\in\{1,\ldots,N\}$. 
\begin{prop}
\label{prop: asymmetric MCMC with latent variables}Assume that the
conditions of Proposition \ref{prop:MHwithAISinside} hold. For $N\geq1$
define the proposal kernels $Q_{1}^{N}(\cdot,\cdot)$ and $Q_{2}^{N}(\cdot,\cdot)$
on $\big(\mathsf{X}\times\mathsf{\mathfrak{U}}\times[k],\mathcal{X}\otimes\mathscr{U}\otimes\mathscr{P}[k]\big)$
\begin{align}
Q_{1}^{N}\big(x;{\rm d}(y,\mathfrak{u},k)\big) & =q(\theta,{\rm d}\theta')\prod_{i=1}^{N}Q_{\theta,\theta',z}({\rm d}u^{(i)})\frac{\mathring{r}_{u^{(k)}}(\theta,\theta')}{\sum_{i=1}^{N}\mathring{r}_{u^{(i)}}(\theta,\theta')}R_{\theta'}(u_{T}^{(k)},{\rm d}z'),\label{eq: asymmetric MCMC combining AIS and pseudo MCMC Q1}\\
Q_{2}^{N}\big(x;{\rm d}(y,\mathfrak{u},k)\big) & =q(\theta,{\rm d}\theta')\frac{1}{N}\bar{Q}_{\theta,\theta',z}({\rm d}u^{(k)})R_{\theta'}(u_{0}^{(k)},{\rm d}z')\prod_{i=1,i\neq k}^{N}Q_{\theta',\theta,z'}({\rm d}u^{(i)}).\label{eq: asymmetric MCMC combining AIS and pseudo MCMC Q2}
\end{align}
Then one can implement $\mathring{P}^{N}$ corresponding to $\mathring{P}$
defined in Proposition \ref{prop:MHwithAISinside}, with $Q_{1}^{N}(\cdot,\cdot)$
and $Q_{2}^{N}(\cdot,\cdot)$ above and
\[
\mathring{r}_{\mathfrak{u}}^{N}(\theta,\theta')=\frac{1}{N}\sum_{i=1}^{N}\mathring{r}_{u^{(i)}}(\theta,\theta').
\]
\end{prop}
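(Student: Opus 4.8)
The plan is to show that the pair $\big(Q_1^N,Q_2^N\big)$ defined in \eqref{eq: asymmetric MCMC combining AIS and pseudo MCMC Q1}--\eqref{eq: asymmetric MCMC combining AIS and pseudo MCMC Q2} fits the abstract asymmetric Metropolis--Hastings framework of Theorem \ref{thm:asymmetricMH-1}, exactly as was done for $\mathring{P}^N$ in the proof of Theorem \ref{thm:theoreticaljustification}. Once the pair is recognised as an admissible proposal/complementary pair, $\pi$-reversibility of $\mathring{P}^N$ with the stated acceptance rule is automatic, so the only thing to verify is the single Radon--Nikodym identity
\[
\frac{\pi({\rm d}y)\,Q_2^N\big(y;{\rm d}(x,\mathfrak{u},k)\big)}{\pi({\rm d}x)\,Q_1^N\big(x;{\rm d}(y,\mathfrak{u},k)\big)}=\mathring{r}_{\mathfrak{u}}^N(\theta,\theta'),
\]
with $x=(\theta,z)$ and $y=(\theta',z')$. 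This is the direct analogue of \eqref{eq:eq:simplifiedacceptPcircN} and of the displayed computation in the proof of Theorem \ref{thm:generalisationexchangeable}.

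The central computation proceeds by the same cancellation as before. Writing out numerator and denominator and performing the substitutions $\theta\leftrightarrow\theta'$, $z\leftrightarrow z'$ induced by swapping $x$ and $y$, the factors $\prod_{i\neq k}Q_{\theta,\theta',z}({\rm d}u^{(i)})$ carried by $Q_2^N(y;\cdot)$ coincide with the $i\neq k$ factors of $\prod_{i=1}^N Q_{\theta,\theta',z}({\rm d}u^{(i)})$ in the denominator and cancel, leaving only the $k$-indexed terms. The surviving quotient factorises as
\[
\frac{\pi({\rm d}(\theta',z'))\,q(\theta',{\rm d}\theta)\,\bar{Q}_{\theta',\theta,z'}({\rm d}u^{(k)})\,R_\theta(u_0^{(k)},{\rm d}z)}{\pi({\rm d}(\theta,z))\,q(\theta,{\rm d}\theta')\,Q_{\theta,\theta',z}({\rm d}u^{(k)})\,R_{\theta'}(u_T^{(k)},{\rm d}z')}\;\times\;\frac{N^{-1}\sum_{i=1}^N\mathring{r}_{u^{(i)}}(\theta,\theta')}{\mathring{r}_{u^{(k)}}(\theta,\theta')}.
\]
By Proposition \ref{prop:MHwithAISinside} the first factor is precisely the single-sample ratio $\mathring{r}_{u}(\theta,z;\theta',z')$ evaluated at $u=u^{(k)}$, whose value is supplied by the Radon--Nikodym derivative \eqref{eq:AISbasedAcceptRatio}; that is, the first factor equals $\mathring{r}_{u^{(k)}}(\theta,\theta')$. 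Multiplying by the second factor then collapses the expression to $N^{-1}\sum_{i=1}^N\mathring{r}_{u^{(i)}}(\theta,\theta')=\mathring{r}_{\mathfrak{u}}^N(\theta,\theta')$, as required, the $1/N$ coming from $Q_2^N$ and the Gibbs selection weight coming from $Q_1^N$.

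The step requiring care is the bookkeeping of the index swap, since the latent-variable layer adds the extra proposal kernels $R_\theta(u_0^{(k)},{\rm d}z)$ and $R_{\theta'}(u_T^{(k)},{\rm d}z')$ for $z$ and $z'$, together with the reversal $\varphi$ acting on the AIS block $u^{(k)}$. The key enabling fact, already noted after Proposition \ref{prop:MHwithAISinside}, is that $\mathring{r}_u(\theta,\theta')$ does not depend on $z$ or $z'$; consequently the $z,z'$-dependent factors $\pi_\theta({\rm d}z)$, $\pi_{\theta'}({\rm d}z')$, $R_\theta(u_0^{(k)},{\rm d}z)$ and $R_{\theta'}(u_T^{(k)},{\rm d}z')$ are absorbed cleanly into the single-sample ratio through \eqref{eq:AISbasedAcceptRatio} rather than surviving into the average. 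This is what makes the quotient collapse to $\mathring{r}_{\mathfrak{u}}^N(\theta,\theta')$ via the same algebra as in the $z$-free setting of Theorem \ref{thm:generalisationexchangeable}, so that no genuinely new inequality or estimate is needed beyond checking that these substitutions are carried out consistently. Finally, if one wishes to also transfer the comparison conclusions of Theorem \ref{thm:theoreticaljustification}, it suffices to note that the $u^{(1)},\ldots,u^{(N)}$ are exchangeable under $\prod_{i=1}^N Q_{\theta,\theta',z}(\cdot)$, so the convex-ordering inequality \eqref{eq:convexorderingratio} applies verbatim.
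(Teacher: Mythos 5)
Your proof is correct and takes essentially the same route as the paper's: both reduce the claim to verifying the Radon--Nikodym identity $\pi({\rm d}y)Q_{2}^{N}(y;{\rm d}(x,\mathfrak{u},k))/\pi({\rm d}x)Q_{1}^{N}(x;{\rm d}(y,\mathfrak{u},k))=\mathring{r}_{\mathfrak{u}}^{N}(\theta,\theta')$, cancel the $i\neq k$ proposal factors after the swap $x\leftrightarrow y$, and use \eqref{eq:AISbasedAcceptRatio} from Proposition \ref{prop:MHwithAISinside} to identify the surviving $k$-indexed block with $\mathring{r}_{u^{(k)}}(\theta,\theta')$, so that the Gibbs selection weight and the $1/N$ factor collapse the expression to the average. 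Your closing remark on exchangeability goes slightly beyond what the proposition asserts but matches how the paper transfers the comparison results in Theorem \ref{thm:theoreticaljustification}.
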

\begin{proof}
One can check directly that $\mathring{r}^{N}(\theta,\theta')$ is
of the expected form despite the sampling order change 
\begin{align*}
\frac{\pi({\rm d}y)Q_{2}^{N}\big(y,{\rm d}(x,\mathfrak{u},k)\big)}{\pi({\rm d}x)Q_{1}^{N}\big(x,{\rm d}(y,\mathfrak{u},k)\big)} & =\frac{\pi({\rm d}y)q(\theta',{\rm d}\theta)\frac{1}{N}\bar{Q}_{\theta,\theta',z'}({\rm d}u^{(k)})R_{\theta}(u_{0}^{(k)},{\rm d}z)\prod_{i=1,i\neq k}^{N}Q_{\theta,\theta',z}({\rm d}u^{(i)})}{\pi({\rm d}x)q(\theta,{\rm d}\theta')\prod_{i=1}^{N}Q_{\theta,\theta',z}({\rm d}u^{(i)})\frac{\mathring{r}_{u^{(k)}}(\theta,\theta')}{\sum_{i=1}^{N}\mathring{r}_{u^{(i)}}(\theta,\theta')}R_{\theta'}(u_{T}^{(k)},{\rm d}z')}\\
 & =\frac{q(\theta',\theta)\pi(\theta')}{q(\theta,\theta')\pi(\theta)}\frac{\pi_{\theta'}(\mathrm{d}z')\bar{Q}_{\theta',\theta,z'}({\rm d}u^{(k)})R_{\theta}(u_{0}^{(k)},{\rm d}z)}{\pi_{\theta}(\mathrm{d}z)Q_{\theta,\theta',z}({\rm d}u^{(k)})R_{\theta'}(u_{T}^{(k)},{\rm d}z')}\mathring{r}_{u^{(k)}}^{-1}(\theta,\theta')\frac{1}{N}\sum_{i=1}^{N}\mathring{r}_{u^{(i)}}(\theta,\theta').
\end{align*}
\end{proof}
The implementation of the resulting asymmetric MCMC algorithm is described
in Algorithm \ref{alg: MHAAR for pseudo-marginal ratio in latent variable models}.
The interest of introducing a general form for $R_{\theta}$ should
now be clear: the standard choice $R_{\theta}\big(z,\cdot\big)=\delta_{z}(\cdot)$
introduces dependence among $u^{(1)},u^{(2)},\ldots,u^{(N)}$ which
can be alleviated by the introduction of a more general ergodic transition,
which may consist of an iterated reversible Markov transition of invariant
distribution $\pi_{\theta}$. We also notice that some computational
savings are possible. For example when $Q_{1}^{N}(\cdot,\cdot)$ is
the distribution we sample from, the acceptance ratio does not depend
on $k$, whose sampling can therefore be postponed until after a decision
to accept has been made. The complementary update for which we sample
from $Q_{2}^{N}(\cdot,\cdot)$ effectively does not require sampling
$k$ which is set to $1$ in our implementation in Algorithm \ref{alg: MHAAR for pseudo-marginal ratio in latent variable models}.

\begin{algorithm}[h]
\caption{MHAR for averaged AIS PMR estimators for general latent variable models}

\label{alg: MHAAR for pseudo-marginal ratio in latent variable models} 

\KwIn{Current sample $X_{n}=x=(\theta,z)$} 

\KwOut{New sample $X_{n+1}$} 

Sample $\theta'\sim q(\theta,\cdot)$ and $v\sim\mathcal{U}(0,1)$.
\\
\If{$v\leq1/2$}{

\For{$i=1,\ldots,N$}{ 

Sample $u_{0}^{(i)}\sim R_{\theta}(z,\cdot)$ and $u_{t}^{(i)}\sim R_{\theta,\theta',t}(u_{t-1}^{(i)},\cdot)$
for $t=1,\ldots,T$.

} 

Sample $k\sim\mathcal{P}\big(\mathring{r}_{u^{(1)}}(\theta,\theta'),\ldots,\mathring{r}_{u^{(N)}}(\theta,\theta')\big)$
and $z'\sim R_{\theta'}(u_{T}^{(k)},\cdot)$.\\
Set $X_{n+1}=(\theta',z')$ with probability $\min\{1,\mathring{r}_{\mathfrak{u}}^{N}(\theta,\theta')\},$
otherwise set $X_{n+1}=x$. 

}\Else{ 

Sample $u_{T}^{(1)}\sim R_{\theta}\big(z,\cdot\big)$, $u_{t-1}^{(1)}\sim R_{\theta',\theta,t}(u_{t},\cdot)$
for $t=T,\ldots,1$ and $z'\sim R_{\theta'}(u_{0}^{(1)},\cdot)$.\\
\label{line:elsepseudomarginal} \For{ $i=2,\ldots,N$,}{

Sample $u_{0}^{(i)}\sim R_{\theta'}(z',\cdot)$ and $u_{t}^{(i)}\sim R_{\theta',\theta,t}(u_{t-1}^{(i)},\cdot)$
for $t=1,\ldots,T$.

} 

Set $X_{n+1}=(\theta',z')$ with probability $\min\{1,1/\mathring{r}_{\mathfrak{u}}^{N}(\theta',\theta)\}$,
otherwise set $X_{n+1}=x$. 

} 
\end{algorithm}
\begin{example}[Example \ref{ex: state-space example for latent variable section},
ctd.]
\label{ex: ctd state-space example for latent variable section}In
order to illustrate the interest of our approach, we generated data
from the model for $P=500$, $\sigma_{v}^{2}=10$ and $\sigma_{w}^{2}=0.1$.
The set-up for Algorithm \ref{alg: MHAAR for pseudo-marginal ratio in latent variable models}
was as follows. We let $T=1$ and for $\theta,\theta'\in\Theta$ the
unnormalised density of the intermediate distribution was chosen to
be $f_{\theta,\theta',1}(z)=\pi((\theta+\theta')/2,z)$. The MCMC
kernel $R_{\theta,\theta',1}(\cdot,\cdot)$ was a conditional SMC
(cSMC) \citet{Andrieu_et_al_2010} tar getting the intermediate distribution,
with $M=100$ particles and the model transitions as proposal distributions;
for convenience the cSMC kernel is described in Section \ref{sec: State-space models: SMC and conditional SMC within MHAAR}.
We used a normal random walk proposal with diagonal covariance matrix
as a parameter proposal, where the standard deviations for $\sigma_{v}$
and $\sigma_{w}$ were $0.15$ and $0.08$ respectively. Performance,
measured in terms of convergence to equilibrium and asymptotic variance
for $N=1$, $N=10$ and $N=100$, is presented in Figure \ref{fig: IAC vs N for the state-space model-1}
and \ref{fig: IAC vs N for the state-space model}. For each set-up,
$200$0 independent Monte Carlo runs of length 1000 each were used
to assess convergence to the posterior mean, posterior second moment
and median, via ensemble averages over the runs. We observe in Figure
\ref{fig: IAC vs N for the state-space model-1} that this simple
approach improves performance and reduces time to convergence by approximately
50\%. In addition to faster convergence, of the order of $30\%$,
in terms of IAC in Figure \ref{fig: IAC vs N for the state-space model-1}.
The estimated IAC values were obtained after discarding the first
300 iterations and by averaging over 2000 Monte Carlo runs. We present
further new developments for this application in Section \ref{subsec: An application: trans-dimensional distributions}.

\begin{figure}[h]
\centerline{\includegraphics[scale=0.7]{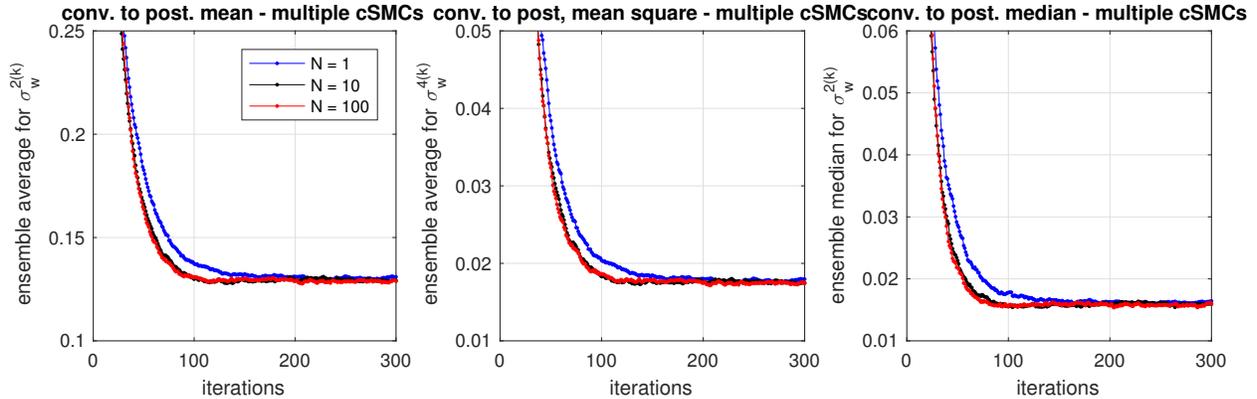}}
\protect\protect\caption{Convergence results for $\theta=(\sigma_{v}^{2},\sigma_{w}^{2})$
vs $N$ in Algorithm \ref{alg: MHAAR for pseudo-marginal ratio in latent variable models}.}

\label{fig: convergence results for the state-space model} 
\end{figure}

\begin{figure}[h]
\centerline{\includegraphics[scale=0.7]{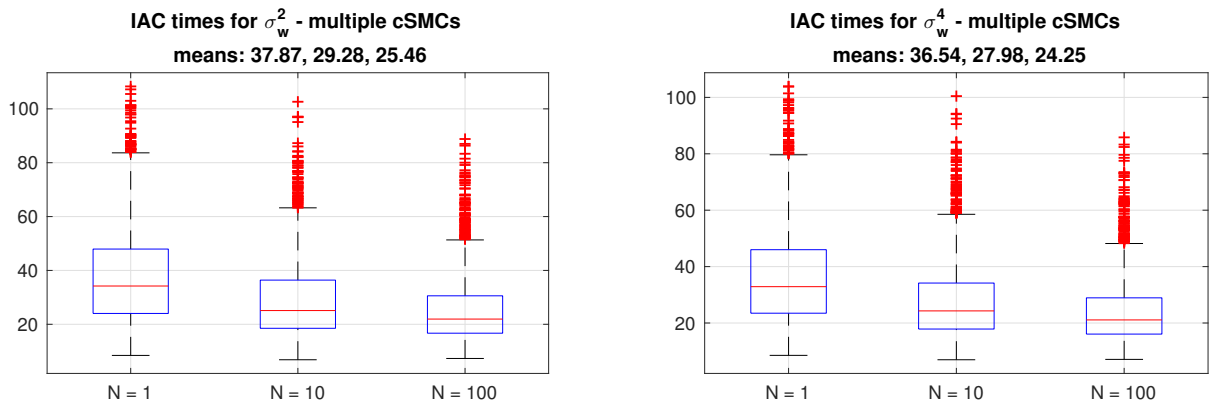}}
\protect\protect\caption{IAC for $\sigma_{w}^{2}$ and $\sigma_{w}^{4}$ vs $N$ in Algorithm
\ref{alg: MHAAR for pseudo-marginal ratio in latent variable models}.}

\label{fig: IAC vs N for the state-space model-1} 
\end{figure}
\end{example}

\subsection{Generalisations of MHAAR algorithms for latent variable models \label{subsec: Generalisations of pseudo-marginal asymmetric MCMC}}

We now discuss two generalisations of Algorithm \ref{alg: MHAAR for pseudo-marginal ratio in latent variable models}
above which will prove crucial in Section \ref{subsec: An application: trans-dimensional distributions},
where we present our trans-dimensional example as an application of
the methodology presented here, albeit in a scenario involving additional
complications.

\subsubsection{Annealing in a different space \label{subsec: Annealing in a different space}}

The first generalisation is based on the main idea that condition
\ref{enu:Ruzispithetareversibleforlatents} in Proposition \ref{prop:MHwithAISinside}
can be relaxed in the light of Theorem \ref{thm:AIS}, and in particular
allows the latent variable $z$ and auxiliary variables $u_{t}$ to
live on different spaces.
\begin{prop}
\label{prop:extensionMHwithAISinside}Suppose that assumptions \ref{enu:a}-\ref{enu:b}
and \ref{enu:secondassumptionMHwithAISinside} of Proposition \ref{prop:MHwithAISinside}
are satisfied with $\mathscr{F}_{\theta,\theta',T},\mathscr{P}_{\theta,\theta',T}$
and $\mathscr{R}_{\theta,\theta',T}$ now defined on some space $\big(\mathsf{V},\mathcal{V}\big)$
(and $\mathsf{U:=\mathsf{V}}^{T+1}$), (therefore $\pi_{\theta,\theta',0}\neq$$\pi_{\theta}$
, and $\pi_{\theta,\theta',T+1}\neq\pi_{\theta'}$ in general), and
assumptions \ref{enu:c} and \ref{enu:Ruzispithetareversibleforlatents}
replaced, for $\theta,\theta'\in\Theta$ and $z,z'\in\mathsf{Z}$,
with

\begin{enumerate}
\item the endpoint conditions for the unnormalised densities are of the
form
\begin{align*}
f_{\theta,\theta',0}(v) & =\pi_{\theta,\theta',0}(v)\pi(\theta),\\
f_{\theta,\theta',T+1}(v) & =\pi_{\theta,\theta',T+1}(v)\pi(\theta'),
\end{align*}
\item the existence of Markov transition kernels $\overrightarrow{R}_{\theta,\theta',0}$,$\overleftarrow{R}_{\theta,\theta',T+1}:\mathsf{Z}\times\mathcal{V}\rightarrow[0,1]$
and $\overrightarrow{R}{}_{\theta,\theta',T+1}$,$\overleftarrow{R}_{\theta,\theta',0}:\mathsf{V}\times\mathcal{Z}\rightarrow[0,1]$
such that 
\begin{align*}
\pi_{\theta}({\rm d}z)\overrightarrow{R}_{\theta,\theta',0}(z,{\rm d}v) & =\pi_{\theta,\theta',0}({\rm d}v)\overleftarrow{R}_{\theta,\theta',0}(v,{\rm d}z),\\
\pi_{\theta,\theta',T+1}({\rm d}v)\overrightarrow{R}_{\theta,\theta',T+1}(v,{\rm d}z) & =\pi_{\theta'}({\rm d}z)\overleftarrow{R}{}_{\theta,\theta',T+1}(z,{\rm d}v),
\end{align*}
\item Define the proposal probability distributions on $\big(\mathsf{U},\mathcal{U}\big)$
such that for any $u\in\mathsf{U}=\mathsf{V}^{T+1}$,
\begin{align*}
Q_{\theta,\theta',z}\big({\rm d}u\big) & =\overrightarrow{R}_{\theta,\theta',0}(z,{\rm d}u_{0})\prod_{t=1}^{T}R_{\theta,\theta',t}(u_{t-1},{\rm d}u_{t}),
\end{align*}
and the involution $\varphi$ reversing the order of the components
of $u$; i.e.\ $\varphi(u_{0},u_{1},\ldots,u_{T}):=(u_{T},u_{T-1},\ldots,u_{0})$
for all $u\in\mathsf{U}$.
\end{enumerate}
Then for any $\big((\theta,z),(\theta',z'),u\big)\in\big(\Theta\times\mathsf{\mathsf{Z}}\big)^{2}\times\mathsf{U}$
\[
\bar{Q}_{\theta,\theta',z}\big({\rm d}u\big)=\overleftarrow{R}_{\theta',\theta,T+1}(z,{\rm d}u_{T})\prod_{t=1}^{T}R_{\theta',\theta,T-t+1}(u_{T-t+1},{\rm d}u_{T-t}),
\]
and
\begin{equation}
\frac{\pi_{\theta'}({\rm d}z')\bar{Q}_{\theta',\theta,z'}({\rm d}u)\overleftarrow{R}_{\theta,\theta',0}(u_{0},{\rm d}z)}{\pi_{\theta}({\rm d}z)Q_{\theta,\theta',z}({\rm d}u)\overrightarrow{R}_{\theta,\theta',T+1}(u_{T},{\rm d}z')}=\frac{\pi(\theta)}{\pi(\theta')}\prod_{t=0}^{T}\frac{f_{\theta,\theta',t+1}(u_{t})}{f_{\theta,\theta',t}(u_{t})}.\label{eq:AISbasedAcceptRatio-1}
\end{equation}
Furthermore, suppose the additional symmetry conditions 
\begin{equation}
\overrightarrow{R}_{\theta,\theta',T+1}(v,{\rm d}z)=\overleftarrow{R}_{\theta',\theta,0}(v,{\rm d}z),\quad\overrightarrow{R}_{\theta,\theta',0}(z,{\rm d}v)=\overleftarrow{R}_{\theta',\theta,T+1}(z,{\rm d}v).\label{eq: general AIS MCMC further symmetry conditions}
\end{equation}
Then, a generalisation of the AIS MCMC algorithm in \citet{Neal_2004}
corresponds to $\mathring{P}$ in Theorem \ref{thm: pseudo-marginal ratio algorithms}
with $x=(\theta,z)$ and $y=(\theta',z')$, the proposal kernel
\[
Q_{1}\big(x,{\rm d}(y,u)\big):=q(\theta,{\rm d}\theta')Q_{\theta,\theta',z}({\rm d}u)\overrightarrow{R}_{\theta,\theta',T+1}(u_{T},{\rm d}z')
\]
and its complementary kernel 
\[
Q_{2}\big(x,{\rm d}(y,u)\big):=q(\theta,{\rm d}\theta')\bar{Q}_{\theta,\theta',z}({\rm d}u)\overleftarrow{R}_{\theta',\theta,0}(u_{0},{\rm d}z').
\]
Its acceptance ratio on set $\mathring{\mathsf{S}}$ is
\begin{equation}
\mathring{r}_{u}(\theta;\theta')=\frac{\pi\big({\rm d}y\big)Q_{2}\big(y,{\rm d}(x,u)\big)}{\pi\big({\rm d}x\big)Q_{1}\big(x,{\rm d}(y,u)\big)}=\frac{q(\theta',\theta)}{q(\theta,\theta')}\prod_{t=0}^{T}\frac{f_{\theta,\theta',t+1}(u_{t})}{f_{\theta,\theta',t}(u_{t})}.\label{eq: AIS-MCMC-acceptance-ratio-general}
\end{equation}
\end{prop}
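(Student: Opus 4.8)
The plan is to follow the proof of Proposition \ref{prop:MHwithAISinside} almost verbatim, the only genuine novelty being that the two endpoint kernels now move between the spaces $\mathsf{Z}$ and $\mathsf{V}$. First I would establish the expression for $\bar{Q}_{\theta,\theta',z}$. By definition $\bar{Q}_{\theta,\theta',z}(A)=Q_{\theta,\theta',z}(\varphi(A))$, so evaluating the density of $Q_{\theta,\theta',z}$ at the order-reversed argument gives $\overrightarrow{R}_{\theta,\theta',0}(z,{\rm d}u_T)\prod_{t=1}^{T}R_{\theta,\theta',t}(u_{T-t+1},{\rm d}u_{T-t})$. Applying the middle-kernel symmetry $R_{\theta,\theta',t}=R_{\theta',\theta,T-t+1}$ from assumption \ref{enu:secondassumptionMHwithAISinside} to each interior factor, and the endpoint symmetry $\overrightarrow{R}_{\theta,\theta',0}=\overleftarrow{R}_{\theta',\theta,T+1}$ from \eqref{eq: general AIS MCMC further symmetry conditions} to the boundary factor, yields the claimed form of $\bar{Q}_{\theta,\theta',z}$.

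The heart of the argument is the Radon--Nikodym identity \eqref{eq:AISbasedAcceptRatio-1}, which I would obtain by viewing the extended path $z\to u_0\to\cdots\to u_T\to z'$ as an AIS trajectory and invoking Theorem \ref{thm:AIS}. Concretely I would take $\tau=T+2$ stages with two repeated endpoints, identifying $\mu_0=\mu_1=\pi_{\theta}$, then $\mu_{t}=\pi_{\theta,\theta',t-1}$ for $t=2,\ldots,\tau-1$, and $\mu_\tau=\mu_{\tau+1}=\pi_{\theta'}$, together with kernels $\Pi_1=\overrightarrow{R}_{\theta,\theta',0}$, $\Pi_\tau=\overrightarrow{R}_{\theta,\theta',T+1}$ and $\Pi_t=R_{\theta,\theta',t-1}$ in between. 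The one place this differs from Proposition \ref{prop:MHwithAISinside} is that $\Pi_1$ and $\Pi_\tau$ are no longer self-reversible kernels on a single space but pairs in cross-space detailed balance; the two relations in assumption 2 are exactly the detailed-balance statements Theorem \ref{thm:AIS} requires at these two stages, with the reverse kernels being $\overleftarrow{R}_{\theta,\theta',0}$ and $\overleftarrow{R}_{\theta,\theta',T+1}$. Feeding the endpoint conditions $f_{\theta,\theta',0}=\pi_{\theta,\theta',0}\,\pi(\theta)$ and $f_{\theta,\theta',T+1}=\pi_{\theta,\theta',T+1}\,\pi(\theta')$ into the telescoping product collapses the intermediate normalising constants and leaves precisely $\tfrac{\pi(\theta)}{\pi(\theta')}\prod_{t=0}^{T}f_{\theta,\theta',t+1}(u_t)/f_{\theta,\theta',t}(u_t)$, which is \eqref{eq:AISbasedAcceptRatio-1}.

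For the final statement I would check that $Q_1$ and $Q_2$ fit the abstract framework of Theorem \ref{thm: pseudo-marginal ratio algorithms} with the order-reversing involution $\varphi$. The one non-trivial verification is that $Q_2$ is genuinely the $\varphi$-image of $Q_1$: the auxiliary block $u$ is reversed by $\varphi$, which turns $Q_{\theta,\theta',z}$ into $\bar{Q}_{\theta,\theta',z}$ by definition, while the terminal step producing $z'$ uses $\overrightarrow{R}_{\theta,\theta',T+1}(u_T,\cdot)$ in $Q_1$ and $\overleftarrow{R}_{\theta',\theta,0}(u_0,\cdot)$ in $Q_2$. The first identity of \eqref{eq: general AIS MCMC further symmetry conditions} guarantees these terminal kernels coincide once $\varphi$ has carried $u_T$ into the leading coordinate, so the pair is admissible. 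The acceptance ratio \eqref{eq:accept-ratio-circle} then factorises, upon writing it as $\pi({\rm d}y)Q_2(y,\cdot)/[\pi({\rm d}x)Q_1(x,\cdot)]$, into the marginal--proposal ratio $\tfrac{q(\theta',\theta)\pi(\theta')}{q(\theta,\theta')\pi(\theta)}$ times the left-hand side of \eqref{eq:AISbasedAcceptRatio-1}; substituting the identity just proved cancels the $\pi(\theta')/\pi(\theta)$ factor and returns \eqref{eq: AIS-MCMC-acceptance-ratio-general}.

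The main obstacle I anticipate is the bookkeeping at the two endpoints: unlike in Proposition \ref{prop:MHwithAISinside}, the ``two repeats'' of Theorem \ref{thm:AIS} now straddle the change of space from $\mathsf{Z}$ to $\mathsf{V}$, so I must ensure that the directions of the detailed-balance relations in assumption 2 align with the forward and reverse kernels that Theorem \ref{thm:AIS} expects at those stages, and that the symmetry conditions \eqref{eq: general AIS MCMC further symmetry conditions} are invoked only where the reversed proposal and the complementary terminal kernel genuinely need to be identified. Everything else is a routine transcription of the single-space argument.
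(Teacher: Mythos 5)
You follow the paper's overall architecture---unroll $\bar{Q}_{\theta,\theta',z}$ through the involution and the kernel symmetries, obtain the Radon--Nikodym identity \eqref{eq:AISbasedAcceptRatio-1} from an AIS-type duality, then verify that $Q_{1}$ and $Q_{2}$ fit Theorem \ref{thm: pseudo-marginal ratio algorithms} via \eqref{eq: general AIS MCMC further symmetry conditions}---and your first and last steps are sound; in particular you are right that producing the leading factor $\overleftarrow{R}_{\theta',\theta,T+1}$ in $\bar{Q}_{\theta,\theta',z}$ genuinely requires the endpoint symmetry conditions, a point the paper's ordering obscures.

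The gap is in the central step. You invoke Theorem \ref{thm:AIS} with the ``two repeats'' identification $\mu_{0}=\mu_{1}=\pi_{\theta}$, $\mu_{\tau}=\mu_{\tau+1}=\pi_{\theta'}$, lifted from the proof of Proposition \ref{prop:MHwithAISinside}. That identification is precisely what this proposition removes: here $\pi_{\theta,\theta',0}\neq\pi_{\theta}$ and $\pi_{\theta,\theta',T+1}\neq\pi_{\theta'}$ in general, and the members of each pair live on different spaces ($\mathsf{V}$ versus $\mathsf{Z}$), so the equalities $\mu_{1}=\pi_{\theta}$ and $\mu_{\tau}=\pi_{\theta'}$ are not even well formed; consistency with the detailed-balance relations of your assumption 2 would instead force $\mu_{1}=\pi_{\theta,\theta',0}$ and $\mu_{\tau}=\pi_{\theta,\theta',T+1}$, contradicting your assignment. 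Nor can Theorem \ref{thm:AIS} absorb ``pairs in cross-space detailed balance'' as you assert: its hypotheses place all the $\mu_{t}$ on one common space and require every $\Pi_{t}$, including $\Pi_{1}$ and $\Pi_{\tau}$, to be $\mu_{t}$-reversible (a kernel from $\mathsf{Z}$ to $\mathsf{V}$ cannot be reversible with respect to anything), and its conclusion contains the factors $\mu_{1}({\rm d}x_{0})/\mu_{0}({\rm d}x_{0})$ and $\mu_{\tau+1}({\rm d}x_{\tau})/\mu_{\tau}({\rm d}x_{\tau})$, which here would be ratios of measures on different spaces. The tool the paper actually uses, and which your verbal description of the hypotheses matches without naming it, is the separate Theorem \ref{thm:extensionAIS}: there the two endpoint transitions are forward/backward kernel pairs satisfying exactly the two relations of assumption 2, and the telescoping product in the conclusion runs only over $t=1,\ldots,\tau-1$, so no cross-space ratio ever appears. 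With $\tau=T+2$, $\mu_{0}=\pi_{\theta}$, $\mu_{t}=\pi_{\theta,\theta',t-1}$ for $t=1,\ldots,T+2$, $\mu_{\tau+1}=\pi_{\theta'}$, $\overrightarrow{\Pi}_{1}=\overrightarrow{R}_{\theta,\theta',0}$, $\overleftarrow{\Pi}_{1}=\overleftarrow{R}_{\theta,\theta',0}$, $\Pi_{t}=R_{\theta,\theta',t-1}$ for $t=2,\ldots,T+1$, $\overrightarrow{\Pi}_{\tau}=\overrightarrow{R}_{\theta,\theta',T+1}$ and $\overleftarrow{\Pi}_{\tau}=\overleftarrow{R}_{\theta,\theta',T+1}$, the endpoint conditions on $f_{\theta,\theta',0}$ and $f_{\theta,\theta',T+1}$ collapse the normalising constants to $\pi(\theta)/\pi(\theta')$ and yield \eqref{eq:AISbasedAcceptRatio-1} exactly as you describe. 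So the proof is repairable by substituting Theorem \ref{thm:extensionAIS} with this identification (or by re-proving that cross-space variant), but as written the appeal to Theorem \ref{thm:AIS} does not go through.
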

\begin{proof}
The first claim follows from Theorem \ref{thm:extensionAIS}, which
can be exploited with similar steps to those in the proof of Proposition
\ref{prop:MHwithAISinside}. The second claim on the generalisation
of AIS MCMC follows from the fact that the symmetry conditions in
\eqref{eq: general AIS MCMC further symmetry conditions} ensure that
$Q_{1}$ and $Q_{2}$ defined in the proposition satisfy the assumption
of Theorem \ref{thm: pseudo-marginal ratio algorithms}.
\end{proof}
\begin{rem}
It may appear that the additional coupling conditions on the initial
and terminal distributions is only satisfied for reversible kernels.
However it should be clear that in the formulation above $z,z'$ and
$u_{0},\ldots,u_{T}$ can be of a different nature i.e.\ defined
on different spaces, which turns out to be relevant in some scenarios,
including that considered in Section \ref{subsec: Numerical example: Poisson multiple changepoint model}.
In fact, the generalisation of AIS MCMC mentioned in Proposition \ref{prop:extensionMHwithAISinside}
corresponds to the AIS RJ-MCMC algorithm of \citet{Karagiannis_and_Andrieu_2013}
for trans-dimensional distributions. It also covers the standard version
of the hybrid Monte Carlo algorithm, for example. 
\end{rem}
One can build upon this generalisation and use the framework of asymmetric
acceptance ratio MH algorithms corresponding to $\mathring{P}^{N}$
of Section \ref{subsec: Averaging AIS based acceptance ratios} in
order to define a $\pi-$reversible Markov transition probability.
\begin{prop}
\label{prop: asymmetric MCMC with latent variables - general spaces}Assume
that the conditions of Proposition \ref{prop:extensionMHwithAISinside}
hold. For $N\geq1$ define the proposal kernels $Q_{1}^{N}(\cdot)$
and $Q_{2}^{N}(\cdot)$ on $\big(\mathsf{X}\times\mathsf{\mathfrak{U}}\times[k],\mathcal{X}\otimes\mathscr{U}\otimes\mathscr{P}[k]\big)$
\begin{align}
Q_{1}^{N}\big(x,{\rm d}(y,\mathfrak{u},k)\big) & =q(\theta,{\rm d}\theta')\prod_{i=1}^{N}Q_{\theta,\theta',z}({\rm d}u^{(i)})\frac{\mathring{r}_{u^{(k)}}(\theta,\theta')}{\sum_{i=1}^{N}\mathring{r}_{u^{(i)}}(\theta,\theta')}\overrightarrow{R}_{\theta,\theta',T+1}(u_{T}^{(k)},{\rm d}z'),\label{eq: asymmetric MCMC combining AIS and pseudo MCMC Q1-1}\\
Q_{2}^{N}\big(x,{\rm d}(y,\mathfrak{u},k)\big) & =q(\theta,{\rm d}\theta')\frac{1}{N}\bar{Q}_{\theta,\theta',z}({\rm d}u^{(k)})\overleftarrow{R}_{\theta',\theta,0}(u_{0}^{(k)},{\rm d}z')\prod_{i=1,i\neq k}^{N}Q_{\theta',\theta,z'}({\rm d}u^{(i)}).\label{eq: asymmetric MCMC combining AIS and pseudo MCMC Q2-1}
\end{align}
Then one can implement $\mathring{P}^{N}$ corresponding to $\mathring{P}$
defined in Proposition \ref{prop:MHwithAISinside}, with $Q_{1}^{N}(\cdot,\cdot)$
and $Q_{2}^{N}(\cdot,\cdot)$ as above and
\[
\mathring{r}_{\mathfrak{u}}^{N}(\theta,\theta')=\frac{1}{N}\sum_{i=1}^{N}\mathring{r}_{u^{(i)}}(\theta,\theta')
\]
with $\mathring{r}_{u}(\theta;\theta')$ defined in \eqref{eq: AIS-MCMC-acceptance-ratio-general}.
\end{prop}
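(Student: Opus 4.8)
The plan is to establish $\pi$-reversibility exactly as in the proof of Proposition \ref{prop: asymmetric MCMC with latent variables}, by checking that the pair $\big(Q_1^N,Q_2^N\big)$ of \eqref{eq: asymmetric MCMC combining AIS and pseudo MCMC Q1-1}--\eqref{eq: asymmetric MCMC combining AIS and pseudo MCMC Q2-1} falls within the asymmetric MH framework of Theorem \ref{thm:asymmetricMH-1}. The single input needed is that the Radon--Nikodym derivative of $\pi(\mathrm dy)\,Q_2^N(y,\cdot)$ with respect to $\pi(\mathrm dx)\,Q_1^N(x,\cdot)$, with the roles of $x=(\theta,z)$ and $y=(\theta',z')$ interchanged, equals the averaged ratio $\mathring r_{\mathfrak u}^N(\theta,\theta')$. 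First I would write out $Q_2^N\big(y,\mathrm d(x,\mathfrak u,k)\big)$ by swapping $(\theta,z)\leftrightarrow(\theta',z')$ in \eqref{eq: asymmetric MCMC combining AIS and pseudo MCMC Q2-1}, so that it carries the factors $q(\theta',\mathrm d\theta)$, $\tfrac1N\bar Q_{\theta',\theta,z'}(\mathrm du^{(k)})$, $\overleftarrow{R}_{\theta,\theta',0}(u_0^{(k)},\mathrm dz)$ and $\prod_{i\neq k}Q_{\theta,\theta',z}(\mathrm du^{(i)})$.

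Next I would form the ratio against $\pi(\mathrm dx)\,Q_1^N(x,\mathrm d(y,\mathfrak u,k))$ and cancel the common factors $\prod_{i\neq k}Q_{\theta,\theta',z}(\mathrm du^{(i)})$ appearing in both numerator and denominator; this leaves a single index-$k$ contribution together with the selection weight $\mathring r_{u^{(k)}}(\theta,\theta')/\sum_i \mathring r_{u^{(i)}}(\theta,\theta')$ in the denominator. Using $\pi(\mathrm d(\theta,z))=\pi(\mathrm d\theta)\,\pi_\theta(\mathrm dz)$ I would split off the marginal factor $q(\theta',\theta)\pi(\theta')/[q(\theta,\theta')\pi(\theta)]$ and collect the remaining measures in $z,z'$ and $u^{(k)}$ into precisely the left-hand side of the generalized AIS identity \eqref{eq:AISbasedAcceptRatio-1} of Proposition \ref{prop:extensionMHwithAISinside}. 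Substituting \eqref{eq:AISbasedAcceptRatio-1} telescopes these measures into $\tfrac{\pi(\theta)}{\pi(\theta')}\prod_{t=0}^T f_{\theta,\theta',t+1}(u_t^{(k)})/f_{\theta,\theta',t}(u_t^{(k)})$; recognizing the product as $\mathring r_{u^{(k)}}(\theta,\theta')$ through \eqref{eq: AIS-MCMC-acceptance-ratio-general}, the factor $\mathring r_{u^{(k)}}(\theta,\theta')$ cancels against the selection weight, and the $1/N$ and the sum $\sum_i \mathring r_{u^{(i)}}(\theta,\theta')$ combine to leave exactly $\mathring r_{\mathfrak u}^N(\theta,\theta')$.

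The only genuinely new bookkeeping relative to Proposition \ref{prop: asymmetric MCMC with latent variables} is that $z,z'$ are proposed last through the asymmetric bridging kernels $\overrightarrow{R}_{\theta,\theta',T+1}$ and $\overleftarrow{R}_{\theta',\theta,0}$, which may map between the different spaces $\mathsf Z$ and $\mathsf V$; I would therefore keep the sampling order explicit and invoke the coupling and symmetry conditions \eqref{eq: general AIS MCMC further symmetry conditions} (already assumed in Proposition \ref{prop:extensionMHwithAISinside}) to guarantee that $\bar Q_{\theta',\theta,z'}$ is the genuine $\varphi$-image of $Q_{\theta',\theta,z'}$ and that all the derivatives above are well defined on the relevant symmetric set $\mathring{\mathsf S}$. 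I expect the main obstacle to be purely measure-theoretic, namely ensuring absolute continuity of the reversed proposal with respect to the forward one when $u^{(k)}$ and $z,z'$ live on distinct spaces; this is exactly what \eqref{eq:AISbasedAcceptRatio-1} and the hypotheses of Proposition \ref{prop:extensionMHwithAISinside} secure, so no regularity beyond what is already assumed is required, and the computation then closes in one line.
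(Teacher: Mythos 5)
Your argument is correct and coincides with the paper's approach: the paper omits an explicit proof here precisely because it is the same Radon--Nikodym computation as in the proof of Proposition \ref{prop: asymmetric MCMC with latent variables}, with the kernels $R_{\theta}$, $R_{\theta'}$ replaced by the bridging kernels $\overrightarrow{R}_{\theta,\theta',T+1}$, $\overleftarrow{R}_{\theta',\theta,0}$ and the identity \eqref{eq:AISbasedAcceptRatio} replaced by \eqref{eq:AISbasedAcceptRatio-1}. Your cancellation of the $i\neq k$ factors, the use of \eqref{eq:AISbasedAcceptRatio-1} to telescope the $k$-th term into $\mathring{r}_{u^{(k)}}(\theta,\theta')$, and the final cancellation against the selection weight reproduce exactly that calculation.
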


\subsubsection{Choosing $Q_{1}^{N}(\cdot,\cdot)$ and $Q_{2}^{N}(\cdot,\cdot)$
with different probabilities \label{subsec: Choosing Q1 and Q2 with different probabilities}}

Notice from \eqref{eq: asymmetric MCMC combining AIS and pseudo MCMC Q1}
and \eqref{eq: asymmetric MCMC combining AIS and pseudo MCMC Q2}
that $Q_{1}^{N}(\cdot)$ and $Q_{2}^{N}(\cdot)$ share the same proposal
distribution for $\theta'$ and start differing from each other when
generating the auxiliary variables and proposing $z'$ thereafter.
In some cases, depending on the values of $\theta$ and $\theta'$,
$Q_{1}^{N}(\cdot)$ (or $Q_{2}^{N}(\cdot)$) may be preferable over
$Q_{2}^{N}(\cdot)$ (or $Q_{1}^{N}(\cdot)$) for proposing $z'$.
This is indeed the case in our trans-dimensional example in Section
\ref{subsec: An application: trans-dimensional distributions}, where
the $\theta$ component stands for the model number. One can enjoy
this degree of freedom by a function $\beta:\Theta^{2}\rightarrow[0,1]$
which satisfies 
\begin{equation}
\int\beta(\theta,\theta')Q_{1}^{N}(x,\mathrm{\mathrm{d}}(y,\mathfrak{u},k))+\left(1-\beta(\theta,\theta')\right)Q_{2}^{N}(x,\mathrm{d}(y,\mathfrak{u},k))=1.\label{eq: condition for alpha and Q1 and Q2}
\end{equation}
Then, we can modify the overall transition kernel of the asymmetric
MCMC as follows:
\begin{align}
\begin{aligned}\mathring{\bar{P}}^{N}(x,dy)= & \left[\int\beta(\theta,\theta')Q_{1}^{N}(x,\mathrm{\mathrm{d}}(y,\mathfrak{u},k)\min\left\{ 1,\mathring{\overline{r}}_{\mathfrak{u}}^{N}(\theta,\theta')\right\} +\delta_{x}(\mathrm{d}y)\mathring{\bar{\rho}}_{1}(x)\right]\end{aligned}
\nonumber \\
+\left[\int\left(1-\beta(\theta,\theta')\right)Q_{2}^{N}(x,\mathrm{d}(y,\mathfrak{u},k))\min\left\{ 1,1/\mathring{\overline{r}}_{\mathfrak{u}}^{N}(\theta',\theta)\right\} +\delta_{x}(\mathrm{d}y)\mathring{\bar{\rho}}_{2}(x)\right]\label{eq: generalised asymmetric kernel with alpha}
\end{align}
where the modified acceptance ratio is defined as 
\begin{equation}
\mathring{\overline{r}}_{\mathfrak{u}}^{N}(\theta,\theta'):=\mathring{r}_{\mathfrak{u}}^{N}(\theta,\theta')\frac{1-\beta(\theta',\theta)}{\beta(\theta,\theta')}.\label{eq: acceptance ratio modified by alpha}
\end{equation}
Implementing the modification with respect to Algorithm \ref{alg: MHAAR for pseudo-marginal ratio in latent variable models}
is straightforward: One needs to replace $v\leq1/2$ with $v\leq\beta(\theta,\theta')$
and use $\mathring{\overline{r}}_{\mathfrak{u}}^{N}(x,y)$ instead
of $\mathring{r}_{\mathfrak{u}}^{N}(x,y)$. Proof of reversibility
is very similar to the proof of Proposition \ref{prop: asymmetric MCMC with latent variables}
and we skip it.

Note that the condition in \eqref{eq: condition for alpha and Q1 and Q2}
ensures that \eqref{eq: generalised asymmetric kernel with alpha}
is a valid transition kernel and it is satisfied whenever $\theta'$
is proposed by $Q_{1}^{N}$ and $Q_{2}^{N}$ in the same way, as in
\eqref{eq: asymmetric MCMC combining AIS and pseudo MCMC Q1} and
\eqref{eq: asymmetric MCMC combining AIS and pseudo MCMC Q2} where
the same $q(\theta,\mathrm{d}\theta')$ is used. One can in principle
write an even more general kernels than the one in \eqref{eq: generalised asymmetric kernel with alpha}
by making $\beta$ a function of $x$ and $y$ and imposing a condition
similar to \eqref{eq: condition for alpha and Q1 and Q2}, however
we find this generalisation not as interesting from a practical point
of view.

\subsection{An application: trans-dimensional distributions\label{subsec: An application: trans-dimensional distributions}}

Consider a trans-dimensional distribution $\bar{\pi}(m,{\rm d}z_{m})$
on $\mathsf{X}=\cup_{m\in\Theta}\{m\}\times\mathsf{Z}_{m}$ where
$\Theta\subseteq\mathbb{N}$ and the dimension $d_{m}$ of $\mathsf{Z}_{m}$
depends on $m$. For each $m$, we assume that the distribution $\pi(m,{\rm d}z_{m})$
admits a density $\pi(m,z_{m})$ known up to a normalising constant
not depending on $m$ or $z_{m}$. We let $\mathscr{Z}_{m}$ be the
sigma-algebra of the conditional distribution $\pi_{m}({\rm d}z_{m})$.
We are interested in efficient sampling from the marginal distribution
$\pi(m)$.

An approach for sampling from trans-dimensional distributions is the
reversible jump MCMC (RJ-MCMC) algorithm of \citet{Green_1995}. Designing
efficient RJ-MCMC algorithms is notoriously difficult and can lead
to unreliable samplers. \citet{Karagiannis_and_Andrieu_2013} develop
what they call the AIS RJ-MCMC algorithm to improve on the performance
of the standard RJ-MCMC algorithm. The AIS RJ-MCMC algorithm is a
variant of the AIS MCMC algorithm of \citet{Neal_2004} devised for
trans-dimensional distributions. Full details of the method are available
in \citet{Karagiannis_and_Andrieu_2013}; however, we will need to
go into some details here as well, in order to state our contribution,
the reversible multiple jump MCMC (RmJ-MCMC), of which we present
an instance in Algorithm \ref{alg: Reversible multiple jump MCMC}.
In what follows, for notational simplicity, we consider only algorithms
consisting of a single ``move'' in Green's terminology, between
any pair of models $m,m'\in\Theta$\textendash the generalisation
to multiple pairs is straightforward but requires additional indexing.
A RJ-MCMC update can be understood as being precisely the procedure
proposed in Section \ref{subsec: Annealing in a different space},
but adapted to the present trans-dimensional set-up. In this scenario
the nature of the target distributions comes with the additional complication
that statistically interpretable parameters ($z_{m},z_{m'}$ for models
$m,m'$ respectively) must be, following \citet{Green_1995}'s idea,
embedded in a potentially larger common space and that this expanded
parametrisation is only unique up to an invertible transformation.
We mainly deal with this issue in this section, as the details of
the algorithm are then very similar to those of Section \ref{subsec: Annealing in a different space}. 

\subsubsection{Dimension matching and ``forward'' parametrisation}

Following \citet{Green_1995} we couple models pairwise. More precisely,
for any couple $m,m'\in\Theta$, consider the $d_{m,m'}$ and $d_{m',m}$
dimensional variables such that $d_{m}+d_{m,m'}=d_{m'}+d_{m',m}$,
\[
\mathfrak{z}_{m,m'}\in\mathfrak{Z}_{m,m'},\quad\mathfrak{z}_{m,m'}\sim\omega_{m,m'},\quad\mathfrak{z}_{m',m}\in\mathsf{\mathfrak{Z}}_{m',m},\quad\mathfrak{z}_{m',m}\sim\omega_{m',m},
\]
which are called dimension matching variables, with the convention
that these variables and associated quantities should be ignored when
either $d_{m,m'}=0$ or $d_{m',m}=0$. Letting the extended space
$\mathsf{Z}_{m,m'}:=\mathsf{Z}_{m}\times\mathfrak{Z}_{m,m'}$, consider
a one-to-one measurable mapping $\phi_{m,m'}:\mathsf{Z}_{m,m'}\rightarrow\mathsf{Z}_{m',m}$
with its inverse $\phi_{m,m'}^{-1}=\phi_{m',m}$. Note that the nature
of $z_{m}$ and $z_{m'}$ may differ as may that of $\mathfrak{z}{}_{m,m'}$
and $\mathfrak{z}_{m',m}$, which explains the need for the (cumbersome)
indexing. In order to ease the notation in the following presentation,
for $z_{m,m'}:=(z_{m},\mathfrak{z}_{m,m'})\in\mathsf{Z}_{m,m'}$ and
$A_{m}\in\mathscr{Z}_{m,m'}$, we will use the following transformations
with implicit reference to $z_{m,m'}$ and $A_{m,m'}$:
\begin{equation}
\begin{split} & z_{m',m}:=(z_{m'},\mathfrak{z}_{m',m})=\phi_{m,m'}(z_{m,m'}),\quad A_{m',m}:=\phi_{m,m'}(A_{m,m'})\\
 & z_{m,m'}^{[1]}:=z_{m},\quad z_{m,m'}^{[2]}:=\mathfrak{z}_{m,m'},\quad\phi_{m,m'}^{[1]}(z_{m,m'}):=z_{m'}\quad\phi_{m,m'}^{[2]}(z_{m,m'}):=\mathfrak{z}_{m',m}.
\end{split}
\label{eq: transdimensional model shorthand transformations}
\end{equation}
This change of variables plays a crucial r\^ole in describing and
establishing the correctness of the algorithms. 

In the following, we define the ingredients required for the AIS RJ-MCMC
algorithm and its MHAAR extension, paralleling the conditions of Propositions
\ref{prop:MHwithAISinside} and \ref{prop:extensionMHwithAISinside}.
\begin{itemize}
\item For any $m,m'\in\Theta$, we first define below the sequence of bridging
distributions $\mathscr{P}_{m,m',T}=\{\pi_{m,m',t},t=0,\ldots,T+1\}$
on the extended probability space $\big(\mathsf{Z}_{m,m'},\mathscr{Z}_{m,m'}\big)$.
First, we impose the end-point condition 
\begin{align}
\pi_{m,m',0}(z_{m,m'})\propto\pi(m,z_{m})\omega_{m,m'}(\mathfrak{z}_{m,m'})=:f_{m,m',0}\big(z_{m,m'}\big),
\end{align}
from which for any $m,m'\in\Theta$ we define $\pi_{m,m',T+1}(\cdot)$
and its unnormalised density $f_{m,m',T+1}(\cdot)$ via a change of
variable, that is for any $A_{m,m'}\in\mathscr{Z}_{m,m'}$,
\[
\pi_{m,m',T+1}(A_{m,m'}):=\pi_{m',m,0}\big(A_{m',m}\big),
\]
where we recall that $\pi_{m',m,0}(\cdot)$ has marginal $\pi_{m'}(\cdot)$.
From the associated densities one can define $f_{m,m',t}(\cdot)$
for $t=1,\ldots,T$, as discussed in earlier sections for non-trans-dimensional
setups (see also \citet{Karagiannis_and_Andrieu_2013} for a detailed
discussion). In order to satisfy \ref{enu:b} of Proposition \ref{prop:MHwithAISinside}
we further impose, noting the bijective nature of $z_{m',m}=\phi_{m,m'}\big(z_{m,m'}\big)$,
that for any $A_{m,m'}\in\mathscr{Z}_{m,m'}$
\begin{align*}
\pi_{m,m',t}(A_{m,m'})=\pi_{m',m,T+1-t}\big(A_{m',m}\big),\quad t=1,\ldots,T.
\end{align*}
It is this set of constraints which requires care, and an arbitrary
choice of parametrisation in the calculation of the Radon-Nikodym
derivative of our algorithm. The normalising constants for $f_{m,m',0}(z_{m,m'})$
and $f_{m,m',T+1}(z_{m,m'})$ are $\pi(m)$ and $\pi(m')$ respectively,
so the AIS stage of the algorithm will produce an estimate of the
ratio $\pi(m')/\pi(m)$.
\item Next, we define the AIS kernels used in the proposal mechanism. For
any $m,m'\in\Theta$ and $t=1,\ldots,T$ we let $R_{m,m',t}(\cdot,\cdot)$
be a $\pi_{m,m',t}-$reversible Markov kernels and impose the symmetry
conditions for any $(z_{m,m'},A_{m,m'})\in\mathsf{Z}_{m,m'}\times\mathscr{Z}_{m,m'}$,
\begin{equation}
R_{m,m',t}(z_{m,m'},A_{m,m'})=R_{m',m,T-t+1}(z_{m',m},A_{m',m}).\label{eq:RJ-symmetry-cond}
\end{equation}
The space bridging transition kernels $\overrightarrow{R}_{m,m',0}\colon\mathsf{Z}_{m}\times\mathscr{Z}_{m,m'}\rightarrow[0,1]$
and $\overrightarrow{R}_{m,m',T+1}\colon\mathsf{Z}_{m,m'}\times\mathscr{Z}_{m'}\rightarrow[0,1]$
are defined as
\begin{equation}
\begin{split}\overrightarrow{R}_{m,m',0}\big(z_{m},{\rm d}z_{m,m'}\big) & :=\delta_{z_{m}}\big({\rm d}z_{m,m'}^{[1]}\big)\omega_{m,m'}({\rm d}z_{m,m'}^{[2]})\\
\overrightarrow{R}_{m,m',T+1}\big(z_{m,m'},{\rm d}z_{m'}\big) & :=\delta_{\phi_{m,m'}^{[1]}(z_{m,m'})}\big({\rm d}z_{m'}\big)
\end{split}
\label{eq: trandimensional first two space bridging kernels}
\end{equation}
The other space bridging kernels $\overleftarrow{R}_{m,m',0}\colon\mathsf{Z}_{m,m'}\times\mathscr{Z}_{m}\rightarrow[0,1]$
and $\overleftarrow{R}_{m,m',T+1}\colon\mathsf{Z}_{m'}\times\mathscr{Z}_{m,m'}\rightarrow[0,1]$
will be defined from the first two above. Specifically, for any $m,m'\in\Theta$
and $\big(z_{m},z_{m'},z_{m,m'},A_{m,m'}\big)\in\mathsf{Z}_{m}\times\mathsf{Z}_{m'}\times\mathsf{Z}_{m,m'}\times\mathscr{Z}_{m,m'}$
\begin{equation}
\begin{split}\overleftarrow{R}_{m,m',0}\big(z_{m,m'},{\rm d}z{}_{m}\big) & :=\overrightarrow{R}_{m',m,T+1}\big(z_{m',m},{\rm d}z{}_{m}\big)\\
\overleftarrow{R}_{m,m',T+1}\big(z_{m'},A_{m,m'}\big) & :=\overrightarrow{R}_{m',m,0}\big(z_{m'},A_{m',m}\big)
\end{split}
.\label{eq: trandimensional last two dim-matching kernels-1}
\end{equation}
We notice the important properties, central to Green's methodology,
\begin{equation}
\begin{split}\pi_{m}({\rm d}z_{m})\overrightarrow{R}_{m,m',0}\big(z_{m},{\rm d}z_{m,m'}\big) & =\pi_{m,m',0}({\rm d}z_{m,m'})\overleftarrow{R}_{m,m',0}\big(z_{m,m'},{\rm d}z_{m}\big)\\
\pi_{m'}({\rm d}z_{m'})\overleftarrow{R}_{m,m',T+1}\big(z_{m'},{\rm d}z_{m,m'}\big) & =\pi_{m,m',T+1}({\rm d}z_{m,m'})\overrightarrow{R}_{m,m',T+1}\big(z_{m,m'},{\rm d}z_{m'}\big)
\end{split}
\label{eq: trans-dimensional model endpoint relations}
\end{equation}
so that we are in the framework described in Theorem \ref{thm:extensionAIS}
and satisfy the corresponding conditions in Proposition \ref{prop:extensionMHwithAISinside}. 
\item Finally, we define the distribution for the auxiliary variables of
AIS and the involution function. Given $m,m'\in\Theta$, define the
auxiliary variables 
\[
u_{m,m'}:=(u_{m,m',0},\ldots,u_{m,m',T})\in\mathsf{Z}_{m,m'}^{T+1}
\]
and the mapping $\varphi_{m,m'}:\mathsf{Z}_{m,m'}^{T+1}\rightarrow\mathfrak{\mathsf{Z}}_{m',m}^{T+1}$
\begin{equation}
u_{m',m}=\varphi_{m,m'}(u_{m,m'}):=\big(\phi_{m,m'}(u_{m,m',T}),\phi_{m,m'}(u_{m,m',T-1}),\ldots,\phi_{m,m'}(u_{m,m',0})\big),\label{eq: trans-dimensional bijection+involution}
\end{equation}
so that $\varphi_{m,m'}^{-1}=\varphi_{m',m}$. For $m,m'\in\Theta$
and $T\geq0$, we define the distribution for the auxiliary variables
\[
Q_{m,m',z_{m}}({\rm d}u_{m,m'}):=\overrightarrow{R}_{m,m',0}(z_{m},{\rm d}u_{m,m',0})\prod_{t=1}^{T}R_{m,m',t}(u_{m,m',t-1},{\rm d}u_{m,m',t}).
\]
\end{itemize}
Now we are ready to define the AIS RJ-MCMC algorithm. From the symmetry
conditions \eqref{eq:RJ-symmetry-cond} and \eqref{eq: trandimensional last two dim-matching kernels-1},
and our choice of $\varphi_{m,m'}$, one can establish that
\[
\bar{Q}_{m,m',z_{m}}({\rm d}u_{m,m'})=\overleftarrow{R}_{m',m,T+1}(z_{m},{\rm d}u_{m',m,T})\prod_{t=1}^{T}R_{m',m,T-t+1}(u_{m',m,T-t+1},{\rm d}u_{m',m,T-t}),
\]
where $u_{m',m,t}=\phi_{m,m'}(u_{m,m',T-t+1})$ for $t=0,1,\ldots,T$
by \eqref{eq: trans-dimensional bijection+involution}. This implies
in particular that
\begin{align}
\bar{Q}_{m',m,z_{m'}}({\rm d}u_{m',m}) & =\overleftarrow{R}_{m,m',T+1}(z_{m'},{\rm d}u_{m,m',T})\prod_{t=1}^{T}R_{m,m',T-t+1}(u_{m,m',T-t+1},{\rm d}u_{m,m',T-t}).\label{eq: AIS RJ-MCMC Q_bar}
\end{align}
The AIS RJ-MCMC algorithm of \citet{Karagiannis_and_Andrieu_2013}
uses the proposal kernel
\[
Q_{1}((m,z_{m}),{\rm d}(m',z{}_{m'},u_{m,m'})=q(m,m')Q_{m,m',z_{m}}({\rm d}u_{m,m'})\overrightarrow{R}_{m,m',T+1}(u_{m,m',T},{\rm d}z{}_{m'})
\]
and its complementary
\begin{align}
Q_{2}((m,z_{m}),{\rm d}(m',z{}_{m'},u_{m,m'}) & =q(m,m')\bar{Q}_{m,m',z_{m}}({\rm d}u_{m,m'})\overleftarrow{R}_{m',m,0}(u_{m',m,0},{\rm d}z{}_{m'}),\label{eq: AIS RJ-MCMC Q2 line1}\\
 & =q(m,m')\bar{Q}_{m,m',z_{m}}({\rm d}u_{m,m'})\overrightarrow{R}_{m,m',T+1}(u_{m,m',0},{\rm d}z{}_{m'}),\label{eq: AIS RJ-MCMC Q2 line2}
\end{align}
(We have kept \eqref{eq: AIS RJ-MCMC Q2 line2} to emphasise that
one can write (and implement) both kernels using the same auxiliary
variables $u_{m,m'}$. This will be more relevant in the MHAAR extension
in Section \ref{subsec:Extension to MHAAR} where one also samples
from $Q_{2}^{N}$, which is based on $Q_{2}$.) Equation \eqref{eq: AIS RJ-MCMC Q2 line1},
combined with \eqref{eq: AIS RJ-MCMC Q_bar} and \eqref{eq: trans-dimensional model endpoint relations},
show that we are in the framework of Theorem \ref{thm: pseudo-marginal ratio algorithms}
and Theorem \ref{thm:extensionAIS}. The acceptance rate of the AIS
RJ-MCMC can be written in terms of $m,m',z_{m}$ and $u_{m,m'}$,
leading to
\[
\mathring{r}_{u_{m,m'}}(m,m')=\frac{q(m',m)}{q(m,m')}\prod_{t=1}^{T}\frac{f_{m,m',t+1}(u_{m,m',t})}{f_{m,m',t}(u_{m,m',t})}.
\]
When $T=0$, the AIS RJ-MCMC algorithm reduces to the original RJ-MCMC
algorithm of \citet{Green_1995}.

\subsubsection{MHAAR extension of AIS RJ-MCMC \label{subsec:Extension to MHAAR}}

The MHAAR extension of AIS RJ-MCMC for averaging AIS based pseudo-marginal
ratios, that is Algorithm \ref{alg: MHAAR for pseudo-marginal ratio in latent variable models}
crafted for the trans-dimensional model, should be clear now. By analogy
to the case in general latent variable models, the proposal mechanisms
of the MHAAR extension of AIS RJ-MCMC follows immediately from the
kernels defined above as
\begin{align*}
Q_{1}^{N}((m,z_{m}),\mathrm{d}(y,\mathfrak{u}_{m,m'},k))\\
:=q(m,m') & \prod_{i=1}^{N}Q_{m,m',z_{m}}({\rm d}u_{m,m'}^{(i)})\frac{\mathring{r}_{u_{m,m'}^{(k)}}(m,m')}{\sum_{i=1}^{N}\mathring{r}_{u_{m,m'}^{(i)}}(m,m')}\overrightarrow{R}_{m,m',T+1}(u_{m,m',T}^{(k)},{\rm d}z{}_{m'}),\\
Q_{2}^{N}\big((m,z_{m});{\rm d}(y,\mathfrak{u}_{m,m'},k)\big)\\
:=q(m,m') & \frac{1}{N}\bar{Q}_{m,m',z_{m}}({\rm d}u_{m,m'}^{(k)})\overrightarrow{R}_{m,m',T+1}(u_{m,m',0}^{(k)},{\rm d}z_{m'})\prod_{i=1,i\neq k}^{N}Q_{m',m,z_{m'}}({\rm d}u_{m',m}^{(i)}),
\end{align*}
which leads to the averaged acceptance ratio $\mathring{r}_{\mathfrak{u}_{m,m'}}^{N}(m,m')=(1/N)\sum_{i=1}^{N}\mathring{r}_{u_{m,m'}^{(i)}}(m,m')$
when sampling from $Q_{1}^{N}(\cdot)$ and $\mathring{r}_{\mathfrak{u}_{m,m'}}^{N}(m',m)^{-1}$
when sampling from $Q_{2}^{N}(\cdot)$. 

As discussed in Section \ref{subsec: Choosing Q1 and Q2 with different probabilities},
it is possible to choose between the two proposal mechanisms with
a probability dependent on the current and part of the proposed states,
in contrast with the $1/2-1/2$ default choice above, leading to modified
acceptance ratios of the form given in \eqref{eq: acceptance ratio modified by alpha}.
We discuss here how this can be taken advantage of for computational
purposes. Assume for simplicity of exposition that only moves from
model $m$ to models $m-1$ and $m+1$ are allowed (for $m$ such
that these moves are valid). As illustrated below, it may be sensible
to use $Q_{1}^{N}(\cdot)$ rather than $Q_{2}^{N}(\cdot)$ to increase
the model index and vice-versa to decrease the model index. This can
be achieved for example by setting $\beta(m,m+1)=1$ and $\beta(m,m-1)=0$.
A scenario where this is a potentially good idea is for example when
$\mathfrak{z}_{m,m+1}$ can take values on a continuum while $\mathfrak{z}_{m+1,m}$
can only take a finite number of values, say $c_{m+1,m}$. Generating
$N\gg c_{m+1,m}$ copies of $\mathfrak{z}_{m+1,m}$ and averaging
may be wasteful in comparison to the generation of $N$ values of
$\mathfrak{z}_{m,m+1}$. Using the strategy above one can ensure that
$Q_{1}^{N}(\cdot)$ is used when ``going up'' while $Q_{2}^{N}(\cdot)$
is only used to ``go down''. This is the case for the Poisson change-point
model example of the next section. 

In Algorithm \ref{alg: Reversible multiple jump MCMC} we present
the implementation of a particular version of this algorithm, for
a general value $\beta(m,m')$, $T=0$ and $N\geq1$. Because of its
similarity to the RJ-MCMC of \citet{Green_1995} but with the difference
of generating multiple auxiliary variables (hence multiple jumps)
instead of one, we call this algorithm Reversible-multiple-jump MCMC
(RmJ-MCMC).

\begin{algorithm}
\caption{RmJ-MCMC: Algorithm \ref{alg: MHAAR for pseudo-marginal ratio in latent variable models}
for trans-dimensional models with $T=0$ (no annealing). }
\label{alg: Reversible multiple jump MCMC} 

\KwIn{Current sample $X_{n}=(m,z_{m})$}

\KwOut{New sample $X_{n+1}$}

Sample $m'\sim q(m,\cdot)$ and $v\sim\mathcal{U}(0,1)$.\\
\If{$v\leq\beta(m,m')$}{ 

\For{$i=1,\ldots,N$}{

Sample $\mathfrak{z}_{m,m'}^{(i)}\sim\omega_{m,m'}(\cdot)$, set $u_{m,m'}^{(i)}=(z_{m},\mathfrak{z}_{m,m'}^{(i)})$.} 

Sample $k\sim\mathcal{P}\big(\mathring{r}_{u^{(1)}}(m,m'),\ldots,\mathring{r}{}_{u^{(N)}}(m,m')\big)$,
set $z'_{m'}=\phi_{m,m'}^{[1]}(u_{m,m'}^{(k)})$.\\
Set $X_{n+1}=(m',z'{}_{m'})$ with probability $\min\{1,\mathring{\bar{r}}_{\mathfrak{u}_{m,m'}}^{N}(m,m')\},$
otherwise set $X_{n+1}=X_{n}$.

}\Else{ 

Sample $\mathfrak{z}_{m,m'}^{(1)}\sim\omega_{m,m'}(\cdot)$, set $u_{m',m}^{(1)}=\phi_{m,m'}(z_{m},\mathfrak{z}_{m,m'}^{(1)})$
and $z'_{m'}=\phi_{m,m'}^{[1]}(u_{m,m'}^{(k)})$. \\
\For{$i=2,\ldots,N$}{

Sample $\mathfrak{z}_{m',m}^{(i)}\sim\omega_{m',m}(\cdot)$, set $u_{m',m}^{(i)}=(z{}_{m'},\mathfrak{z}_{m',m}^{(i)})$.

}

Set $X_{n+1}=(m',z'{}_{m'})$ with probability $\min\{1,\mathring{\bar{r}}_{\mathfrak{u}_{m',m}}^{N}(m',m)^{-1}\}$,
otherwise set $X_{n+1}=X_{n}$. 

}
\end{algorithm}

\subsubsection{Numerical example: the Poisson multiple change-point model \label{subsec: Numerical example: Poisson multiple changepoint model}}

The Poisson multiple change-point model was proposed for the analysis
of the coal-mining disasters in \citet{Green_1995}. The model assumes
that $n$ data points $y_{1},\ldots,y_{n}$, which are the times of
occurrence of disasters with the choice $y_{0}=0$, arise from a non-homogenous
Poisson process model on a time interval $[0,L]$ with intensity modelled
as a step function with an unknown number of steps $m$ having unknown
starting points $0=s_{0,m}<s_{1,m},\ldots<s_{m,m}=L$ and unknown
heights $h_{1,m},\ldots,h_{m,m}$. We will refer to the model involving
$m$ steps as model $m$. Therefore, denoting $\omega_{m}=(\{s_{j,m}\}_{j=0}^{m},\{h_{j,m}\}_{j=1}^{m})$,
the data likelihood under model $m$ is 
\[
\log\mathcal{L}(y_{1:n};\omega_{m})=\sum_{j=1}^{m}h_{j,m}\log\left(\sum_{i=1}^{n}\mathbb{I}_{[s_{j-1,m},s_{j,m})}(y_{i})\right)-\sum_{j=1}^{m}h_{j,m}(s_{j,m}-s_{j-1,m}).
\]
The prior distribution for $\phi_{m}$ is as follows: $\{s_{j,m}\}_{j=1}^{m-1}$
are distributed as the even-numbered order statistics from $2m-1$
points uniformly distributed on $[0,L]$; the heights $h_{j,m}$,
$j=1,\ldots,m$ are independent and each follow a Gamma distribution
$\mathcal{G}(\alpha_{k},\beta_{k})$, where $\alpha_{k}$ and $\beta_{k}$
themselves are independent random variables admitting distributions
$\mathcal{G}(c,d)$ and $\mathcal{G}(e,f)$, respectively. Finally,
the prior distribution for $m$ is a truncated Poisson distribution
$\mathcal{P}_{m_{\max}}(\lambda)$ where $m\leq m_{\max}\geq1$. The
hyper parameters $(c,d,e,f,\lambda,m_{\max})$ are assumed known,
we let $\Theta=\{1,\ldots,m_{\max}\}$ and 
\[
z_{m}=(\omega_{m},\alpha_{m},\beta_{m})\in\mathsf{Z}_{m}=(0,L)^{m-1}\times(0,\infty)^{m}\times(0,\infty)\times(0,\infty)
\]
be the within-model parameters of model $m$. This defines a trans-dimensional
distribution $\pi(m,dz_{m})$ on $\mathsf{X}=\cup_{m\in\Theta}\{m\}\times\mathsf{Z}_{m}$
where the dimension $d_{m}$ of $\mathsf{Z}_{m}$ depends on $m$.
The distribution $\pi(m,\mathrm{d}z_{m})$ admits a density $\pi(m,z_{m})$
known up to a normalising constant; this unnormalised density can
easily be derived from the description of the model above.

Our experiment on the Poisson change-point model focuses on showing
that improvement over standard RJ-MCMC can be obtained solely by using
asymmetric MCMC with multiple dimension matching variables (as discussed
in the paragraph above); hence we run RmJ-MCMC in Algorithm \ref{alg: Reversible multiple jump MCMC}
for several values of $N$ and $T=0$. Each run generates $K=10^{6}$
samples of which the last $3K/4$ are used to compute the IAC for
$m$. Note that we also include an MCMC move for the within model
variables $z_{m}$ at every iteration in order to ensure irreducibility,
the details of this move can be found given in \citet{Karagiannis_and_Andrieu_2013}.
In order to illustrate the gains in terms of convergence to equilibrium
of our scheme we ran $3000$ independent realisations of the algorithm
started at the same point $x_{0}$ and estimated the expectations
of $f_{m}(X_{i}):=\mathbb{I}\{M_{t}=m\}$, that is $\mathbb{E}_{x_{0}}^{N}\big[f_{m}(X_{i})\big]$,
by an ensemble average and report $\big|\hat{\pi}(m)-k^{-1}\sum_{k=1}^{3000}f_{m}(X_{t}^{(k)})\big|$
for $m\in\{1,\ldots,8\}$ and $N=1,10,100$ in Figure \ref{fig:reversible-jump-burnin}
where $\hat{\pi}(m)$ was estimated by a realisation of length $10^{6}$
with $N=90$ and $T=50$, discarding the burn-in. We see that the
approach reduces time to convergence to equilibrium by the order of
$50\%$, while variance reduction is automatic and of the order of
$60\%$ as illustrated in Figure \ref{fig: reversiblejumpexample}.
We also provide results for the AIS scheme for illustration.

\begin{figure}
\includegraphics[width=1\textwidth]{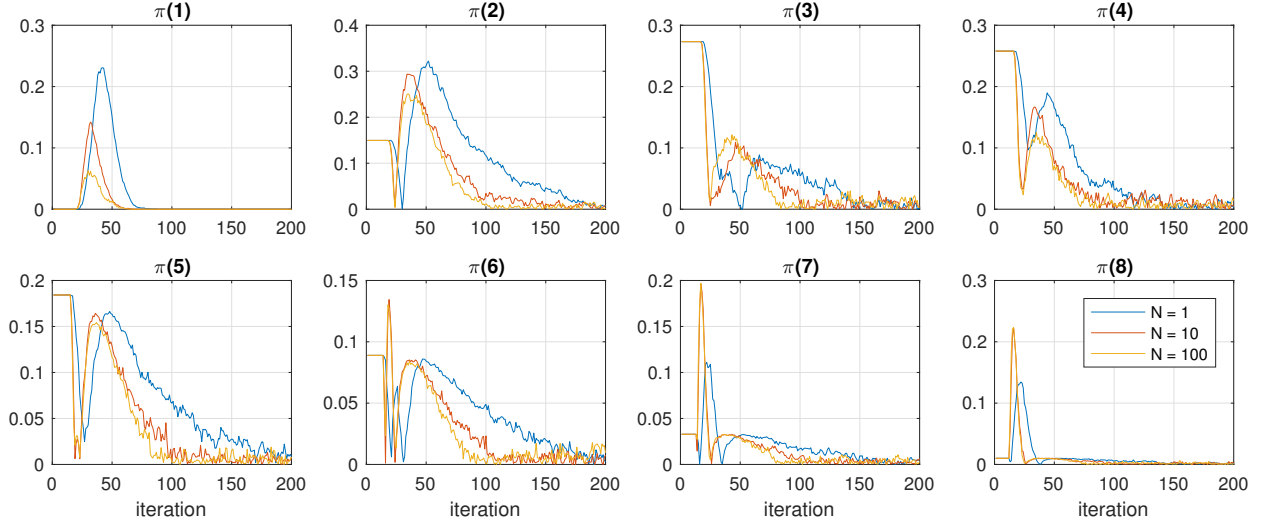}

\caption{Estimates of time to convergence of $\mathbb{E}_{x_{0}}^{N}\big[f_{m}(X_{i})\big]$
to $\pi(m)$ for $N=1,10,100$.}
\label{fig:reversible-jump-burnin}
\end{figure}

\begin{figure}
\centerline{\includegraphics[scale=0.6]{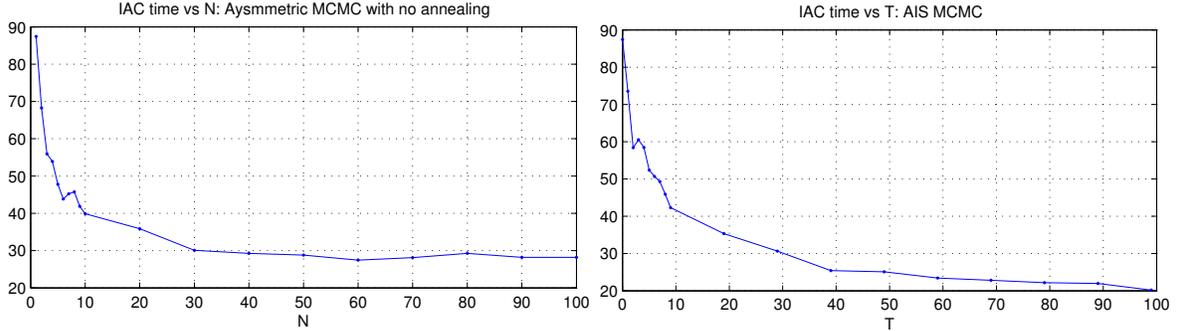}}
\protect\protect\caption{Left: IAC for $m$ vs number of particles $N=1,2,\ldots,10,20,\ldots,100$
with $T=0$. Right: IAC for $m$ vs number of particles $T=0,1,2,\ldots,10,20,\ldots,100$
with $N=1$.}

\label{fig: reversiblejumpexample} 
\end{figure}

\section{State-space models: SMC and cSMC within MHAAR \label{sec: State-space models: SMC and conditional SMC within MHAAR}}

In Section \ref{sec: Pseudo-marginal ratio algorithms for latent variable models}
we have shown how the generic MHAAR strategy which consists of averaging
independent estimates of the acceptance ratio could be helpful in
the context of inference in state-space models. Here we present an
alternative where dependent estimates arising from a single conditional
SMC algorithm can be averaged in order to improve performance.

\subsection{State-space models and cSMC \label{subsec: State-space models and conditional SMC} }

In its simplest form, a state-space model is comprised of a latent
Markov chain $\{Z_{t};t\geq1\}$ taking its values in some measurable
space $(\mathsf{Z},\mathcal{Z})$ and observations $\{Y_{t};t\geq1\}$
taking values in $(\mathsf{Y},\mathcal{Y})$. The latent process has
initial probability with density $\mu_{\theta}(z_{1})$ and transition
density $f_{\theta}(z_{t-1},z_{t})$, dependent on a parameter $\theta\in\Theta\subset\mathbb{R}^{d_{\theta}}$.
An observation at time $t$ is assumed conditionally independent of
all other random variables given $Z_{t}=z_{t}$ and its conditional
observation density is $g_{\theta}(z_{t},y_{t})$. The corresponding
joint density of the latent and observed variables up to time $T$
is 
\begin{equation}
p_{\theta}(z_{1:T},y_{1:T})=\mu_{\theta}(z_{1})\prod_{t=2}^{T}f_{\theta}(z_{t-1},z_{t})\prod_{t=1}^{T}g_{\theta}(z_{t},y_{t}),\label{eq: HMM joint density}
\end{equation}
from which the likelihood function associated to the observations
$y_{1:T}$ can be obtained 
\begin{equation}
l_{\theta}(y_{1:T}):=\int_{\mathsf{X}^{T}}p_{\theta}(z_{1:T},y_{1:T}){\rm d}z_{1:T}.\label{eq:likelihoodSSM}
\end{equation}
Note that the densities $f_{\theta}$ and $g_{\theta}$ could also
depend on $t$, at the expense of notational complications, and that
$T$ is here the time horizon of the time series and should not be
confused with the number of intermediate steps in AIS in the previous
sections. We allow this abuse of notation since there are no intermediate
steps involved in the methodology for HMMs developed in this paper.

In order to go back to our generic notation, we let $z=z_{1:T}$ and
$y=y_{1:T}$. With a prior distribution $\eta(\mathrm{d}\theta)$
on $\theta$ with density $\eta(\theta)$, the joint posterior $\pi(\mathrm{d}(\theta,z))$
has the density
\[
\pi(\theta,z)\propto\eta(\theta)p_{\theta}(z,y)
\]
so that $\pi(\theta)\propto\eta(\theta)\ell_{\theta}(y)$ and $\pi_{\theta}(z):=p_{\theta}(z\mid y)=p_{\theta}(z,y)/\ell_{\theta}(y)$.

The conditional sequential Monte Carlo (cSMC) algorithm for this state-space
model is given in Algorithm \ref{alg: Conditional SMC}, where particles
are initialised using distribution $h_{\theta}(\cdot)$ on $\mathsf{(Z},\mathcal{Z})$
at time $1$ and propagated at times $t>1$ using the transition kernel
$H_{\theta}(\cdot,\cdot)$ on $\mathsf{(Z},\mathcal{Z})$. The cSMC
algorithm is an MCMC transition probability, akin to particle filters,
particularly well suited to sampling from $\pi_{\theta}(\mathrm{d}z)$
\citet{Andrieu_et_al_2010}. It was recently shown in \citet{Lindsten_and_Schon_2012}
that cSMC with backward sampling \citet{whiteley2010discussion} can
be used efficiently as part of a more elaborate Metropolis-within-Particle
Gibbs algorithm in order to sample from the posterior distribution
$\pi(\mathrm{d}(\theta,z))$; see Algorithm \ref{alg: Metropolis within particle Gibbs}.

\begin{algorithm}
\caption{$\mathrm{cSMC}\big(M,\theta,z\big)$}
\label{alg: Conditional SMC}

\KwIn{Number of particles $M$, parameter $\theta$, current sample
$z$}

\KwOut{Particles $\zeta=\zeta_{1:T}^{(1:M)}$, new sample $z'$}

Set $\zeta_{1}^{(1)}=z_{1}$.

\For{ $i=2,\ldots,M$}{

Sample $\zeta_{1}^{(i)}\sim h_{\theta}(\cdot)$.\\
Compute $w_{1}^{(i)}=\mu_{\theta}\big(\zeta_{1}^{(i)}\big)g_{\theta}\big(\zeta_{1}^{(i)},y_{1}\big)/h_{\theta}\big(\zeta_{1}^{(i)}\big)$.

}

\For{ $t=2,\ldots,T$}{

Set $\zeta_{t}^{(1)}=z_{t}$.

\For{ $i=2,\ldots,M$}{

Sample $a_{t-1}^{(i)}\sim\mathcal{P}\big(w_{t-1}^{(1)},\ldots,w_{t-1}^{(M)}\big)$
and $\zeta_{t}^{(i)}\sim H_{\theta}\big(\zeta_{t-1}^{(a_{t-1}^{(i)})},\cdot\big)$.\\
Compute $w_{t}^{(i)}=f_{\theta}\big(\zeta_{t-1}^{(a_{t-1}^{(i)})},\zeta_{t}^{(i)}\big)g_{\theta}\big(\zeta_{t}^{(i)},y_{t}\big)/H_{\theta}\big(\zeta_{t-1}^{(a_{t-1}^{(i)})},\zeta_{t}^{(i)}\big)$.

}

}

Sample $k_{T}\sim\mathcal{P}\big(w_{T}^{(1)},\ldots,w_{T}^{(M)}\big)$
and set $z'_{T}=\zeta_{T}^{(k_{T})}$.

\label{line:beginBS}\For{ $t=T-1,\ldots,1$}{

\For{ $i=1,\ldots,M$}{

Compute $\tilde{w}_{t}^{(i)}=w_{t}^{(i)}f_{\theta}\big(\zeta_{t}^{(i)},\zeta_{t+1}^{(k_{t+1})}\big)$.

}

Sample $k_{t}\sim\mathcal{P}\big(\tilde{w}_{t}^{(1)},\ldots,\tilde{w}_{t}^{(M)}\big)$
and set $z'_{t}=\zeta_{t}^{(k_{t})}$.

}

\label{line:BSend}

\Return$\zeta=\zeta_{1:T}^{(1:N)}$ and $z'=z'_{1:T}$.
\end{algorithm}

\begin{algorithm}
\caption{Metropolis-within-particle Gibbs}
\label{alg: Metropolis within particle Gibbs}

\KwIn{Current sample $X_{n}=(\theta,z)$}

\KwOut{New sample $X_{n+1}$}

Sample $z'\sim\mathrm{cSMC}\big(M,\theta,z\big)$.\\
Sample $\theta'\sim q(\theta,\cdot)$. \\
Return $X_{n+1}=(\theta',z')$ with probability 
\begin{equation}
\min\left\{ 1,\frac{\eta(\theta')p_{\theta'}(z,y)q(\theta',\theta)}{\eta(\theta)p_{\theta}(z,y)q(\theta,\theta')}\right\} ;\label{eq: noisy acceptance ratio}
\end{equation}
otherwise return $X_{n+1}=(\theta,z')$. 
\end{algorithm}

Retaining one path from the $T\times M$ samples in the cSMC algorithm
involved in Algorithm \ref{alg: Metropolis within particle Gibbs}
may seem to be wasteful, and a natural idea is whether it is possible
to make use of multiple, or even use all possible, trajectories and
average out the corresponding acceptance ratios \eqref{eq: noisy acceptance ratio}.
We show that this is indeed possible with Algorithms \ref{alg: MHAAR for SSM with cSMC - Rao-Blackwellised backward sampling}
and \ref{alg: MHAAR for SSM with cSMC - multiple paths from backward sampling}
in the next section. We then show that these schemes improve performance
at a cost which can be negligible, in particular when a parallel computing
architecture is available. In order to avoid notational overload we
postpone the justification of the algorithms to Appendix \ref{sec: Auxiliary results and proofs for cSMC based algorithms }.
Algorithms \ref{alg: MHAAR for SSM with cSMC - Rao-Blackwellised backward sampling}
and \ref{alg: MHAAR for SSM with cSMC - multiple paths from backward sampling}
are alternative to the recently developed method

\subsection{MHAAR with cSMC for state-space models \label{subsec: MHAAR with cSMC for SSM}}

The law of the indices $\mathbf{k}:=(k_{1},\ldots,k_{T})$ drawn
in the backward sampling step in Algorithm \ref{alg: Conditional SMC}
(lines \ref{line:beginBS}-\ref{line:BSend}) conditional upon $\zeta=\zeta_{1:T}^{(1:M)}$
is given by

\[
\phi_{\theta}(\mathbf{k}\mid\zeta):=\frac{w_{T}(\zeta_{T}^{(k_{T})})}{\sum_{i=1}^{M}w_{T}(\zeta_{T}^{(i)})}\prod_{t=1}^{T-1}\frac{w_{t}(\zeta_{t}^{(k_{t})})f_{\theta}(\zeta_{t}^{(k_{t})},\zeta_{t+1}^{(k_{t+1})})}{\sum_{i=1}^{M}w_{t}(\zeta_{t}^{(i)})f_{\theta}(\zeta_{t}^{(i)},\zeta_{t+1}^{(k_{t+1})})}.
\]
We introduce the Markov kernel which corresponds to the sampling of
a trajectory $z$ with backward-sampling, conditional upon $\zeta$,
\[
\check{\Phi}_{\theta}(\zeta,{\rm d}z)=\sum_{k\in[M]^{T}}\phi_{\theta}(\mathbf{k}|\zeta)\delta_{\zeta^{(k)}}({\rm d}z),
\]
where we define $[M]=\{1,\ldots,M\}$ and $\zeta^{(\mathbf{k})}:=(\zeta_{1}^{(k_{1})},\ldots,\zeta_{T}^{(k_{T})})$.
Further, for any $\theta,\theta',\tilde{\theta}\in\Theta$, and $z,z'\in\mathsf{Z}^{T}$,
define 
\begin{equation}
\mathring{r}_{z,z'}(\theta,\theta';\tilde{\theta})=\frac{q(\theta',\theta)\eta(\theta')p_{\theta'}(z',y)p_{\tilde{\theta}}(z,y)}{q(\theta,\theta')\eta(\theta)p_{\tilde{\theta}}(z',y)p_{\theta}(z,y)}.\label{eq: AIS acceptance ratio for SSM}
\end{equation}
In the following, we show that it is possible to construct unbiased
estimators of $r(\theta,\theta')$ using cSMC, provided we have a
random sample $z\sim\pi_{\theta}(\cdot)$. Specifically, this is achieved
as the expected value of $\mathring{r}_{z,\zeta^{(\mathbf{k})}}(\theta,\theta';\tilde{\theta})$
with respect to the backward sampling distribution on $\mathbf{k}$,
\begin{equation}
\mathring{r}_{z,\zeta}(\theta,\theta';\tilde{\theta}):=\sum_{\mathbf{k}\in[M]^{T}}\phi_{\tilde{\theta}}(\mathbf{k}|\zeta)\mathring{r}_{z,\zeta^{(\mathbf{k})}}(\theta,\theta';\tilde{\theta}).\label{eq: SMC acceptance ratio estimator all paths}
\end{equation}
\begin{thm}
\label{thm: SMC unbiased estimator of acceptance ratio}For $\theta,\theta',\tilde{\theta}\in\Theta$
and any $M\geq1$, let $z\sim\pi_{\theta}(\cdot)$, $\zeta|z\sim\mathrm{cSMC}(M,\tilde{\theta},z)$
be the generated particles from the cSMC algorithm targeting $\pi_{\tilde{\theta}}(\cdot)$
with $M$ particles, conditioned on $z$. Then, \textup{$\mathring{r}_{z,\zeta}(\theta,\theta';\tilde{\theta})$}
is an unbiased estimator of $r(\theta,\theta')$.
\end{thm}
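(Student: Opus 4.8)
The plan is to unfold the two definitions, factor out everything that does not depend on the backward indices, and reduce the claim to the $\pi_{\tilde\theta}$-invariance of the cSMC kernel with backward sampling. Substituting \eqref{eq: AIS acceptance ratio for SSM} into \eqref{eq: SMC acceptance ratio estimator all paths} and pulling the $\mathbf{k}$-independent terms out of the sum gives
\[
\mathring{r}_{z,\zeta}(\theta,\theta';\tilde\theta)=\frac{q(\theta',\theta)\eta(\theta')}{q(\theta,\theta')\eta(\theta)}\,\frac{p_{\tilde\theta}(z,y)}{p_{\theta}(z,y)}\sum_{\mathbf{k}\in[M]^{T}}\phi_{\tilde\theta}(\mathbf{k}\mid\zeta)\,\frac{p_{\theta'}(\zeta^{(\mathbf{k})},y)}{p_{\tilde\theta}(\zeta^{(\mathbf{k})},y)}.
\]
Since $\pi(\theta')/\pi(\theta)=\eta(\theta')\ell_{\theta'}(y)/\big(\eta(\theta)\ell_{\theta}(y)\big)$, the constant prefactor already coincides with $r(\theta,\theta')$ up to the likelihood ratio $\ell_{\theta'}(y)/\ell_{\theta}(y)$; hence it suffices to show that the expectation of the remaining $z,\zeta,\mathbf{k}$-dependent quantity equals $\ell_{\theta'}(y)/\ell_{\theta}(y)$.

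The key step is to take the expectation over $z\sim\pi_{\theta}$ first and exploit the two identities $\pi_{\theta}(z)=p_{\theta}(z,y)/\ell_{\theta}(y)$ and $p_{\tilde\theta}(z,y)=\ell_{\tilde\theta}(y)\,\pi_{\tilde\theta}(z)$. Writing the expectation as an integral, the product of the conditioning density and the weight collapses,
\[
\pi_{\theta}(\mathrm{d}z)\,\frac{p_{\tilde\theta}(z,y)}{p_{\theta}(z,y)}=\frac{p_{\tilde\theta}(z,y)}{\ell_{\theta}(y)}\,\mathrm{d}z=\frac{\ell_{\tilde\theta}(y)}{\ell_{\theta}(y)}\,\pi_{\tilde\theta}(\mathrm{d}z),
\]
so that the troublesome weight $p_{\tilde\theta}(z,y)/p_{\theta}(z,y)$ is absorbed by re-expressing the law of the conditioning path as $\pi_{\tilde\theta}$, up to the scalar $\ell_{\tilde\theta}(y)/\ell_{\theta}(y)$. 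This is precisely what synchronises the law of the input trajectory with the parameter $\tilde\theta$ used to run the cSMC, placing us in the particle Gibbs regime.

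After this substitution the remaining expectation is taken over $z\sim\pi_{\tilde\theta}$, $\zeta\mid z\sim\mathrm{cSMC}(M,\tilde\theta,z)$ and $\mathbf{k}\mid\zeta\sim\phi_{\tilde\theta}(\cdot\mid\zeta)$, and the summand depends on these only through the output trajectory $z':=\zeta^{(\mathbf{k})}$. Denote by $K_{\tilde\theta}$ the Markov kernel on path space mapping $z$ to the backward-sampled output, i.e.\ the composition of $\mathrm{cSMC}(M,\tilde\theta,z)$ with $\check{\Phi}_{\tilde\theta}$. The crucial ingredient is that $K_{\tilde\theta}$ is $\pi_{\tilde\theta}$-invariant \citep{Andrieu_et_al_2010,whiteley2010discussion,Lindsten_and_Schon_2012}; therefore $z'\sim\pi_{\tilde\theta}$ and, using $\pi_{\tilde\theta}(z')=p_{\tilde\theta}(z',y)/\ell_{\tilde\theta}(y)$ together with \eqref{eq:likelihoodSSM},
\[
\mathbb{E}\left[\frac{p_{\theta'}(z',y)}{p_{\tilde\theta}(z',y)}\right]=\int\pi_{\tilde\theta}(\mathrm{d}z')\,\frac{p_{\theta'}(z',y)}{p_{\tilde\theta}(z',y)}=\frac{1}{\ell_{\tilde\theta}(y)}\int p_{\theta'}(z',y)\,\mathrm{d}z'=\frac{\ell_{\theta'}(y)}{\ell_{\tilde\theta}(y)}.
\]
Multiplying by the scalar $\ell_{\tilde\theta}(y)/\ell_{\theta}(y)$ produced in the previous step gives $\ell_{\theta'}(y)/\ell_{\theta}(y)$, and reinstating the prefactor yields $\mathbb{E}[\mathring{r}_{z,\zeta}(\theta,\theta';\tilde\theta)]=r(\theta,\theta')$, as required.

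The main obstacle is the invariance invoked in the last paragraph: one must verify that, when the conditioning parameter $\tilde\theta$ matches the law of the input, the trajectory returned by cSMC with backward sampling is again $\pi_{\tilde\theta}$-distributed. I would establish this by the standard extended-target argument, constructing the joint law of all particles $\zeta$, the forward ancestor indices and the backward index trajectory $\mathbf{k}$ on the enlarged space, and checking that designating $\zeta^{(\mathbf{k})}$ as the new conditioning path leaves this extended distribution invariant, so that its $z'$-marginal is $\pi_{\tilde\theta}$. Everything else is bookkeeping: both likelihood-ratio collapses follow immediately from $\pi_{\theta}(z)=p_{\theta}(z,y)/\ell_{\theta}(y)$, and the cancellation of $\ell_{\tilde\theta}(y)$ shows that the auxiliary parameter $\tilde\theta$ — while it governs the variance of the estimator — does not affect its mean.
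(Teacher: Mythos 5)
Your proof is correct. The first two steps coincide exactly with the paper's argument: you factor the estimator so that the backward sum becomes $\check{\Phi}_{\tilde{\theta}}\big(\zeta,\tfrac{p_{\theta'}(\cdot,y)}{p_{\tilde{\theta}}(\cdot,y)}\big)$, and you absorb the weight $p_{\tilde{\theta}}(z,y)/p_{\theta}(z,y)$ by the change of measure $\pi_{\theta}({\rm d}z)\,p_{\tilde{\theta}}(z,y)/p_{\theta}(z,y)=\big(\ell_{\tilde{\theta}}(y)/\ell_{\theta}(y)\big)\,\pi_{\tilde{\theta}}({\rm d}z)$, which is precisely the paper's ``first step'' identity. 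Where you diverge is in the final ingredient. The paper stays inside its own machinery: it applies its cSMC semi-reversibility result (Lemma \ref{lem: cSMC semi-reversibility} and Corollary \ref{cor: cSMC semi-reversibility}), which identifies $\pi_{\tilde{\theta}}({\rm d}z)\Phi_{\tilde{\theta}}(z,{\rm d}\zeta)$ with $\bar{\psi}_{\tilde{\theta}}({\rm d}\zeta)\check{\Phi}_{\tilde{\theta}}(\zeta,{\rm d}z)$, and then invokes the rewritten form \eqref{eq: SMC Del Morals result rewritten in terms of cSMC} of \citet[Theorem 5.2]{del2010backward}, namely $\bar{\psi}_{\tilde{\theta}}\big(\check{\Phi}_{\tilde{\theta}}(\zeta,F)\big)=\pi_{\tilde{\theta}}(F)$. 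You instead reduce everything to the $\pi_{\tilde{\theta}}$-invariance of the composed kernel $K_{\tilde{\theta}}$ (cSMC followed by backward sampling), which you cite and sketch via an extended-target argument. These are mathematically equivalent: the invariance you need follows in two lines from the paper's Corollary \ref{cor: cSMC semi-reversibility} together with the normalisation $\int\check{\Phi}_{\tilde{\theta}}(\zeta,{\rm d}z')=1$, so your deferred obligation can be discharged without any new work. What your packaging buys is modularity, since invariance of particle Gibbs with backward sampling is a standard, citable fact and is strictly weaker than the joint-law identity; what the paper's packaging buys is self-containedness, and its stronger lemma is reused to prove Corollary \ref{cor:exchangeability-SSM} and the reversibility of Algorithms \ref{alg: MHAAR for SSM with cSMC - Rao-Blackwellised backward sampling} and \ref{alg: MHAAR for SSM with cSMC - multiple paths from backward sampling}, which pure invariance would not suffice for. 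One small remark on presentation: your phrase ``take the expectation over $z\sim\pi_{\theta}$ first'' should be read as a change of measure on the joint law of $(z,\zeta,\mathbf{k})$ rather than an actual early integration, since $\zeta$ depends on $z$; the identity you write is exactly the correct one, so this is a matter of wording, not of substance.
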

Theorem \ref{thm: SMC unbiased estimator of acceptance ratio} is
original to the best of our knowledge and we find it interesting in
several aspects. Firstly, unlike the estimator in Metropolis-within-Particle
Gibbs (Algorithm \ref{alg: Metropolis within particle Gibbs}), the
estimators in Theorem \ref{thm: SMC unbiased estimator of acceptance ratio}
use all possible paths from the particles generated by the cSMC. Also,
with a slight modification one can similarly obtain unbiased estimators
for $\pi(\theta')/\pi(\theta)$ which is in some applications of primary
interest. The theorem is derived from \citet[Theorem 5.2]{del2010backward}
and the results in \citet{Andrieu_et_al_2010} relating the laws of
cSMC and SMC. The proof of the theorem is left to Appendix \ref{sec: Auxiliary results and proofs for cSMC based algorithms }.

In particular, Theorem \ref{thm: SMC unbiased estimator of acceptance ratio}
motivates us to design an asymmetric MCMC algorithm which uses the
unbiased estimator mentioned in the theorem in its acceptance ratios.
We present Algorithm \ref{alg: MHAAR for SSM with cSMC - Rao-Blackwellised backward sampling}
that is developed with this motivation. The algorithm requires a pair
of functions $\tilde{\theta}_{1}:\Theta^{2}\rightarrow\Theta$ and
$\tilde{\theta}_{2}:\Theta^{2}\rightarrow\Theta$ that satisfy $\tilde{\theta}_{1}(\theta,\theta')=\tilde{\theta}_{2}(\theta',\theta)$,
in order to determine the intermediate parameter value at which cSMC
is run. 

\begin{algorithm}
\caption{MHAAR for state-space models with cSMC-based estimator of the likelihood
ratio}

\label{alg: MHAAR for SSM with cSMC - Rao-Blackwellised backward sampling}

\KwIn{Current sample $X_{n}=(\theta,z),$ number of particles $M\geq1$}

\KwOut{New sample $X_{n+1}$}

Sample $\theta'\sim q(\theta,\cdot)$ and $v\sim\mathcal{U}(0,1)$\\
\If { $v\leq1/2$ }{

Set $\tilde{\theta}=\tilde{\theta}_{1}(\theta,\theta')$.\\
Run a cSMC$(M,\tilde{\theta},z)$ targeting $\pi_{\tilde{\theta}}$
conditional on $z$ to obtain $\zeta$.\\
Sample $\mathbf{k}=(k_{1},\ldots,k_{T})$ with probability $\frac{\phi_{\tilde{\theta}}(\mathbf{k}|\zeta)\mathring{r}_{z,\zeta^{(\mathbf{k})}}(\theta,\theta';\tilde{\theta})}{\sum_{\mathbf{l}\in[M]^{T}}\phi_{\tilde{\theta}}(\mathbf{l}|\zeta)\mathring{r}_{z,\zeta^{(\mathbf{l})}}(\theta,\theta';\tilde{\theta})}$
and set $z'=\zeta^{(\mathbf{k})}$. \label{alg line: asymmetric MCMC for HMM selection of z_prime}\\
Set $X_{n+1}=(\theta',z')$ with probability $\min\{1,\mathring{r}_{z,\zeta}(\theta,\theta';\tilde{\theta})\}$;
otherwise reject the proposal and set $X_{n+1}=(\theta,z)$.

}\Else{

Set $\tilde{\theta}=\tilde{\theta}_{2}(\theta,\theta')$.\\
Run a ${\rm cSMC}(M,\tilde{\theta},z)$ targeting $\pi_{\tilde{\theta}}$
conditional on $u^{(1)}=z$ to obtain $\zeta.$\\
Sample $\mathbf{k}=(k_{1},\ldots,k_{T})$ with probability $\phi_{\tilde{\theta}}(\mathbf{k}|\zeta)$
and set $z'=\zeta^{(\mathbf{k})}$.\\
Set $X_{n+1}=(\theta',z')$ with probability $\min\{1,1/\mathring{r}_{z',\zeta}(\theta',\theta;\tilde{\theta})\}$;
otherwise reject the proposal and set $X_{n+1}=(\theta,z)$. 

}
\end{algorithm}

The proof that Algorithm \ref{alg: MHAAR for SSM with cSMC - Rao-Blackwellised backward sampling}
is reversible is established in Appendix \ref{subsec: Proof of reversibility for Algorithms}.
The proof has two interesting by-products: (i) An alternative proof
of Theorem \ref{thm: SMC unbiased estimator of acceptance ratio},
and (ii) another unbiased estimate of $r(\theta,\theta')$ that uses
all possible paths that can be constructed from the particles generated
by a cSMC, which is stated in the following corollary.
\begin{cor}
\label{cor: SMC unbiased estimator of acceptance ratio}For $\theta,\theta',\tilde{\theta}\in\Theta$
and any $M\geq1$, let $z\sim\pi_{\theta}(\cdot)$, $\zeta|z\sim{\rm cSMC}(M,\tilde{\theta},z)$
be the generated particles from the cSMC algorithm with $M$ particles
at $\tilde{\theta}$ conditioned on $z$, and $z'|\zeta\sim\check{\Phi}_{\tilde{\theta}}(\zeta,\cdot)$.
Then, $1/\mathring{r}_{z',\zeta}(\theta',\theta,\tilde{\theta})$
is an unbiased estimator of $r(\theta,\theta').$
\end{cor}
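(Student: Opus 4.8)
The plan is to reduce the statement to a single scalar identity about the particle field and then to obtain that identity from Theorem~\ref{thm: SMC unbiased estimator of acceptance ratio} together with the cSMC/SMC correspondence. Throughout write $\ell_\theta:=\ell_\theta(y)$ and, for $\vartheta\in\Theta$,
\[
S_\vartheta(\zeta):=\sum_{\mathbf{l}\in[M]^T}\phi_{\tilde\theta}(\mathbf{l}\mid\zeta)\,\frac{p_\vartheta(\zeta^{(\mathbf{l})},y)}{p_{\tilde\theta}(\zeta^{(\mathbf{l})},y)},
\]
a function of the field $\zeta$ alone. First I would expand $\mathring{r}_{z',\zeta}(\theta',\theta;\tilde\theta)=\sum_{\mathbf{l}}\phi_{\tilde\theta}(\mathbf{l}\mid\zeta)\,\mathring{r}_{z',\zeta^{(\mathbf{l})}}(\theta',\theta;\tilde\theta)$ using \eqref{eq: AIS acceptance ratio for SSM}, factor out the $\mathbf{l}$-independent part, and invert to get
\[
\frac{1}{\mathring{r}_{z',\zeta}(\theta',\theta;\tilde\theta)}=\frac{q(\theta',\theta)\eta(\theta')\,p_{\theta'}(z',y)}{q(\theta,\theta')\eta(\theta)\,p_{\tilde\theta}(z',y)}\,\frac{1}{S_\theta(\zeta)}.
\]
Substituting $z'=\zeta^{(\mathbf{k})}$ and averaging over $\mathbf{k}\sim\phi_{\tilde\theta}(\cdot\mid\zeta)$ turns the stray $z'$-factor into $S_{\theta'}(\zeta)$, so that, conditionally on $\zeta$, the estimator has mean $\tfrac{q(\theta',\theta)\eta(\theta')}{q(\theta,\theta')\eta(\theta)}\,S_{\theta'}(\zeta)/S_\theta(\zeta)$. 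Since $r(\theta,\theta')=\tfrac{q(\theta',\theta)\eta(\theta')\ell_{\theta'}}{q(\theta,\theta')\eta(\theta)\ell_\theta}$, the corollary is equivalent to the scalar identity
\begin{equation}
\mathbb{E}_{z,\zeta}\!\left[\frac{S_{\theta'}(\zeta)}{S_\theta(\zeta)}\right]=\frac{\ell_{\theta'}}{\ell_\theta},\label{eq:plan-star}
\end{equation}
the expectation being over $z\sim\pi_\theta$ and $\zeta\mid z\sim\mathrm{cSMC}(M,\tilde\theta,z)$.

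Next I would obtain \eqref{eq:plan-star} from Theorem~\ref{thm: SMC unbiased estimator of acceptance ratio}. Rewriting its estimator with the same algebra gives $\mathring{r}_{z,\zeta}(\theta,\theta';\tilde\theta)=\tfrac{q(\theta',\theta)\eta(\theta')p_{\tilde\theta}(z,y)}{q(\theta,\theta')\eta(\theta)p_\theta(z,y)}\,S_{\theta'}(\zeta)$, so the theorem is exactly $\mathbb{E}_{z,\zeta}\big[\tfrac{p_{\tilde\theta}(z,y)}{p_\theta(z,y)}S_{\theta'}(\zeta)\big]=\ell_{\theta'}/\ell_\theta$. The bridge between this and \eqref{eq:plan-star} is the conditional-law identity
\[
\mathbb{E}\!\left[\frac{p_{\tilde\theta}(z,y)}{p_\theta(z,y)}\;\middle|\;\zeta\right]=\frac{1}{S_\theta(\zeta)},
\]
which I would derive as follows: under the cSMC/SMC correspondence the reference path of a cSMC run at $\tilde\theta$, conditioned on the field $\zeta$, equals $\zeta^{(\mathbf{l})}$ with backward-sampling probability $\phi_{\tilde\theta}(\mathbf{l}\mid\zeta)$ when $z\sim\pi_{\tilde\theta}$; taking $z\sim\pi_\theta$ instead tilts this law by $\pi_\theta/\pi_{\tilde\theta}\propto p_\theta(\zeta^{(\mathbf{l})},y)/p_{\tilde\theta}(\zeta^{(\mathbf{l})},y)$, whose normaliser is precisely $S_\theta(\zeta)$. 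Plugging the tilted weights into $\mathbb{E}[p_{\tilde\theta}(z,y)/p_\theta(z,y)\mid\zeta]$ collapses the sum to $1/S_\theta(\zeta)$. Since $S_{\theta'}(\zeta)$ is $\zeta$-measurable, conditioning the theorem's identity on $\zeta$ and substituting this conditional expectation yields \eqref{eq:plan-star}, and hence the corollary.

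The one substantive step is the conditional-law identity, i.e.\ that, given the particle field, the cSMC reference path is distributed according to the backward-sampling kernel $\phi_{\tilde\theta}(\cdot\mid\zeta)$. This is exactly the content of the correspondence between the laws of cSMC and SMC \citep{Andrieu_et_al_2010} combined with the backward decomposition of \citet[Theorem 5.2]{del2010backward} — the same machinery underlying Theorem~\ref{thm: SMC unbiased estimator of acceptance ratio} — which is why the statement genuinely arises as a by-product rather than needing an independent argument. Equivalently, one may simply read \eqref{eq:plan-star} off the computation of the $Q_2^N$-contribution in the reversibility proof of Algorithm~\ref{alg: MHAAR for SSM with cSMC - Rao-Blackwellised backward sampling} in Appendix~\ref{subsec: Proof of reversibility for Algorithms}, where the same tilted backward weights appear.
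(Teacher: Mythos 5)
Your proof is correct, but it follows a genuinely different route from the paper's. The paper never invokes Theorem \ref{thm: SMC unbiased estimator of acceptance ratio} to prove this corollary; instead, in Appendix \ref{subsec: Proof of reversibility for Algorithms} it rearranges the measure-level identity underlying the reversibility proof of Algorithm \ref{alg: MHAAR for SSM with cSMC - Rao-Blackwellised backward sampling} (which rests on Corollary \ref{cor:exchangeability-SSM} and the rewriting \eqref{eq: acceptance ratio modified}), so that the law of the corollary's sampling mechanism weighted by $r(\theta',\theta)/\mathring{r}_{z',\zeta}(\theta',\theta;\tilde{\theta})$ is exhibited as a product of probability kernels; integrating everything out then yields unbiasedness in one stroke, i.e.\ the corollary emerges as a by-product of the detailed-balance computation. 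You instead take the theorem as an input and transfer it by conditioning on the particle field: both estimators collapse to the statistic $S_{\theta'}(\zeta)/S_{\theta}(\zeta)$, where $S_{\vartheta}(\zeta)=\sum_{\mathbf{l}\in[M]^{T}}\phi_{\tilde{\theta}}(\mathbf{l}\mid\zeta)\,p_{\vartheta}(\zeta^{(\mathbf{l})},y)/p_{\tilde{\theta}}(\zeta^{(\mathbf{l})},y)$, and the bridge is your conditional-law identity $\mathbb{E}\bigl[p_{\tilde{\theta}}(z,y)/p_{\theta}(z,y)\mid\zeta\bigr]=1/S_{\theta}(\zeta)$, after which the tower property converts the theorem's statement into $\mathbb{E}\bigl[S_{\theta'}(\zeta)/S_{\theta}(\zeta)\bigr]=\ell_{\theta'}(y)/\ell_{\theta}(y)$, which is exactly what the corollary requires; your algebraic reductions check out against \eqref{eq: AIS acceptance ratio for SSM} and \eqref{eq: SMC acceptance ratio estimator all paths}. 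The one step to pin down rigorously is the tilted conditional law: the clean way is to invoke Corollary \ref{cor: cSMC semi-reversibility} at parameter $\tilde{\theta}$, namely $\pi_{\tilde{\theta}}({\rm d}z)\Phi_{\tilde{\theta}}(z,{\rm d}\zeta)=\bar{\psi}_{\tilde{\theta}}({\rm d}\zeta)\check{\Phi}_{\tilde{\theta}}(\zeta,{\rm d}z)$, combined with the change of measure $\pi_{\theta}({\rm d}z)=(\ell_{\tilde{\theta}}(y)/\ell_{\theta}(y))\,(p_{\theta}(z,y)/p_{\tilde{\theta}}(z,y))\,\pi_{\tilde{\theta}}({\rm d}z)$, rather than a generic appeal to the cSMC/SMC correspondence (also, a small slip in your closing remark: the tilted backward weights in the reversibility proof sit in $Q_{1}^{M}$, the forward proposal, not in $Q_{2}^{N}$). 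As for what each approach buys: the paper's argument is uniform, with one measure identity simultaneously delivering reversibility, the theorem and the corollary, and it requires no conditional-law analysis; yours makes the probabilistic structure explicit by identifying the conditional law of the reference path given the field as a Bayes-tilted backward-sampling distribution, and it shows that the two unbiasedness results are equivalent modulo that identity rather than being independent by-products, which is arguably more illuminating.
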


\subsection{Reduced computational cost via subsampling \label{subsec: Easing computational burden with subsampling: Multiple paths BS-SMC}}

The computations needed to implement Algorithm \ref{alg: MHAAR for SSM with cSMC - Rao-Blackwellised backward sampling}
can be performed with a complexity of $\mathcal{O}(M^{2}T)$ upon
observing that the unnormalised probability can be written as
\[
\phi_{\tilde{\theta}}(\mathbf{k}|\zeta)\mathring{r}_{z,\zeta^{(\mathbf{k})}}(\theta,\theta';\tilde{\theta})=:\kappa_{z,\zeta}(\mathbf{k})=\kappa_{z,\zeta,1}(k_{1})\prod_{t=2}^{T}\kappa_{z,\zeta,t}(k_{t-1},k_{t})
\]
for an appropriate choice for the functions $\kappa_{z,\zeta,t}$.
Indeed, the expression above implies that computation of $\mathring{r}_{z,\zeta}(\theta,\theta';\tilde{\theta})=\sum_{\mathbf{k}\in[M]^{T}}\kappa_{z,\zeta}(\mathbf{k})$
can be performed by a sum-product algorithm and sampling $\mathbf{k}$
with probability proportional to $\kappa_{z,\zeta}(\mathbf{k})$ can
be performed with a forward-filtering backward-sampling algorithm.
However, $\mathcal{O}(M^{2}T)$ can still be overwhelming, especially
when $M$ is large. 

In the following, we introduce a computationally less demanding version
of Algorithm \ref{alg: MHAAR for SSM with cSMC - Rao-Blackwellised backward sampling}
which uses a subsampled version of \eqref{eq: SMC acceptance ratio estimator all paths}
obtained from $N$ paths drawn using backward sampling and still preserves
reversibility. Letting $\mathfrak{u}=(u^{(1)},\ldots,u^{(N)})\in\mathsf{Z}^{TN}$
, consider 
\[
\mathring{r}_{z,\mathfrak{u}}^{N}(\theta,\theta';\tilde{\theta})=\frac{1}{N}\sum_{i=1}^{N}\mathring{r}_{z,u^{(i)}}(\theta,\theta';\tilde{\theta}),
\]
which is an unbiased estimator of \eqref{eq: SMC acceptance ratio estimator all paths}
when $u^{(1)},\ldots,u^{(N)}\overset{{\rm iid}}{\sim}\check{\Phi}(\zeta,\cdot)$.
In Algorithm \ref{alg: MHAAR for SSM with cSMC - multiple paths from backward sampling}
we present the multiple paths BS-cSMC asymmetric MCMC algorithm which
uses $\mathring{r}_{z,\mathfrak{u}}^{N}(\theta,\theta';\tilde{\theta})$,
but still targets $\pi(\mathrm{d}(\theta,z))$, as desired. The computational
complexity of this algorithm is $\mathcal{\mathcal{O}}(NMT)$ per
iteration instead of $\mathcal{O}(M^{2}T)$; moreover, sampling $N$
paths can be parallelised. Reversibility of the algorithm with respect
to $\pi(\mathrm{d}(\theta,z))$ is proved in Appendix \ref{subsec: Proof of reversibility for Algorithms}.

\begin{algorithm}[!h]
\caption{MHAAR for state-space models with cSMC-based estimator of the likelihood
ratio - with reduced computation via subsampling}
\label{alg: MHAAR for SSM with cSMC - multiple paths from backward sampling}

\KwIn{Current sample $X_{n}=(\theta,z)$, number of particles $M\geq1$,
number of backward paths $N\geq1$} 

\KwOut{New sample $X_{n+1}$} 

Sample $\theta'\sim q(\theta,\cdot)$ and $v\sim\mathcal{U}(0,1)$.
\\
\If{$v\leq1/2$}{

Set $\tilde{\theta}=\tilde{\theta}_{1}(\theta,\theta')$.\\
Run a ${\rm cSMC}(M,\tilde{\theta},z)$ to obtain the particles $\zeta$.
\\
Draw $N$ paths with backward sampling, $u^{(1)},\ldots,u^{(N)}\overset{{\rm iid}}{\sim}\check{\Phi}_{\tilde{\theta}}(\zeta,\cdot)$.\\
Sample $k\sim\mathcal{P}\big(\mathring{r}_{z,u^{(1)}}(\theta,\theta';\tilde{\theta}),\ldots,\mathring{r}_{z,u^{(N)}}(\theta,\theta';\tilde{\theta})\big)$
and set $z'=u^{(k)}$. \\
Set $X_{n+1}=(\theta',z')$ with probability $\min\{1,\mathring{r}_{z,\mathfrak{u}}^{N}(\theta,\theta';\tilde{\theta})\}$;
otherwise reject and set $X_{n+1}=(\theta,z)$.

}\Else{ 

Set $\tilde{\theta}=\tilde{\theta}_{2}(\theta,\theta')$.\\
Sample $k$ uniformly from $\{1,\ldots,N\}$ and set $u^{(k)}=z$.\\
Run a ${\rm cSMC}(M,\tilde{\theta},z)$ to obtain particles $\zeta$.
\\
Draw $N$ paths with backward sampling $u^{(1)},\ldots,u^{(k-1)},z',u^{(k+1)},\ldots,u^{(N)}\overset{{\rm iid}}{\sim}\check{\Phi}_{\tilde{\theta}}(\zeta,\cdot)$.\\
Set $X_{n+1}=(\theta',z')$ with probability $\min\{1,1/\mathring{r}_{z',\mathfrak{u}}^{N}(\theta',\theta;\tilde{\theta})\}$;
otherwise reject and set $X_{n+1}=(\theta,z)$. 

}
\end{algorithm}
\begin{example}
We consider the non-linear state-space of Example \ref{ex: state-space example for latent variable section}
for the same set-up. We conducted experiments similar to those of
Example \ref{ex: ctd state-space example for latent variable section},
but using this time Algorithm \ref{alg: MHAAR for SSM with cSMC - multiple paths from backward sampling}
instead, for $N=1$, $N=10$, $N=100$ and $M=150$ particles. The
intermediate distribution used was similar, as were the various proposal
distributions. The results for convergence and IAC times are shown
in Figures \ref{fig: convergence vs N for the state-space model-2}
and \ref{fig: IAC time vs N for the state-space model-2} where the
results from Example \ref{ex: state-space example for latent variable section}
are repeated in order to ease comparison. (Note that, assuming perfect
parallelisation and that the computation time of cSMC is proportional
to the number of particles, Algorithm \ref{alg: MHAAR for SSM with cSMC - multiple paths from backward sampling}
with $M=150$ particles and Algorithm \ref{alg: MHAAR for pseudo-marginal ratio in latent variable models}
with $M=100$ particles are equally costly. This is because of the
non-parallelisable part of $Q_{2}$ of Algorithm \ref{alg: MHAAR for pseudo-marginal ratio in latent variable models}.)

\begin{figure}[h]
\centerline{\includegraphics[scale=0.7]{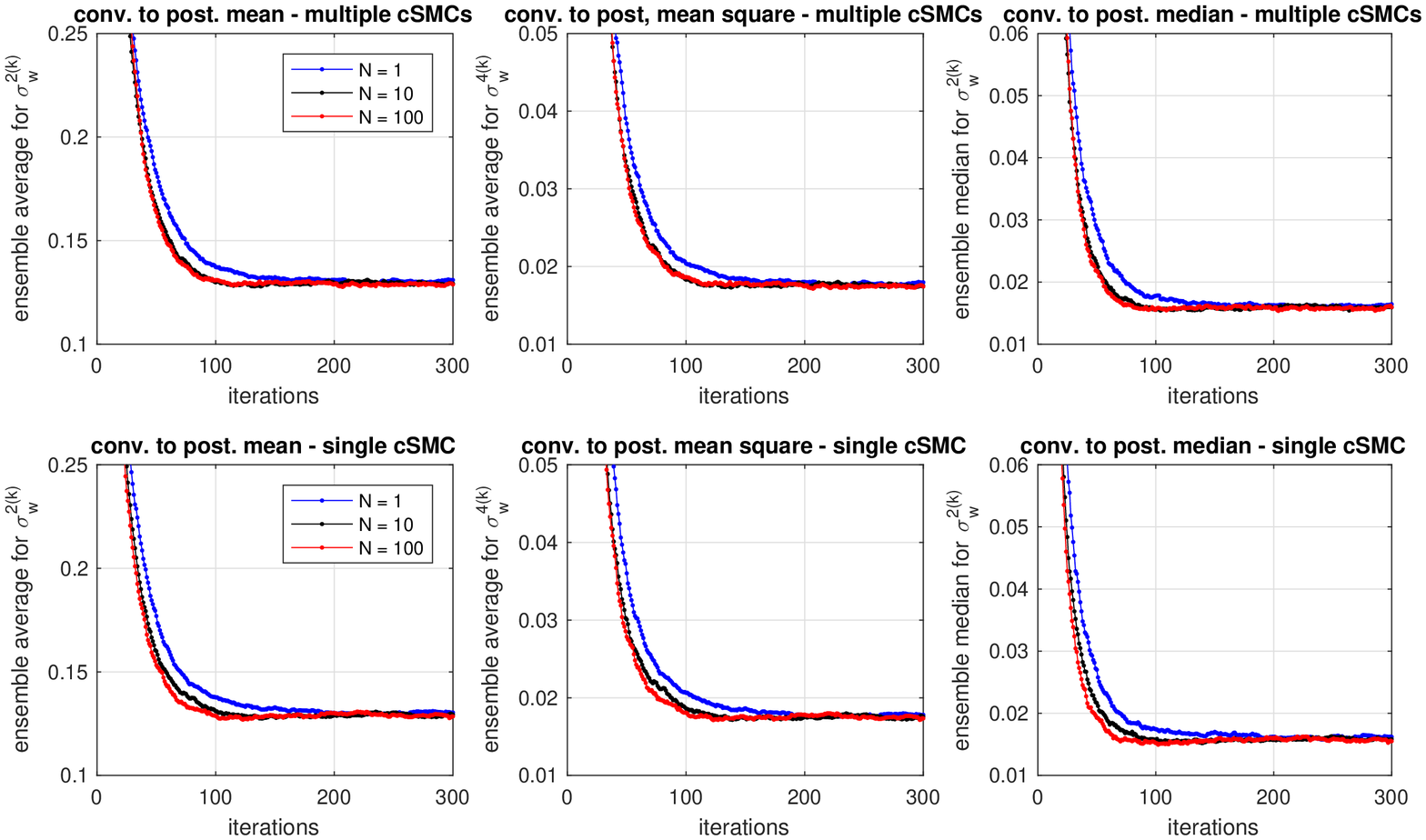}}
\protect\protect\caption{Convergence results for $\theta=(\sigma_{v}^{2},\sigma_{w}^{2})$
vs $N$ in Algorithm \ref{alg: MHAAR for SSM with cSMC - multiple paths from backward sampling}
in comparison with Algorithm \ref{alg: MHAAR for pseudo-marginal ratio in latent variable models}.}

\label{fig: convergence vs N for the state-space model-2} 
\end{figure}

\begin{figure}[h]
\centerline{\includegraphics[scale=0.7]{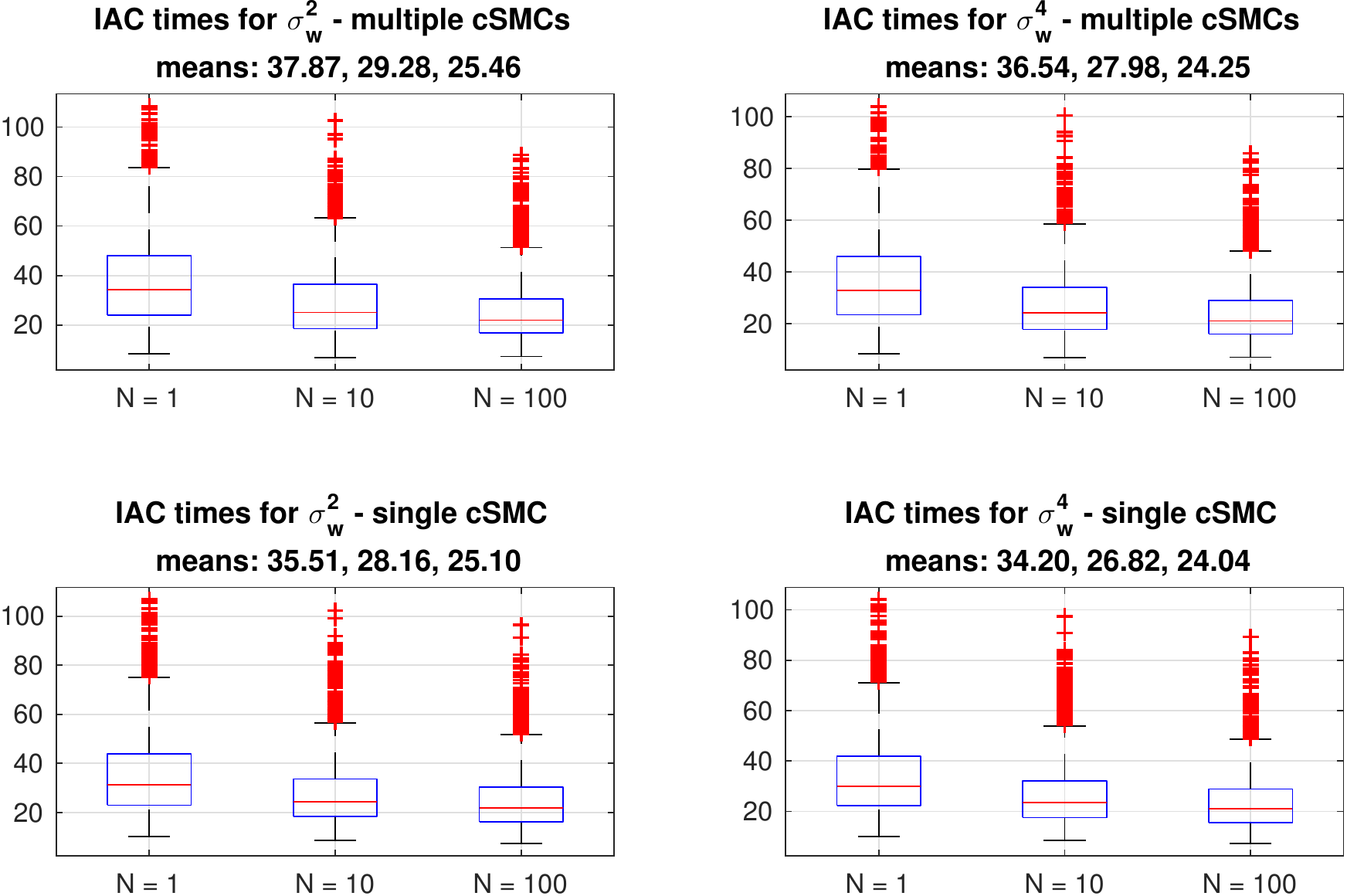}}
\protect\protect\caption{IAC times for $\theta=(\sigma_{v}^{2},\sigma_{w}^{2})$ vs $N$ in
Algorithm \ref{alg: MHAAR for SSM with cSMC - multiple paths from backward sampling}
in comparison with Algorithm \ref{alg: MHAAR for pseudo-marginal ratio in latent variable models}.}

\label{fig: IAC time vs N for the state-space model-2} 
\end{figure}
\end{example}
\begin{example}
 In this experiment, the true parameters are $\sigma_{v}^{2}=10$
and $\sigma_{w}^{2}=1$ and the data size is $T=500$. The prior and
proposal parameters are the same as the previous example. We ran Metropolis-within-Particle
Gibbs of \citet{Lindsten_and_Schon_2012} in Algorithm \ref{alg: Metropolis within particle Gibbs}.
Number of particles used in the cSMC moves is $M=100$. For each configuration,
$200$ Monte Carlo runs for 100000 iterations are performed and the
summary of the estimated IAC values from each run is reported in Figure
\ref{fig: IAC vs N for the state-space model}. One can see that increasing
the number of paths improves the results. However, the amount of improvement
(at least for this seemingly not very challenging model) vanishes
quickly after $N=10$; this is the reason we did not find necessary
to look at the performance of Algorithm \ref{alg: MHAAR for SSM with cSMC - Rao-Blackwellised backward sampling}
for this example. In addition, the results suggest that the scenario
$N=1$ seems useful in that the algorithm can beat Metropolis-within-Particle
Gibbs for the same order of computation. Note that the $N=1$ case
is also a recent algorithm, firstly proposed and analysed in \citet{Yildirim_et_al_2017},
with detailed comparisons with Metropolis-within-Particle Gibbs.

\begin{figure}[h]
\centerline{\includegraphics[scale=0.7]{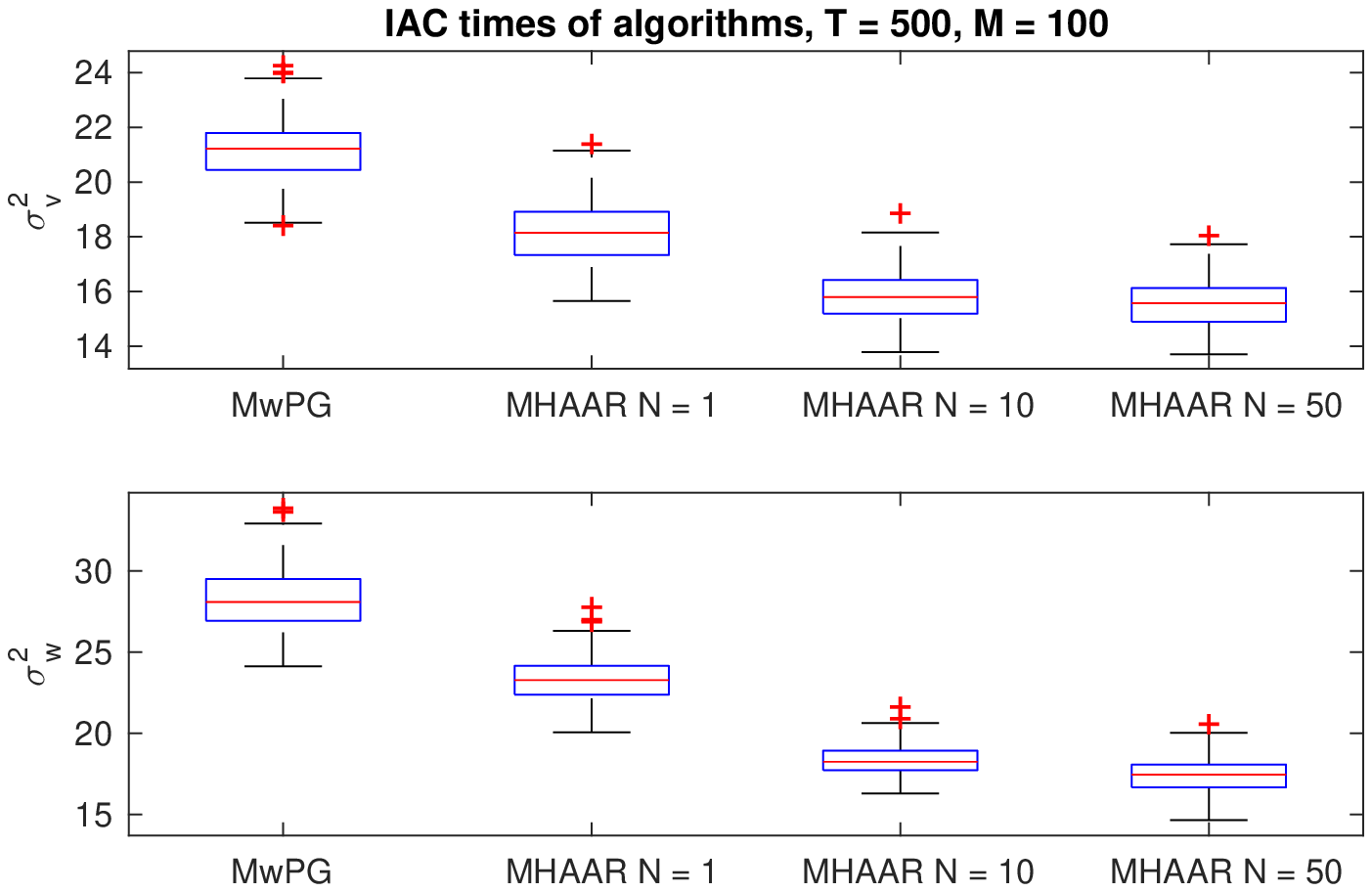}}
\protect\protect\caption{IAC for $\theta=(\sigma_{v}^{2},\sigma_{w}^{2})$ vs $N$ in Algorithm
\ref{alg: MHAAR for SSM with cSMC - multiple paths from backward sampling}
compared to Metropolis-within-Particle Gibbs (MwPG) in Algorithm \ref{alg: Metropolis within particle Gibbs}.}

\label{fig: IAC vs N for the state-space model} 
\end{figure}
 
\end{example}

\section{Discussion \label{sec: Discussion}}

In this paper, we exploit the ability to use more than one proposal
scheme within a MH update. We derive several useful MHAAR algorithms
that enable averaging multiple estimates of acceptance ratios, which
would not be valid by using a standard single proposal MH update.
The framework of MHAAR is rather general and provides a generic way
of improving performance of MH update based algorithm for a wide range
of problems. This is illustrated with doubly intractable models, general
latent variable models, trans-dimensional models, and general state-space
models. Although relevant in specific scenarios involving computations
on serial machines, MHAAR algorithms are particularly useful when
implemented on a parallel architecture since the computation required
to have an average acceptance ratio estimate can largely be parallelised.
In particular our experiments demonstrate significant reduction of
the burn in period required to reach equilibrium, an issue for which
very few generic approaches exist currently.

\subsection{Using SMC based estimators for the acceptance ratio \label{sec: Using SMC based estimators for the acceptance ratio}}

More broadly the framework of using asymmetric acceptance ratios allows
us to exploit even more general ratios of probabilities and plug them
into MCMCs. For example, a non-trivial interesting generalisation
of the algorithms presented earlier is possible by replacing AIS with
SMC. The generalisation is relevant when annealing is used, i.e.\ $T>0$
and it is available for both the scenario $\pi(x)=\pi(\theta)$ and
$\pi(x)=\pi(\theta,z)$. Notice that in Algorithms \ref{alg: MHAAR-AIS exchange algorithm}
to \ref{alg: MHAAR for pseudo-marginal ratio in latent variable models},
the acceptance ratios of the asymmetric MCMC algorithm contain the
factor 
\[
\frac{1}{N}\sum_{i=1}^{N}\prod_{t=0}^{T}\frac{f_{\theta,\theta',t+1}(u_{t}^{(i)})}{f_{\theta,\theta',t}(u_{t}^{(i)})}.
\]
This average actually serves as an AIS estimator of the ratio of the
normalising constants of the unnormalised densities $f_{\theta,\theta',0}$
and $f_{\theta,\theta',T+1}$ of the initial and the last densities
used in annealing. For doubly intractable models, this quantity is
$C_{\theta}/C_{\theta'}$, whereas in latent variable models, it is
$\pi(\theta')/\pi(\theta)$. Although SMC is a well known alternative
to AIS in estimating this ratio unbiasedly \citep{Del_Moral_et_al_2006},
it is not obvious whether or how we can substitute SMC for AIS in
proposal kernels $Q_{1}$ and $Q_{2}$ and still preserve the detailed
balance of the overall MCMC kernel with respect to $\pi$. It turns
out that this is possible by using a SMC in $Q_{1}$ and a series
of backward kernels followed by a cSMC in $Q_{2}$. For interested
readers, we present $Q_{1}$ and $Q_{2}$ with the corresponding acceptance
ratios, and the resulting algorithm in Appendix \ref{sec: Substituting SMC for AIS in the acceptance ratio in MHAAR}.

\subsection{Links to non-reversible algorithms \label{sec: Links to non-reversible algorithms}}

There has been recent interest in extending existing MCMC algorithms,
especially those based on MH, to algorithms having non-reversible
Markov chains preserving $\pi$ as their invariant distribution. The
motivation behind such algorithms is the desire to design proposals
based on the acceptance-rejection information of the previous iterations
so that the space $\mathsf{X}$ is explored more efficiently. For
example, it may be desirable to have a MH based Markov chain that
moves in a certain direction as long as the proposed values in that
direction are accepted. In case of rejection, the direction of the
proposal is altered and the Markov chain is made to choose a new direction
until the next rejection. 

These non-reversible MH algorithms can be interpreted as using acceptance
ratios involving two different proposal mechanism (e.g. for different
directions). Using two different proposals is inherent to our MHAAR
algorithms, and we briefly show how MHAAR algorithms can be turned
into non-reversible MCMC. Consider one pair of such proposal mechanisms
$Q_{1}(x,\mathrm{d}(y,u))$ and $Q_{2}(x,\mathrm{d}(y,u))$ as considered
throughout this paper. The acceptance ratios involved are denoted
$r_{1,u}(x,y)$ and $r_{2,u}(x,y)=1/r_{1,u}(y,x)$, depending no whether
$Q_{1}$ or $Q_{2}$ is on the numerator or denominator. The non-reversible
algorithm described in Algorithm \ref{alg: Non-reversible MHAAR}
targets the extended distribution $\pi(\mathrm{d}(x,a)):=\pi(\mathrm{d}x)\frac{1}{2}$,
where $a\in\{1,2\}$ and whose marginal is $\pi(\mathrm{d}x)$ as
desired, and generates realisations $\{\big(X_{n},A_{n}\big)\in\mathsf{X}\times\{1,2\},n\geq1\}$
where $A_{n}$ indicates which of $Q_{1}$ or $Q_{2}$ is to be used
at iteration $n+1$. 

\begin{algorithm}
\caption{Non-reversible MHAAR}
\label{alg: Non-reversible MHAAR}

\KwIn{Current sample and proposal state $X_{n}=x$, $A_{n}=a$} 

\KwOut{New sample and proposal state $X_{n+1}$, $A_{n+1}$} 

Sample $(y,u)\sim Q_{a}(x,\cdot)$ \\
Set $(X_{n+1},A_{n+1})=(x',a)$ with probability $\min\{1,r_{a,u}(x,y)\}$;
otherwise reject and set $(X_{n+1},A_{n+1})=(x,3-a)$.
\end{algorithm}

One iteration of the algorithm is a composition of two reversible
moves with respect to $\pi(\mathrm{d}(x,a))$: Given $(x,a)$, the
first move consists of proposing $y,a'$ (and $u$) from $Q_{a}(x,\mathrm{d}(y,u))\mathbb{I}_{3-a}(a')$,
accepting-rejecting with probability $\min\{1,r_{a,u}(x,y)\}$, which
is the corresponding asymmetric acceptance probability for $\pi(\mathrm{d}(x,a))$.
The second move simply switches the $a$-component: $a\rightarrow3-a$,
which is reversible. We do not investigate this further here.

\section{Acknowledgements}

CA and SY acknowledge support from EPSRC ``Intractable Likelihood:
New Challenges from Modern Applications (ILike)'' (EP/K014463/1)
and the Isaac Newton Institute for Mathematical Sciences, Cambridge,
for support and hospitality during the programme ``Scalable inference;
statistical, algorithmic, computational aspects'' where this manuscript
was finalised (EPSRC grant EP/K032208/1). AD acknowledges support
from EPSRC EP/K000276/1. NC is partially supported by a grant from
the French National Research Agency (ANR) as part of program ANR-11-LABEX-
0047. The authors would also like to thank Nick Whiteley for useful
discussions.

\bibliographystyle{plainnat}
\bibliography{myrefs_thesis}

\appendix

\section{A general framework for PMR and MHAAR algorithms \label{sec: A general framework for MPR and MHAAR algorithms}}

Assume $\pi$ is a probability distribution defined on the measurable
space $(\mathsf{X},\mathcal{X})$ and let $Q_{1}(\cdot,\cdot)$ and
$Q_{2}(\cdot,\cdot)$ be a pair of proposal kernels $Q_{1}(\cdot,\cdot),Q_{2}(\cdot,\cdot)\colon\mathsf{X}\times(\mathcal{U}\otimes\mathcal{X})\rightarrow[0,1]$,
where $\mathcal{U}$ is a sigma-algebra corresponding to an auxiliary
random variable $u$ defined on a measurable space $(\mathsf{U},\mathcal{U})$.
This variable may or may not be present, and is for example ignored
in the introductory Section \ref{subsec: Contribution}. We first
follow \citet{Tierney_1998} (in particular his treatment of \citet{Green_1995}'s
framework) and introduce the measure 
\begin{align*}
\nu_{i}\big({\rm d}(x,y,u)\big): & =\pi({\rm d}x)Q_{i}(x,{\rm d}(y,u))+\pi({\rm d}y)Q_{3-i}(y,{\rm d}(x,u))
\end{align*}
 and for $i\in\{1,2\}$ the densities $\eta_{i}(x,y,u):={\rm d}(\pi\otimes Q_{i})/{\rm d}\nu_{i}$
for $(x,y,u)\in\mathsf{X}^{2}\times\mathsf{U}$. Now define the measurable
set
\begin{equation}
\mathsf{S}:=\big\{(x,y,u)\in\mathsf{X}^{2}\times\mathsf{U}\colon\eta_{1}(x,y,u)>0\text{ and }\eta_{2}(y,x,u)>0\big\}\label{eq:def-ring-S}
\end{equation}
and let, for $i\in\{1,2\}$
\[
r_{i,u}(x,y):=\begin{cases}
\eta_{3-i}(y,x,u)/\eta_{i}(x,y,u) & \text{ for }(x,y,u)\in\mathsf{S},\\
0 & \text{otherwise}.
\end{cases}
\]
For ease of exposition, throughout the rest of the paper we may use
the notation
\[
\frac{\pi({\rm d}y)Q_{3-i}(y,{\rm d}(x,u))}{\pi({\rm d}x)Q_{i}(x,{\rm d}(y,u))}=:r_{i,u}(x,y),
\]
which should not lead to any confusion. Further, for $x\in\mathsf{X}$,
we define the rejection probabilities

\[
\rho_{i}(x):=1-\int_{\mathsf{X}\times\mathsf{U}}Q_{i}(x,{\rm d}(y,u))\min\{1,r_{i,u}(x,y)\},\quad i=1,2.
\]
In the theorem below we use the properties that for $i\in\{1,2\}$
and $(x,y)\in\mathsf{S}$ and $u\in\mathsf{U}$, then $r_{i}(x,y)r_{3-i}(y,x)=1$
and $\nu_{i}\big({\rm d}(x,y,u)\big)=\nu_{3-i}\big({\rm d}(y,x,u)\big)$.
 The following theorem serves as the basis for proving the reversibility
of all of the MHAAR algorithms developed in this paper.
\begin{thm}
\label{thm:asymmetricMH-1} Consider the Markov transition kernel
$\breve{P}\colon\mathsf{X}\times\mathcal{X}\rightarrow[0,1]$
\begin{equation}
\breve{P}(x,{\rm d}y):=\sum_{i=1}^{2}\frac{1}{2}\left[\int_{\mathsf{U}}Q_{i}(x,{\rm d}(y,u)\min\left\{ 1,r_{i,u}(x,y)\right\} +\delta_{x}({\rm d}y)\rho_{i}(x)\right],\quad x\in\mathsf{X}\label{eq: asymmetric MCMC acceptance kernel-1}
\end{equation}
then $\breve{P}$ satisfies the detailed balance for $\pi$.
\end{thm}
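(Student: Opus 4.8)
The plan is to verify the defining detailed-balance relation $\pi({\rm d}x)\breve{P}(x,{\rm d}y)=\pi({\rm d}y)\breve{P}(y,{\rm d}x)$ as measures on $(\mathsf{X}^2,\mathcal{X}\otimes\mathcal{X})$, treating the off-diagonal (proposal) and the diagonal (rejection) contributions separately. The diagonal terms $\frac12\sum_i\pi({\rm d}x)\delta_x({\rm d}y)\rho_i(x)$ are concentrated on $\{x=y\}$, where the measure $\pi({\rm d}x)\delta_x({\rm d}y)$ is visibly invariant under the exchange of $x$ and $y$; hence these contribute equally to both sides and may be ignored. All the content is therefore in the off-diagonal part.

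First I would rewrite each off-diagonal summand using the Radon--Nikodym densities $\eta_i$. Since $\pi\otimes Q_i\ll\nu_i$ with $\eta_i={\rm d}(\pi\otimes Q_i)/{\rm d}\nu_i$, we have $\pi({\rm d}x)Q_i(x,{\rm d}(y,u))=\eta_i(x,y,u)\,\nu_i({\rm d}(x,y,u))$, and on $\mathsf{S}$ the ratio is $r_{i,u}(x,y)=\eta_{3-i}(y,x,u)/\eta_i(x,y,u)$. Substituting and using $a\min\{1,b/a\}=\min\{a,b\}$ gives the key local identity
\begin{equation*}
\pi({\rm d}x)Q_i(x,{\rm d}(y,u))\min\{1,r_{i,u}(x,y)\}=\min\{\eta_i(x,y,u),\eta_{3-i}(y,x,u)\}\,\nu_i({\rm d}(x,y,u)),
\end{equation*}
which I would check holds off $\mathsf{S}$ as well: there either $\eta_i=0$ (both sides vanish) or $\eta_{3-i}(y,x,u)=0$, forcing $r_{i,u}(x,y)=0$, so again both sides vanish.

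The decisive observation is that the right-hand side above is invariant under the simultaneous substitution $(i,x,y)\mapsto(3-i,y,x)$: the $\min$ is symmetric in its arguments, and by the stated property $\nu_i({\rm d}(x,y,u))=\nu_{3-i}({\rm d}(y,x,u))$. Consequently the local identity pairs the forward $Q_i$-move from $x$ with the backward $Q_{3-i}$-move from $y$, namely
\begin{equation*}
\pi({\rm d}x)Q_i(x,{\rm d}(y,u))\min\{1,r_{i,u}(x,y)\}=\pi({\rm d}y)Q_{3-i}(y,{\rm d}(x,u))\min\{1,r_{3-i,u}(y,x)\}.
\end{equation*}
Integrating over $u\in\mathsf{U}$ and averaging with $\frac12\sum_{i=1}^2$, the reindexing $i\mapsto3-i$ on the right turns the sum over $Q_{3-i}$-moves from $y$ back into the sum over $Q_i$-moves from $y$; this exactly reconstitutes the off-diagonal part of $\pi({\rm d}y)\breve{P}(y,{\rm d}x)$, completing the argument.

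I expect the main obstacle to be conceptual rather than computational: because the algorithm is asymmetric, no single summand is individually reversible, and one must resist matching the $Q_i$-forward move with a $Q_i$-backward move. Reversibility emerges only after the $\frac12\sum_i$ symmetrisation pairs $Q_i$ with its complementary kernel $Q_{3-i}$, so the care lies in tracking the index swap correctly and in confirming that the two stated structural properties ($r_{i,u}(x,y)r_{3-i,u}(y,x)=1$ and the $\nu_i$ symmetry) are precisely what render the $\min$ expression symmetric.
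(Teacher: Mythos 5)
Your proof is correct and takes essentially the same route as the paper's: both write $\pi({\rm d}x)Q_i(x,{\rm d}(y,u))=\eta_i(x,y,u)\,\nu_i({\rm d}(x,y,u))$, invoke the symmetry $\nu_i({\rm d}(x,y,u))=\nu_{3-i}({\rm d}(y,x,u))$ together with the reciprocal structure of the ratios (your symmetrization $\min\{\eta_i,\eta_{3-i}\}$ is exactly the paper's step $\min\{1,r_{1,u}(x,y)\}\,r_{2,u}(y,x)=\min\{r_{2,u}(y,x),1\}$), and pair the $Q_i$-forward move from $x$ with the $Q_{3-i}$-backward move from $y$ before averaging over $i$. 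The only differences are presentational: you argue at the level of measures rather than against test functions, and you make explicit the off-$\mathsf{S}$ and diagonal (rejection-term) checks that the paper leaves implicit.
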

\begin{proof}
For any bounded measurable function $\phi$ on $\mathsf{X}^{2}$:
\begin{align*}
\int_{\mathsf{X}^{2}\times\mathsf{U}}\min\left\{ 1,r_{1,u}(x,y)\right\}  & \phi(x,y)\pi({\rm d}x)Q_{1}(x;{\rm d}(y,u))\\
= & \int_{\mathsf{X}^{2}\times\mathsf{U}}\phi(x,y)\min\left\{ 1,r_{1,u}(x,y)\right\} \eta_{1}(x,y,u)\nu_{1}\big({\rm d}(x,y,u)\big)\\
= & \int_{\mathsf{S}}\phi(x,y)\min\left\{ 1,r_{1,u}(x,y)\right\} r_{2,u}(y,x)\eta_{2}(y,x,u)\nu_{1}\big({\rm d}(x,y,u)\big)\\
= & \int_{\mathsf{X}^{2}\times\mathsf{U}}\phi(x,y)\min\left\{ 1,r_{1,u}(x,y)\right\} r_{2,u}(y,x)\eta_{2}(y,x,u)\nu_{2}\big({\rm d}(y,x,u)\big)\\
= & \int_{\mathsf{X}^{2}\times\mathsf{U}}\phi(x,y)\min\left\{ 1,r_{1,u}(x,y)\right\} r_{2,u}(y,x)\pi({\rm d}y)Q_{2}(y;{\rm d}(x,u))\\
= & \int_{\mathsf{X}^{2}\times\mathsf{U}}\phi(x,y)\min\left\{ r_{2,u}(y,x),1\right\} \pi({\rm d}y)Q_{2}(y;{\rm d}(x,u)).
\end{align*}
As a result for $\phi$ as above,
\begin{multline*}
\sum_{i=1}^{2}\frac{1}{2}\int_{\mathsf{X}^{2}\times\mathsf{U}}\phi(x,y)\min\left\{ 1,r_{i,u}(x,y)\right\} \pi({\rm d}x)Q_{i}(x;{\rm d}(y,u))\\
=\sum_{i=1}^{2}\frac{1}{2}\int_{\mathsf{X}^{2}\times\mathsf{U}}\phi(x,y)\min\left\{ 1,r_{i,u}(y,x)\right\} \pi({\rm d}y)Q_{i}(y;{\rm d}(x,u)),
\end{multline*}
and detailed balance hence follows.
\end{proof}
The following theorem validates the use of all the PMR algorithms
in this paper, specifically the algorithm corresponding to the kernel
presented in \eqref{eq:defPring} and the algorithms described in
Propositions \ref{prop: AIS MCMC for doubly intractable models},
\ref{prop:MHwithAISinside} and \ref{prop:extensionMHwithAISinside}.
\begin{thm}
\label{thm: pseudo-marginal ratio algorithms} Let $\varphi:\mathsf{U}\rightarrow\mathsf{U}$
be a measurable involution, that is such that $\varphi\circ\varphi(u)=u$
for all $u\in\mathsf{U}$ and with the set-up of Theorem \ref{thm:asymmetricMH-1}
for a given kernel $Q_{1}(\cdot,\cdot)$, let $Q_{2}(\cdot,\cdot)$
be defined such that for any measurable $\psi\colon\mathsf{X}^{2}\times\mathsf{U}\rightarrow[-1,1]$
\begin{align*}
\int\psi(x,y,u)\pi({\rm d}x)Q_{2}\big(x,\mathrm{d}(y,u)\big) & =\int\psi(x,y,\varphi(u))\pi({\rm d}x)Q_{1}\big(x,\mathrm{d}(y,u)\big).
\end{align*}
Then the Markov transition kernel
\[
\mathring{P}(x,\mathrm{d}y)=\int_{\mathsf{U}}Q_{1}(x,\mathrm{d}(y,u))\min\left\{ 1,r_{1,u}(x,y)\right\} +\rho_{1}(x)\delta_{x}(\mathrm{d}y)
\]
satisfies detailed balance with respect to $\pi$\textup{.}
\end{thm}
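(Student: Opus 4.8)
The plan is to verify detailed balance for $\mathring{P}$ directly, reducing everything to a single algebraic identity that relates the two acceptance ratios through the involution. Since the rejection term $\rho_1(x)\delta_x(\mathrm{d}y)$ is supported on the diagonal $\{x=y\}$, it contributes symmetrically to $\pi(\mathrm{d}x)\mathring{P}(x,\mathrm{d}y)$ and can be ignored; it therefore suffices to show that the off-diagonal flux
\begin{equation*}
\int_{\mathsf{X}^2\times\mathsf{U}}\phi(x,y)\min\{1,r_{1,u}(x,y)\}\pi(\mathrm{d}x)Q_1(x,\mathrm{d}(y,u))
\end{equation*}
is invariant under the swap $\phi(x,y)\mapsto\phi(y,x)$, for every bounded measurable $\phi$ on $\mathsf{X}^2$.

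First I would reuse the computation already carried out in the proof of Theorem \ref{thm:asymmetricMH-1}, which establishes (before its final symmetrisation step) that this $Q_1$-flux equals $\int\phi(x,y)\min\{1,r_{2,u}(y,x)\}\pi(\mathrm{d}y)Q_2(y,\mathrm{d}(x,u))$. Relabelling the dummy variables $x\leftrightarrow y$ turns this into $\int\phi(y,x)\min\{1,r_{2,u}(x,y)\}\pi(\mathrm{d}x)Q_2(x,\mathrm{d}(y,u))$. Applying the hypothesis on $Q_2$ with the choice $\psi(x,y,u)=\phi(y,x)\min\{1,r_{2,u}(x,y)\}$ then replaces $Q_2$ by $Q_1$ at the price of evaluating the ratio at $\varphi(u)$, yielding $\int\phi(y,x)\min\{1,r_{2,\varphi(u)}(x,y)\}\pi(\mathrm{d}x)Q_1(x,\mathrm{d}(y,u))$. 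Comparing with the swapped target flux, detailed balance follows the moment we know that, $\pi(\mathrm{d}x)Q_1(x,\mathrm{d}(y,u))$-almost everywhere,
\begin{equation*}
r_{2,\varphi(u)}(x,y)=r_{1,u}(x,y).\qquad(\star)
\end{equation*}

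The identity $(\star)$ is the heart of the matter and the only place the involution really enters. Write $\mu_i(\mathrm{d}(x,y,u)):=\pi(\mathrm{d}x)Q_i(x,\mathrm{d}(y,u))$, and let $\Phi(x,y,u):=(x,y,\varphi(u))$ and $\sigma(x,y,u):=(y,x,u)$. The defining property of $Q_2$ says precisely that $\mu_2=\Phi_\ast\mu_1$, and since $\varphi$ is an involution we also have $\mu_1=\Phi_\ast\mu_2$. By construction $r_{1,u}(x,y)=\mathrm{d}(\sigma_\ast\mu_2)/\mathrm{d}\mu_1$ and $r_{2,u}(x,y)=\mathrm{d}(\sigma_\ast\mu_1)/\mathrm{d}\mu_2$ on $\mathsf{S}$. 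The transformation rule for Radon--Nikodym derivatives under a measurable involution gives $\big(\mathrm{d}(\sigma_\ast\mu_1)/\mathrm{d}\mu_2\big)\circ\Phi=\mathrm{d}(\Phi_\ast\sigma_\ast\mu_1)/\mathrm{d}(\Phi_\ast\mu_2)$; using that $\Phi$ and $\sigma$ commute together with $\Phi_\ast\mu_1=\mu_2$ and $\Phi_\ast\mu_2=\mu_1$, the right-hand side collapses to $\mathrm{d}(\sigma_\ast\mu_2)/\mathrm{d}\mu_1=r_{1,u}(x,y)$, which is exactly $(\star)$. Conceptually this records the fact that a draw from $Q_2(x,\cdot)$ is nothing but a draw from $Q_1(x,\cdot)$ with the auxiliary coordinate relabelled through $\varphi$, so the two proposals induce the same $\pi$-weighted transition on $\mathsf{X}$; equivalently, one may phrase the whole argument as showing $\pi(\mathrm{d}x)P_1(x,\mathrm{d}y)=\pi(\mathrm{d}x)P_2(x,\mathrm{d}y)$ for the two kernels $P_1,P_2$ built from $Q_1$ and $Q_2$ separately, and then invoking Theorem \ref{thm:asymmetricMH-1} for $\breve{P}$.

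The main obstacle is the rigorous handling of $(\star)$: the change-of-variables formula for Radon--Nikodym derivatives is clean on standard Borel spaces, but one must also attend to the null-set bookkeeping around the set $\mathsf{S}$ of \eqref{eq:def-ring-S}, namely that $\mathsf{S}$ is stable (up to $\mu_1$-null sets) under both $\Phi$ and $\sigma$, so that $(\star)$ as well as the simultaneous vanishing of both ratios off $\mathsf{S}$ hold $\mu_1$-almost everywhere. Once $(\star)$ is secured, the three displayed equalities chain together to give $\int\phi(x,y)\,\pi(\mathrm{d}x)\mathring{P}(x,\mathrm{d}y)=\int\phi(y,x)\,\pi(\mathrm{d}x)\mathring{P}(x,\mathrm{d}y)$, i.e.\ detailed balance with respect to $\pi$.
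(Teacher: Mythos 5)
Your proof is correct, but it is organised differently from the paper's. The paper gives a self-contained argument phrased entirely in terms of the common dominating measure $\nu_{1}$: it first shows that $\nu_{1}$ is invariant under the map $(x,y,u)\mapsto(y,x,\varphi(u))$, deduces the density identity $\eta_{2}(x,y,u)=\eta_{1}(x,y,\varphi(u))$, and then verifies detailed balance in one computation involving only $\eta_{1}$ and $\nu_{1}$; it never invokes Theorem \ref{thm:asymmetricMH-1}. Your identity $(\star)$, $r_{2,\varphi(u)}(x,y)=r_{1,u}(x,y)$, is precisely that density identity rewritten through the definition $r_{i,u}(x,y)=\eta_{3-i}(y,x,u)/\eta_{i}(x,y,u)$, so the underlying content is the same, but your packaging is genuinely different: you recycle the flux identity established inside the proof of Theorem \ref{thm:asymmetricMH-1} as a black box, use the hypothesis (read as $\mu_{2}=\Phi_{*}\mu_{1}$) to trade $Q_{2}$ back for $Q_{1}$ at the cost of composing with $\varphi$, and isolate $(\star)$ as the single remaining lemma, obtained from functoriality of push-forwards and the commutation of $\Phi$ and $\sigma$. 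What the paper's route buys is automatic null-set bookkeeping, since every object is a density with respect to the single measure $\nu_{1}$; what your route buys is modularity and a crisper conceptual explanation, at the price of deferring the measure-theoretic care to the push-forward transformation rule. One correction to your ``obstacle'' paragraph: what the argument needs, and what is true, is that $\mathsf{S}$ is preserved (up to null sets) by the composition $\sigma\circ\Phi$, not by $\Phi$ and $\sigma$ separately. Indeed, once the density identity is in hand, $\mathsf{S}=\{(x,y,u)\colon\eta_{1}(x,y,u)>0\text{ and }\eta_{1}(y,x,\varphi(u))>0\}$, which is visibly invariant under $(x,y,u)\mapsto(y,x,\varphi(u))$ but need not be invariant under either map alone; with this reading the zero-extension conventions for $r_{1,u}$ and $r_{2,\varphi(u)}$ match and $(\star)$ holds $\mu_{1}$-almost everywhere, which is all your chain of equalities requires.
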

\begin{proof}
\[
\nu_{1}\big({\rm d}(x,y,u)\big)=\nu_{2}\big({\rm d}(y,x,u)\big)
\]
First we show that
\begin{equation}
\int\psi(x,y,u)\nu_{1}(\mathrm{d}(x,y,u))=\int\psi(x,y,\varphi(u))\nu_{1}(\mathrm{d}(y,x,u)).\label{eq:nu1_is_nu1_varphi}
\end{equation}
This is because for $\psi$ bounded and measurable 
\begin{align*}
\int\psi(x,y,u)\nu_{1}\big({\rm d}(x,y,u)\big) & =\int\psi(x,y,u)(\pi\otimes Q_{1})\big({\rm d}(x,y,u)\big)+\int\psi(x,y,\varphi(u))(\pi\otimes Q_{1})\big({\rm d}(y,x,u)\big)\\
 & =\int\psi(x,y,\varphi(u))(\pi\otimes Q_{2})\big({\rm d}(x,y,u)\big)+\int\psi(x,y,\varphi(u))(\pi\otimes Q_{1})\big({\rm d}(y,x,u)\big)\\
 & =\int\psi(x,y,\varphi(u))\nu_{1}\big({\rm d}(y,x,u)\big),
\end{align*}
where we have used our assumption on $\pi\otimes Q_{2}$ on the first
and second line, together with the fact that $\varphi$ is an involution,
and the definition of $\nu_{1}$ on the last line. As a result one
can establish that
\[
\eta_{2}(x,y,u)=\eta_{1}(x,y,\varphi(u)).
\]
Indeed, for $\psi$ bounded and measurable,
\begin{align*}
\int\psi(x,y,u)\eta_{2}(x,y,u)\nu_{2}\big({\rm d}(x,y,u)\big)= & \int\psi(x,y,u)\pi\otimes Q_{2}\big({\rm d}(x,y,u)\big)\\
= & \int\psi(x,y,\varphi(u))\pi\otimes Q_{1}\big({\rm d}(x,y,u)\big)\\
= & \int\psi(x,y,\varphi(u))\eta_{1}(x,y,u)\nu_{1}\big({\rm d}(x,y,u)\big)\\
= & \int\psi(x,y,u)\eta_{1}(x,y,\varphi(u))\nu_{1}\big({\rm d}(y,x,u)\big)\\
= & \int\psi(x,y,u)\eta_{1}(x,y,\varphi(u))\nu_{2}\left({\rm d}(x,y,u)\right),
\end{align*}
where we have used \eqref{eq:nu1_is_nu1_varphi} on the fourth line.
Now for $\phi\colon\mathsf{X}^{2}\rightarrow[-1,1]$ measurable
\begin{align*}
\int_{\mathsf{X}\times\mathsf{U}\times\mathsf{X}}\phi(x,y)\min & \left\{ 1,r_{1,u}(x,y)\right\} \pi({\rm d}x)Q_{1}(x,\mathrm{d}(y,u))\\
 & =\int_{\mathsf{S}}\phi(x,y)\min\left\{ 1,r_{1,u}(x,y)\right\} \frac{\eta_{1}(x,y,u)}{\eta_{1}(y,x,\varphi(u))}\eta_{1}(y,x,\varphi(u))\nu_{1}(\mathrm{d}(x,y,u))\\
 & =\int_{\mathsf{S}}\phi(x,y)\min\left\{ r_{1,\varphi(u)}(y,x),1\right\} \eta_{1}(y,x,\varphi(u))\nu_{1}(\mathrm{d}(x,y,u))\\
 & =\int_{\mathsf{S}}\phi(x,y)\min\left\{ r_{1,u}(y,x),1\right\} \eta_{1}(y,x,u)\nu_{1}(\mathrm{d}(y,x,u)),
\end{align*}
and reversibility follows.
\end{proof}

\subsection{Specialisation to two specific scenarios \label{subsec: Construction of the algorithms in the paper}}

Although the general framework described in Theorem \ref{thm:asymmetricMH-1}
is quite broad, our algorithms exploit it in specific ways. In this
subsection, we aim to provide some insight into the ways we exploit
these ideas in this paper. Recall that we either have $x=\theta$
in the single variable case or $x=(\theta,z)$ in the scenario where
the model involves latent variables. Throughout the paper, we design
algorithms where both $Q_{1}(\cdot,\cdot)$ and $Q_{2}(\cdot,\cdot)$
use the same proposal distribution for $\theta'$, that is $q(\theta,\cdot)$
and differ in the way they sample the auxiliary variables (and $z'$
in the latent variables scenario) such that $Q_{1}(\cdot,\cdot)$
and $Q_{2}(\cdot,\cdot)$ complement each other to produce acceptance
ratio estimators whose statistical properties increase with (parallelisable)
computations.

\subsubsection{Single variable scenario \label{subsec: Single variable}}

Here we have $x=\theta$. Let $\{Q_{\theta,\theta'}^{(1)}(\cdot)\colon\theta,\theta'\in\Theta\}$
and $\{Q_{\theta,\theta'}^{(2)}(\cdot)\colon\theta,\theta'\in\Theta\}$
be two families of probability distributions defined on $(\mathsf{U,\mathcal{U}})$
and $\omega_{\theta,\theta'}:\mathcal{\mathsf{U}}\rightarrow[0,\infty)$
satisfying the condition, for $\theta,\theta'\in\Theta$ and $u\in\mathsf{U}$,
\begin{equation}
Q_{\theta',\theta}^{(2)}(\mathrm{d}u)=Q_{\theta,\theta'}^{(1)}(\mathrm{d}u)\omega_{\theta,\theta'}(u),\label{eq: relation between Q1 and Q2}
\end{equation}
so that the expected value of $\omega_{\theta,\theta'}(\cdot)$ with
respect to $Q_{\theta,\theta'}^{(1)}(\cdot)$ (as well as the expected
value of $\omega_{\theta,\theta'}^{-1}(\cdot)$ with respect to $Q_{\theta',\theta}^{(2)}(\cdot)$
if $\omega_{\theta,\theta'}(\cdot)>0$) is $1$. Then the Radon-Nikodym
derivative evaluated for $(\theta,u,\theta')\in\mathsf{S}$ as defined
above,
\begin{equation}
r_{u}(\theta,\theta')=\frac{\pi(\mathrm{d}\theta')q(\theta',\mathrm{d}\theta)Q_{\theta',\theta}^{(2)}(\mathrm{d}u)}{\pi(\mathrm{d}\theta)q(\theta,\mathrm{d}\theta')Q_{\theta,\theta'}^{(1)}(\mathrm{d}u)}=r(\theta,\theta')\omega_{\theta,\theta'}(u).\label{eq: RN derivative for single variable}
\end{equation}
Note that this ratio is an unbiased estimator of the acceptance ratio
of the marginal distribution, $r(\theta,\theta')$; therefore useful
algorithm can be constructed if (i) $r(\theta,\theta')\omega_{\theta,\theta'}(u)$
can be evaluated, and (ii) the variance of of $\omega_{\theta,\theta'}$
can be controlled. It follows exactly from \eqref{eq: RN derivative for single variable}
and Theorem \ref{thm:asymmetricMH-1} that we can construct a reversible
Markov kernel using acceptance ratios involving $r_{u}(\theta,\theta')$
as in Theorem \ref{thm:asymmetricMH-1} with
\[
Q_{1}(\theta,\mathrm{d}(\theta',u))=q(\theta,\mathrm{d}\theta')Q_{\theta,\theta'}^{(1)}(\mathrm{d}u),\quad Q_{2}(\theta,\mathrm{d}(\theta',u))=q(\theta,\mathrm{d}\theta')Q_{\theta,\theta'}^{(2)}(\mathrm{d}u).
\]
If, in addition, for any measurable and bounded function $\phi$ we
have $\int\phi(u)Q_{\theta,\theta'}^{(2)}(\mathrm{d}u)=\int\phi\circ\varphi(u)Q_{\theta,\theta'}^{(1)}(\mathrm{d}u)$
for some involution $\varphi$, we are precisely in the frame of the
pseudo-marginal ratio algorithms discussed in \citet{Nicholls_et_al_2012},
whose transition kernel is given in Theorem \ref{thm: pseudo-marginal ratio algorithms}.

\subsubsection{Latent model scenario\label{subsec: Latent components} }

Here we have $x=(\theta,z)$ with $\pi(\mathrm{d}x)=\pi(\mathrm{d}\theta)\pi_{\theta}(\mathrm{d}z)$.
Let $\{Q_{\theta,\theta',z}^{(1)}(\cdot)\colon\theta,\theta'\in\Theta\}$
and $\{Q_{\theta,\theta',z}^{(2)}(\cdot)\colon\theta,\theta'\in\Theta,z\in\mathsf{Z}\}$
be two families of probability distributions defined on $(\mathsf{\mathsf{Z}\times U,\mathcal{Z}\otimes\mathcal{U}})$
and $\omega_{\theta,\theta'}:\mathcal{\mathsf{Z}}\times\mathsf{U}\rightarrow[0,\infty)$
satisfying the condition 
\[
\pi_{\theta'}(\mathrm{d}z')Q_{\theta',\theta,z'}^{(2)}(\mathrm{d}(z,u))=\pi_{\theta}(\mathrm{d}z)Q_{\theta,\theta',z}^{(1)}(\mathrm{d}(z',u))\omega_{\theta,\theta'}(z,u),
\]
so that the expected value of $\omega_{\theta,\theta'}(z,u)$ with
respect to $\pi_{\theta}(\mathrm{d}z)Q_{\theta,\theta',z}^{(1)}(\mathrm{d}(z',u))$
is $1$. Just as in the single variable case, consider the Radon-Nikodym
derivative again: 
\begin{equation}
r_{u}(x,x')=\frac{\pi(\mathrm{d}x')q(\theta',\mathrm{d}\theta)Q_{\theta',\theta,z'}^{(2)}(\mathrm{d}(z,u))}{\pi(\mathrm{d}x)q(\theta,\mathrm{d}\theta')Q_{\theta,\theta',z}^{(1)}(\mathrm{d}(z',u))}=r(\theta,\theta')\omega_{\theta,\theta'}(z,u).
\end{equation}
Note that this ratio is an unbiased estimator of the acceptance ratio
of the marginal distribution, $r(\theta,\theta')$; therefore useful
a algorithm can be constructed if (i) $r(\theta,\theta')\omega_{\theta,\theta'}(z,u)$
can be evaluated and (ii) the variance of of $\omega_{\theta,\theta'}$
can be controlled. We can construct a reversible Markov kernel using
$Q_{1}(\cdot,\cdot)$ and $Q_{2}(\cdot,\cdot)$ as: 
\[
Q_{1}(x,\mathrm{d}(x',u))=q(\theta,\mathrm{d}\theta')Q_{\theta,\theta',z}^{(1)}(\mathrm{d}(z',u)),\quad Q_{2}(x,\mathrm{d}(x',u))=q(\theta,\mathrm{d}\theta')Q_{\theta,\theta',z}^{(2)}(\mathrm{d}(z',u)).
\]
Similarly, if, in addition, for any bounded measurable function $\phi$
we have $\int\phi(z,u)\pi_{\theta}({\rm d}z)Q_{\theta,\theta',z}^{(2)}(\mathrm{d}(z',u))=\int\phi(z,\varphi(u))\pi_{\theta}({\rm d}z)Q_{\theta,\theta',z}^{(1)}(\mathrm{d}(z',u))$
for some some involution $\varphi\colon\mathsf{U}\rightarrow\mathsf{U}$,
we can use the transition kernel given in Theorem \ref{thm: pseudo-marginal ratio algorithms}
and we end up precisely in the framework of the pseudo-marginal ratio
algorithms for latent variable models discussed in Section \ref{sec: Pseudo-marginal ratio algorithms for latent variable models}.

\subsection{Generalisation and theoretical sub-optimality\label{subsec: Generalisation and suboptimality}}

One can be more general than having a single pair of proposal distributions
and sampling them with equal probabilities. In the following, we will
consider multiple pairs and sampling among proposal distributions
with state-dependent probabilities. Then we will investigate the statistical
properties of this scheme by comparing it to an ideal but non-implementable
algorithm in terms of Peskun order. For some $m\in\mathbb{N}$ let
$\{Q_{ij}(\cdot,\cdot),i,j\in\{1,\ldots,m\}\}$ be a family of proposal
kernels each from $(\mathsf{X},\mathcal{X})$ to $(\mathsf{X}\times\mathsf{U},\mathcal{X}\times\mathcal{U})$
and $\{\beta_{ij}:\mathsf{X}\rightarrow[0,1],i,j\in\{1,\ldots,m\}\}$
such that for any $x\in\mathsf{X}$, $\sum_{i,j=1}^{m}\beta_{ij}(x)=1$.
Define the Markov transition kernel 
\begin{equation}
\breve{P}(x,{\rm d}y):=\sum_{i=1}^{m}\sum_{j=1}^{m}\beta_{ij}(x)\left[\int_{\mathsf{U}}Q_{ij}(x,{\rm d}(y,u))\min\left\{ 1,r_{ij,u}(x,y)\right\} +\delta_{x}({\rm d}y)\rho_{ij}(x)\right],\quad x\in\mathsf{X}\label{eq: asymmetric MCMC acceptance kernel}
\end{equation}
where the acceptance ratios are 
\[
r_{ij,u}(x,y):=\frac{\pi({\rm d}y)Q_{ji}(y,{\rm d}(x,u))}{\pi({\rm d}x)Q_{ij}(x,{\rm d}(y,u))}\frac{\beta_{ji}(y)}{\beta_{ij}(x)},\quad i,j=1,\ldots,m,
\]
on some set $\mathring{\mathsf{S}}_{ij}\subset\mathsf{X}\times\mathsf{U}\times\mathsf{X}$
where the measures $\pi({\rm d}y)Q_{ji}(y,{\rm d}(x,u))$ and $\pi({\rm d}x)Q_{ij}(x,{\rm d}(y,u))$
are equivalent (see the beginning of Appendix \ref{sec: A general framework for MPR and MHAAR algorithms})
and $0<\beta_{ij}(x)\beta_{ji}(y)<\infty$ and set to zero otherwise,
while the rejection probabilities at $x\in\mathsf{X}$ corresponding
to all the updates are given by 
\[
\rho_{ij}(x):=1-\int_{\mathsf{U}\times\mathsf{X}}Q_{ij}(x,{\rm d}(y,u))\min\{1,r_{ij,u}(x,y)\},\quad i,j=1,\ldots,m.
\]
Reversibility of $\breve{P}$ can be proven very similarly to Theorem
\ref{thm:asymmetricMH-1}, therefore it is only stated as a corollary
below.

\begin{cor}
\label{cor: asymmetricMH generalisation}The MHAAR algorithm with
transition kernel $\breve{P}$ in \eqref{eq: asymmetric MCMC acceptance kernel}
satisfies detailed balance for $\pi$. 
\end{cor}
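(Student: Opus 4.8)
The plan is to mirror the proof of Theorem~\ref{thm:asymmetricMH-1}, the only genuinely new ingredient being the state-dependent mixing weights $\beta_{ij}$, which I would absorb into the acceptance ratios. For each ordered pair $(i,j)$ I would introduce, exactly as in the theorem, the symmetric measure
\[
\nu_{ij}\big({\rm d}(x,y,u)\big):=\pi({\rm d}x)Q_{ij}(x,{\rm d}(y,u))+\pi({\rm d}y)Q_{ji}(y,{\rm d}(x,u)),
\]
which satisfies $\nu_{ij}\big({\rm d}(x,y,u)\big)=\nu_{ji}\big({\rm d}(y,x,u)\big)$, together with the densities $\eta_{ij}:={\rm d}(\pi\otimes Q_{ij})/{\rm d}\nu_{ij}$. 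On $\mathring{\mathsf{S}}_{ij}$ one then has $r_{ij,u}(x,y)=\beta_{ji}(y)\,\eta_{ji}(y,x,u)/\big(\beta_{ij}(x)\,\eta_{ij}(x,y,u)\big)$ and, crucially, $r_{ij,u}(x,y)\,r_{ji,u}(y,x)=1$, the direct analogue of the relation $r_i r_{3-i}=1$ used in the theorem.

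The heart of the argument is a single algebraic identity. Since $\beta_{ij}(x)\,r_{ij,u}(x,y)$ equals the Radon--Nikodym derivative of $\beta_{ji}(y)\,\pi({\rm d}y)Q_{ji}(y,{\rm d}(x,u))$ with respect to $\pi({\rm d}x)Q_{ij}(x,{\rm d}(y,u))$, multiplying the flux by its own weight linearises the minimum,
\begin{align*}
\beta_{ij}(x)\,\pi({\rm d}x)Q_{ij}(x,{\rm d}(y,u))\min\{1,r_{ij,u}(x,y)\}&=\min\big\{\beta_{ij}(x)\,\pi({\rm d}x)Q_{ij}(x,{\rm d}(y,u)),\\
&\qquad \beta_{ji}(y)\,\pi({\rm d}y)Q_{ji}(y,{\rm d}(x,u))\big\},
\end{align*}
and the right-hand side is manifestly invariant under the simultaneous swap $(x,i)\leftrightarrow(y,j)$. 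This is the one place where the weights enter, and it is exactly the step that generalises the fixed $1/2$--$1/2$ pairing of the theorem.

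Given this, I would establish pairwise detailed balance: for any bounded measurable $\phi$ on $\mathsf{X}^2$, rewriting the left-hand integral against $\nu_{ij}$, then applying $\nu_{ij}\big({\rm d}(x,y,u)\big)=\nu_{ji}\big({\rm d}(y,x,u)\big)$ and $r_{ij,u}(x,y)r_{ji,u}(y,x)=1$ exactly as in the displayed chain of equalities in Theorem~\ref{thm:asymmetricMH-1}, yields
\begin{align*}
&\int \beta_{ij}(x)\min\{1,r_{ij,u}(x,y)\}\phi(x,y)\,\pi({\rm d}x)Q_{ij}(x,{\rm d}(y,u))\\
&\qquad=\int \beta_{ji}(y)\min\{1,r_{ji,u}(y,x)\}\phi(x,y)\,\pi({\rm d}y)Q_{ji}(y,{\rm d}(x,u)).
\end{align*}
Summing over all pairs $(i,j)\in\{1,\ldots,m\}^2$ and reindexing the right-hand side by the bijection $i\leftrightarrow j$ of the index set shows that the off-diagonal part of $\breve{P}$ is reversible; the rejection term $\delta_x({\rm d}y)\sum_{i,j}\beta_{ij}(x)\rho_{ij}(x)$ is supported on the diagonal and contributes symmetrically, so detailed balance for $\pi$ follows.

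I expect the main obstacle to be purely bookkeeping rather than conceptual: one must check that the support set $\mathring{\mathsf{S}}_{ij}$ on which $r_{ij,u}$ is nonzero is the mirror image of $\mathring{\mathsf{S}}_{ji}$ under $(x,u,y)\mapsto(y,u,x)$, so that the two integrands above have matching supports, and verify that the condition $0<\beta_{ij}(x)\beta_{ji}(y)<\infty$ is precisely what guarantees $r_{ij,u}$ and $r_{ji,u}$ to be simultaneously well defined and mutually reciprocal there. Once the $\beta$-weighted fluxes are recognised as symmetric, no analytic input beyond that of Theorem~\ref{thm:asymmetricMH-1} is needed, which is why the result is stated only as a corollary.
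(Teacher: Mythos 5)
Your proof is correct and is precisely the argument the paper intends: the paper gives no separate proof of Corollary \ref{cor: asymmetricMH generalisation}, stating only that reversibility ``can be proven very similarly to Theorem \ref{thm:asymmetricMH-1}'', and your pairwise application of that theorem's chain of equalities\textemdash with the state-dependent weights $\beta_{ij}(x)$, $\beta_{ji}(y)$ absorbed into the fluxes, the reciprocity $r_{ij,u}(x,y)\,r_{ji,u}(y,x)=1$ on $\mathring{\mathsf{S}}_{ij}$, and the final reindexing $(i,j)\leftrightarrow(j,i)$ of the double sum\textemdash is exactly that generalisation. The symmetrised-minimum identity you highlight is simply a clean repackaging of the same cancellation used in the theorem's displayed computation, so no genuinely new ingredient is involved.
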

The standard MH algorithm is recovered, for example, in the situation
where $\beta_{11}(x)=1$. The single pair version is recovered with
$\beta_{12}(x)=\beta_{21}(x)=1/2$ and $Q_{12}(\cdot,\cdot)=Q_{1}(\cdot,\cdot)$,
$Q_{21}(\cdot,\cdot)=Q_{2}(\cdot,\cdot)$. Algorithm \ref{alg: Reversible multiple jump MCMC}
corresponds to the special case where $\beta_{12}(x)+\beta_{21}(x)=1$. 

The following interpretation of $\breve{P}$ points to a theoretical
sub-optimality of asymmetric MCMC a careful reader may point to. Indeed,
from \eqref{eq: asymmetric MCMC acceptance kernel}, the auxiliary
variable $u\in\mathsf{U}$ and the proposed value $y\in\mathsf{X}$
are sampled from 
\[
\breve{Q}(x,\cdot):=\sum_{i=1}^{m}\sum_{j=1}^{m}\beta_{ij}(x)Q_{ij}(x,\cdot),
\]
and the proposed value is accepted with probability 
\[
\breve{\alpha}_{u}(x,y):=\sum_{i=1}^{m}\sum_{j=1}^{m}\frac{\beta_{ij}(x)Q_{ij}(x,{\rm d}(y,u))}{\breve{Q}(x,{\rm d}(y,u))}\min\{1,r_{ij,u}(x,y)\}.
\]
Application of Jensen's inequality shows that for $x,y\in\mathsf{X}$,
$u\in\mathsf{U}$, we have 
\begin{align}
\breve{\alpha}_{u}(x,y) & \leq\min\left\{ 1,\sum_{i=1}^{m}\sum_{j=1}^{m}\frac{\beta_{ij}(x)Q_{ij}(x,{\rm d}(y,u))}{\breve{Q}(x,{\rm d}(y,u))}r_{ij,u}(x,y)\right\} \nonumber \\
 & =\min\left\{ 1,\sum_{i=1}^{m}\sum_{j=1}^{m}\frac{\beta_{ij}(x)Q_{ij}(x,{\rm d}(y,u))}{\breve{Q}(x,{\rm d}(y,u))}\frac{\beta_{ji}(y)\pi({\rm d}y)Q_{ji}(y,{\rm d}(x,u))}{\beta_{ij}(x)\pi({\rm d}x)Q_{ij}(x,{\rm d}(y,u))}\right\} \nonumber \\
 & =\min\left\{ 1,\frac{\pi({\rm d}y)\breve{Q}(y,{\rm d}(x,u))}{\pi({\rm d}x)\breve{Q}(x,{\rm d}(y,u))}\right\} =:\alpha_{u}(x,y),\label{eq: acceptance prob of MH with asymmetric proposal}
\end{align}
which is the acceptance probability of a pseudo-marginal ratio MH
algorithm $\mathring{P}$ with $Q_{1}(\cdot,\cdot)=\breve{Q}(\cdot,\cdot)$
and $\varphi(u)=u$ (see Theorem \ref{thm: pseudo-marginal ratio algorithms}).
From Peskun's result \citet{Tierney_1998} we deduce that for this
common proposal distribution $\breve{Q}$, the update $\breve{P}$
has worse performance properties in terms of both asymptotic variance
and right spectral gap than $\mathring{P}$. It is therefore natural
to question the interest of updates such as $\breve{P}$. An argument
already noted by \citet{tjelmeland-eidsvik-2004,andrieu2008tutorial},
is that computing the acceptance ratio in \eqref{eq: acceptance prob of MH with asymmetric proposal}
is generally substantially more computationally expensive than computing
$\breve{r}_{ij}(x,y)$, which may offset any theoretical advantage
in practice. It may also be that defining a desirable acceptance ratio
is theoretically impossible using the standard approach, or that practical
evaluation of the acceptance ratio is impossible. This is the case
for numerous examples, including Example \ref{ex:doublyintractaveraging},
for which
\begin{align*}
\alpha_{u}(\theta,\theta') & =r(\theta,\theta')\frac{\big[\prod_{i=1}^{N}g_{\theta}(u^{(i)})/C_{\theta}\,\mathring{r}_{u^{(k)}}(\theta',\theta)/\mathring{r}_{\mathfrak{u}}^{N}(\theta',\theta)+N^{-1}g_{\theta}(u^{(k)})/C_{\theta}\prod_{i\neq k}g_{\theta'}(u^{(i)})/C_{\theta'}\big]}{\big[\prod_{i=1}^{N}g_{\theta'}(u^{(i)})/C_{\theta'}\,\mathring{r}_{u^{(k)}}(\theta,\theta')/\mathring{r}_{\mathfrak{u}}^{N}(\theta,\theta')+N^{-1}g_{\theta'}(u^{(k)})/C_{\theta'}\prod_{i\neq k}g_{\theta}(u^{(i)})/C_{\theta}\big]},
\end{align*}
for $N\geq1$ and where we note that the unknown normalising constants
do not cancel.

\section{Justification of AIS and an extension\label{sec: A short justification of AIS and an extension}}

We provide here a short justification of the AIS of \citet{Crooks1998,Neal_2001},
as presented in \citet{Karagiannis_and_Andrieu_2013}, and an extension
of it that is useful in this paper. Here $\tau$ represents the number
of intermediate distributions introduced, while $\mu_{0}$ and $\mu_{\tau+1}$
are the distributions of which we want the normalising constants as
we assume that we only know them up to normalising constants i.e.\ we
know the unnormalised distributions $\nu_{0}=\mu_{0}Z_{0}$ and $\nu_{\tau+1}=\mu_{\tau+1}Z_{\tau+1}$. 
\begin{thm}
\label{thm:AIS}Let $\big\{\mu_{t},t=0,\ldots,\tau+1\big\}$ for some
$\tau\in\mathbb{N}$ be a family of probability distributions on some
measurable space $\big(\mathsf{E},\mathcal{E}\big)$ such that for
$t=0,\ldots,\tau$ $\mu_{t}\gg\mu_{t+1}$. Let $\big\{\Pi_{t},t=1,\ldots,\tau\big\}$
be a family of Markov transition kernels $\Pi_{t}:\mathsf{E}\times\mathcal{E}\rightarrow[0,1]$
such that for any $t=1,\ldots,\tau$, $\Pi_{t}$ is $\mu_{t}-$reversible.
Let us define the following probability distributions on $\big(\mathsf{E}^{\tau+1},\mathcal{E}^{\tau+1}\big)$,
$\overleftarrow{\varPi}:=\mu_{\tau+1}\times\Pi_{\tau}\times\cdots\times\Pi_{1}$
and $\overrightarrow{\varPi}:=\mu_{0}\times\Pi_{1}\times\cdots\times\Pi_{\tau}$
for $\tau\geq1$ and $\overleftarrow{\varPi}:=\mu_{\tau+1}$ and $\overrightarrow{\varPi}:=\mu_{0}$
for $\tau=0$. Then for any $x_{0:\tau}\in\mathsf{E}^{\tau+1}$ 
\begin{align*}
\overleftarrow{\varPi}\big({\rm d}(x_{\tau},\ldots,x_{0})\big) & =\prod_{t=0}^{\tau}\frac{\mu_{t+1}\big({\rm d}x_{t}\big)}{\mu_{t}\big({\rm d}x_{t}\big)}\overrightarrow{\varPi}\big({\rm d}(x_{0},\ldots,x_{\tau})\big).
\end{align*}
\end{thm}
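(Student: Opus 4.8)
The plan is to prove the claimed relation as an equality of measures on $\big(\mathsf{E}^{\tau+1},\mathcal{E}^{\tau+1}\big)$, obtained by starting from $\overleftarrow{\varPi}$ and reversing the chain one step at a time, each reversal of a kernel $\Pi_t$ producing exactly one density factor. The case $\tau=0$ is immediate: there $\overleftarrow{\varPi}=\mu_{\tau+1}=\mu_1$ and $\overrightarrow{\varPi}=\mu_0$, so the assertion is just $\mu_1({\rm d}x_0)=\frac{\mu_1({\rm d}x_0)}{\mu_0({\rm d}x_0)}\mu_0({\rm d}x_0)$, which is the definition of the density ${\rm d}\mu_1/{\rm d}\mu_0$, well defined since $\mu_0\gg\mu_1$. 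For $\tau\ge1$ the two ingredients I would record first are: each factor $\mu_{t+1}({\rm d}x_t)/\mu_t({\rm d}x_t)$ is a genuine Radon--Nikodym density in $x_t$, guaranteed by $\mu_t\gg\mu_{t+1}$; and the $\mu_t$-reversibility of $\Pi_t$, expressed as the symmetry of measures on $\mathsf{E}^2$,
\[
\mu_t({\rm d}x)\,\Pi_t(x,{\rm d}y)=\mu_t({\rm d}y)\,\Pi_t(y,{\rm d}x),\qquad t=1,\ldots,\tau.
\]

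The core is a finite downward induction. I would prove that for every $s\in\{0,1,\ldots,\tau\}$,
\[
\overleftarrow{\varPi}\big({\rm d}(x_0,\ldots,x_\tau)\big)=\Bigl(\prod_{t=s+1}^{\tau}\frac{\mu_{t+1}({\rm d}x_t)}{\mu_t({\rm d}x_t)}\Bigr)\mu_{s+1}({\rm d}x_s)\Bigl(\prod_{t=s+1}^{\tau}\Pi_t(x_{t-1},{\rm d}x_t)\Bigr)\Bigl(\prod_{t=1}^{s}\Pi_t(x_t,{\rm d}x_{t-1})\Bigr),
\]
with empty products read as $1$. The base case $s=\tau$ is exactly the definition $\overleftarrow{\varPi}=\mu_{\tau+1}\times\Pi_\tau\times\cdots\times\Pi_1$. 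For the step from $s$ to $s-1$, I would write $\mu_{s+1}({\rm d}x_s)=\frac{\mu_{s+1}({\rm d}x_s)}{\mu_s({\rm d}x_s)}\mu_s({\rm d}x_s)$ and then apply reversibility of $\Pi_s$ to the pair $(x_s,x_{s-1})$, replacing $\mu_s({\rm d}x_s)\Pi_s(x_s,{\rm d}x_{s-1})$ by $\mu_s({\rm d}x_{s-1})\Pi_s(x_{s-1},{\rm d}x_s)$; the newly created factor $\mu_{s+1}({\rm d}x_s)/\mu_s({\rm d}x_s)$ depends on $x_s$ only and is unaffected by the swap, so the expression takes the asserted form with $s$ replaced by $s-1$.

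Evaluating the identity at $s=0$ leaves the leading mass $\mu_1({\rm d}x_0)$ together with the factors $\prod_{t=1}^{\tau}\mu_{t+1}({\rm d}x_t)/\mu_t({\rm d}x_t)$ and the forward kernels $\prod_{t=1}^{\tau}\Pi_t(x_{t-1},{\rm d}x_t)$. Writing $\mu_1({\rm d}x_0)=\frac{\mu_1({\rm d}x_0)}{\mu_0({\rm d}x_0)}\mu_0({\rm d}x_0)$ supplies the missing $t=0$ factor, and recognising $\mu_0({\rm d}x_0)\prod_{t=1}^{\tau}\Pi_t(x_{t-1},{\rm d}x_t)=\overrightarrow{\varPi}({\rm d}(x_0,\ldots,x_\tau))$ yields the statement. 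The one point requiring care, and the main obstacle to a fully rigorous write-up, is justifying these density manipulations at the level of measures: I would make each step precise by integrating both sides against an arbitrary bounded measurable $\phi$ on $\mathsf{E}^{\tau+1}$ and invoking the $\mathsf{E}^2$ reversibility identity under Fubini, checking at each stage that the inserted density depends solely on the coordinate being reversed so that it carries through the kernel swap intact.
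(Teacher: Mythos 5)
Your proof is correct and is essentially the same argument as the paper's: a step-by-step induction that peels off one backward kernel at a time, inserting the Radon--Nikodym factor $\mu_{t+1}({\rm d}x_t)/\mu_t({\rm d}x_t)$ and applying $\mu_t$-reversibility of $\Pi_t$ to flip its direction, with the $\tau=0$ case and the final factorisation $\mu_1({\rm d}x_0)=\frac{\mu_1({\rm d}x_0)}{\mu_0({\rm d}x_0)}\mu_0({\rm d}x_0)$ handled identically. The only differences are cosmetic (you index the induction by the position $s$ of the remaining anchor measure rather than by the number $j$ of reversed kernels, and you make explicit the test-function justification the paper leaves implicit).
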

\begin{proof}
The case $\tau=0$ is direct. Assume $\tau\geq1$, we show by induction
that for any $j=1,\ldots,\tau$ 
\begin{multline*}
\overleftarrow{\varPi}\big({\rm d}(x_{\tau},\ldots,x_{0})\big)\\
=\left[\prod_{t=1}^{j}\frac{\mu_{\tau-t+2}\big({\rm d}x_{\tau+1-t}\big)}{\mu_{\tau-t+1}\big({\rm d}x_{\tau+1-t}\big)}\Pi_{\tau+1-t}\big(x_{\tau-t},{\rm d}x_{\tau-t+1}\big)\right]\mu_{\tau-j+1}\big({\rm d}x_{\tau-j}\big)\prod_{t=j+1}^{\tau}\Pi_{\tau-t+1}\big(x_{\tau-t+1},{\rm d}x_{\tau-t}\big),
\end{multline*}
with the convention $\prod_{t=\tau+1}^{\tau}=I$. First we check the
result for $j=1$ 
\begin{align*}
\overleftarrow{\varPi}\big({\rm d}(x_{\tau},\ldots,x_{0})\big) & =\mu_{\tau+1}\big({\rm d}x_{\tau}\big)\prod_{t=1}^{\tau}\Pi_{\tau-t+1}\big(x_{\tau-t+1},{\rm d}x_{\tau-t}\big)\\
 & =\frac{\mu_{\tau+1}\big({\rm d}x_{\tau}\big)}{\mu_{\tau}\big({\rm d}x_{\tau}\big)}\mu_{\tau}\big({\rm d}x_{\tau}\big)\Pi_{\tau}\big(x_{\tau},{\rm d}x_{\tau-1}\big)\prod_{t=2}^{\tau}\Pi_{\tau-t+1}\big(x_{\tau-t+1},{\rm d}x_{\tau-t}\big)\\
 & =\frac{\mu_{\tau+1}\big({\rm d}x_{\tau}\big)}{\mu_{\tau}\big({\rm d}x_{\tau}\big)}\mu_{\tau}\big({\rm d}x_{\tau-1}\big)\Pi_{\tau}\big(x_{\tau-1},{\rm d}x_{\tau}\big)\prod_{t=2}^{\tau}\Pi_{\tau-t+1}\big(x_{\tau-t+1},{\rm d}x_{\tau-t}\big),
\end{align*}
where we have used $\mu_{\tau}\gg\mu_{\tau+1}$ and the fact that
$\Pi_{\tau}$ is $\mu_{\tau}-$reversible. Now assume the result true
for some $1\leq j\leq\tau-1$ and $\tau\geq2$, then using similar
arguments as above, 
\begin{align*}
\mu_{\tau-j+1}\big({\rm d}x_{\tau-j}\big)\Pi_{\tau-j}\big(x_{\tau-j}, & {\rm d}x_{\tau-j-1}\big)\\
 & =\frac{\mu_{\tau-j+1}\big({\rm d}x_{\tau-j}\big)}{\mu_{\tau-j}\big({\rm d}x_{\tau-j}\big)}\mu_{\tau-j}\big({\rm d}x_{\tau-j}\big)\Pi_{\tau-j}\big(x_{\tau-j},{\rm d}x_{\tau-j-1}\big)\\
 & =\frac{\mu_{\tau-j+1}\big({\rm d}x_{\tau-j}\big)}{\mu_{\tau-j}\big({\rm d}x_{\tau-j}\big)}\mu_{\tau-j}\big({\rm d}x_{\tau-j-1}\big)\Pi_{\tau-j}\big(x_{\tau-j-1},{\rm d}x_{\tau-j}\big),
\end{align*}
from which the intermediate claim follows for $j+1$. Now for $j=\tau$
we obtain the claimed result after a change of variables $t\leftarrow\tau+1-t$
in the product. 
\end{proof}
\begin{cor}
Assume that we have access to unnormalised versions of the probability
distributions, say $\nu_{t}=\mu_{t}Z_{t}$. Then 
\[
\overleftarrow{\varPi}\big({\rm d}(x_{\tau},\ldots,x_{0})\big)=\prod_{t=0}^{\tau}\frac{Z_{t}}{Z_{t+1}}\frac{\nu_{t+1}\big({\rm d}x_{t}\big)}{\nu_{t}\big({\rm d}x_{t}\big)}\overrightarrow{\varPi}\big({\rm d}(x_{0},\ldots,x_{\tau})\big),
\]
and therefore 
\begin{align*}
\prod_{t=0}^{\tau}\frac{\nu_{t+1}\big({\rm d}x_{t}\big)}{\nu_{t}\big({\rm d}x_{t}\big)}\overrightarrow{\varPi}\big({\rm d}(x_{0},\ldots,x_{\tau})\big) & =\overleftarrow{\varPi}\big({\rm d}(x_{\tau},\ldots,x_{0})\big)\prod_{t=0}^{\tau}\frac{Z_{t+1}}{Z_{t}}\\
 & =\overleftarrow{\varPi}\big({\rm d}(x_{\tau},\ldots,x_{0})\big)\frac{Z_{\tau+1}}{Z_{0}},
\end{align*}
which suggests and justifies the AIS estimator. 
\end{cor}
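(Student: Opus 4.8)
The plan is to obtain this corollary as an immediate consequence of Theorem \ref{thm:AIS} by substituting the unnormalised densities, so essentially no new probabilistic content is needed. First I would record the elementary observation that, since $\nu_{t}=\mu_{t}Z_{t}$ with each $Z_{t}>0$ a constant not depending on the integration variable, the Radon--Nikodym factors appearing in the conclusion of Theorem \ref{thm:AIS} may be rewritten as
\[
\frac{\mu_{t+1}\big({\rm d}x_{t}\big)}{\mu_{t}\big({\rm d}x_{t}\big)}=\frac{Z_{t}}{Z_{t+1}}\frac{\nu_{t+1}\big({\rm d}x_{t}\big)}{\nu_{t}\big({\rm d}x_{t}\big)},\qquad t=0,\ldots,\tau.
\]
Here I would note that the absolute-continuity hypothesis $\mu_{t}\gg\mu_{t+1}$ of the theorem transfers verbatim to the unnormalised versions, because $\nu_{t}$ and $\mu_{t}$ are mutually absolutely continuous with constant density $Z_{t}$; hence each ratio above is well defined on the relevant set.

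Next I would simply insert this identity, factor by factor, into the statement of Theorem \ref{thm:AIS}, namely $\overleftarrow{\varPi}\big({\rm d}(x_{\tau},\ldots,x_{0})\big)=\prod_{t=0}^{\tau}\frac{\mu_{t+1}({\rm d}x_{t})}{\mu_{t}({\rm d}x_{t})}\,\overrightarrow{\varPi}\big({\rm d}(x_{0},\ldots,x_{\tau})\big)$. Pulling the constant prefactors $Z_{t}/Z_{t+1}$ out of the product yields the first displayed identity of the corollary directly. The boundary case $\tau=0$ needs no separate treatment, since Theorem \ref{thm:AIS} already establishes that case and the same substitution applies.

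For the second part I would rearrange the first identity to isolate $\prod_{t=0}^{\tau}\frac{\nu_{t+1}({\rm d}x_{t})}{\nu_{t}({\rm d}x_{t})}\,\overrightarrow{\varPi}$, multiply through by $\prod_{t=0}^{\tau}\frac{Z_{t+1}}{Z_{t}}$, and then observe the one genuinely load-bearing step: this product telescopes, with every intermediate $Z_{t}$ cancelling to leave $Z_{\tau+1}/Z_{0}$. There is no real obstacle here, as all the analytic work is already contained in Theorem \ref{thm:AIS}; the corollary is a bookkeeping recasting whose only purpose is to express the result through the computable unnormalised densities $\nu_{t}$ and to expose $Z_{\tau+1}/Z_{0}$ as precisely the ratio of normalising constants that the annealed importance sampling estimator targets, thereby justifying its form.
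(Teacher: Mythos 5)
Your proof is correct and coincides with the paper's intended argument: the paper states this corollary without proof, treating it as an immediate consequence of Theorem \ref{thm:AIS}, and your substitution $\mu_{t}=\nu_{t}/Z_{t}$ followed by pulling out the constants $Z_{t}/Z_{t+1}$ and telescoping $\prod_{t=0}^{\tau}Z_{t+1}/Z_{t}=Z_{\tau+1}/Z_{0}$ is exactly that immediate consequence. Nothing is missing.
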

The following extension of the result above turns out to be of practical
interest.
\begin{thm}
\label{thm:extensionAIS} Let $\tau\geq2$, $\big\{\mu_{t},t=1,\ldots,\tau\big\}$
and $\big\{\Pi_{t},t=2,\ldots,\tau-1\big\}$ be as in Theorem \ref{thm:AIS}
above but assume now that $\mu_{0}$ and $\mu_{\tau+1}$ are defined
on a potentially different measurable space $(\mathsf{F},\mathcal{F})$.
Further let $\overrightarrow{\Pi}_{1},\overleftarrow{\Pi}_{\tau}:\mathsf{F}\times\mathcal{E}\rightarrow[0,1]$
and $\overleftarrow{\Pi}_{1},\overrightarrow{\Pi}_{\tau}:\mathsf{E}\times\mathcal{F}\rightarrow[0,1]$
be Markov kernels satisfying the following properties 
\[
\mu_{0}\big({\rm d}x_{0}\big)\overrightarrow{\Pi}_{1}\big(x_{0},{\rm d}x_{1}\big)=\mu_{1}\big({\rm d}x_{1}\big)\overleftarrow{\Pi}_{1}\big(x_{1},{\rm d}x_{0}\big),
\]
and
\[
\mu_{\tau}\big({\rm d}x_{\tau-1}\big)\overrightarrow{\Pi}_{\tau}\big(x_{\tau-1},{\rm d}x_{\tau}\big)=\mu_{\tau+1}\big({\rm d}x_{\tau}\big)\overleftarrow{\Pi}_{\tau}\big(x_{\tau},{\rm d}x_{\tau-1}\big).
\]
Define
\[
\overrightarrow{\varPi}:=\mu_{0}\times\overrightarrow{\Pi}_{1}\times\Pi_{2}\cdots\times\overrightarrow{\Pi}_{\tau},
\]
and 
\[
\overleftarrow{\varPi}:=\mu_{\tau+1}\times\overleftarrow{\Pi}_{\tau}\times\Pi_{\tau-1}\cdots\times\overleftarrow{\Pi}_{1}.
\]
Then
\[
\overleftarrow{\varPi}\big({\rm d}(x_{\tau},\ldots,x_{0})\big)=\prod_{t=1}^{\tau-1}\frac{\mu_{t+1}\big({\rm d}x_{t}\big)}{\mu_{t}\big({\rm d}x_{t}\big)}\overrightarrow{\varPi}\big({\rm d}(x_{0},\ldots,x_{\tau})\big).
\]
\end{thm}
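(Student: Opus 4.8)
The plan is to reduce everything to Theorem~\ref{thm:AIS} by peeling off the two boundary kernels first, so that only the homogeneous ``bulk'' of the chain---where the kernels $\Pi_{2},\ldots,\Pi_{\tau-1}$ are genuinely reversible on the common space $(\mathsf{E},\mathcal{E})$---requires the induction already carried out there. First I would write both measures out explicitly from their definitions:
\[
\overleftarrow{\varPi}\big({\rm d}(x_{\tau},\ldots,x_{0})\big)=\mu_{\tau+1}({\rm d}x_{\tau})\,\overleftarrow{\Pi}_{\tau}(x_{\tau},{\rm d}x_{\tau-1})\Big[\prod_{t=2}^{\tau-1}\Pi_{t}(x_{t},{\rm d}x_{t-1})\Big]\overleftarrow{\Pi}_{1}(x_{1},{\rm d}x_{0}),
\]
while $\overrightarrow{\varPi}\big({\rm d}(x_{0},\ldots,x_{\tau})\big)=\mu_{0}({\rm d}x_{0})\overrightarrow{\Pi}_{1}(x_{0},{\rm d}x_{1})\big[\prod_{t=2}^{\tau-1}\Pi_{t}(x_{t-1},{\rm d}x_{t})\big]\overrightarrow{\Pi}_{\tau}(x_{\tau-1},{\rm d}x_{\tau})$ is the mirror-image factorisation running forward. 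Note that $x_{1},\ldots,x_{\tau-1}$ are $\mathsf{E}$-valued while the endpoints $x_{0},x_{\tau}$ are $\mathsf{F}$-valued.

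The first key step uses the terminal coupling hypothesis $\mu_{\tau}({\rm d}x_{\tau-1})\overrightarrow{\Pi}_{\tau}(x_{\tau-1},{\rm d}x_{\tau})=\mu_{\tau+1}({\rm d}x_{\tau})\overleftarrow{\Pi}_{\tau}(x_{\tau},{\rm d}x_{\tau-1})$ to replace the leading factor of $\overleftarrow{\varPi}$ by $\mu_{\tau}({\rm d}x_{\tau-1})\overrightarrow{\Pi}_{\tau}(x_{\tau-1},{\rm d}x_{\tau})$. After this swap the surviving factor $\mu_{\tau}({\rm d}x_{\tau-1})\prod_{t=2}^{\tau-1}\Pi_{t}(x_{t},{\rm d}x_{t-1})$ is precisely the backward measure of an AIS run on the bulk variables $x_{1},\ldots,x_{\tau-1}$, with endpoint distributions $\mu_{1},\mu_{\tau}$ and reversible kernels $\Pi_{2},\ldots,\Pi_{\tau-1}$. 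I would therefore invoke Theorem~\ref{thm:AIS} on this inner chain, applied with its ``$\tau$'' equal to $\tau-2$, its $\mu_{s}$ equal to $\mu_{s+1}$, and its $\Pi_{s}$ equal to $\Pi_{s+1}$, so that all reversibility and absolute-continuity hypotheses are inherited. This converts the backward bulk measure into $\prod_{t=1}^{\tau-1}\frac{\mu_{t+1}({\rm d}x_{t})}{\mu_{t}({\rm d}x_{t})}$ times the forward bulk measure $\mu_{1}({\rm d}x_{1})\prod_{t=2}^{\tau-1}\Pi_{t}(x_{t-1},{\rm d}x_{t})$; the degenerate case $\tau=2$ (no bulk kernels, a single variable $x_{1}$) is exactly the $\tau=0$ base case of that theorem and is immediate.

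Finally I would apply the initial coupling hypothesis in the form $\mu_{1}({\rm d}x_{1})\overleftarrow{\Pi}_{1}(x_{1},{\rm d}x_{0})=\mu_{0}({\rm d}x_{0})\overrightarrow{\Pi}_{1}(x_{0},{\rm d}x_{1})$ to recombine the surviving $\mu_{1}({\rm d}x_{1})$ produced by the inner theorem with the trailing factor $\overleftarrow{\Pi}_{1}(x_{1},{\rm d}x_{0})$, these being exact measure identities that leave no extra Radon--Nikodym factor behind. Collecting the three forward pieces $\mu_{0}\overrightarrow{\Pi}_{1}$, the bulk $\prod_{t=2}^{\tau-1}\Pi_{t}$, and $\overrightarrow{\Pi}_{\tau}$ reassembles exactly $\overrightarrow{\varPi}\big({\rm d}(x_{0},\ldots,x_{\tau})\big)$, and the prefactor is the full product $\prod_{t=1}^{\tau-1}\mu_{t+1}({\rm d}x_{t})/\mu_{t}({\rm d}x_{t})$ delivered by the inner theorem, which is the claim. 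I expect the only real obstacle to be bookkeeping rather than analysis: ensuring the index shift lands the inner product on $t=1,\ldots,\tau-1$ (the $\mathsf{E}$-valued coordinates only) and that the two boundary identities are used in the correct orientation, so that the $\mathsf{F}$-valued endpoints $x_{0},x_{\tau}$ never enter a Radon--Nikodym ratio and the possibly distinct nature of $(\mathsf{F},\mathcal{F})$ causes no difficulty.
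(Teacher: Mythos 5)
Your proof is correct and takes essentially the same route as the paper's: use the terminal coupling identity to replace $\mu_{\tau+1}({\rm d}x_{\tau})\overleftarrow{\Pi}_{\tau}(x_{\tau},{\rm d}x_{\tau-1})$ by $\mu_{\tau}({\rm d}x_{\tau-1})\overrightarrow{\Pi}_{\tau}(x_{\tau-1},{\rm d}x_{\tau})$, apply Theorem \ref{thm:AIS} to the bulk chain on $(\mathsf{E},\mathcal{E})$, and finish with the initial coupling identity. The only (immaterial) bookkeeping difference is that the paper first splits off the ratio $\mu_{\tau}({\rm d}x_{\tau-1})/\mu_{\tau-1}({\rm d}x_{\tau-1})$ and invokes the inner theorem with terminal law $\mu_{\tau-1}$, whereas you keep $\mu_{\tau}$ as the inner terminal law so that the inner theorem delivers the full product $\prod_{t=1}^{\tau-1}\mu_{t+1}({\rm d}x_{t})/\mu_{t}({\rm d}x_{t})$ in one step.
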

\begin{proof}
The proof follows from manipulations similar to those of Theorem \ref{thm:AIS}.
We have,
\begin{align*}
\overleftarrow{\varPi} & \big({\rm d}(x_{\tau},\ldots,x_{0})\big)=\mu_{\tau+1}\big({\rm d}x_{\tau}\big)\overleftarrow{\Pi}_{\tau}\big(x_{\tau},{\rm d}x_{\tau-1}\big)\left[\prod_{t=2}^{\tau-1}\Pi_{\tau-t+1}\big(x_{\tau-t+1},{\rm d}x_{\tau-t}\big)\right]\overleftarrow{\Pi}_{1}(x_{1},\mathrm{d}x_{0})\\
 & =\left[\frac{\mu_{\tau}\big({\rm d}x_{\tau-1}\big)}{\mu_{\tau-1}\big({\rm d}x_{\tau-1}\big)}\overrightarrow{\Pi}_{\tau}\big(x_{\tau-1},{\rm d}x_{\tau}\big)\right]\left[\mu_{\tau-1}\big({\rm d}x_{\tau-1}\big)\prod_{t=2}^{\tau-1}\Pi_{t}\big(x_{t},{\rm d}x_{t-1}\big)\right]\overleftarrow{\Pi}_{1}(x_{1},\mathrm{d}x_{0})\\
 & =\left[\frac{\mu_{\tau}\big({\rm d}x_{\tau-1}\big)}{\mu_{\tau-1}\big({\rm d}x_{\tau-1}\big)}\overrightarrow{\Pi}_{\tau}\big(x_{\tau-1},{\rm d}x_{\tau}\big)\right]\left[\left\{ \prod_{t=2}^{\tau-1}\frac{\mu_{t}\big({\rm d}x_{t-1}\big)}{\mu_{t-1}\big({\rm d}x_{t-1}\big)}\right\} \prod_{t=2}^{\tau-1}\Pi_{t}\big(x_{t-1},{\rm d}x_{t}\big)\mu_{1}\big({\rm d}x_{1}\big)\right]\overleftarrow{\Pi}_{1}(x_{1},\mathrm{d}x_{0})\\
 & =\left\{ \prod_{t=1}^{\tau-1}\frac{\mu_{t+1}\big({\rm d}x_{t}\big)}{\mu_{t}\big({\rm d}x_{t}\big)}\right\} \left[\overrightarrow{\Pi}_{\tau}\big(x_{\tau-1},{\rm d}x_{\tau}\big)\prod_{t=2}^{\tau-1}\Pi_{\tau-t+1}\big(x_{\tau-t},{\rm d}x_{\tau-t+1}\big)\overrightarrow{\Pi}_{1}\big(x_{0},{\rm d}x_{1}\big)\mu_{0}\big({\rm d}x_{0}\big)\right]\\
 & =\left\{ \prod_{t=1}^{\tau-1}\frac{\mu_{t+1}\big({\rm d}x_{t}\big)}{\mu_{t}\big({\rm d}x_{t}\big)}\right\} \overrightarrow{\varPi}(\mathrm{d}(x_{0},\ldots,x_{\tau}))
\end{align*}
where on the second and fourth line we have used the two conditions
on the arrowed kernels, and the third line is obtained by applying
Theorem \ref{thm:AIS}.
\end{proof}
\begin{rem}
The additional conditions are satisfied, for example, if $\mathsf{F}=\mathsf{E}$,
$\mu_{0}=\mu_{1}$, $\mu_{\tau}=\mu_{\tau+1}$, $\overrightarrow{\Pi}_{1}=\overleftarrow{\Pi}_{1}$
is $\mu_{1}-$ reversible and likewise $\overrightarrow{\Pi}_{\tau}=\overleftarrow{\Pi}_{\tau}$
is $\mu_{\tau}-$reversible, taking us to the standard AIS setting
(with repeats at the ends). However, the generalisation obtained by
those additional conditions allow for more general scenarios of interest,
in particular for annealing to occur on a space different from that
where $\mu_{0}$ and $\mu_{\tau+1}$ are defined. The application
of our methodology for trans-dimensional models indeed requires this
generalisation; see Sections \ref{subsec: Generalisations of pseudo-marginal asymmetric MCMC}
and\ref{subsec: An application: trans-dimensional distributions}.
\end{rem}

\section{Auxiliary results and proofs for cSMC based algorithms \label{sec: Auxiliary results and proofs for cSMC based algorithms }}

First, we lay out some useful results on SMC, cSMC, for the state-space
model defined in Section \ref{subsec: State-space models and conditional SMC}
dropping $\theta$ from the notation. For notational simplicity we
will consider the bootstrap particle filter where the particles are
initiated from the initial distribution and propagated from the state
transition, so that $h({\rm d}\zeta_{1}^{(i)})=\mu({\rm d}\zeta_{1}^{(i)})$
and $H(\zeta_{t-1}^{a_{t-1}^{(i)}},{\rm d}\zeta_{t}^{(i)})=f(\zeta_{t-1}^{a_{t-1}^{(i)}},{\rm d}\zeta_{t}^{(i)})$
in Algorithm \ref{alg: Conditional SMC} and the particle weight is
simply the observation density, $w_{t}(\zeta_{t}^{(i)})=g(y_{t}|\zeta_{t}^{(i)})$.
Note that our results can be extended to other choices of $h$ and
$H$.

It is standard that the law of a particle filter with $M$ particles
and multinomial resampling for $\zeta\in\mathsf{Z}^{MT}$ and $a\in[M]^{M(T-1)}$
is \citep{Andrieu_et_al_2010}

\begin{align*}
\psi\big({\rm d}(\zeta,a)\big)= & \prod_{i=1}^{M}\mu({\rm d}\zeta_{1}^{(i)})\prod_{t=2}^{T}\left\{ \prod_{i=1}^{M}\frac{w_{t-1}(\zeta_{t-1}^{(a_{t-1}^{(i)})})}{\sum_{j=1}^{M}w_{t-1}(\zeta_{t-1}^{(j)})}f(\zeta_{t-1}^{(a_{t-1}^{(i)})},{\rm d}\zeta_{t}^{(i)})\right\} .
\end{align*}
What is important for us is that the marginal distribution $\psi\big({\rm d}\zeta\big)$
has a simple form
\[
\psi\big({\rm d}\zeta\big)=\prod_{i=1}^{M}\mu({\rm d}\zeta_{1}^{(i)})\prod_{t=2}^{T}\left\{ \prod_{i=1}^{M}\frac{\sum_{j=1}^{M}w_{t-1}(\zeta_{t-1}^{(j)})f(\zeta_{t-1}^{(j)},{\rm d}\zeta_{t}^{(i)})}{\sum_{j=1}^{M}w_{t-1}(\zeta_{t-1}^{(j)})}\right\} .
\]
Now, letting $C:=\ell(y)$ (recall $y=y_{1:T}$) and its estimator
$\hat{C}(\zeta):=\prod_{t=1}^{T}\frac{1}{M}\sum_{i=1}^{M}w_{t}(\zeta_{t}^{(i)})$,
we introduce 
\begin{equation}
\bar{\psi}\big({\rm d}\zeta\big):=\psi\big({\rm d}\zeta\big)\frac{\hat{C}(\zeta)}{C}.\label{eq: SMC to cSMC probability law}
\end{equation}
We know from \citet{Andrieu_et_al_2010} that this is a probability
distribution, and is a way of justifying that $\hat{C}(\zeta)$ is
an unbiased estimator of $C$\textendash note that the ancestral history
is here integrated out. For $\zeta\in\mathsf{Z}^{MT}$ and $\mathbf{k}=(k_{1},\ldots,k_{T})\in[M]^{T}$,
let $\zeta^{(\mathbf{k})}=(\zeta_{1}^{(k_{1})},\ldots,\zeta_{T}^{(k_{T})})$.
Furthermore, for $z\in\mathsf{Z}^{T}$ and $\zeta\in\mathsf{Z}^{MT}$,
define the (extended) cSMC kernel
\[
\Phi(z,{\rm d}(\mathbf{k},\zeta)):=\frac{1}{M^{T}}\delta_{z}({\rm d}\zeta^{(\mathbf{k})})\prod_{i\neq k_{1}}^{M}\mu({\rm d}\zeta_{1}^{(i)})\prod_{t=2}^{T}\left\{ \prod_{i=1,i\neq k_{t}}^{M}\frac{\sum_{j=1}^{M}w_{t-1}(\zeta_{t-1}^{(j)})f(\zeta_{t-1}^{(j)},{\rm d}\zeta_{t}^{(i)})}{\sum_{j=1}^{M}w_{t-1}(\zeta_{t-1}^{(j)})}\right\} ,
\]
with its marginal $\Phi(z,{\rm d}\zeta)=\sum_{\mathbf{k}\in[M]^{T}}\Phi(z,{\rm d}(\mathbf{k},\zeta))$.
Recall the law of the indices used in the backward-sampling procedure
in order to draw a path $\zeta^{(\mathbf{k})}$,
\[
\phi(\mathbf{k}|\zeta):=\frac{w_{T}(\zeta_{T}^{(k_{T})})}{\sum_{i=1}^{M}w_{T}(\zeta_{T}^{(i)})}\prod_{t=2}^{T}\frac{w_{t-1}(\zeta_{t-1}^{(k_{t-1})})f(\zeta_{t-1}^{(k_{t-1})},\zeta_{t}^{(k_{t})})}{\sum_{i=1}^{M}w_{t-1}(\zeta_{t-1}^{(i)})f(\zeta_{t-1}^{(i)},\zeta_{t}^{(k_{t})})}.
\]
with a convention for $f$ when $t=1$. Finally, define the joint
distribution of the indices and path drawn via backward sampling,
\[
\check{\Phi}(\zeta,{\rm d}(\mathbf{k},z))=\phi(\mathbf{k}\mid\zeta)\delta_{\zeta^{(\mathbf{k})}}({\rm d}z).
\]
and its marginal $\check{\Phi}(\zeta,{\rm d}z)=$$\sum_{\mathbf{k}\in[M]^{T}}\check{\Phi}(\zeta,{\rm d}(\mathbf{k},z))$.
\begin{lem}
\label{lem: cSMC semi-reversibility}For any $z\in\mathsf{Z}^{T}$,
$\mathbf{k}\in[M]^{T},$ and $\zeta\in\mathsf{Z}^{MT}$,
\[
\pi({\rm d}z)\Phi(z,{\rm d}(\mathbf{k},\zeta))=\bar{\psi}\big({\rm d}\zeta\big)\check{\Phi}(\zeta,{\rm d}(\mathbf{k},z)).
\]
\end{lem}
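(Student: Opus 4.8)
The plan is to establish the identity by a direct computation at the level of densities. I would fix dominating product measures so that $\mu(\mathrm{d}z_1)$ and each transition $f(z_{t-1},\mathrm{d}z_t)$ are read as densities $\mu(z_1)$, $f(z_{t-1},z_t)$; this is what makes the two Dirac masses and the ensuing cancellations rigorous. The first remark is that both sides are supported on the diagonal $\{z=\zeta^{(\mathbf{k})}\}$: on the left through $\delta_z(\mathrm{d}\zeta^{(\mathbf{k})})$ in $\Phi$, on the right through $\delta_{\zeta^{(\mathbf{k})}}(\mathrm{d}z)$ in $\check\Phi$, and both simply encode $z_t=\zeta_t^{(k_t)}$ for all $t$. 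Substituting $\pi(\mathrm{d}z)=p(\mathrm{d}z,y)/C$ and $\bar\psi(\mathrm{d}\zeta)=\psi(\mathrm{d}\zeta)\hat C(\zeta)/C$ cancels the common factor $1/C$, so it suffices to match the two remaining (unnormalised) expressions on the diagonal.

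The key algebraic step is to factor $\psi(\mathrm{d}\zeta)$ into its ``index-$\mathbf{k}$'' contribution and its ``remaining-particle'' contribution. The remaining-particle factors (those with $i\neq k_t$) are exactly the particle-generation factors appearing in $\Phi(z,\mathrm{d}(\mathbf{k},\zeta))$, so they cancel against the left-hand side. The index-$\mathbf{k}$ factors of $\psi$ are $\mu(\mathrm{d}\zeta_1^{(k_1)})$ at $t=1$ and the mixtures $\big[\sum_j w_{t-1}(\zeta_{t-1}^{(j)})f(\zeta_{t-1}^{(j)},\mathrm{d}\zeta_t^{(k_t)})\big]/\sum_j w_{t-1}(\zeta_{t-1}^{(j)})$ at $t\ge2$; dividing the path contribution $\mu(\mathrm{d}z_1)\prod_{t\ge2}f(z_{t-1},\mathrm{d}z_t)\prod_t w_t(z_t)$ of $p(\mathrm{d}z,y)$ by these (with $z_t=\zeta_t^{(k_t)}$) makes the $\mu$ factor and each $\mathrm{d}\zeta_t^{(k_t)}$ infinitesimal cancel, leaving only products of weights and single transition densities.

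At this point I would recognise the surviving single-transition factors as precisely the ratios $f(\zeta_{t-1}^{(k_{t-1})},\zeta_t^{(k_t)})/\sum_i w_{t-1}(\zeta_{t-1}^{(i)})f(\zeta_{t-1}^{(i)},\zeta_t^{(k_t)})$ occurring in $\phi(\mathbf{k}\mid\zeta)$ on the right, and the explicit $1/M^T$ on the left as the product $\prod_t 1/M$ inside $\hat C(\zeta)$; after cancelling these the claim reduces to the purely combinatorial weight identity
\[
\prod_{t=1}^{T} w_t(\zeta_t^{(k_t)})\,\prod_{s=1}^{T-1}\sum_{j=1}^{M} w_s(\zeta_s^{(j)})=\left(\prod_{t=1}^{T}\sum_{i=1}^{M} w_t(\zeta_t^{(i)})\right)\frac{w_T(\zeta_T^{(k_T)})}{\sum_{i=1}^{M} w_T(\zeta_T^{(i)})}\prod_{s=1}^{T-1} w_s(\zeta_s^{(k_s)}).
\]
This follows by telescoping: the factor $\sum_i w_T(\zeta_T^{(i)})$ in $\prod_t\sum_i w_t$ cancels the denominator of the leading backward factor, reproducing $w_T(\zeta_T^{(k_T)})$, and reindexing $s=t-1$ matches the two products of normalisers $\prod_s\sum_j w_s$ and the two products of path weights $\prod_s w_s(\zeta_s^{(k_s)})$ on either side.

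The only genuinely delicate point is the measure-theoretic bookkeeping in the second step: making the cancellation of $\mu(\mathrm{d}z_1)$ and of the common $\mathrm{d}\zeta_t^{(k_t)}$ infinitesimals precise requires committing to dominating measures and reading $f(\cdot,\mathrm{d}\cdot)$ as densities, together with care that $\delta_z(\mathrm{d}\zeta^{(\mathbf{k})})$ and $\delta_{\zeta^{(\mathbf{k})}}(\mathrm{d}z)$ may be freely interchanged on the diagonal. Once one is at the density level everything else is elementary manipulation of finite products and sums, and no result beyond the definitions given above (together with the fact, from \citet{Andrieu_et_al_2010}, that $\bar\psi$ is a probability measure) is needed.
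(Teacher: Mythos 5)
Your proposal is correct and follows essentially the same route as the paper's proof: both expand the two sides via the definitions of $\Phi$, $\check\Phi$, $\psi$, $\hat C$ and $\phi(\mathbf{k}\mid\zeta)$, split $\psi$ into its index-$\mathbf{k}$ factors and the remaining-particle factors (the latter matching those of $\Phi$ exactly), and then cancel the weight normalisers, using the Dirac masses to identify $z$ with $\zeta^{(\mathbf{k})}$. The only difference is organisational: the paper packages the computation as the intermediate identity $\bar{\psi}({\rm d}\zeta)\,\phi(\mathbf{k}\mid\zeta)=M^{-T}\pi({\rm d}\zeta^{(\mathbf{k})})\times[\text{remaining-particle factors}]$, whereas you cancel common factors from both sides first and isolate the residual scalar weight identity, which indeed holds by the telescoping you describe.
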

\begin{proof}
For the left hand side, we have
\begin{multline*}
\pi\big(\mathrm{d}z\big)\Phi(z,\mathrm{d}(\mathbf{k},\zeta))=\frac{1}{M^{T}}\pi(\mathrm{d}z)\delta_{z}(\mathrm{d}\zeta^{(\mathbf{k})})\\
\times\prod_{i=1,i\neq k_{1}}^{M}\mu({\rm d}\zeta_{1}^{(i)})\prod_{t=2}^{T}\left\{ \prod_{i=1,i\neq k_{t}}^{M}\frac{\sum_{j=1}^{M}w_{t-1}(\zeta_{t-1}^{(j)})f(\zeta_{t-1}^{(j)},{\rm d}\zeta_{t}^{(i)})}{\sum_{j=1}^{M}w_{t-1}(\zeta_{t-1}^{(j)})}\right\} .
\end{multline*}
For the right hand side, first we note the identity
\[
\bar{\psi}\big({\rm d}\zeta\big)\phi(\mathbf{k}\mid\zeta)=\frac{1}{M^{T}}\pi(\mathrm{d}\zeta^{(\mathbf{k})})\prod_{i=1,i\neq k_{1}}^{M}\mu({\rm d}\zeta_{1}^{(i)})\prod_{t=2}^{T}\left\{ \prod_{i=1,i\neq k_{t}}^{M}\frac{\sum_{j=1}^{M}w_{t-1}(\zeta_{t-1}^{(j)})f(\zeta_{t-1}^{(j)},{\rm d}\zeta_{t}^{(i)})}{\sum_{j=1}^{M}w_{t-1}(\zeta_{t-1}^{(j)})}\right\} 
\]
so that we get
\begin{multline*}
\bar{\psi}\big({\rm d}\zeta\big)\check{\Phi}(\zeta,{\rm d}(\mathbf{k},z))=\frac{1}{M^{T}}\pi(\mathrm{d}\zeta^{(\mathbf{k})})\delta_{\zeta^{(\mathbf{k})}}(\mathrm{d}z)\\
\times\prod_{i=1,i\neq k_{1}}^{M}\mu({\rm d}\zeta_{1}^{(i)})\prod_{t=2}^{T}\left\{ \prod_{i=1,i\neq k_{t}}^{M}\frac{\sum_{j=1}^{M}w_{t-1}(\zeta_{t-1}^{(j)})f(\zeta_{t-1}^{(j)},{\rm d}\zeta_{t}^{(i)})}{\sum_{j=1}^{M}w_{t-1}(\zeta_{t-1}^{(j)})}\right\} 
\end{multline*}
which is equal to $\pi\big(\mathrm{d}z\big)\Phi(z,\mathrm{d}(\mathbf{k},\zeta)).$ 
\end{proof}
{} Lemma \ref{lem: cSMC semi-reversibility} immediately leads to the
following corollaries which will be useful in the subsequent proofs.
\begin{cor}
\label{cor: cSMC semi-reversibility}For any $z\in\mathsf{Z}^{T}$and
$\zeta\in\mathsf{Z}^{MT}$,
\[
\pi({\rm d}z)\Phi(z,{\rm d}\zeta)=\bar{\psi}\big({\rm d}\zeta\big)\check{\Phi}(\zeta,{\rm d}z).
\]
\end{cor}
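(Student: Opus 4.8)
The plan is to obtain this corollary by simply marginalising the index vector $\mathbf{k}$ out of the finer identity already established in Lemma \ref{lem: cSMC semi-reversibility}. That lemma gives, for every fixed $\mathbf{k}\in[M]^{T}$,
\[
\pi({\rm d}z)\Phi(z,{\rm d}(\mathbf{k},\zeta))=\bar{\psi}\big({\rm d}\zeta\big)\check{\Phi}(\zeta,{\rm d}(\mathbf{k},z)),
\]
so the whole argument reduces to summing both sides over $\mathbf{k}$ and recognising the two marginal kernels.

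First I would note that the factor $\pi({\rm d}z)$ on the left-hand side does not depend on $\mathbf{k}$, and likewise the factor $\bar{\psi}({\rm d}\zeta)$ on the right-hand side is $\mathbf{k}$-free. Hence summing the Lemma over $\mathbf{k}\in[M]^{T}$ and pulling these factors outside their respective sums yields
\[
\pi({\rm d}z)\sum_{\mathbf{k}\in[M]^{T}}\Phi(z,{\rm d}(\mathbf{k},\zeta))=\bar{\psi}\big({\rm d}\zeta\big)\sum_{\mathbf{k}\in[M]^{T}}\check{\Phi}(\zeta,{\rm d}(\mathbf{k},z)).
\]
Then I would invoke the definitions of the two marginal kernels given in the text preceding the lemma, namely $\Phi(z,{\rm d}\zeta)=\sum_{\mathbf{k}\in[M]^{T}}\Phi(z,{\rm d}(\mathbf{k},\zeta))$ and $\check{\Phi}(\zeta,{\rm d}z)=\sum_{\mathbf{k}\in[M]^{T}}\check{\Phi}(\zeta,{\rm d}(\mathbf{k},z))$, to rewrite both sides and conclude
\[
\pi({\rm d}z)\Phi(z,{\rm d}\zeta)=\bar{\psi}\big({\rm d}\zeta\big)\check{\Phi}(\zeta,{\rm d}z),
\]
which is exactly the claim.

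There is essentially no obstacle here: the content is entirely carried by the per-index identity of Lemma \ref{lem: cSMC semi-reversibility}, and this corollary is a one-line consequence of linearity of the sum over the finite index set $[M]^{T}$. The only point worth stating explicitly, for rigour, is that summing the measures over the finite set $[M]^{T}$ is unproblematic and that $\pi({\rm d}z)$ and $\bar{\psi}({\rm d}\zeta)$ genuinely carry no dependence on $\mathbf{k}$, so they may be factored out before summing; both facts are immediate from the respective definitions.
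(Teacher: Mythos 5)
Your proof is correct and matches the paper's intent exactly: the paper states this corollary follows immediately from Lemma \ref{lem: cSMC semi-reversibility} without writing out a proof, and the implicit argument is precisely your marginalisation over $\mathbf{k}\in[M]^{T}$, using that $\pi({\rm d}z)$ and $\bar{\psi}({\rm d}\zeta)$ are $\mathbf{k}$-free and recognising the marginal kernels $\Phi(z,{\rm d}\zeta)$ and $\check{\Phi}(\zeta,{\rm d}z)$ as the sums of their extended counterparts. Nothing is missing.
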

\begin{cor}
\label{cor:exchangeability-SSM}For any $N\geq1$ $(z,u^{(1)},\ldots,u^{(N)})\in\mathsf{Z}^{(N+1)T}$,
$(\mathbf{k},\mathbf{k}^{(1)},\ldots,\mathbf{k}^{(N)})\in[M]^{(N+1)T}$,
and $\zeta\in\mathsf{Z}^{MT}$,
\[
\pi({\rm d}z)\Phi\big(z,{\rm d}(\mathbf{k},\zeta)\big)\prod_{i=1}^{N}\check{\Phi}(\zeta,{\rm d}(\mathbf{k}^{(i)},u^{(i)}))=\bar{\psi}\big({\rm d}\zeta\big)\check{\Phi}(\zeta,{\rm d}(\mathbf{k},z))\prod_{i=1}^{N}\check{\Phi}(\zeta,{\rm d}(\mathbf{k}^{(i)},u^{(i)}))
\]
which establishes that $z,u^{(1)},\ldots,u^{(N)}$ are exchangeable
under the joint distribution 
\[
\pi({\rm d}z)\int_{\zeta}\Phi\big(z,{\rm d}\zeta\big)\prod_{i=1}^{N}\check{\Phi}(\zeta,{\rm d}u^{(i)})=\bar{\psi}\big({\rm d}\zeta\big)\int_{\zeta}\check{\Phi}(\zeta,{\rm d}z)\prod_{i=1}^{N}\check{\Phi}(\zeta,{\rm d}u^{(i)}).
\]
\end{cor}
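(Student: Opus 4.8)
The plan is to get the first identity from a single application of Lemma~\ref{lem: cSMC semi-reversibility} and then to read off exchangeability directly from the symmetric structure of the right-hand measure. First I would note that in the left-hand side
\[
\pi({\rm d}z)\Phi\big(z,{\rm d}(\mathbf{k},\zeta)\big)\prod_{i=1}^{N}\check{\Phi}(\zeta,{\rm d}(\mathbf{k}^{(i)},u^{(i)}))
\]
the product $\prod_{i=1}^{N}\check{\Phi}(\zeta,{\rm d}(\mathbf{k}^{(i)},u^{(i)}))$ depends on $\zeta$ only (besides the $(\mathbf{k}^{(i)},u^{(i)})$), so it may be carried along untouched while I apply Lemma~\ref{lem: cSMC semi-reversibility} to the remaining factor. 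That lemma states exactly $\pi({\rm d}z)\Phi\big(z,{\rm d}(\mathbf{k},\zeta)\big)=\bar\psi({\rm d}\zeta)\check\Phi(\zeta,{\rm d}(\mathbf{k},z))$, and substituting this produces the first displayed equation of the corollary.

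Next I would pass to the announced joint law by summing both sides over all the index vectors $\mathbf{k},\mathbf{k}^{(1)},\ldots,\mathbf{k}^{(N)}\in[M]^{T}$. Using the marginal definitions $\Phi(z,{\rm d}\zeta)=\sum_{\mathbf{k}}\Phi(z,{\rm d}(\mathbf{k},\zeta))$ and $\check\Phi(\zeta,{\rm d}z)=\sum_{\mathbf{k}}\check\Phi(\zeta,{\rm d}(\mathbf{k},z))$ (equivalently, by invoking Corollary~\ref{cor: cSMC semi-reversibility} in place of Lemma~\ref{lem: cSMC semi-reversibility} at the previous step) this collapses to
\[
\pi({\rm d}z)\,\Phi(z,{\rm d}\zeta)\prod_{i=1}^{N}\check\Phi(\zeta,{\rm d}u^{(i)})=\bar\psi({\rm d}\zeta)\,\check\Phi(\zeta,{\rm d}z)\prod_{i=1}^{N}\check\Phi(\zeta,{\rm d}u^{(i)}),
\]
which is the second displayed equation (integrating $\zeta$ out on both sides).

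Finally I would conclude exchangeability from symmetry. The key observation is that the measure $\bar\psi({\rm d}\zeta)\check\Phi(\zeta,{\rm d}z)\prod_{i=1}^{N}\check\Phi(\zeta,{\rm d}u^{(i)})$ is manifestly invariant under any permutation of the $N+1$ arguments $(z,u^{(1)},\ldots,u^{(N)})$, since conditional on $\zeta$ these are $N+1$ i.i.d.\ draws from the single backward-sampling kernel $\check\Phi(\zeta,\cdot)$. Permuting the arguments and then integrating over $\zeta$ gives the same result as integrating first, so the mixture $\int_{\zeta}\bar\psi({\rm d}\zeta)\check\Phi(\zeta,{\rm d}z)\prod_{i=1}^{N}\check\Phi(\zeta,{\rm d}u^{(i)})$ is symmetric in $(z,u^{(1)},\ldots,u^{(N)})$, which is precisely the asserted exchangeability. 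There is no genuine obstacle here: the only care required is the bookkeeping of the index vectors during marginalisation, together with the observation---and this is the whole content of Lemma~\ref{lem: cSMC semi-reversibility}---that once the asymmetry between the distinguished ``reference'' path $z$ (entering through $\Phi$) and the backward-sampled paths $u^{(i)}$ (entering through $\check\Phi$) has been removed, the reference path $z$ enters the conditional law $\check\Phi(\zeta,\cdot)$ in exactly the same way as each $u^{(i)}$.
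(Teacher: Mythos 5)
Your proof is correct and follows exactly the route the paper intends: the corollary is stated there as an immediate consequence of Lemma~\ref{lem: cSMC semi-reversibility}, obtained precisely by multiplying both sides of the lemma by the common factor $\prod_{i=1}^{N}\check{\Phi}(\zeta,{\rm d}(\mathbf{k}^{(i)},u^{(i)}))$, marginalising the index vectors, and reading off exchangeability from the conditionally i.i.d.\ structure given $\zeta$. Your write-up simply makes these ``immediate'' steps explicit, with no gap.
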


\subsection{Proof of unbiasedness for the acceptance/likelihood ratio estimator
of Algorithm \ref{alg: MHAAR for SSM with cSMC - Rao-Blackwellised backward sampling}
\label{subsec: Proof of unbiasedness for the Rao-Blackwellised estimator}}

From here on, we have $\theta$ back in the notation. Let $F:\mathsf{Z}^{T}\rightarrow\mathbb{R}$
be a real-valued function and given $\zeta\in\mathsf{Z}^{TM}$, denote
$\check{\Phi}_{\theta}(\zeta,F)$ its conditional expectation with
respect to the backward sampling distribution $\check{\Phi}_{\theta}(\zeta,\cdot)$,
\[
\check{\Phi}_{\theta}(\zeta,F)=\sum_{\mathbf{k}\in[M]^{T}}F(\zeta^{(k)})\phi_{\theta}(\mathbf{k}|\zeta).
\]
which is a function of $\zeta$. It is a result from \citet[Theorem 5.2]{del2010backward}
that for any $F:\mathsf{Z}^{T}\rightarrow\mathbb{R}$, the expectation
of $\check{\Phi}_{\theta}(\zeta,F)$, scaled by $\hat{C}_{\theta}(\zeta)/C_{\theta}$,
with respect to the law of SMC, $\psi_{\theta}$ is $\pi_{\theta}(F)$:
\[
\psi_{\theta}\left(\frac{\hat{C}_{\theta}(\zeta)}{C_{\theta}}\check{\Phi}_{\theta}(\zeta,F)\right)=\pi_{\theta}(F).
\]
The crucial point here is that we can rewrite the identity above in
terms of $\bar{\psi}_{\theta}$ as
\begin{equation}
\bar{\psi}_{\theta}\left(\check{\Phi}_{\theta}(\zeta,F)\right)=\pi_{\theta}(F).\label{eq: SMC Del Morals result rewritten in terms of cSMC}
\end{equation}
owing to \eqref{eq: SMC to cSMC probability law}. Now, we have the
necessary intermediate results to prove Theorem \ref{thm: SMC unbiased estimator of acceptance ratio}.
\begin{proof}
(Theorem \ref{thm: SMC unbiased estimator of acceptance ratio}) Let
$\gamma_{\theta}(z):=p_{\theta}(z,y)$ be the unnormalised density
for $\pi_{\theta}(z)$ so that $\gamma_{\theta}(z)=\pi(\theta)\ell_{\theta}(y)$.
We can write the estimator in \eqref{eq: SMC acceptance ratio estimator all paths}
as
\[
\mathring{r}_{z,\zeta}(\theta,\theta';\tilde{\theta})=\frac{q(\theta',\theta)}{q(\theta,\theta')}\frac{\eta(\theta')}{\eta(\theta)}\frac{\gamma_{\tilde{\theta}}(z)}{\gamma_{\theta}(z)}\check{\Phi}_{\tilde{\theta}}\left(\zeta,\frac{\gamma_{\theta'}}{\gamma_{\tilde{\theta}}}(\cdot)\right).
\]
The expectation of $\mathring{r}_{z,\zeta}(\theta,\theta';\tilde{\theta})$
with respect to the law of the mechanism described in Theorem \ref{thm: SMC unbiased estimator of acceptance ratio}
that generates $\mathring{r}_{z,\zeta}(\theta,\theta;\tilde{\theta})$
is 
\[
\int\pi_{\theta}(\mathrm{d}z)\Phi_{\tilde{\theta}}(z,\mathrm{d}\zeta)\mathring{r}_{z,\zeta}(\theta,\theta';\tilde{\theta}).
\]
To see that this is indeed $r(\theta,\theta')$, firstly observe that
\begin{equation}
\pi_{\theta}(\mathrm{d}z)\frac{\gamma_{\tilde{\theta}}(z)}{\gamma_{\theta}(z)}=\frac{\gamma_{\tilde{\theta}}(z)}{\gamma_{\theta}(z)}\frac{\pi_{\theta}(z)}{\pi_{\tilde{\theta}}(z)}\pi_{\tilde{\theta}}(\mathrm{d}z)=\frac{\ell_{\tilde{\theta}}(y)}{\ell_{\theta}(y)}\pi_{\tilde{\theta}}(\mathrm{d}z).\label{eq: SMC unbiasedness proof first step}
\end{equation}
Secondly, using Corollary \ref{cor: cSMC semi-reversibility}, we
have
\begin{equation}
\pi_{\tilde{\theta}}(\mathrm{d}z)\Phi_{\tilde{\theta}}(z,\mathrm{d}\zeta)\check{\Phi}_{\tilde{\theta}}\left(\zeta,\frac{\gamma_{\theta'}}{\gamma_{\tilde{\theta}}}(\cdot)\right)=\bar{\psi}_{\tilde{\theta}}(\mathrm{d}\zeta)\check{\Phi}_{\tilde{\theta}}\left(\zeta,\frac{\gamma_{\theta'}}{\gamma_{\tilde{\theta}}}(\cdot)\right)\check{\psi}(\zeta,\mathrm{d}z).\label{eq: SMC unbiasedness proof second step}
\end{equation}
Therefore, we have
\begin{align*}
\int\pi_{\theta}(\mathrm{d}z)\Phi_{\tilde{\theta}}(z,\mathrm{d}\zeta)\mathring{r}_{z,\zeta}(\theta,\theta';\tilde{\theta}) & =\frac{q(\theta',\theta)}{q(\theta,\theta')}\frac{\eta(\theta')}{\eta(\theta)}\frac{\ell_{\tilde{\theta}}(y)}{\ell_{\theta}(y)}\int\bar{\psi}_{\tilde{\theta}}(\mathrm{d}\zeta)\check{\Phi}_{\tilde{\theta}}\left(\zeta,\frac{\gamma_{\theta'}}{\gamma_{\tilde{\theta}}}(\cdot)\right)\check{\psi}(\zeta,\mathrm{d}z)\\
 & =\frac{q(\theta',\theta)}{q(\theta,\theta')}\frac{\eta(\theta')}{\eta(\theta)}\frac{\ell_{\tilde{\theta}}(y)}{\ell_{\theta}(y)}\pi_{\tilde{\theta}}\left(\frac{\gamma_{\theta'}}{\gamma_{\tilde{\theta}}}(\cdot)\right)\\
 & =r(\theta,\theta')
\end{align*}
where the first line is due to \eqref{eq: SMC unbiasedness proof first step}
and \eqref{eq: SMC unbiasedness proof second step}, the second line
follows from \eqref{eq: SMC Del Morals result rewritten in terms of cSMC}
and the last line is due to the identity $\pi_{\tilde{\theta}}\left(\gamma_{\theta'}/\gamma_{\tilde{\theta}}\right)=\ell_{\theta'}(y)/\ell_{\tilde{\theta}}(y)$.
\end{proof}

\subsection{Proof of reversibility for Algorithms \ref{alg: MHAAR for SSM with cSMC - Rao-Blackwellised backward sampling}
and \ref{alg: MHAAR for SSM with cSMC - multiple paths from backward sampling}
\label{subsec: Proof of reversibility for Algorithms}}

First we show the reversibility of Algorithm \ref{alg: MHAAR for SSM with cSMC - Rao-Blackwellised backward sampling}
that uses the Rao-Blackwellised estimator of the acceptance ratio
\begin{thm}
The transition probability of Algorithm \ref{alg: MHAAR for SSM with cSMC - Rao-Blackwellised backward sampling}
satisfies the detailed balance with respect to $\pi(\mathrm{d}(\theta,z))$.
\end{thm}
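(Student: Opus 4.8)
The plan is to realise the kernel of Algorithm~\ref{alg: MHAAR for SSM with cSMC - Rao-Blackwellised backward sampling} as a special case of the asymmetric Metropolis--Hastings kernel $\breve{P}$ of Theorem~\ref{thm:asymmetricMH-1}, with state $x=(\theta,z)$, proposal $y=(\theta',z')$, and the two branches $v\le 1/2$ and $v>1/2$ supplying the pair $(Q_1,Q_2)$. The auxiliary variable shared by the two branches is taken to be $u=(\zeta,\mathbf{c},\mathbf{k})$, where $\zeta$ is the full particle system generated by the cSMC, $\mathbf{c}\in[M]^T$ is the index at which the conditioning path is stored and $\mathbf{k}\in[M]^T$ is the backward-sampling index selecting the proposed trajectory. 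Using the extended cSMC kernel $\Phi_{\tilde\theta}(z,\mathrm{d}(\mathbf{c},\zeta))$ of Lemma~\ref{lem: cSMC semi-reversibility}, the forward branch reads $Q_1(x,\mathrm{d}(y,u))=q(\theta,\mathrm{d}\theta')\Phi_{\tilde\theta}(z,\mathrm{d}(\mathbf{c},\zeta))\,\phi_{\tilde\theta}(\mathbf{k}\mid\zeta)\,\mathring{r}_{z,\zeta^{(\mathbf{k})}}(\theta,\theta';\tilde\theta)\,\mathring{r}_{z,\zeta}(\theta,\theta';\tilde\theta)^{-1}\,\delta_{\zeta^{(\mathbf{k})}}(\mathrm{d}z')$ and the backward branch reads $Q_2(x,\mathrm{d}(y,u))=q(\theta,\mathrm{d}\theta')\Phi_{\tilde\theta}(z,\mathrm{d}(\mathbf{c},\zeta))\,\phi_{\tilde\theta}(\mathbf{k}\mid\zeta)\,\delta_{\zeta^{(\mathbf{k})}}(\mathrm{d}z')$. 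Two elementary facts are used here: the normaliser of the reweighted choice of $\mathbf{k}$ is exactly $\mathring{r}_{z,\zeta}(\theta,\theta';\tilde\theta)$ by \eqref{eq: SMC acceptance ratio estimator all paths}, and $\tilde\theta=\tilde\theta_1(\theta,\theta')=\tilde\theta_2(\theta',\theta)$, so that the same intermediate parameter governs the cSMC whether the pair is visited as $(\theta,\theta')$ in $Q_1$ or as $(\theta',\theta)$ in the reverse $Q_2$.

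By Theorem~\ref{thm:asymmetricMH-1} it then suffices to verify the single identity that the Radon--Nikodym derivative of $\pi(\mathrm{d}y)Q_2(y,\mathrm{d}(x,u))$ with respect to $\pi(\mathrm{d}x)Q_1(x,\mathrm{d}(y,u))$ equals $\mathring{r}_{z,\zeta}(\theta,\theta';\tilde\theta)$, since the relation $r_{2,u}(x,y)=1/r_{1,u}(y,x)$ holds automatically in that framework and, through the matching condition on $\tilde\theta_1,\tilde\theta_2$, then coincides with the backward acceptance probability $\min\{1,1/\mathring{r}_{z',\zeta}(\theta',\theta;\tilde\theta)\}$ of the algorithm. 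Equivalently I would show $\pi(\mathrm{d}y)Q_2(y,\mathrm{d}(x,u))=\mathring{r}_{z,\zeta}(\theta,\theta';\tilde\theta)\,\pi(\mathrm{d}x)Q_1(x,\mathrm{d}(y,u))$, where the prefactor cancels the normaliser inside $Q_1$ and leaves the single-path weight $\mathring{r}_{z,\zeta^{(\mathbf{k})}}(\theta,\theta';\tilde\theta)$. Expanding this weight through \eqref{eq: AIS acceptance ratio for SSM} isolates the factor $\gamma_{\tilde\theta}(z)/\gamma_{\theta}(z)$, which combines with $\pi_\theta(\mathrm{d}z)$ via the change of measure \eqref{eq: SMC unbiasedness proof first step} to replace $\pi_\theta(\mathrm{d}z)$ by $\pi_{\tilde\theta}(\mathrm{d}z)$; the surviving prior and proposal ratios convert $\pi(\mathrm{d}\theta)q(\theta,\mathrm{d}\theta')$ into $\pi(\mathrm{d}\theta')q(\theta',\mathrm{d}\theta)$ up to likelihood factors that cancel.

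With both sides now expressed at the common parameter $\tilde\theta$, the plan is to apply the extended semi-reversibility identity $\pi_{\tilde\theta}(\mathrm{d}z)\Phi_{\tilde\theta}(z,\mathrm{d}(\mathbf{c},\zeta))=\bar\psi_{\tilde\theta}(\mathrm{d}\zeta)\phi_{\tilde\theta}(\mathbf{c}\mid\zeta)\delta_{\zeta^{(\mathbf{c})}}(\mathrm{d}z)$ of Lemma~\ref{lem: cSMC semi-reversibility} to the denominator and, in the same way but with $z'$ and index $\mathbf{k}$, to the numerator $\pi_{\tilde\theta}(\mathrm{d}z')\Phi_{\tilde\theta}(z',\mathrm{d}(\mathbf{k},\zeta))$. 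After these substitutions both measures reduce to the common reference $Z^{-1}\eta(\theta')q(\theta',\theta)\ell_{\tilde\theta}(y)\,\mathrm{d}\theta\,\mathrm{d}\theta'\,\bar\psi_{\tilde\theta}(\mathrm{d}\zeta)\phi_{\tilde\theta}(\mathbf{c}\mid\zeta)\phi_{\tilde\theta}(\mathbf{k}\mid\zeta)\delta_{\zeta^{(\mathbf{c})}}(\mathrm{d}z)\delta_{\zeta^{(\mathbf{k})}}(\mathrm{d}z')$, the only residual discrepancy being a factor $\gamma_{\theta'}(\zeta^{(\mathbf{k})})/\gamma_{\tilde\theta}(\zeta^{(\mathbf{k})})$ on one side against $\gamma_{\theta'}(z')/\gamma_{\tilde\theta}(z')$ on the other; these coincide on the support of $\delta_{\zeta^{(\mathbf{k})}}(\mathrm{d}z')$. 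This establishes the identity and hence detailed balance.

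The delicate step, and the one I expect to be the main obstacle, is the bookkeeping of the two particle indices together with the fact that time reversal exchanges their roles: the conditioning index $\mathbf{c}$ of the forward cSMC, which pins $z=\zeta^{(\mathbf{c})}$, plays the part of the backward-sampling index in the reverse move, whereas the selection index $\mathbf{k}$, which pins $z'=\zeta^{(\mathbf{k})}$, becomes the conditioning index once the cSMC is instead run from $z'$. It is precisely this symmetric pinning of both $z$ and $z'$ that lets Lemma~\ref{lem: cSMC semi-reversibility} act on the numerator and the denominator simultaneously; working only with the marginal Corollary~\ref{cor: cSMC semi-reversibility} would leave an unmatched $\delta$-factor and the identity would appear to fail. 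As a by-product, integrating the verified equality over $u$ reproduces the unbiasedness statement of Theorem~\ref{thm: SMC unbiased estimator of acceptance ratio} and the alternative estimator of Corollary~\ref{cor: SMC unbiased estimator of acceptance ratio}.
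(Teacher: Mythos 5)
Your proof is correct and follows essentially the same route as the paper: you embed the algorithm in the asymmetric framework of Theorem \ref{thm:asymmetricMH-1} with the same pair of kernels (auxiliary variable given by the particle system together with the conditioning and backward-sampling indices, whose roles swap under time reversal), the same matching condition $\tilde{\theta}_{1}(\theta,\theta')=\tilde{\theta}_{2}(\theta',\theta)$, and the same reduction of the Radon--Nikodym computation to the cSMC semi-reversibility of Lemma \ref{lem: cSMC semi-reversibility}. The only cosmetic difference is that you apply Lemma \ref{lem: cSMC semi-reversibility} separately to numerator and denominator and compare against a common reference measure, whereas the paper packages that same double application as the exchangeability identity of Corollary \ref{cor:exchangeability-SSM}; the content is identical.
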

\begin{proof}
Let $u=(\mathbf{k},\zeta,\mathbf{k}')\in[M]^{T}\times\mathrm{Z}^{TM}\times[M]^{T}$.
The proposal kernels that correspond to the moves of Algorithm \ref{alg: MHAAR for SSM with cSMC - Rao-Blackwellised backward sampling}
are
\begin{align*}
Q_{1}^{M}(x,\mathrm{d}(y,u)) & =q(\theta,\mathrm{d}\theta')\Phi_{\tilde{\theta}_{1}(\theta,\theta')}(z,\mathrm{d}(\mathbf{k},\zeta))\frac{\phi_{\tilde{\theta}_{1}(\theta,\theta')}(\mathbf{k}'|\zeta)\mathring{r}_{z,\zeta^{(\mathbf{k}')}}(\theta,\theta';\tilde{\theta}_{1}(\theta,\theta'))}{\sum_{\mathbf{l}\in[M]^{T}}\phi_{\tilde{\theta}_{1}(\theta,\theta')}(\mathbf{l}|\zeta)\mathring{r}_{z,\zeta^{(\mathbf{l})}}(\theta,\theta';\tilde{\theta}_{1}(\theta,\theta'))}\delta_{\zeta^{(\mathbf{k}')}}(\mathrm{d}z'),\\
Q_{2}^{M}(x,\mathrm{d}(y,u)) & =q(\theta,\mathrm{d}\theta')\Phi_{\tilde{\theta}_{2}(\theta,\theta')}(z,\mathrm{d}(\mathbf{k},\zeta))\check{\Phi}_{\tilde{\theta}_{2}(\theta,\theta')}(\zeta,{\rm d}(\mathbf{k}',z')).
\end{align*}
First, observe that, for any $z,z'\in\mathsf{Z}^{T}$, and $\theta,\theta',\tilde{\theta}\in\Theta$,
equation \eqref{eq: AIS acceptance ratio for SSM} can be rewritten
as 
\begin{align}
\mathring{r}_{z,z'}(\theta,\theta';\tilde{\theta}) & =\frac{q(\theta',\theta)}{q(\theta,\theta')}\frac{\eta(\theta')}{\eta(\theta)}\frac{\pi(\mathrm{d}(\theta',z'))}{\pi(\mathrm{d}(\tilde{\theta},z))}\frac{\pi(\mathrm{d}(\tilde{\theta},z'))}{\pi(\mathrm{d}(\theta,z))}\nonumber \\
 & =r(\theta,\theta')\frac{\pi_{\theta'}(\mathrm{d}z')}{\pi_{\tilde{\theta}}(\mathrm{d}z')}\frac{\pi_{\tilde{\theta}}(\mathrm{d}z)}{\pi_{\theta}(\mathrm{d}z)}.\label{eq: acceptance ratio modified}
\end{align}
From Corollary \ref{cor:exchangeability-SSM}, for $(\theta,z)\in\mathrm{X}$,
$u=(\mathbf{k},\zeta,\mathbf{k}')\in[M]^{T}\times\mathsf{Z}^{TM}\times[M]^{T}$,
and $z'\in\mathsf{Z}{}^{T}$ we have
\begin{align}
\pi_{\theta}(z)\Phi_{\theta}(z,\mathrm{d}(\mathbf{k},\zeta))\check{\Phi}_{\theta}(\zeta,{\rm d}(\mathbf{k}',z')) & =\pi_{\theta}(\mathrm{d}z')\Phi_{\theta}(z',\mathrm{d}(\mathbf{k}',\zeta))\check{\Phi}_{\theta}(\zeta,{\rm d}(\mathbf{k},z)).\label{eq: result from corollary for exchangeability}
\end{align}
Using those relations, and letting $\tilde{\theta}=\tilde{\theta}_{1}(\theta,\theta')=\tilde{\theta}_{2}(\theta',\theta)$,
we arrive the Radon-Nikodym derivative
\begin{align}
\frac{\pi(\mathrm{d}\theta')\pi_{\theta'}(\mathrm{d}z')Q_{2}^{M}(y,\mathrm{d}(x,u))}{\pi(\mathrm{d}\theta)\pi_{\theta}(\mathrm{d}z)Q_{1}^{M}(x,\mathrm{d}(y,u))} & =r(\theta,\theta')\frac{\pi_{\theta'}(\mathrm{d}z')\Phi_{\tilde{\theta}}(z',\mathrm{d}(\mathbf{k}',\zeta))\check{\Phi}_{\tilde{\theta}}(\zeta,\mathrm{d}(\mathbf{k},z))}{\pi_{\theta}(\mathrm{d}z)\Phi_{\tilde{\theta}}(z,\mathrm{d}(\mathbf{k},\zeta))\frac{\phi_{\tilde{\theta}}(\mathbf{k}'|\zeta)\mathring{r}_{z,\zeta^{(\mathbf{k}')}}(\theta,\theta';\tilde{\theta})}{\sum_{\mathbf{l}\in[M]^{T}}\phi_{\tilde{\theta}}(\mathbf{l}|\zeta)\mathring{r}_{z,\zeta^{(\mathbf{l})}}(\theta,\theta';\tilde{\theta})}\delta_{\zeta^{(\mathbf{k}')}}(\mathrm{d}z')}\nonumber \\
 & =\frac{r(\theta,\theta')}{\mathring{r}_{z,z'}(\theta,\theta',\tilde{\theta})}\frac{\pi_{\theta'}(\mathrm{d}z')}{\pi_{\tilde{\theta}}(\mathrm{d}z')}\frac{\pi_{\tilde{\theta}}(\mathrm{d}z)\Phi_{\tilde{\theta}}(z,\mathrm{d}(\mathbf{k},\zeta))\check{\Phi}_{\tilde{\theta}}(\zeta,\mathrm{d}(\mathbf{k}',z'))}{\pi_{\theta}(\mathrm{d}z)\Phi_{\tilde{\theta}}(z,\mathrm{d}(\mathbf{k},\zeta))\check{\Phi}_{\tilde{\theta}}(\zeta,\mathrm{d}(\mathbf{k}',z'))}\mathring{r}_{z,\zeta}(\theta,\theta';\tilde{\theta})\nonumber \\
 & =\frac{r(\theta,\theta')}{\mathring{r}_{z,z'}(\theta,\theta',\tilde{\theta})}\frac{\pi_{\theta'}(\mathrm{d}z')\pi_{\tilde{\theta}}(\mathrm{d}z)}{\pi_{\tilde{\theta}}(\mathrm{d}z')\pi_{\theta}(\mathrm{d}z)}\frac{\pi_{\theta}(\mathrm{d}z)\Phi_{\tilde{\theta}}(z,\mathrm{d}(\mathbf{k},\zeta))\check{\Phi}_{\tilde{\theta}}(\zeta,\mathrm{d}(\mathbf{k}',z'))}{\pi_{\theta}(\mathrm{d}z)\Phi_{\tilde{\theta}}(z,\mathrm{d}(\mathbf{k},\zeta))\check{\Phi}_{\tilde{\theta}}(\zeta,\mathrm{d}(\mathbf{k}',z'))}\mathring{r}_{z,\zeta}(\theta,\theta';\tilde{\theta})\nonumber \\
 & =\mathring{r}_{z,\zeta}(\theta,\theta';\tilde{\theta}).\label{eq: RN derivative aMCMC for HMM all paths}
\end{align}
\end{proof}
The analysis in the proof above not only bears an alternative proof
of Theorem \ref{thm: SMC unbiased estimator of acceptance ratio}
on the unbiasedness of \eqref{eq: SMC acceptance ratio estimator all paths}
but also implicitly proves Corollary \ref{cor: SMC unbiased estimator of acceptance ratio};
as we show below.
\begin{proof}
(Theorem \ref{thm: SMC unbiased estimator of acceptance ratio}) Equation
\eqref{eq: RN derivative aMCMC for HMM all paths} can be modified
to obtain
\[
\pi_{\theta}(\mathrm{d}z)\Phi_{\tilde{\theta}}(z,\mathrm{d}(\mathbf{k},\zeta))\frac{\phi_{\tilde{\theta}}(\mathbf{k}'|\zeta)\mathring{r}_{z,\zeta^{(\mathbf{k}')}}(\theta,\theta';\tilde{\theta})}{\sum_{\mathbf{l}\in[M]^{T}}\phi_{\tilde{\theta}}(\mathbf{l}|\zeta)\mathring{r}_{z,\zeta^{(\mathbf{l})}}(\theta,\theta';\tilde{\theta})}\delta_{\zeta^{(\mathbf{k}')}}(\mathrm{d}z')\frac{\mathring{r}_{z,\zeta}(\theta,\theta';\tilde{\theta})}{r(\theta,\theta')}=\pi_{\theta'}(\mathrm{d}z')\Phi_{\tilde{\theta}}(z',\mathrm{d}(\mathbf{k}',\zeta))\check{\Phi}_{\tilde{\theta}}(\zeta,{\rm d}(\mathbf{k},z)).
\]
Integrating both sides with respect to all the variables except $\theta$
and $\theta'$ leads to
\[
\int\pi_{\theta}(\mathrm{d}z)\Phi_{\tilde{\theta}}(z,\mathrm{d}\zeta)\mathring{r}_{z,\zeta}(\theta,\theta';\tilde{\theta})=r(\theta,\theta')
\]
upon noticing that $\mathring{r}_{z,\zeta}(\theta,\theta';\tilde{\theta})$
does not depend on $\mathbf{k}'$ or $z'$ and the right hand side
is a probability distribution. Noting that $\pi_{\theta}(\mathrm{d}z)\Phi_{\tilde{\theta}}(z,\mathrm{d}\zeta)$
is exactly the distribution of the mechanism described in Theorem
\ref{thm: SMC unbiased estimator of acceptance ratio} that generates
$\mathring{r}_{z,\zeta}(\theta,\theta';\tilde{\theta})$, we prove
Theorem \ref{thm: SMC unbiased estimator of acceptance ratio}. 
\end{proof}
\begin{proof}
(Corollary \ref{cor: SMC unbiased estimator of acceptance ratio})
Similarly to the previous proof, we can write
\begin{multline*}
\pi_{\theta}(\mathrm{d}z)\Phi_{\tilde{\theta}}(z,\mathrm{d}(\mathbf{k},\zeta))\check{\Phi}_{\tilde{\theta}}(\zeta,{\rm d}(\mathbf{k}',z'))\frac{r(\theta',\theta)}{\mathring{r}_{z',\zeta}(\theta',\theta;\tilde{\theta})}\\
=\pi_{\theta}(\mathrm{d}z')\Phi_{\tilde{\theta}}(z',\mathrm{d}(\mathbf{k}',\zeta))\frac{\phi_{\tilde{\theta}}(\mathbf{k}|\zeta)\mathring{r}_{z',\zeta^{(\mathbf{k})}}(\theta',\theta;\tilde{\theta})}{\sum_{\mathbf{l}\in[M]^{T}}\phi_{\tilde{\theta}}(\mathbf{l}|\zeta)\mathring{r}_{z',\zeta^{(\mathbf{l})}}(\theta',\theta;\tilde{\theta})}\delta_{\zeta^{(\mathbf{k}')}}(\mathrm{d}z').
\end{multline*}
Again, integrating both sides with respect to all the variables except
$\theta$ and $\theta'$ leads to
\[
\int\pi_{\theta}(\mathrm{d}z)\Phi_{\tilde{\theta}}(z,\mathrm{d}\zeta)\check{\Phi}_{\tilde{\theta}}(\zeta,{\rm d}z')(1/\mathring{r}_{z',\zeta}(\theta',\theta;\tilde{\theta}))=r(\theta,\theta').
\]
Since $1/\mathring{r}_{z',\zeta}(\theta',\theta;\tilde{\theta})$
is the estimator in question in Corollary \ref{cor: SMC unbiased estimator of acceptance ratio}
and $\pi_{\theta}(\mathrm{d}z)\Phi_{\tilde{\theta}}(z,\mathrm{d}\zeta)\check{\Phi}_{\tilde{\theta}}(\zeta,{\rm d}z')$
is exactly the distribution of the described mechanism that generates
it, we prove Corollary \ref{cor: SMC unbiased estimator of acceptance ratio}.
\end{proof}
Next, we show the reversibility of Algorithm \ref{alg: MHAAR for SSM with cSMC - multiple paths from backward sampling}
that uses a subsampled version of the Rao-Blackwellised acceptance
ratio estimator.
\begin{thm}
The transition probability of Algorithm \ref{alg: MHAAR for SSM with cSMC - multiple paths from backward sampling}
satisfies the detailed balance with respect to $\pi(\mathrm{d}x)$.
\end{thm}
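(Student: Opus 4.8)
The plan is to exhibit the transition probability of Algorithm~\ref{alg: MHAAR for SSM with cSMC - multiple paths from backward sampling} as an instance of the asymmetric Metropolis--Hastings framework of Theorem~\ref{thm:asymmetricMH-1}, with $x=(\theta,z)$, $y=(\theta',z')$ and auxiliary variable $\mathfrak{u}=(\zeta,u^{(1:N)})$ together with the index $k$. Writing $\tilde{\theta}=\tilde{\theta}_1(\theta,\theta')=\tilde{\theta}_2(\theta',\theta)$, the two proposal mechanisms corresponding to $v\le 1/2$ and $v>1/2$ are
\begin{align*}
Q_1^N\big(x,{\rm d}(y,\mathfrak{u},k)\big) & =q(\theta,{\rm d}\theta')\Phi_{\tilde{\theta}}(z,{\rm d}\zeta)\Big[\prod_{i=1}^N\check{\Phi}_{\tilde{\theta}}(\zeta,{\rm d}u^{(i)})\Big]\frac{\mathring{r}_{z,u^{(k)}}(\theta,\theta';\tilde{\theta})}{\sum_{i=1}^N\mathring{r}_{z,u^{(i)}}(\theta,\theta';\tilde{\theta})}\delta_{u^{(k)}}({\rm d}z'),\\
Q_2^N\big(x,{\rm d}(y,\mathfrak{u},k)\big) & =q(\theta,{\rm d}\theta')\frac1N\Phi_{\tilde{\theta}}(z,{\rm d}\zeta)\delta_{z}({\rm d}u^{(k)})\Big[\prod_{i\neq k}\check{\Phi}_{\tilde{\theta}}(\zeta,{\rm d}u^{(i)})\Big]\check{\Phi}_{\tilde{\theta}}(\zeta,{\rm d}z').
\end{align*}
It then suffices to show that the Radon--Nikodym derivative $\pi({\rm d}y)Q_2^N(y,{\rm d}(x,\mathfrak{u},k))\big/\pi({\rm d}x)Q_1^N(x,{\rm d}(y,\mathfrak{u},k))$ equals $\mathring{r}^N_{z,\mathfrak{u}}(\theta,\theta';\tilde{\theta})=N^{-1}\sum_i\mathring{r}_{z,u^{(i)}}(\theta,\theta';\tilde{\theta})$, after which detailed balance follows directly from Theorem~\ref{thm:asymmetricMH-1}.

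First I would evaluate this derivative along the same lines as the proof for Algorithm~\ref{alg: MHAAR for SSM with cSMC - Rao-Blackwellised backward sampling}. Substituting $y=(\theta',z')$ into $Q_2^N$ makes $z'$ the conditioning path (so $\delta_{z'}({\rm d}u^{(k)})$ forces $u^{(k)}=z'$) and $z$ the freshly drawn path $\check{\Phi}_{\tilde{\theta}}(\zeta,{\rm d}z)$, while the Dirac mass $\delta_{u^{(k)}}({\rm d}z')$ in $Q_1^N$ also forces $z'=u^{(k)}$; the two measures are therefore compared on the common support $\{u^{(k)}=z'\}$. On this support the $N-1$ factors $\prod_{i\neq k}\check{\Phi}_{\tilde{\theta}}(\zeta,{\rm d}u^{(i)})$ are identical functions of the shared variable $\zeta$ and cancel, and $\prod_{i=1}^N\check{\Phi}_{\tilde{\theta}}(\zeta,{\rm d}u^{(i)})=\check{\Phi}_{\tilde{\theta}}(\zeta,{\rm d}z')\prod_{i\neq k}\check{\Phi}_{\tilde{\theta}}(\zeta,{\rm d}u^{(i)})$, so the leftover $\check{\Phi}_{\tilde{\theta}}(\zeta,{\rm d}z')$ combines with $\check{\Phi}_{\tilde{\theta}}(\zeta,{\rm d}z)$ from the numerator. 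What remains is a parameter factor $r(\theta,\theta')\,\pi_{\theta'}({\rm d}z')/\pi_{\theta}({\rm d}z)$ (using \eqref{eq: marginal MCMC acceptance probability} for the $\theta,\theta',q$ part), a cSMC factor $N^{-1}\big[\Phi_{\tilde{\theta}}(z',{\rm d}\zeta)\check{\Phi}_{\tilde{\theta}}(\zeta,{\rm d}z)\big]\big/\big[\Phi_{\tilde{\theta}}(z,{\rm d}\zeta)\check{\Phi}_{\tilde{\theta}}(\zeta,{\rm d}z')\big]$, and the selection weight, which contributes $\sum_i\mathring{r}_{z,u^{(i)}}/\mathring{r}_{z,z'}$.

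The decisive simplification is Corollary~\ref{cor: cSMC semi-reversibility}, applied at the parameter $\tilde{\theta}$: from $\pi_{\tilde{\theta}}({\rm d}z)\Phi_{\tilde{\theta}}(z,{\rm d}\zeta)=\bar{\psi}_{\tilde{\theta}}({\rm d}\zeta)\check{\Phi}_{\tilde{\theta}}(\zeta,{\rm d}z)$ and the analogue with $z'$, the common factor $\bar{\psi}_{\tilde{\theta}}({\rm d}\zeta)\check{\Phi}_{\tilde{\theta}}(\zeta,{\rm d}z)\check{\Phi}_{\tilde{\theta}}(\zeta,{\rm d}z')$ cancels and the cSMC factor collapses to $N^{-1}\,\pi_{\tilde{\theta}}({\rm d}z)/\pi_{\tilde{\theta}}({\rm d}z')$. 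Combining this with the parameter factor and invoking the rewriting \eqref{eq: acceptance ratio modified}, namely $\mathring{r}_{z,z'}(\theta,\theta';\tilde{\theta})=r(\theta,\theta')\,\pi_{\theta'}({\rm d}z')\pi_{\tilde{\theta}}({\rm d}z)/\big(\pi_{\tilde{\theta}}({\rm d}z')\pi_{\theta}({\rm d}z)\big)$, turns the product $r(\theta,\theta')\tfrac{\pi_{\theta'}({\rm d}z')}{\pi_{\theta}({\rm d}z)}\tfrac{\pi_{\tilde{\theta}}({\rm d}z)}{\pi_{\tilde{\theta}}({\rm d}z')}$ into exactly $\mathring{r}_{z,z'}(\theta,\theta';\tilde{\theta})$. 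This cancels the $\mathring{r}_{z,z'}$ in the denominator of the selection term and leaves $N^{-1}\sum_i\mathring{r}_{z,u^{(i)}}(\theta,\theta';\tilde{\theta})=\mathring{r}^N_{z,\mathfrak{u}}$, as required; note in particular that the result is independent of $k$, which is precisely what the selection weight was engineered to achieve.

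I expect the main obstacle to be purely bookkeeping: correctly identifying which trajectory plays the role of the conditioning path versus the proposed path in each mechanism once $y$ is substituted into $Q_2^N$, and verifying the supports match so that the above cancellations are legitimate. Once the kernels are written down as above, the structural identity of Corollary~\ref{cor: cSMC semi-reversibility} does all the work, with the conditional independence of the $u^{(i)}$ given $\zeta$ responsible for the cancellation of the off-$k$ factors; the exchangeability recorded in Corollary~\ref{cor:exchangeability-SSM} is not needed for detailed balance but would additionally justify the convex-ordering monotonicity in $N$ should one wish to transport the comparison results of Theorem~\ref{thm:theoreticaljustification} to this algorithm.
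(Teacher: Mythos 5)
Your proof is correct, but it takes a visibly different route from the paper's own argument, and the difference is worth recording. The paper first integrates the particle cloud out: it defines the marginal kernel $R_{\theta}(z,\mathrm{d}u^{(1:N)})=\int_{\zeta}\Phi_{\theta}(z,\mathrm{d}\zeta)\prod_{i=1}^{N}\check{\Phi}_{\theta}(\zeta,\mathrm{d}u^{(i)})$, writes the two proposal mechanisms on the reduced auxiliary space $(u^{(0)},\ldots,u^{(N)},k)$ with $u^{(0)}$ a copy of the conditioning path and no reference to $\zeta$, and then performs the swap of $z$ and $z'$ in one stroke via the exchangeability identity \eqref{eq: exchangeability of auxiliary variables}, which comes from Corollary \ref{cor:exchangeability-SSM}; the conclusion then follows from \eqref{eq: acceptance ratio modified} exactly as in your last step. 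You instead keep $\zeta$ inside the auxiliary variable and effect the same swap by two applications of Corollary \ref{cor: cSMC semi-reversibility} at the common parameter $\tilde{\theta}$, so that the symmetric factor $\bar{\psi}_{\tilde{\theta}}(\mathrm{d}\zeta)\check{\Phi}_{\tilde{\theta}}(\zeta,\mathrm{d}z)\check{\Phi}_{\tilde{\theta}}(\zeta,\mathrm{d}z')$ appears in both numerator and denominator and cancels; this is precisely the mechanism the paper uses for Algorithm \ref{alg: MHAAR for SSM with cSMC - Rao-Blackwellised backward sampling}, so your argument gives a unified treatment of both cSMC-based algorithms from the single identity of Corollary \ref{cor: cSMC semi-reversibility}, at the cost of carrying the larger auxiliary space $(\zeta,u^{(1:N)},k)$ through the computation. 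The two routes are mathematically equivalent—Corollary \ref{cor:exchangeability-SSM} is obtained from Corollary \ref{cor: cSMC semi-reversibility} by adjoining the backward draws and integrating over $\zeta$—but the paper's marginalised formulation has the side benefit of exhibiting Algorithm \ref{alg: MHAAR for SSM with cSMC - multiple paths from backward sampling} as an instance of the exchangeable-auxiliary-variable setting of Theorem \ref{thm:generalisationexchangeable}, which is the natural gateway to the convex-order and monotonicity-in-$N$ comparisons; your closing remark correctly identifies that exchangeability plays exactly this role and is not needed for detailed balance itself.
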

\begin{proof}
For any $\theta\in\Theta$ and $z\in\mathsf{Z}^{T},$ define the kernel
on ($\mathsf{Z}^{TN},\mathscr{Z}^{\otimes TN}$) for $N$ paths drawn
via backward sampling following cSMC at $\theta$ conditioned on $z$
\[
R_{\theta}(z,\mathrm{d}(u^{(1)},\ldots,u^{(N)}))=\int_{\zeta}\Phi_{\theta}\big(z,{\rm d}\zeta\big)\prod_{i=1}^{N}\check{\Phi}_{\theta}(\zeta,{\rm d}u^{(i)}).
\]
By the exchangeability result of Corollary \ref{cor:exchangeability-SSM},
it holds for any $0\leq k\leq N$ that
\[
\pi_{\theta}({\rm d}u^{(0)})R_{\theta}(u^{(0)},\mathrm{d}u^{(1:N)})=\pi_{\theta}({\rm d}u^{(k)})R_{\theta}(u^{(k)},\mathrm{d}u^{(-k)}),
\]
where $u^{(-k)}=(u^{(0)},\ldots,u^{(k-1)},u^{(k+1)},\ldots,u^{(N)})$,
and therefore
\begin{equation}
\pi_{\theta}({\rm d}z)\delta_{z}(\mathrm{d}u^{(0)})R_{\theta}(u^{(0)},{\rm d}u^{(1:N)})\delta_{u^{(k)}}(\mathrm{d}z')=\pi_{\theta}({\rm d}z')\delta_{z'}(\mathrm{d}u^{(k)})R_{\theta}(u^{(k)},{\rm d}u^{(-k)})\delta_{u^{(0)}}(\mathrm{d}z).\label{eq: exchangeability of auxiliary variables}
\end{equation}
Letting $\mathfrak{u}=(u^{(0)},\ldots,u^{(N)})$, the proposal kernels
that correspond to the moves of Algorithm \ref{alg: MHAAR for SSM with cSMC - multiple paths from backward sampling}
are
\begin{align*}
Q_{1}^{M,N}\big(x;{\rm d}(y,\mathfrak{u},k)\big) & =q(\theta,{\rm d}\theta')\delta_{z}(\mathrm{d}u^{(0)})R_{\tilde{\theta}_{1}(\theta,\theta')}(u^{(0)},\mathrm{d}u^{(1:N)})\frac{\mathring{r}_{z,u^{(k)}}(\theta,\theta';\tilde{\theta}_{1}(\theta,\theta'))}{\sum_{i=1}^{N}\mathring{r}_{z,u^{(i)}}(\theta,\theta';\tilde{\theta}_{1}(\theta,\theta'))}\delta_{u^{(k)}}({\rm d}z'),\\
Q_{2}^{M,N}\big(x;{\rm d}(y,\mathfrak{u},k)\big) & =q(\theta,{\rm d}\theta')\frac{1}{N}\delta_{z}(\mathrm{d}u^{(k)})R_{\tilde{\theta}_{2}(\theta',\theta)}(u^{(k)},\mathrm{d}u^{(-k)})\delta_{u^{(0)}}({\rm d}z).
\end{align*}
Now we use \eqref{eq: exchangeability of auxiliary variables}, letting
$\tilde{\theta}=\tilde{\theta}_{1}(\theta,\theta')=\tilde{\theta}_{2}(\theta',\theta)$,
we can write
\begin{align*}
\pi_{\theta'}(\mathrm{d}z')Q_{2}^{M,N}\big(y;{\rm d}(x,\mathfrak{u},k)\big) & =q(\theta',{\rm d}\theta)\frac{\pi_{\theta'}(\mathrm{d}z')}{\pi_{\tilde{\theta}}(\mathrm{d}z')}\frac{1}{N}\pi_{\tilde{\theta}}({\rm d}z')\delta_{z'}(\mathrm{d}u^{(k)})R_{\tilde{\theta}}(u^{(k)},{\rm d}u^{(-k)})\delta_{u^{(0)}}(\mathrm{d}z)\\
 & =q(\theta',{\rm d}\theta)\frac{\pi_{\theta'}(\mathrm{d}z')}{\pi_{\tilde{\theta}}(\mathrm{d}z')}\frac{1}{N}\pi_{\tilde{\theta}}({\rm d}z)\delta_{z}(\mathrm{d}u^{(0)})R_{\tilde{\theta}}(u^{(0)},{\rm d}u^{(1:N)})\delta_{u^{(k)}}(\mathrm{d}z').
\end{align*}
Exploiting the relation between above and
\begin{multline*}
\pi_{\theta}({\rm d}z)Q_{1}^{M,N}\big(x;{\rm d}(y,\mathfrak{u},k)\big)\\
=q(\theta,{\rm d}\theta')\frac{\pi_{\theta}(\mathrm{d}z)}{\pi_{\tilde{\theta}}(\mathrm{d}z)}\pi_{\tilde{\theta}}({\rm d}z)\delta_{z}(\mathrm{d}u^{(0)})R_{\tilde{\theta}}(u^{(0)},{\rm d}u^{(1:N)})\delta_{u^{(k)}}(\mathrm{d}z')\frac{\mathring{r}_{z,z'}(\theta,\theta';\tilde{\theta})}{\sum_{i=1}^{N}\mathring{r}_{z,u^{(i)}}(\theta,\theta';\tilde{\theta})},
\end{multline*}
and finally noting \eqref{eq: acceptance ratio modified}, we conclude
\[
\frac{\pi({\rm d}y)Q_{2}^{M,N}\big(y;{\rm d}(x,\mathfrak{u},k)\big)}{\pi({\rm d}x)Q_{1}^{M,N}\big(x;{\rm d}(y,\mathfrak{u},k)\big)}=\frac{1}{N}\sum_{i=1}^{N}\mathring{r}_{z,u^{(i)}}(\theta,\theta';\tilde{\theta}).
\]
\end{proof}

\section{Substituting SMC for AIS in the acceptance ratio in MHAAR \label{sec: Substituting SMC for AIS in the acceptance ratio in MHAAR}}

To avoid repeats, we restrict ourselves to the description of the
generalising Algorithm \ref{alg: MHAAR for pseudo-marginal ratio in latent variable models},
i.e.\ when $\pi(x)=\pi(\theta,z)$. Therefore, let us go back to
the setting in Section \ref{sec: Pseudo-marginal ratio algorithms for latent variable models},
where we have the joint distribution $\pi(x)=\pi(\theta,z)$, the
unnormalised densities for the intermediate steps of AIS, $\pi_{\theta,\theta',t}\propto f_{\theta,\theta',t}$,
$t=0,\ldots,T+1$, $R_{\theta},$ and $R_{\theta,\theta',t}$, $t=1,\ldots,T$,
as detailed in Proposition \ref{prop:MHwithAISinside}. 

Consider $Q_{1}$ of Algorithm \ref{alg: MHAAR for pseudo-marginal ratio in latent variable models}.
Instead of AIS in Algorithm \ref{alg: MHAAR for pseudo-marginal ratio in latent variable models},
we want the sample paths $u_{0:T}^{(i)}$, $i=1,\ldots,N$ to interact
via an SMC algorithm that uses resampling in the annealing steps.
Recalling the definition for $Q_{\theta,\theta',z}$ in equation \eqref{eq: M},
the SMC algorithm that executes this change has the following unnormalised
target distribution 
\begin{equation}
\hat{A}_{\theta,\theta'z}(\mathrm{d}u)=Q_{\theta,\theta',z}(\mathrm{d}u)\prod_{t=0}^{T}\frac{f_{\theta,\theta't+1}(u_{t})}{f_{\theta,\theta't}(u_{t})}.\label{eq: particle AIS SMC target}
\end{equation}
Let us define $C_{\theta,\theta',z}:=\int\hat{A}_{\theta,\theta',z}(\mathrm{d}u)$
so that the normalised target distribution of the SMC is 
\begin{equation}
A_{\theta,\theta',z}(\mathrm{d}u)=\frac{\hat{A}_{\theta,\theta'z}(\mathrm{d}u)}{C_{\theta,\theta',z}}.\label{eq: particle AIS normalised SMC target}
\end{equation}
One important observation is that 
\[
\int\pi_{\theta}(\mathrm{d}z)C_{\theta,\theta',z}=\frac{\pi(\theta')}{\pi(\theta)}.
\]
Denote all the particles generated by the SMC by $\mathfrak{\zeta}=u_{0:T}^{(1:N)}$
and let $\psi_{\theta,\theta',z}$ be the law of $\mathfrak{\zeta}$
with respect to the SMC that targets $A_{\theta,\theta',z}$. Notice
that the ratio 
\[
\hat{C}_{\theta,\theta',z}(\zeta)=\prod_{t=0}^{T}\frac{1}{N}\sum_{i=1}^{N}\frac{f_{\theta,\theta',t+1}(u_{t}^{(i)})}{f_{\theta,\theta',t}(u_{t}^{(i)})},
\]
is the unbiased SMC estimator of $C_{\theta,\theta',z}$, so that
\[
\int\pi_{\theta}(\mathrm{d}z)\psi_{\theta,\theta',z}(\mathrm{d}\zeta)\hat{C}_{\theta,\theta',z}(\zeta)=\frac{\pi(\theta')}{\pi(\theta)}.
\]
Then, a sensible candidate for the acceptance ratio would be
\[
\mathring{r}_{\zeta}^{N}(\theta,\theta'):=\frac{q(\theta',\theta)}{q(\theta,\theta')}\hat{C}_{\theta,\theta',z}(\zeta).
\]
It turns out that we can develop an SMC based MHAAR algorithm that
uses $\mathring{r}_{\zeta}^{N}(\theta,\theta')$; this is shown in
Algorithm \ref{alg: MHAAR-SMC for general latent variable models}.
We prove its reversibility in the subsequent theorem.

\begin{algorithm}[!h]
\caption{MHAR for averaged SMC PMR estimators for general latent variable models}
\label{alg: MHAAR-SMC for general latent variable models}

\KwIn{Current sample $X_{n}=x=(\theta,z)$}

\KwOut{New sample $X_{n+1}$}

Sample $\theta'\sim q(\theta,\cdot)$ and $v\sim\mathcal{U}(0,1)$.
\\
\If{$v\leq1/2$}{

\For{$i=1,\ldots,N$ }{ 

Sample $u_{0}^{(i)}\sim R_{\theta}(z,\cdot)$.

}

\For{$t=1,\ldots,T$}{

\For{$i=1,\ldots,N$}{

Sample $u_{t}^{(i)}\sim\sum_{j=1}^{N}\frac{w_{t-1}^{(j)}}{\sum_{l=1}^{N}w_{t-1}^{(l)}}R_{\theta,\theta',t}(u_{t-1}^{(j)},\cdot)$,
where $w_{t-1}^{(j)}=\frac{f_{\theta,\theta',t}(u_{t-1}^{(j)})}{f_{\theta,\theta',t-1}(u_{t-1}^{(j)})}$.

}

Sample $k\sim\mathcal{P}\big(w_{T}^{(1)},\ldots,w_{T}^{(N)}\big)$
and $z'\sim R_{\theta'}(u_{T}^{(k)},\cdot)$.\\
Set $X_{n+1}=(\theta',z')$ with probability $\min\{1,\mathring{r}_{\zeta}^{N}(\theta,\theta')\},$
otherwise set $X_{n+1}=x$.

}

}\Else{ 

Sample $u_{T}^{(1)}\sim R_{\theta'}(z,\cdot)$.\\
\For{ $t=T,\ldots,1$}{

Sample $u_{t-1}^{(1)}\sim R_{\theta,\theta',z}(u_{t}^{(1)},\cdot)$.

} 

Sample $z'\sim R_{\theta'}(u_{0}^{(1)},\cdot).$\\
\For{ $i=2,\ldots,N$}{

Sample $u_{0}^{(i)}\sim R_{\theta'}(z',\cdot)$.

}

\For{ $t=1,\ldots,T$}{

\For{ $i=2,\ldots,N$}{

Sample $u_{t}^{(i)}\sim\sum_{j=1}^{N}\frac{w_{t-1}^{(j)}}{\sum_{l=1}^{N}w_{t-1}^{(l)}}R_{\theta',\theta,t}(u_{t-1}^{(j)},\cdot)$,
where $w_{t}^{(i)}=\frac{f_{\theta',\theta,t}(u_{t-1}^{(i)})}{f_{\theta',\theta,t-1}(u_{t-1}^{(i)})}$.

}

}

Set $X_{n+1}=(\theta',z')$ with probability $\min\{1,1/\mathring{r}_{\zeta}^{N}(\theta',\theta)\}$,
otherwise set $X_{n+1}=x$. 

}
\end{algorithm}
\begin{thm}
\label{thm: reversibility of particle AIS}The transition kernel of
Algorithm \ref{alg: MHAAR-SMC for general latent variable models}
satisfies the detailed balance with respect to $\pi$. 
\end{thm}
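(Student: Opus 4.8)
The plan is to cast Algorithm \ref{alg: MHAAR-SMC for general latent variable models} into the asymmetric framework of Theorem \ref{thm:asymmetricMH-1}, so that the whole task reduces to identifying the two proposal kernels $Q_1^N$ and $Q_2^N$ on an extended auxiliary space and checking that their ratio against $\pi$ equals the stated acceptance ratio $\mathring{r}_{\zeta}^{N}(\theta,\theta')$. The auxiliary variable here is the entire particle system $\zeta=u_{0:T}^{(1:N)}$ together with the selected index $k$; the role of the involution $\varphi$ is played by reversing the annealing index $t\mapsto T-t$ and swapping which single chain is held fixed. The $v\le 1/2$ branch runs the forward SMC of \eqref{eq: particle AIS normalised SMC target} from $z$ towards $z'$, while the $v>1/2$ branch runs its reverse, and the symmetry conditions $f_{\theta,\theta',t}=f_{\theta',\theta,T+1-t}$ and $R_{\theta,\theta',t}=R_{\theta',\theta,T+1-t}$ are precisely what make the second branch the $\varphi$-image of the first.

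First I would write the SMC law explicitly. Let $\psi_{\theta,\theta',z}(\mathrm{d}\zeta)$ denote the ancestor-marginalised law of the forward SMC sampler targeting $A_{\theta,\theta',z}$, and set $\bar{\psi}_{\theta,\theta',z}(\mathrm{d}\zeta):=\psi_{\theta,\theta',z}(\mathrm{d}\zeta)\hat{C}_{\theta,\theta',z}(\zeta)/C_{\theta,\theta',z}$. Unbiasedness of the SMC normalising-constant estimator guarantees $\bar{\psi}_{\theta,\theta',z}$ is a probability law, and it is the law under which one distinguished chain is a genuine draw from the annealing sequence conditioned to pass through the remaining $N-1$ chains. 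This is the exact SMC analogue of \eqref{eq: SMC to cSMC probability law}.

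The key step, and the main obstacle, is to establish the SMC/conditional-SMC duality that plays the role of Lemma \ref{lem: cSMC semi-reversibility} and Corollary \ref{cor:exchangeability-SSM} in the present annealing setting. Concretely I would prove that, under the joint law obtained by drawing $z\sim\pi_{\theta}$, running the forward SMC, and backward-selecting a path with the normalised terminal weights, the selected path and the other chains are exchangeable, so that
\[
\pi_{\theta}(\mathrm{d}z)\,\psi_{\theta,\theta',z}(\mathrm{d}\zeta)\,\phi(k\mid\zeta)\,\delta_{u^{(k)}}(\mathrm{d}z')=\bar{\psi}_{\theta,\theta',z}(\mathrm{d}\zeta)\,\times\,(\text{conditional reconstruction}),
\]
with the endpoint factors $R_{\theta}(z,\cdot)$ and $R_{\theta'}(u_T,\cdot)$ absorbed using $\pi_{\theta}$-reversibility of $R_{\theta}$ together with the coupling \eqref{eq:AISbasedAcceptRatio}. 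The annealing weights telescope, so the successive resampling normalisations collapse into exactly $\hat{C}_{\theta,\theta',z}(\zeta)$. The delicate bookkeeping is entirely here: one must track the resampling mixtures through the order reversal, whereas in the non-interacting case of Proposition \ref{prop:MHwithAISinside} the paths were independent and this duality was trivial.

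With this duality in hand, the final step is a direct computation of the Radon--Nikodym derivative $\pi(\mathrm{d}y)Q_2^N(y,\mathrm{d}(x,\zeta,k))/\pi(\mathrm{d}x)Q_1^N(x,\mathrm{d}(y,\zeta,k))$, mirroring the algebra of \eqref{eq: RN derivative aMCMC for HMM all paths}: the particle-system measures cancel by the exchangeability relation, the endpoint kernels $R_{\theta},R_{\theta'}$ cancel by the coupling identities, and the surviving factor is exactly $\bigl(q(\theta',\theta)/q(\theta,\theta')\bigr)\hat{C}_{\theta,\theta',z}(\zeta)=\mathring{r}_{\zeta}^{N}(\theta,\theta')$. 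Detailed balance with respect to $\pi$ then follows immediately from Theorem \ref{thm:asymmetricMH-1}. Given the earlier AIS and cSMC results, the reweighting by $\hat{C}/C$ and the endpoint cancellations are routine, so essentially all the work is concentrated in the duality lemma of the previous paragraph.
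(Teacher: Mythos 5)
Your proposal follows essentially the same route as the paper's proof: write the two branches of Algorithm \ref{alg: MHAAR-SMC for general latent variable models} as a pair $Q_1$ (forward SMC, backward path sampling, endpoint move $R_{\theta'}$) and $Q_2$ (reversed AIS path, endpoint move, conditional SMC around that path), establish the SMC/cSMC duality for the annealing target, compute the Radon--Nikodym derivative so the endpoint factors cancel via \eqref{eq:AISbasedAcceptRatio}, and conclude by Theorem \ref{thm:asymmetricMH-1}. The only difference is that where you anticipate the main work---proving the duality/exchangeability lemma for interacting particles---the paper simply reuses Lemma \ref{lem: cSMC semi-reversibility} and Corollary \ref{cor: cSMC semi-reversibility} applied with the path-space target $A_{\theta,\theta',z}$ in place of $\pi$, so no new lemma is required (and note the minor imprecision in your displayed identity: the selected path feeds into $z'$ through $R_{\theta'}(u_T^{(k)},\cdot)$, not a Dirac mass).
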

\begin{proof}
Since $\hat{C}_{\theta,\theta'z}(\zeta)$ is an unbiased SMC estimator
of $C_{\theta,\theta',z}$, we can define the probability distribution
\begin{equation}
\bar{\psi}_{\theta,\theta',z}(\mathrm{d}\zeta)=\frac{\hat{C}_{\theta,\theta',z}(\zeta)}{C_{\theta,\theta',z}}\psi_{\theta,\theta',z}(\mathrm{d}\zeta).\label{eq: Particle AIS derivation SMC to cSMC}
\end{equation}
Denote the law of all the particles in cSMC conditioned on $u$ by
$\Phi_{\theta,\theta'}(u,\cdot)$ and the law of the path obtained
by backward sampling given particles $\zeta$ by $\check{\Phi}_{\theta,\theta'}(\zeta,\cdot)$.
Then, $Q_{1}$ and $Q_{2}$ of Algorithm \eqref{alg: MHAAR-SMC for general latent variable models}
can be written as
\begin{align*}
Q_{1}(x,\mathrm{d}(y,\zeta,u)) & =q(\theta,\mathrm{d}\theta')\psi_{\theta,\theta',z}(\mathrm{d}\zeta)\check{\Phi}_{\theta,\theta'}(\zeta,\mathrm{d}u)R_{\theta'}(u_{T},\mathrm{d}z'),\\
Q_{2}(x,\mathrm{d}(y,\zeta,u)) & =q(\theta,\mathrm{d}\theta')\bar{Q}_{\theta,\theta',z}(\mathrm{d}u)R_{\theta'}(u_{0},\mathrm{d}z')\Phi_{\theta',\theta}(u,\mathrm{d}\zeta).
\end{align*}
where $\bar{Q}_{\theta,\theta',z}$ is defined with the involution
$\varphi(u_{0},\ldots,u_{T})=(u_{T},\ldots,u_{0})$ as before. Note
that in practice, we do not need to generate or store all the variables
involved in $Q_{1}$ and $Q_{2}$. This is reflected in Algorithm
\ref{alg: MHAAR-SMC for general latent variable models} where there
is no direct reference to $u$ in $Q_{1}(x,\mathrm{d}(y,\zeta,u))$
and $Q_{2}(x,\mathrm{d}(y,\zeta,u))$. Indeed, in the calculation
of the acceptance ratio we only use $(\theta,z$), $\zeta$, and $(\theta',z')$.
A similar shortcut taken in the implementation of the algorithm is
in the labelling of the conditioned path in $Q_{2}$. However; we
choose to formally define $Q_{1}$ and $Q_{2}$ as above, since with
those definitions it is straightforward to show the detailed balance. 

Using equation \eqref{eq: Particle AIS derivation SMC to cSMC}, Corollary
\ref{cor: cSMC semi-reversibility}, and \eqref{eq: particle AIS SMC target}
in order, we have
\begin{align*}
\psi_{\theta,\theta',z}(\mathrm{d}\zeta)\check{\Phi}_{\theta,\theta'}(\zeta,\mathrm{d}u) & =\frac{C_{\theta,\theta',z}}{\hat{C}_{\theta,\theta',z}(\zeta)}\bar{\psi}_{\theta,\theta',z}(\mathrm{d}\zeta)\check{\Phi}_{\theta,\theta'}(\zeta,\mathrm{d}u)\\
 & =\frac{C_{\theta,\theta',z}}{\hat{C}_{\theta,\theta',z}(\zeta)}A_{\theta,\theta',z}(\mathrm{d}u)\Phi_{\theta,\theta'}(u,\mathrm{d}\zeta)\\
 & =\frac{1}{\hat{C}_{\theta,\theta',z}(\zeta)}\hat{A}_{\theta,\theta',z}(\mathrm{d}u)\Phi_{\theta,\theta'}(u,\mathrm{d}\zeta)\\
 & =\frac{1}{\hat{C}_{\theta,\theta',z}(\zeta)}\prod_{t=0}^{T}\frac{f_{\theta,\theta't+1}(u_{t})}{f_{\theta,\theta't}(u_{t})}Q_{\theta,\theta',z}(\mathrm{d}u)\Phi_{\theta,\theta'}(u,\mathrm{d}\zeta)
\end{align*}
Therefore, we arrive at the Radon-Nikodym derivative
\begin{align*}
\frac{\pi(\mathrm{d}y)Q_{2}(y,\mathrm{d}(x,\zeta,u))}{\pi(\mathrm{d}x)Q_{1}(x,\mathrm{d}(y,\zeta,u))} & =\frac{q(\theta',\theta)}{q(\theta,\theta')}\frac{\pi(\theta')}{\pi(\theta)}\frac{\pi_{\theta'}(\mathrm{d}z')\bar{Q}_{\theta',\theta,z'}(\mathrm{d}u)R_{\theta}(u_{0},\mathrm{d}z)}{\pi_{\theta}(\mathrm{d}z)Q_{\theta,\theta',z}(\mathrm{d}u)R_{\theta'}(u_{0},\mathrm{d}z')}\frac{\hat{C}_{\theta,\theta',z}(\zeta)}{\prod_{t=0}^{T}\frac{f_{\theta,\theta't+1}(u_{t})}{f_{\theta,\theta't}(u_{t})}}\\
 & =\frac{q(\theta',\theta)}{q(\theta,\theta')}\frac{\pi(\theta')}{\pi(\theta)}\frac{\pi(\theta)}{\pi(\theta')}\prod_{t=0}^{T}\frac{f_{\theta,\theta't+1}(u_{t})}{f_{\theta,\theta't}(u_{t})}\frac{\hat{C}_{\theta,\theta',z}(\zeta)}{\prod_{t=0}^{T}\frac{f_{\theta,\theta't+1}(u_{t})}{f_{\theta,\theta't}(u_{t})}}\\
 & =\frac{q(\theta',\theta)}{q(\theta,\theta')}\hat{C}_{\theta,\theta',z}(\zeta),
\end{align*}
we have used the identity in \eqref{eq:AISbasedAcceptRatio}.
\end{proof}
It should now be clear how the generalisation introduced in Algorithm
\ref{alg: MHAAR-SMC for general latent variable models} can be modified
for Algorithms \ref{alg: MHAAR-AIS exchange algorithm} and \ref{alg: MHAAR-AIS exchange algorithm - reduced computation},
which were developed for $\pi(x)=\pi(\theta)$. Let us remark the
main difference that in the algorithms of Section \ref{sec: Improving pseudo-marginal ratio algorithms for doubly intractable models},
$u_{0}^{(i)}$'s are sampled from the initial distribution of the
annealing schedule directly, whereas for the algorithms in Section
\ref{sec: Pseudo-marginal ratio algorithms for latent variable models},
we exploit the $z$ component of the current sample to start the SMC
(remember $u_{0}^{(i)}\sim R_{\theta}(z,\cdot)$, which becomes $u_{0}^{(i)}=z$
when $R_{\theta}(z,\cdot)=\delta_{z}(\cdot)$).
\end{document}